\theoremstyle{plain}
\newtheorem{theorem}{Theorem}[section]
\newtheorem{proposition}[theorem]{Proposition}
\theoremstyle{definition}
\newtheorem{definition}[theorem]{Definition}
\newtheorem{remark}[theorem]{Remark}
\numberwithin{equation}{section}
\def\E{\mathbb E}
\def\N{\mathbb N}
\def\C{\mathbb C}
\def\I {\mathbbm i}
\def\ii {\mathbbm i}
\def\P{\mathbb P}
\def\R{\mathbb R}
\newcommand{\whittW}{W}
\newcommand{\besselJ}{J}
\newcommand{\besselI}{I}
\newcommand{\besselK}{\mathcal{K}}
\newcommand{\struveL}{L}
\newcommand{\whittM}{M}
\newcommand{\struveA}{A}
\newcommand{\us}{\mathcal{S}}
\newcommand{\ang}{\varphi}
\def\T{\mathbb T}
\newcommand{\Cov}{\textrm{Cov}}
\newcommand{\mb}[1]{\boldsymbol{#1}}
\newcommand{\Y}{\boldsymbol{Y}}
\newcommand{\ti}{\boldsymbol{s}}
\newcommand{\tilc}{s}
\newcommand{\cf}{\boldsymbol{C}}
\newcommand{\hv}{\boldsymbol{h}}
\newcommand{\Nu}{\boldsymbol{\nu}}
\newcommand{\measurenb}{\mu}
\newcommand{\measure}{\mb{\measurenb}}
\newcommand{\rmeasure}{\mb{B}_{\measure}}
\newcommand{\sd}{\measure}
\newcommand{\sde}{\measurenb}
\newcommand{\Imat}{\mb{I}}
\newcommand{\amat}{\mb{a}}
\newcommand{\Sig}{\mb{\Sigma}}
\newcommand{\sig}{\sigma}
\newcommand{\comment}[1]{\fbox{{\color{red}Commented out material.}}}
\DeclareMathAlphabet{\mathbb}{U}{msb}{m}{n}
\newcommand{\add}[1]{{#1}}
\begin{document}

\begin{frontmatter}

\title{Multivariate Mat\'ern Models -- A Spectral Approach}

\runtitle{Multivariate Mat\'ern Models}

\begin{aug}

\author[A]{\fnms{Drew}~\snm{Yarger}\thanksref{t1}\ead[label=e1]{anyarger@purdue.edu}},
\author[B]{\fnms{Stilian}~\snm{Stoev}\thanksref{t2}\ead[label=e2]{sstoev@umich.edu}},
\and
\author[C]{\fnms{Tailen}~\snm{Hsing}\thanksref{t2}\ead[label=e3]{thsing@umich.edu}}

\thankstext{t1}{Partially supported by NSF Grant DGE-1841052}
\thankstext{t2}{Supported by NSF Grant DMS-1916226}

\address[A]{Drew Yarger is Assistant Professor, Department of Statistics, Purdue University,
West Lafayette, IN 47907\printead[presep={\ }]{e1}.}

\address[B]{Stilian Stoev is Professor, Department of Statistics, University of Michigan,
Ann Arbor, MI 48109\printead[presep={\ }]{e2}.}

\address[C]{Tailen Hsing is Professor, Department of Statistics, University of Michigan,
Ann Arbor, MI 48109\printead[presep={\ }]{e3}.}

\end{aug}

\begin{abstract}

%
%
The classical Mat\'ern model has been a staple in spatial statistics. Novel data-rich applications in environmental and 
physical sciences, however, call for new, flexible vector-valued spatial and space-time models. Therefore, the extension 
of the classical Mat\'ern model has been a problem of active theoretical and methodological interest.  
In this paper, we offer a new perspective to extending the Mat\'ern covariance 
model to the vector-valued setting. 
We adopt a spectral, stochastic integral approach, which allows us to address challenging issues on the validity of 
the covariance structure and at the same time to obtain new, flexible, and interpretable models.  In particular, our 
multivariate extensions of the Mat\'ern model allow for \add{asymmetric} covariance structures.  
Moreover, the spectral approach provides an essentially complete flexibility in modeling the local structure of the process.  
We establish closed-form representations of the cross-covariances when available, compare them with existing models, simulate Gaussian instances of these new processes, and demonstrate estimation of the model's parameters through maximum likelihood.  
An application of the new class of multivariate Mat\'ern models to data indicate their success in capturing 
inherent covariance-asymmetry phenomena.
\end{abstract}

\begin{keyword}
\kwd{Multivariate spatial statistics}
\kwd{cross-covariance functions}
\kwd{spectral analysis}
\end{keyword}

\end{frontmatter}

\section{Introduction and main ideas}\label{sec:mm_intro}

In the past two decades, there have been considerable interests in modeling vector-valued spatial processes, especially in the context of environmental data \citep[cf.][]{genton_cross_covariance_2015}. 
One popular model for such processes is the multivariate Mat\'ern model, which extends the Mat\'ern model to the multivariate case and was first introduced in \cite{gneiting_matern_2010}.
Extensions, improvements, and analysis of multivariate Mat\'ern models are numerous \citep[for more information and examples, see][]{porcu_matern_2023, alegria_bivariate_2021, apanasovich_valid_2012, emery_new_2022,genton_cross_covariance_2015, kleiber_coherence_2018}.
Section 3 of \cite{genton_cross_covariance_2015} and Section 5.2 of \cite{porcu_matern_2023} review multivariate Mat\'ern models.

Let $\mb{Y}(\mb{s}) = (Y_i(\mb{s}))_{i=1}^p$ denote a zero-mean stochastic process with finite variance taking values in $\R^p,\ p\ge 1$, and indexed by $\mb{s}\in\R^d$.  
Unless stated otherwise, all vectors we consider will be column-vectors, and they will be denoted with boldface letters to distinguish them from scalars.  

Most of the existing multivariate extensions of the Mat\'ern model start by prescribing a Mat\'ern-like cross-covariance function for $\mb{Y}=\{\mb{Y}(\mb{s}),\ \mb{s}\in\R^d\}$, or some modification of one.  
While natural and appealing, this leads to formidable challenges in verifying the validity (positive definiteness) of the resulting matrix-valued auto-covariance. 
One consequence of this approach is that many of the existing multivariate Mat\'ern models are
covariance-symmetric or reversible, i.e., $\E[ \mb{Y}(\mb{s}) \mb{Y}(\mb{t})^\top] = \E[ \mb{Y}(\mb{t}) \mb{Y}(\mb{s})^\top]$,  for all $\mb{s},\, \mb{t} \in \R^d$.
While this property is clearly automatic in the case when $\mb{Y}(\mb{t})$ is scalar-valued ($p=1$), it is an exception rather than the norm in the vector-valued case ($p\ge 2$).  
This observation shows that the existing covariance-symmetric multivariate Mat\'ern models are rather restrictive. 
Lastly, while the classical Mat\'ern is well understood, the local behavior of the multivariate Mat\'ern-type models is largely unexplored.  
We will give a more comprehensive overview of the existing literature on multivariate Mat\'ern models in Section \ref{sec:new_Matern}.

In this paper, we aim to address the above challenges by adopting a stochastic integral perspective to the construction of the multivariate Mat\'ern models.  This will allow us to automatically obtain valid covariance structures with interpretable parameterizations and obtain a rich family of vector-valued models including symmetric as well as asymmetric covariance structures.

Recall that $\mb{Y}$ is said to be second-order stationary if its covariance function
 $$
 \mb{C}_{\mb{Y}}(\mb{h}) :=\E [\mb{Y}(\mb{h})\mb{Y}(\mb{0})^\top]  
 $$
is defined and depends only on the lag $\mb{h}\in \R^d$. Note that for $p\ge 2$, the function $\mb{h} \mapsto \mb{C}_{\mb{Y}}(\mb{h})$ is $p\times p$-matrix valued, and 
 as in the classical scalar-valued setting ($p=1$), is positive (semi)definite in the sense that
 \begin{equation}
   \label{eq:psd-C}\sum_{i,j=1}^n a_i a_j \mb{C}_{\mb{Y}}(\mb{s}_i - \mb{s}_j) 
 \end{equation}
 is a positive semi-definite matrix for all $a_i\in \R$ and points $\mb{s}_i \in \mathbb{R}^d$, $i=1, \dots, n$. 
 A more detailed review of the spectral theory for vector-valued second-order stationary processes is
 given in Section \ref{sec:mmm}, below.  
 
To illustrate ideas, consider for a moment the scalar-valued case $p=1$.  If $Y = \{Y(\mb{s}),\ \mb{s}\in\R^d\}$ is zero-mean, second-order stationary and $L^2-$continuous, a classical result due to \cite{cramer1942harmonic} yields the following {\em spectral representation} of $Y$ as
 a {\em stochastic integral:}
 \begin{align}\label{eq:eta_gen}
  Y(\mb{s}) = \int_{\R^d} e^{\I \langle\mb{s}, \mb{x}\rangle} \xi(d\mb{x}),
\end{align}
which is valid almost surely for each $\mb{s}\in \R^d$, where the integration is with respect to a random, complex-valued measure $\xi$ with mean zero and orthogonal increments.  Namely, we have that $\E[ \xi(A)]  = 0$, 
$\xi(A \cup B) = \xi(A) + \xi(B)$, for $A\cap B=\emptyset$, and for a finite Borel measure $F_Y$ on $\mathbb R^d$, we have
$
\E[\xi(A) \overline{\xi(B)} ] = F_Y(A\cap B),$  for all Borel sets $A, B\subset \R^d$ (cf.\ Section \ref{sec:mmm} below).
Relation \eqref{eq:eta_gen} readily implies that the auto-covariance of $Y$ is
\begin{equation}\label{e:C_Y}
C_Y(\mb{h}) = \E[Y(\mb{t}+\mb{h}) Y(\mb{t})] = \int_{\R^d} e^{\I \langle \mb{h},\mb{x}\rangle } F_Y(d\mb{x})
\end{equation}
for $\mb{h}, \mb{t} \in \R^d$, so that $F_Y(d\mb{x}) = \E[ |\xi(d\mb{x})|^2]$ is precisely the spectral measure of the process $Y,$ per the
classical Bochner Theorem (see Theorem \ref{thm:Bochner-Neeb}).

Since the process $Y$ in \eqref{eq:eta_gen} is real-valued, the random measure $\xi$ is necessarily 
Hermitian, i.e., $\xi(-A) = \overline{\xi(A)}$, and the spectral measure $F_Y$ is real and symmetric, i.e., $\overline{F_Y(-A)} = F_Y(A)$ (Proposition \ref{p:Y-real}). 
If $F_Y$ has a density  $f_Y$ with respect to the Lebesgue measure, then $f_Y$ is referred to as the {\em spectral density} of the process $Y.$  

We illustrate these fundamental notions with the classical Mat\'ern covariance in $\R^d,\ d\ge 1$. 
Recall that a scalar-valued second-order stationary process $Y=\{Y(\mb{s}),\ \mb{s}\in\R^d\}$ is said to follow the Mat\'ern model 
if its auto-covariance and spectral measures are respectively:
\begin{align}\label{eq:matern_form}\begin{split}
C_Y(\mb{h})&= {\cal M}(\|\mb{h}\|; a, \nu, \sigma^2)\\&:= \frac{\sigma^2  2^{1-\nu} }{\Gamma(\nu)}  ( a\|\mb{h}\|) ^{\nu} \besselK_\nu(a\|\mb{h}\|),
    \end{split}\\
 f_{d,(a,\nu,\sigma^2)}(\mb{x}) d\mb{x} &=
 \frac{\sigma^2 c(d, \nu, a)^2}{\left(a^2 + \|\mb{x}\|^2\right)^{\nu +\frac{d}{2}}} d\mb{x},
 \label{eq:matern_spectral_density}
\end{align}
where $\|\cdot\|$ denotes the Euclidean norm in $\R^d$, $c(d, \nu, a)^2 = a^{2\nu}\Gamma(\nu + d/2)/(\pi^{d/2}\Gamma(\nu))$, $\Gamma(\cdot)$ is 
the Euler gamma function, and $\besselK_\nu(\cdot)$ is the modified Bessel function of the second kind with order $\nu$ \citep[page 48,][]{stein_interpolation_2013}.
The parameterization is such that $\sigma^2 = C_Y(\mb{0}) = {\rm Var}(Y(\mb{t}))$ is the marginal variance of the model,
$a>0$ is the inverse range parameter, and $\nu>0$ is the shape or smoothness parameter.  
Observe that $C_Y(\mb{h})$ is isotropic and valid, i.e., positive semi-definite
{\em for every} $d\ge 1$.  The latter property is closely related to an important result due to \cite{schoenberg:1938}, which we discuss 
in the Supplement where we also give a simple derivation of \eqref{eq:matern_spectral_density} from \eqref{eq:matern_form}.

In the rest of this section, we sketch the main ideas behind our approach for $d=1$.
A more systematic exposition for all $d$ is given in Section \ref{sec:spatial}.   We begin with the scalar-valued
case of $p=1$.  

Consider a complex Gaussian measure $B(dx)$ on $\R$ with orthogonal increments such that 
$B(dx) = \overline{B(-dx)}$ and $\E[ |B(dx)|^2] = \sigma^2\cdot dx$, where $dx$ stands for Lebesgue measure and $\sigma>0$. 
Letting 
\begin{align}
\xi(dx) := c(1, \nu, a) \left(a^2 + x^2\right)^{\left(-\nu - \frac{1}{2} \right)/2} B(dx)\label{eq:eta_bolin},
\end{align}
in view of \eqref{eq:matern_spectral_density}, we obtain $\E \left[|\xi(dx)|^2\right] = f_{1, (a, \nu, \sigma^2)}(x)dx$. Hence by 
\eqref{e:C_Y} the stochastic integral in \eqref{eq:eta_gen} defines a stationary Gaussian process with the Mat\'ern 
auto-covariance ${\cal M}(\cdot; a, \nu, \sigma^2)$. 
Alternatively, since 
\begin{align*}
&\left|(a \pm \I x)^{-\nu - \frac{1}{2}}\right|^2=(a \pm \I x)^{-\nu - \frac{1}{2}} \overline{(a \pm \I x)^{-\nu - \frac{1}{2}}}\\
&=(a +  \I x)^{-\nu - \frac{1}{2}}(a - \I x)^{-\nu - \frac{1}{2}} =[(a +  \I x)(a - \I x)]^{-\nu - \frac{1}{2}} \\
&=\left(a^2 + x^2\right)^{-\nu - \frac{1}{2}},
\end{align*}
where $\pm$ is either $+$ or $-$, one could take
\begin{align}
 \xi(dx) =  c(1, \nu, a)\left(a \pm \I x\right)^{-\nu - \frac{1}{2}}B(dx).\label{eq:eta_ours}
\end{align}
This choice also yields $\E[|\xi(dx)|^2] = f_{1, (a, \nu, \sigma^2)}(x)dx$ and hence the stochastic integral
\eqref{eq:eta_gen} defines a processes with {\em the same} Mat\'ern covariance for $d=1$. 
We will primarily focus on this second representation in \eqref{eq:eta_ours}, since it will naturally lead to general complex-valued spectral densities and potentially asymmetric models for $p\ge 2$ (in Sections \ref{Introduce_model} and \ref{sec:spatial}).  
The restriction $a > 0$ ensures that powers of the complex number $a \pm \I x$ are well-defined by avoiding the singularity at the origin of the complex plane. 

We now turn to multivariate Mat\'ern, that is, where $p > 1$. Motivated by the Cram\'er stochastic integral representation \eqref{eq:eta_gen} of the 
classical Mat\'ern process with \eqref{eq:eta_ours}, we consider 
\begin{align}\label{eq:Y-d=1-def}
 \mb{Y}(s)=\int_{\R} e^{\I s x} \mb{\xi}(dx), \mbox{and } \mb{\xi}(dx) := \mb{P}(x)\mb{B}(dx).
\end{align}
Now, however, the term $(a+\I x)^{-\nu-1/2}$ in \eqref{eq:eta_ours} is replaced by a $\mathbb{C}^{p\times p}$-valued function 
$\mb{P}(x)$, while $\mb{B}(dx)$ is a $\mathbb{C}^p$-vector-valued counterpart to the random measure $B(dx)$.
Specifically, we consider
$$\mb{P}(x) := \textrm{diag}(c_j\cdot (a_j + \I x)^{-\nu_j - \frac{1}{2}}, j=1, \dots, p) \in \mathbb{C}^{p\times p},$$
with normalizing constants $c_j := c(1, \nu_j, a_j)$ and parameters $a_j >0$ and $\nu_j>0,\ j=1,\cdots,p$. 

On the other hand, the {\em vector-valued} zero-mean random measure $\mb{B}(dx) \in \mathbb{C}^p$ 
in \eqref{eq:Y-d=1-def} is assumed to be Hermitian $\mb{B}(dx) = \overline{\mb{B}(-dx)}$, to have orthogonal increments (cf.\ Definition 
\ref{def:xi-measure}), where 
\begin{align*}
\sd(dx) :&= \mathbb{E}\left[\mb{B}(dx)\overline{\mb{B}(dx)}^\top\right] 
         \\
         &= \left(\Re(\mb{\Sigma}_H) + \textrm{sign}(x) \I \Im(\mb{\Sigma}_H)\right) dx
         \end{align*}
for some self-adjoint and positive-semidefinite matrix $\mb{\Sigma}_H = (\sigma_{jk})_{p\times p}$. That is, 
$\mb{\Sigma}_H = \mb{\Sigma}_H^*:= \overline{\mb{\Sigma}_H}^\top$, 
and the quadratic forms $\mb{x}^\top \Sig_H \mb{x} \ge 0$, for all $\mb{x}\in\C^p$. Here, $\Re(\cdot)$ and $\Im(\cdot)$ denote the 
real and imaginary part functions, and $\textrm{sign}(x) = \mathbb{I}(x > 0) - \mathbb{I}(x < 0)$ where $\mathbb{I}(\cdot)$ is the indicator 
function.  

By \eqref{eq:Y-d=1-def}, one can show that
\begin{align*}
\cf_{\mb{Y}} (h) &= \E[ \mb{Y}(t+h)\mb{Y}(t)^\top] 
 \\&= \int_\mathbb{R} e^{\I hx} \mb{P}(x) \sd(dx) \overline{\mb{P}(x)}
\\ &= \int_\mathbb{R} e^{\I hx} \mb{F}(dx), 
\end{align*}
where $\mb{F}(dx) = (F_{jk}(dx))_{p\times p}:=\mb{P}(x) \sd(dx) \overline{\mb{P}(x)}$ is now the matrix-valued {\em spectral 
measure} of $\mb{Y}$. (The definition and integration with respect to the measure 
$\mb{F}(dx)$ is discussed in Section \ref{sec:mmm}.)

The spectral measure $\mb{F}$ takes values in the space of all self-adjoint and positive-semidefinite
${p\times p}$ matrices.
In contrast with the scalar case ($p=1$), however, the spectral measure $\mb{F}$ may have complex components. 
In fact, $\mb{F}$ is real if and only if the process $\mb{Y}$ is covariance-symmetric (Proposition \ref{p:Y-real-reversible}).  

To gain some intuition and to connect to some of the existing multivariate Mat\'ern models, 
let $1\le j,k\le p$, and consider the cross-covariances: 
\begin{align}
C_{jk}(h) &= [\mb{C}_{\mb{Y}}(h)]_{jk} =  {\rm Cov}(Y_j(t+h),Y_k(t)) \nonumber\\ \begin{split}
&= 
 c_jc_k\int_{\mathbb{R}}e^{\I hx} (a_j + \I x)^{-\nu_j - \frac{1}{2}} \\
 &~~~~~~~~~~~~~~~\times(\Re(\sigma_{jk}) + \textrm{sign}(x)\I\Im(\sigma_{jk})) \\
 &~~~~~~~~~~~~~~~\times (a_k - \I x)^{-\nu_k - \frac{1}{2}}dx.
 \end{split}\label{eq:new_cc}
\end{align}
This shows that the univariate components of $\mb{Y}(s) = (Y_j(s))_{j=1}^p,\ s\in \R$, have Mat\'ern auto-covariances. 
Indeed, since $\Sig_H = \overline{\Sig_H}^\top$, we have $\Im(\sigma_{jj})=0$, and in fact $\Re(\sigma_{jj}) = \sigma_{jj}\ge 0$ by the
positivity of $\Sig_H$.  Thus, in view of \eqref{eq:new_cc}, the process $\{Y_j(s),\ s\in \R\}$ has precisely the Mat\'ern spectral 
density in \eqref{eq:matern_spectral_density} with parameters $a_j$, $\nu_j$, and $\sigma_{jj}$, provided $c_j := c(1, \nu_j, a_j)$.  
(Note that 
the diagonal terms $\sigma_{jj}$ now refer to marginal variance parameters.) 

Having non-zero off-diagonal entries in $\Sig_H$ is a simple and yet flexible way to model the cross-covariance structure 
between the components of $\mb{Y}$. We give next some intuition. In the special case when $\sigma_{jk}$ is real and 
$\nu_j=\nu_k=\nu$, $a_j=a_k=a$, and $c_j=c_k=c(1,\nu,a)$, for 
example, we obtain
\begin{align*}
 {\rm Cov}(Y_j(t), Y_k(s)) &= \sigma_{jk}\cdot c(1,\nu,a)^2 \\&
 ~~~~~~\times \int_{\R}  e^{\I (t-s) x} (a^2 +  x^2)^{-\nu - \frac{1}{2}} dx\\
& = \sigma_{jk} {\cal M}(|t-s|;a,\nu,1),
\end{align*}
so that $\sigma_{jk}$ may be interpreted as the (lag-independent) cross-covariance between the one-dimensional Mat\'ern processes $Y_j$ and $Y_k$.  In this case, the bi-variate process $\{(Y_j(t),Y_k(t))^\top,\ t\in\R\}$ has a real spectral measure, and it is covariance-symmetric, i.e., ${\rm Cov}(Y_j(t),Y_k(s)) = {\rm Cov}(Y_j(s),Y_k(t))$. 
Even if $\sigma_{jk}$ is real, however, when $\nu_j \not=\nu_k$ or $a_j \neq a_k$, we obtain that 
${\rm Cov}(Y_j(t),Y_k(s)) \not\equiv  {\rm Cov}(Y_j(s),Y_k(t))$, for all $s,t\in\R$.  
The closed-form expression of the resulting asymmetric cross-covariances is established in Theorem \ref{thm:re_ent} below in terms of a Whittaker special function.  
The possibility to have complex $\sigma_{jk}$'s adds even more flexibility to the model and the closed-form expression of the cross-covariances in the purely imaginary case ($\Re(\sigma_{jk}) = 0$), is established in
Theorem \ref{thm:im_ent} below.  
It involves the modified Struve and modified Bessel functions of the first kind. 
By combining the cases of real and imaginary $\sigma_{jk}$'s one obtains the general closed-form of the cross-covariances of this novel class of models (cf.\ Section \ref{sec:review}, below).

For general $d$, our construction results in anisotropic cross-covariances that are asymmetric. 
In much of spatial statistics, there is a focus on geometric anisotropy, where, in the scalar case, $C(\mb{h}) = C_{iso}(\sqrt{\mb{h}^\top \mb{Q} \mb{h}})$ for a positive definite matrix $\mb{Q}$ and isotropic covariance function $C_{iso}(\cdot)$ \citep{chiles2012geostatistics}. 
There is also an interest in zonal isotropy, where the covariance only depends on some elements of $\mb{h}$ \citep{chiles2012geostatistics}.
Here, we introduce cross-covariances that have more general and complex anisotropic structure. 


Our work uses the form of \eqref{eq:eta_ours} to construct multivariate Mat\'ern models, while 
\cite{bolin_multivariate_nodate}, \add{as well as \cite{terres2018bayesian} and \cite{guinness2014multivariate},} used instead \eqref{eq:eta_bolin},  
where they also essentially take $\sd(dx) = \Re(\Sig_H) dx$ in our notation. This results only in 
symmetric (reversible) models with $\mb{C}(h) =  \mb{C}(-h)$ for all $h \in \mathbb{R}$ (see also Section \ref{sec:alt_formal}, below, for more details). 
We advocate for using \eqref{eq:eta_ours} because, as illustrated, one obtains a more flexible family of 
models, and in many cases have closed-form expressions for their cross-covariances.

\medskip
{\em The remainder of the paper is structured as follows.}
In Section \ref{sec:spect_theory}, we review spectral theory and the notion of covariance-symmetry for multivariate processes. 
This section is technical, and some details may be omitted by the reader.
We also review the literature on multivariate Mat\'ern models in more detail. 
In Section \ref{Introduce_model}, we comprehensively address the case of $d=1$ as presented here in the introduction.
First, in Section~\ref{sec:re_ent} we find all closed-form expressions and explore the class of models that have {\it real directional measure} with $\Im(\sig_{jk}) =0$ in \eqref{eq:new_cc}.
We next consider a second class of asymmetric models
with \textit{imaginary} or \textit{complex directional measure}, i.e., $\Im(\sig_{jk}) \neq 0$, in Section~\ref{sec:im_ent}. 

We then introduce a model for general $d \in \{1,2, 3, \dots\}$ in Section~\ref{sec:spatial}. 
Although closed-form representations of the cross-covariances are not often available for $d > 1$, one can readily numerically approximate them using fast Fourier transforms. Furthermore, simulating processes based on these multivariate spatial cross-covariances is straightforward due to \cite{emery_improved_2016}. 

In Section \ref{sec:computation_simulation}, we demonstrate the utility of the model in simulation studies in terms of parameter estimation, spatial prediction, and a likelihood ratio test for testing the null hypothesis $\Im(\sigma_{jk}) = 0$. 
We find that estimating $\Im(\sigma_{jk})$ does not significantly detract from the estimation of $\Re(\sigma_{jk})$, and its estimation is not substantially more challenging than the estimation of $\Re(\sigma_{jk})$.
Furthermore, we find that ignoring the imaginary component $\Im(\sigma_{jk})$ when present results in less accurate predictions as well as detectably worse model fit as determined by the likelihood ratio test. 

In Section \ref{sec:data_analysis}, we estimate the proposed model separately in the context of two datasets: real-estate housing inventory and price data \citep{redfin}, and the Argo temperature data of the ocean at two different depths \citep{argo2020}, studied in \cite{bolin_multivariate_nodate}.
For the real estate data, the likelihood ratio test rejects the null hypothesis that $\sigma_{jk} \in \mathbb{R}$,
implying asymmetry in the cross-covariance, which can also be interpreted in terms of lags in supply and demand of the housing market.
We then outline a number of areas for future research in Section \ref{sec:mm_discussion}. 

All code used to present results in the paper is available at \url{https://github.com/dyarger/multivariate_matern}, where we also 
provide an R Shiny application that computes and plots the cross-covariances introduced here. 
The Appendix and Supplement contain further results on these multivariate Mat\'ern models.
For example, in the Supplement, we describe the tangent processes of the introduced multivariate Mat\'ern models. 
The tangent processes describe the (scaled) local behavior of the multivariate Mat\'ern processes, and, in the Gaussian case, they are operator fractional Brownian fields (OFBFs), an extension of fractional Brownian fields to the vector-valued setting \citep{didier_domain_2018, didier_integral_2011}. 
See \cite{shen_tangent_2022} and references therein for more background on tangent processes. 
We establish that our family of Mat\'ern-type models can realize essentially all possible OFBF tangent fields, thus providing maximal flexibility in the local behavior of the multivariate processes.

\section{Spectral theory and multivariate Mat\'ern models -- an overview}\label{sec:spect_theory}

In this section, for the sake of completeness, we first review some fundamental results on
the spectral representation of vector-valued processes.  We then discuss our contributions in the context of the 
recent and expanding literature on multivariate Mat\'ern and multivariate spatial processes.

\subsection{The multivariate Bochner and Cram\'er theorems} \label{sec:mmm}

We provide a treatment of the mathematical background behind the spectral properties of multivariate, second-order stationary spatial processes $\mb{Y} = \{\mb{Y}(\mb{s}),\ \mb{s}\in\R^d\}$ taking values in $\C^p$ over the field of complex numbers $\C$. 
The classical theorems of Bochner and Cram\'er are at the foundation of spectral analysis
of second-order processes and random fields.  These theorems also have far-reaching extensions to multivariate processes \citep{hannan:1970,yaglom_correlation_1987,gelfand_multivariate_2010} 
and, more generally, processes taking values in separable Hilbert spaces \citep[see, e.g.,][and the references therein]{neeb:1998,shen_tangent_2022}. 
Below, we will confine attention to the multivariate case. However, keep in mind that much of what we present has infinite-dimensional
extensions. 


We will equip $\C^p$ with the Euclidean inner product $\langle \mb{x},\mb{y}\rangle = \mb{x}^\top \overline{ \mb{y}} = \sum_{j=1}^p x_j \overline y_j$ and corresponding norm
$\|\mb{x}\|:= \sqrt{\langle \mb{x},\mb{x}\rangle}$. Let $L^2(\C^p)$ denote the class of $\C^p$-valued random elements $\mb{Y}$ defined on a common probability space
such that $\E[\|\mb{Y}\|^2]<\infty$.  An $\C^p$-valued stochastic process $\{\mb{Y}(\mb{s}),\ \mb{s}\in \R^d\}$ will be referred to as {\em second-order} if $\mb{Y}(\mb{s})\in L^2(\C^p)$ for all $\mb{s}$.

The {\em cross-covariance} of two zero-mean random vectors $\mb{X}$ and $\mb{Y}$ in $L^2(\C^p)$ is defined as the matrix
\begin{equation}\label{e:CXY}
\mb{C}(\mb{X},\mb{Y}):= \E\left[ \mb{X}\overline{\mb{Y}}^\top\right] = \Big( \E \left[ X_j\overline Y_k\right] \Big)_{p\times p}.
\end{equation}
Since $\left(\mb{X} \overline{\mb{Y}}^\top\right)^* = \mb{Y}\overline{\mb{X}}^\top,$ we have that $\mb{C}(\mb{X},\mb{Y})^* = \mb{C}(\mb{Y},\mb{X})$, where $\mb{A}^*$ denotes the conjugate transpose matrix of $\mb{A}$.
In particular the covariance matrix of $\mb{Y}$, $\mb{C}(\mb{Y},\mb{Y})$ is positive semidefinite. Recall that a complex matrix $\mb{A}$ of dimension $p\times p$ is said to be positive semidefinite or just positive if $\langle\mb{A}\mb{x},\mb{x}\rangle\ge 0$ for all $\mb{x}\in \C^p$, which necessarily implies that $\mb{A}$ is Hermitian, i.e., $\mb{A}=\mb{A}^*$. For convenience, we shall denote the class of complex $p\times p$ matrices as $\T$ and 
the subclass of positive matrices as $\T_+$.  


\begin{definition} A zero-mean second-order $\C^p$-valued stochastic process $\{\mb{Y}(\mb{s}),\ \mb{s}\in \R^d\}$ is said to be {\em covariance stationary} 
if its cross-covariance function $\mb{C}(\mb{Y}(\mb{t}), \mb{Y}(\mb{s})) $ depends only on the lag $\mb{t} - \mb{s}$, in which case define
\begin{equation}\label{e:C-acf}
\mb{C}(\mb{h}):=  \E\left[ \mb{Y}(\mb{h}) \overline{\mb{Y}(\mb{0})}^\top \right],
\ \ \mb{h}\in \R^d,
\end{equation}
which is said to be the stationary (matrix-valued) 
covariance function of $\{\mb{Y}(\mb{s}), \mb{s} \in \mathbb{R}^d\}$.
\end{definition}

As with scalar-valued processes, the stationary matrix-valued covariance functions
are {\em positive semi-definite} in the sense that:
\begin{equation} 
\label{e:C-psd-strong}
\sum_{j,k=1}^n \langle \mb{C}(\mb{s}_j-\mb{s}_k)\mb{x}_j, \mb{x}_k\rangle \ge 0,
\end{equation}
for all $\mb{s}_j\in\R^d,\ \mb{x}_j\in \C^p,\ j=1,\dots,n$ and $n\in \N$. This is immediate from the fact that the left-hand-side of \eqref{e:C-psd-strong} 
equals $\E | \sum_{j=1}^n \langle \mb{x}_j,\mb{Y}(\mb{s}_j)\rangle |^2 \ge 0$.  Interestingly, at least for the class
of continuous functions $\mb{C}(\cdot)$, the above property is equivalent to the seemingly weaker requirement that
\begin{equation} \label{e:C-psd-weak}
    \sum_{j,k=1}^n a_j \overline a_k \mb{C}(\mb{s}_j-\mb{s}_k)  \in \T_+ 
\end{equation}
for all $a_j \in \C,\ \mb{s}_j\in \R^d,\ j=1,\dots,n$ \citep[see, e.g., Corollary 4.4 in][]{shen:stoev:hsing:2020_extended}.

The next result extends the celebrated characterization of continuous positive-definite functions due to Bochner \citep[see, e.g.,][]{bochner1948vorlesungen}. 
Various versions of this result have appeared in the literature; see, for example, \cite{kallianpur1971spectral}, \cite{holmes1979mathematical}, 
\cite{neeb:1998}, \cite{durand2020spectral} and \cite{van2020note}. 
For a self-contained proof, see Theorem 4.2 in \cite{shen:stoev:hsing:2020_extended}. 
In the following, a $\mathbb{T}_{+}$-valued set function $F$ on $\mathcal{B}(\mathbb{R}^d)$ is said to be a $\mathbb{T}_{+}$-valued measure if it is $\sigma$-additive and $|F_{ij}(\mathbb{R}^d)| < \infty $ for all $i,j= 1, \dots, p$.

\begin{theorem}[Bochner]\label{thm:Bochner-Neeb} Let $\mb{C}:\R^d\to \T$ be continuous in each of its entries at $\mb{0}$. The matrix-valued function $\mb{C}$ is positive semidefinite in the sense of \eqref{e:C-psd-weak} if and only if 
there exists a finite $\T_+-$valued measure $\mb{F}$ on $(\R^d,{\cal B}(\R^d))$ such that
\begin{equation}\label{e:thm:Bochner-Neeb}
    \mb{C}(\mb{s}) = \int_{\R^d} e^{\ii \langle \mb{s},\mb{x}\rangle} \mb{F}(d\mb{x}),\ \ \mb{s}\in\R^d.
\end{equation}
In this case, the measure $\mb{F}$ is unique and $\mb{C}:\R^d\to \T$ is uniformly continuous in each of its entries.
Conversely, for every finite $\T_+-$valued measure $\mb{F}$, Relation \eqref{e:thm:Bochner-Neeb} defines a positive
semidefinite function in the sense of \eqref{e:C-psd-strong}.
\end{theorem}

When $\mb{C}$ is the matrix-valued auto-covariance function of a second-order process $\mb{Y}$, the measure $\mb{F}$ in 
\eqref{e:thm:Bochner-Neeb} is referred to as the {\em spectral measure} of $\mb{Y}$.  
In contrast to the scalar case, however, the spectral measure is 
now $\mathbb{C}^p$-valued and for each Borel set $A$,  $\mb{F}(A)$ is 
simply a $p\times p$ positive-semidefinite and self-adjoint matrix, so that in particular
$$
F_{jk}(A) = \overline{F_{kj}(A)},\ j,k\in \{1,2,\dots, p\}.
$$
In this case, the integration in \eqref{e:thm:Bochner-Neeb} can be viewed component-wise with 
respect to the signed  complex measures $F_{jk}(d\mb{x})$.

\begin{remark}  When there exists a measurable function $\mb{f}:\R^d \to \T$, such that 
$\mb{F}(A) = \int_A \mb{f}(\mb{x}) d\mb{x},$ $A\in {\cal B}(\R^d),$
then $\mb{f}$ is referred to as the {\em spectral density} of $\mb{C}$.  One sufficient condition for the existence of a spectral density is that all the entries of $\mb{C}$ are integrable,
in which case one obtains the formula
$
\mb{f}(\mb{x}) = (2\pi)^{-d} \int_{\R^d} e^{\ii \langle \mb{s},\mb{x}\rangle } \mb{C}(\mb{s})d\mb{s}.
$
Integration is again taken simply component-wise.
\end{remark}

%

The Cram\'er Theorem provides an important representation of $L^2$-continuous second-order processes
as stochastic integrals with respect to a random measure with orthogonal increments.  We present next
the corresponding Cram\'er-type result for $\C^p$-valued processes. We begin with defining the random measure.

\begin{definition}\label{def:xi-measure} Let $\mb{F}$ be a finite $\T$-valued measure on $(\R^d,{\cal B}(\R^d))$.  An $\C^p$-valued 
random set-function $\mb{\xi} :{\cal B}(\R^d)\to L^2(\C^p)$ is said to be a random measure with orthogonal increments and
structure or control measure $\mb{F}$ if:
\begin{itemize}
    \item [(i)] $\mb{\xi}(\emptyset) = \mb{0}$ and $\mb{\xi}$ is $\sigma$-additive, i.e., $\mb{\xi}(\cup_{n=1}^\infty A_n) = \sum_{n=1}^\infty \mb{\xi}(A_n),\ a.s.$, for all sequences of pairwise disjoint measurable sets $A_n\subset \R^d$. 
    \item [(ii)] $\mb{\xi}$ has orthogonal increments:
    $$
    \E\left[ \mb{\xi}(A) \overline{\mb{\xi}(B)}^\top \right]= \mb{F}(A\cap B),\mbox{for all }A, B\in {\cal B}(\R^d).
    $$
\end{itemize}
\end{definition}  

The existence of such random measures is in fact established as a by-product in the proof of Cram\'er's theorem. To gain 
some intuition, suppose that $A$ and $B$ are disjoint.  Then, property (ii) implies that $\E[\mb{\xi}(A) \overline{\mb{\xi}(B)^\top} ] = 0$,
so that ${\rm Cov}( \langle \mb{x},\mb{\xi}(A)\rangle, \langle \mb{y}, \mb{\xi}(B)\rangle) = 0$, for all $\mb{x},\mb{y}\in \C^p$ and all 
$A\cap B=\emptyset$.  That is, the measure $\mb{\xi}$ assigns orthogonal (uncorrelated) $\C^p$-valued random elements 
to non-overlapping sets in $\R^d$.

For simple functions $\mb{f}(\mb{x}) = \sum_{j=1}^n \mb{a}_j1_{A_j}(\mb{x})$, for some $\mb{a}_j\in\C^{p\times p}$ and pairwise 
disjoint $A_j$'s, the stochastic integral
$
{\cal I}(\mb{f}):= \int_{\R^d}\mb{f}(\mb{x})\mb{\xi}(d\mb{x}) := \sum_{j=1}^n \mb{a}_j \mb{\xi}(A_j),
$
is well defined and such that 
\begin{equation}\label{e:Ifg-isometry}
\E\left[ {\cal I}(\mb{f}) {\cal I}(\mb{g})^* \right] = \int_{\R^d} \mb{f}(\mb{x})\mb{F}(d\mb{x}) \mb{g}(\mb{x})^*,
\end{equation}
where $\mb{a}^*:= \overline{\mb{a}}^\top$ is the adjoint of $\mb{a}\in \C^{p\times p}$.
By the orthogonality of the increments of $\mb{\xi}$, the last integral involving the sandwiched $\T_+$-valued measure $\mb{F}$ 
is simply equal to the sum
$ \sum_{j} \mb{a}_j \mb{F}(A_j) \mb{b}_j^*,$ where without loss of generality the simple function
$\mb{g}(\mb{x}) = \sum_{j=1}^n \mb{b}_j 1_{A_j}(\mb{x})$ involves the same collection of disjoint sets $A_j$'s.

Thus, with a standard isometry argument, the definition of ${\cal I}$ can be extended to the class of
all Borel functions $\mb{f}$, such that $\int_{\R^d} \|\mb{f}(\mb{x})\|^2 \|\mb{F}\|_{F}(d\mb{x})<\infty$. Then, we can state the following result for which the proof is a natural extension of the original ideas of \cite{cramer1942harmonic} \citep[cf.\ 
Theorem 4.7 in][]{shen:stoev:hsing:2020_extended}.


\begin{theorem}[Cram\'er]\label{thm:Cramer} Let $\mb{Y}=\{\mb{Y}(\mb{s}),\ \mb{s}\in\R^d\}$ be an $L^2$-continuous covariance stationary $\C^p$-valued
process with spectral measure $\mb{F}$.  Then, there exists an almost surely unique random measure $\mb{\xi}$ with orthogonal
increments and structure measure $\mb{F}$ such that, for each $\mb{s}\in\R^d$,
\begin{equation}\label{e:thm:Cramer}
\mb{Y}(\mb{s}) = \int_{\R^d} e^{\ii \langle \mb{s},\mb{x}\rangle} \mb{\xi}(d\mb{x}),\ \ \mbox{ almost surely.} 
\end{equation}
\end{theorem}

We emphasize that the above stochastic integral gives an almost sure representation for each {\em fixed} $\mb{s}$, but this does not mean that the representation is valid path-wise (with probability one). 


\begin{remark}
We now discuss an approach to generate a process $\mb{Y}(\mb{s}) \in \mathbb{C}^p$ from a covariance with a given spectral density function $\mb{f}(\mb{x})$. One may decompose the spectral density as 
\begin{align*}
\int_A \mb{f}(\mb{x}) d\mb{x} &= \int_A \mb{g}(\mb{x}) \mb{\eta}(d\mb{x})\mb{g}(\mb{x})^* \\
&= \int_A \mb{g}(\mb{x}) \mb{\eta}(d\mb{x}) \overline{\mb{g}(\mb{x})}^\top, 
\end{align*}
for some $\T_+$-valued measure $\mb{\eta}$.  Indeed, since $\mb{f}(\mb{x}) \in \T_+$, 
one can take in particular $\mb{g}(\mb{x}):= \mb{f}(\mb{x})^{1/2}$ and
$\mb{\eta}(d\mb{x}) = \mb{I}_p d\mb{x}$.  Define
$$
\mb{\xi}(A):= \int_A \mb{g}(\mb{x}) \mb{B}(d\mb{x}),
$$
where $\mb{B}(d\mb{x})$ is a zero-mean complex Gaussian random measure with orthogonal increments such that 
\begin{align}\begin{split}
\mb{B}(-d\mb{x}) &= \overline{\mb{B}(d\mb{x})}\ \mbox{ and }\\ \E\left[ \mb{B}(A_1) \overline{\mb{B}(A_2)}^\top\right]
&= \mb{\eta}(A_1\cap A_2),\label{eq:isom}\end{split}
\end{align}
for Borel sets $A_1$ and $A_2$.

Define
$$
\mb{Y}(\mb{s}) := \int_{\R^d} e^{\ii \langle \mb{s},\mb{x}\rangle} \mb{\xi}(d\mb{x})\equiv \int_{\R^d} e^{\ii \langle \mb{s},\mb{x}\rangle} \mb{g}(\mb{x}) \mb{B}(d\mb{x}).
$$Then note that by property \eqref{eq:isom}
\begin{align*}
&\E\left[\mb{Y}(\mb{t})\overline{\mb{Y}}(\mb{s})^\top\right]\\
&~~~~~= \int_{\R^d} e^{\ii \langle \mb{t}-\mb{s},\mb{x}\rangle} \mb{g}(\mb{x})\E\left[ \mb{B}(d\mb{x})\overline{\mb{B}(d\mb{x})}^\top\right] \overline{\mb{g}(\mb{x})}^\top\\
&~~~~~ = \int_{\R^d} e^{\ii \langle \mb{t}-\mb{s},\mb{x}\rangle} \mb{g}(\mb{x}) \mb{\eta}(d\mb{x}) \mb{g}(\mb{x})^* \\
&~~~~~= \int_{\R^d} e^{\ii \langle \mb{t}-\mb{s},\mb{x}\rangle} \mb{f}(\mb{x}) d\mb{x} =: \mb{C}(\mb{t}-\mb{s}).
\end{align*}
This example illustrates the relationship between the Bochner and Cram\'er theorems and shows in 
particular why every continuous positive definite function can be realized as the auto-covariance
of (a possibly complex-valued) second order process $\mb{Y}$. 
\end{remark}

In applications, we normally deal with {\em real} processes, where $\mb{Y}(\mb{s})$ takes values in $\R^p$ and of course its auto-covariance is real.  This imposes constraints on both the spectral measure $\mb{F}$ in \eqref{e:thm:Bochner-Neeb} and the orthogonal random measure $\mb{\xi}$ in \eqref{e:thm:Cramer} 
as shown next.

\begin{proposition}\label{p:Y-real} Let $\mb{Y}=\{\mb{Y}(\mb{s}),\ \mb{s}\in \R^d\}$ be a zero-mean second-order $L^2$-continuous
and covariance stationary process with auto-covariance $\mb{C}$, spectral measure $\mb{F}$, and stochastic
representation as in \eqref{e:thm:Cramer}. The following statements hold:

\begin{itemize}
    
    \item [(i)] $\mb{C}(\mb{s})$ is real for all $\mb{s}\in\R^d$, if and only if 
     $\mb{F}(-A) = \overline{\mb{F}(A)}$, for all $A\in {\cal B}(\R^d)$
     \item[(ii)] The process $\mb{Y}$ is real if and only if $\mb{\xi}(-A) = \overline{\mb{\xi}(A)}$, a.s., for all $A\in {\cal B}(\R^d)$
    \item [(iii)] Claim (ii) implies (i), and the converse is not true. However, every real auto-covariance can 
    be realized by a real process $\mb{Y}$.
\end{itemize}
\end{proposition}

\begin{definition} \label{def:Hermitian} The measure $\mb{F}$ and the random measure $\mb{\xi}$ that satisfy properties
(i) and (ii) of Proposition \ref{p:Y-real}
will be referred to as Hermitian.
\end{definition}

Let now $\mb{Y}=\{\mb{Y}(\mb{s}),\ \mb{s}\in\R^d\}$ be a covariance-stationary $\R^p$-valued process. In the multivariate case 
($p\ge 2$) the auto-covariance of $\mb{Y}$ is typically not symmetric, i.e., in general  $\mb{C}(-\mb{s}) \not =\mb{C}(\mb{s})$.  Nevertheless, many existing
multivariate models explicitly or implicitly impose this symmetry property, which can severely constrain the structure of
the spectrum.

\begin{proposition}\label{p:Y-real-reversible} A second-order stationary $\R^p$-
valued process $\mb{Y}=\{\mb{Y}(\mb{s}),\ \mb{s}\in \R^d\}$ is 
covariance-symmetric, i.e.\ $\mb{C}(\mb{s}) = \mb{C}(-\mb{s})$ for 
all $\mb{s}\in\R^d$, if and only if its spectral measure $\mb{F}$ is 
{\em real} and {\em symmetric}, that is,
$$
\mb{F}(A) = \mb{F}(-A) = \overline{\mb{F}(A)},\ \ A\in {\cal B}(\R^d).
$$
\end{proposition}
\begin{proof} In view of \eqref{e:thm:Bochner-Neeb} and covariance-symmetry, 
$$
\mb{C}(\mb{s}) =  \mb{C}(-\mb{s}) = \int_{\R^d} e^{\ii \langle \mb{s},\mb{x}\rangle} \mb{F}(-d\mb{x}),
$$
which by the uniqueness of $\mb{F}$ in \eqref{e:thm:Bochner-Neeb} entails $\mb{F}(d\mb{x}) = \mb{F}(-d\mb{x})$. With $\mb{Y}$ real, Proposition \ref{p:Y-real}
also implies that $\mb{F}(d\mb{x}) = \overline{\mb{F}(-d\mb{x})}$.  This completes the proof.
\end{proof}
\begin{remark}
Propositions \ref{p:Y-real} and \ref{p:Y-real-reversible} imply that 
a real, covariance-stationary process $\mb{Y}$ will have a complex-valued spectral measure unless it is covariance-symmetric.
\end{remark}

\subsection{Comparison with existing multivariate Mat\'ern models} \label{sec:new_Matern}

While the spectral density has been an important feature of multivariate spatial models, models in the literature are primarily described in the spatial, as opposed to the frequency, domain. 
For example, \cite{gneiting_matern_2010} and subsequent work begins by proposing cross-covariances that are proportional to a Mat\'ern covariance and have their own $a$ and $\nu$ parameters; in mathematical notation, one takes \begin{align*}
C_{jk}(h) &= \frac{2^{1- \nu_{jk}}}{\Gamma(\nu_{jk})}\sigma_{jk} (a_{jk}|h|)^{\nu_{jk}} \besselK_{\nu_{jk}}(a_{jk}|h|)
\end{align*}for $a_{jk} >0$, $\nu_{jk} > 0$, and $\sigma_{jk}$ real-valued and positive. 

\cite{gneiting_matern_2010} then use the matrix-valued spectral densities to ensure validity of the model. 
\cite{bolin_multivariate_nodate} and \cite{guinness_nonparametric_2022} also construct models through the spectral domain. 
While \cite{bolin_multivariate_nodate}, as mentioned earlier, describes a similar approach to ours, \cite{guinness_nonparametric_2022} focuses on data observed on regular grids. 
Analysis of multivariate Mat\'ern models through the spectral domain include \cite{kleiber_coherence_2018} and \cite{guinness2022inverses}.
The title of this paper refers to the approach of using of a complex-valued variance parameterization in the spectral domain and the factoring of the spectral density into the context of this previous literature.
From here, we more generally compare our proposed models with previous literature, focusing on a few of the models' aspects. 
\begin{itemize}
\item {\bf Model validity:} For the cross-covariances based on our spectral approach, model validity is immediate in any spatial dimension $d$ and any number of components $p$. 
For the multivariate Mat\'ern proposed by \cite{gneiting_matern_2010}, finding parameter values for which the model is valid has been rather nontrivial. Although substantial progress has been made on that by ensuing work, for instance, \cite{apanasovich_valid_2012}, \cite{du2011spherically}, and \cite{emery_new_2022}, 
the parameter constraints still tend to be technical and difficult to interpret.


\item {\bf Model flexibility:} The new cross-covariance models introduced here have a large amount of flexibility.
In particular, the cross-covariances can have flexible asymmetric forms, the lag $\mb{h} \in \mathbb{R}^d$ with the most dependence between two processes may not be $
\mb{h}=\mb{0}$, and the dependence between processes may be positive for some lags yet negative for others. 
These properties were not originally available for the multivariate Mat\'ern of \cite{gneiting_matern_2010}, which is symmetric and entails positive (or negative) dependence for all spatial lags. 

Considerable research efforts have led to a variety of improvements in the flexibility of multivariate Mat\'ern models.
For example, \cite{li_approach_2011}, \cite{qadir_flexible_2020}, and \cite{mu2024gaussian}propose delay-type asymmetries to cross-covariance functions. \cite{vu_modeling_2022} introduce a deformation approach that also generalizes the delay-type asymmetries. 
From a different perspective, \cite{alegria_bivariate_2021} propose symmetric cross-covariances for bivariate processes with maximal correlation at a lag other than $h=0$. 
More general approaches for introducing flexibility in cross-covariance models include latent dimensions \citep{apanasovich_cross_covariance_2010} and conditioning  \citep{cressie_multivariate_2016}. 

\item {\bf Parameter space:} The proposed model has a substantially different parameter space compared to the multivariate Mat\'ern of \cite{gneiting_matern_2010}. 
The multivariate Mat\'ern models based on \cite{gneiting_matern_2010} require the additional ``smoothness'' parameter $\nu_{jk}$ and the inverse range parameter $a_{jk}$ that describe each cross-covariance. 
\cite{kleiber_coherence_2018} notes that these parameters do not have ``straightforward interpretations.'' 
Since the size of the parameter space is a computational concern for such models \citep{guinness_nonparametric_2022}, each additional parameter in the cross-covariances complicates model estimation. 

In our cross-covariances, the introduction of additional parameters $a_{jk}$ and $\nu_{jk}$ is not necessary.
The parameter constraints on the model introduced here are also more simple: one needs the smoothness and range parameters of each process to be positive (that is, that each univariate process is valid) as well as the matrix-valued spectral measure $\sd(\cdot)$ describing the variance and covariances to be positive and Hermitian. 
For existing multivariate Mat\'ern models, the validity constraints also challenge the interpretation of other parameters: for example, the ``correlation'' parameter between two processes may be limited to an interval smaller than $(-1,1)$, complicating its interpretation as such \citep{emery_new_2022}. 

\end{itemize}

As we described in the bullet points above, there has been substantial progress dealing with the individual issues discussed there.
However, for the most part, these issues are considered separately.
For example, the literature in ``model flexibility'' above builds upon existing covariances or cross-covariances, leading to restrictive conditions for model validity and an expanded parameter space when using the multivariate Mat\'ern of \cite{gneiting_matern_2010}. 
Conversely, the model introduced by \cite{bolin_multivariate_nodate} has simple validity conditions and a reduced parameter space, but it does not introduce additional structure in the cross-covariance functions. 
A key advantage of the spectral approach in this paper is that it allows us to address a multitude of issues simultaneously.

\section{Multivariate Mat\'ern models in one dimension}\label{Introduce_model}

This section provides the formulation and description of new multivariate Mat\'ern models when $d=1$. We return to the proposed spectral density in \eqref{eq:new_cc} based on the self-adjoint matrix $\Sig_H = [\sigma_{jk}]_{j,k=1}^p$ which gives a cross-covariance between process $j$ and $k$ at lag  $h = \tilc_1 - \tilc_2$ of 
\begin{align}\label{eq:d=1-C-definition}\begin{split}
\mathbb{E}&\left[Y_j(\tilc_1) Y_k(\tilc_2)\right]=c_jc_k\int_\mathbb{R} e^{\I hx}(a_j + \I x)^{-\nu_j - \frac{1}{2}}\\
&~~~~~~~~~~~\times \{\Re(\sigma_{jk}) + \I\Im(\sigma_{jk})\textrm{sign}(x)\} \\
&~~~~~~~~~~~\times(a_k - \I x)^{-\nu_k - \frac{1}{2}} dx. \end{split}
\end{align}
One can also consider a model with permuted signs of $a_j - \I x$ and $a_k + \I x$; these turn out to be reflected versions of cross-covariances with $a_j + \I x$ and $a_k -\I x$ and thus correspond to similar shapes; see Proposition \ref{prop:reflected} later. 
Let $g_{jk}(x) =c_jc_k(a_j + \I x)^{-\nu_j - \frac{1}{2}}(a_k - \I x)^{-\nu_k - \frac{1}{2}}$, and notice that
\begin{align}\begin{split}
\mathbb{E}&\left[Y_j(\tilc_1) Y_k(\tilc_2)\right]=\Re(\sig_{jk})\int_\mathbb{R}e^{\I hx} g_{jk}( x) dx \\
&~~~~~~+ \Im(\sig_{jk})\I\int_\mathbb{R}e^{\I hx} g_{jk}( x) \textrm{sign}(x)  dx\label{eq:initial_breakdown} \\
&=:\Re(\sig_{jk})C_{jk}^\Re(h) + \Im(\sig_{jk})C_{jk}^\Im(h)\end{split}, \end{align}where $C_{jk}^\Re(h)\in \mathbb{R}$ and $C_{jk}^\Im(h)\in \mathbb{R}$ are the respective portions of the cross-correlation functions corresponding to $\Re(\sig_{jk})$ and $\Im(\sig_{jk})$. The next two subsections deal with the two terms of \eqref{eq:initial_breakdown} individually. 

\subsection{Cross-covariances with real directional measure}\label{sec:re_ent}
Here, we assume that $\sig_{jk}$ is real, and from \eqref{eq:initial_breakdown} we obtain the valid cross-covariances of the form \begin{align}\begin{split}
C_{jk}(h)&:=\mathbb{E}[Y_j(h)Y_{k}(0)]\\
&=\Re(\sig_{jk})\int_\mathbb{R} e^{\I hx}g_{jk}(x) dx.\label{eq:re_part_int2}\end{split}\end{align}
The resulting expression for this integral involves the Whittaker function $\whittW_{\kappa, \mu}(z)$, which we briefly discuss following the results in Chapter 13 of \cite{NIST:DLMF}. 
In particular, define \begin{align*}
    \whittW_{\kappa, \mu}(z) = \textrm{exp}\left(-\frac{1}{2}z\right)z^{\frac{1}{2} + \mu} U\left(\frac{1}{2} + \mu - \kappa,1 + 2\mu, z\right),
\end{align*}where $U(a,b,z)$ is a confluent hypergeometric function. 
See, for example, the works \cite{NIST:DLMF} or \cite{abramowitz_handbook_1972} for full definitions of $W_{\kappa, \mu}(z)$ and $U(a,b,z)$. 
Furthermore, the limiting form 
\begin{align}
    \whittW_{\kappa, \mu}(z) \sim \textrm{exp}\left(-\frac{1}{2}z\right)z^\kappa,\ \ \mbox{ as } z\to\infty\label{eq:asymp_whitt_inf}
\end{align} holds \citep{NIST:DLMF}. 
Here, $f(x) \sim g(x)$ as $x\to x_0$, means that $f(x)/g(x)\to 1$ as $ x\to x_0$.
The function also satisfies $\whittW_{\kappa, \mu}(z) = \whittW_{\kappa, -\mu}(z)$, and the modified Bessel function of the second kind, $\besselK_\nu(z)$, is related to a special case of the function $\whittW_{\kappa, \mu}(z)$:\begin{align*}
    \whittW_{0,\nu}(2z)&=\sqrt{\frac{2z}{\pi}}\besselK_\nu(z),
\end{align*}which we will use to compare this cross-covariance to the Mat\'ern covariance.

Now, we provide a closed-form expression of the integral in \eqref{eq:re_part_int2}. 

\begin{theorem}\label{thm:re_ent}
Suppose $\Im(\sig_{jk}) = 0$ and for notational ease, define the values 
\begin{align*}
a_+ &= \frac{(a_j + a_k)}{2},&  a_- &= \frac{(a_j - a_k)}{2},\\  \nu_+ &= \frac{(\nu_j + \nu_k)}{2},& \mbox{ and }\ \nu_- &= \frac{(\nu_j - \nu_k)}{2}.
\end{align*}
Then, the closed-form formula of the $j,k$-th entry of the cross-covariance as in \eqref{eq:re_part_int2} is \begin{align}
\begin{split}
  &C_{jk}(h)=\Re(\sigma_{jk}) c_jc_k \frac{\pi}{a_+} \left(\frac{|h|}{2a_+}\right)^{\nu_+ -\frac{1}{2}} \\ 
  &~~~~~~~~~\times{\normalfont \textrm{exp}}(-ha_-)\begin{cases}
\frac{\whittW_{\nu_-, \nu_+}(2a_+|h|)}{\Gamma(\nu_j+\frac{1}{2})} & h > 0\\
\frac{\whittW_{-\nu_-, \nu_+}(2a_+|h|)}{\Gamma(\nu_k+\frac{1}{2})} & h < 0
\end{cases}.\label{eq:whittaker}
\end{split}
\end{align}
For $h = 0$, the cross-covariance value is \begin{align*}
      C_{jk}(0)
      &= \Re(\sigma_{jk}) \frac{a_j^{\nu_j}a_k^{\nu_k}}{a_+^{2\nu_+}}\frac{\Gamma(2\nu_+)}{\sqrt{\Gamma(2\nu_j)\Gamma(2\nu_k)}}.
\end{align*}
\end{theorem}
\begin{proof}

For $h\neq 0$, by applying  3.384 (9) of \cite{GR_table_2015}, we see that \begin{align*}
&\int_\mathbb{R}e^{\I hx} g_{jk}( x) dx \\
&~~~~~=c_jc_k\int_\mathbb{R} e^{\I hx}(a_j+ \I x)^{-\nu_j - \frac{1}{2}}(a_k - \I x)^{-\nu_k - \frac{1}{2}} dx \\ &~~~~~=c_jc_k2\pi(2a_+)^{ -\nu_+ - \frac{1}{2}} |h|^{\nu_+ -\frac{1}{2}}\textrm{exp}\left(-ha_-\right)\\ &~~~~~~~~~~~\times\begin{cases}
\whittW_{\nu_-, - \nu_+}(2a_+|h|)/\Gamma(\nu_j+\frac{1}{2}) & h > 0\\
\whittW_{-\nu_-, -\nu_+}(2a_+|h|)/\Gamma(\nu_k+\frac{1}{2}) & h < 0,
\end{cases}
\end{align*}under the assumption that $h \in \mathbb{R}$ and $\nu_j$, $\nu_k$, $a_j$, and $a_k \in \mathbb{R}_{>0}$. 
Simplifying and applying the formula $\whittW_{\kappa, \mu}(z) = \whittW_{\kappa, -\mu}(z)$ gives the final form. 

For $h = 0$, since the function $x \mapsto (a_j+ \I x)^{-\nu_j - \frac{1}{2}}(a_k - \I x)^{-\nu_k - \frac{1}{2}}$ is integrable, its Fourier transform $C_{jk}(h)$ is uniformly continuous in $h$, we use the expansion for $W_{\kappa, \mu}(z)$ near $z=0$ described in \cite{NIST:DLMF}: that is, \begin{align*}
    W_{\nu_-, \nu_+}(2a_+ |h|) &\overset{h \downarrow 0}{\sim} \frac{\Gamma(2\nu_+)}{\Gamma(\nu_k + \frac{1}{2})}(2a_+|h|)^{\frac{1}{2} - \nu_+}.
\end{align*}
A similar expression is obtained when taking $h \uparrow 0$.
The final form comes from substituting in the values of $c_j$ and $c_k$ and using properties of the gamma function. 
\end{proof}

This formula is relatively complicated and not immediately intuitive, so we next discuss the intricacies of this model in a series of remarks. 

\begin{remark}[Relation to probability density function of a gamma difference distribution]
The form \eqref{eq:whittaker} is proportional to the probability density function of the ``gamma difference'' distribution; see \cite{klar_note_2015} for more information. 
In particular, if $X_j \sim \textrm{Gamma}(\nu_j + 1/2, a_j)$ and $X_k \sim \textrm{Gamma}(\nu_k + 1/2, a_k)$ are independent, then $X_j - X_k$ has probability density function proportional to \eqref{eq:whittaker}; the random variable $X_j- X_k$ has mean $\nu_j/a_j - \nu_k/a_k$ and variance $(\nu_j + 1/2)/a_j^2 + (\nu_k + 1/2)/a_k^2$. 
Note that $g_{jk}(x)$ is proportional to the product of characteristic functions of $-X_j$ and $X_k$, which are $(a_j + \I x)^{-\nu_j - \frac{1}{2}}$ and $(a_k - \I x)^{-\nu_k - \frac{1}{2}}$, respectively.
Since this cross-covariance function is always proportional to probability density functions, it takes the sign of $\Re(\sigma_{jk})$ for all lags. 
Notice that, by using the normalization constant of the gamma difference distribution presented in \cite{klar_note_2015}, we can use the adjustment $\tilde{c}_j\tilde{c}_k$ in the place of $c_jc_k$ with $\tilde{c}_j^{-1} = \sqrt{2\pi} a_j^{-\nu_j - 1/2}$ to obtain $\int_\mathbb{R} C_{jk}(h) dh = \Re(\sigma_{jk})$, if such a property is desired.
\end{remark}

\begin{remark}[Special cases]
We consider a series of special cases of this cross-covariance. First, consider the case where $\nu := \nu_j = \nu_k$. Here, since $\whittW_{0, \nu}(2z) = \sqrt{2z/\pi} \besselK_\nu(z)$,  \eqref{eq:whittaker} reduces to \begin{align*}
    C_{jk}(h) = \Re(\sigma_{jk})\frac{2(2a_+)^{ -\nu}(a_ja_k|h|)^{\nu}}{\Gamma(\nu)} e^{-h a_-}\besselK_{\nu}(a_+|h|).
\end{align*}This is proportional to the probability density function of a Bessel function distribution or variance-gamma distribution; see Section 4.1 of \cite{kotz_laplace_2001} or \cite{fischer_variance_gamma_2023}. Furthermore, if $\nu = \nu_j = \nu_k$ and $a := a_j = a_k$, \eqref{eq:whittaker} reduces to \begin{align*}
C_{jk}(h) = \Re(\sigma_{jk})\frac{2^{1- \nu} }{\Gamma(\nu)}(a|h|)^{\nu}\besselK_{\nu}(a|h|),\end{align*}leading directly to a function proportional to the Mat\'ern covariance, for this case matching the cross-covariance in \cite{gneiting_matern_2010} and \cite{bolin_multivariate_nodate}. 
Characterization of the Mat\'ern covariance as proportional to the Bessel function distribution has been established, for example, in Section 2.3 of \cite{paciorek_2003}. 

Next, consider the case where $\nu_k = 1/2$. 
Due to the representation $\whittW_{\frac{1}{4} - \frac{\nu_j}{2}, \frac{1}{4} + \frac{\nu_j}{2}}(2a_+|h|) = e^{-a_+|h|} (2a_+|h|)^{\frac{1}{4} - \frac{\nu_j}{2}} $ for $h < 0$ when using (13.18.2) of \cite{NIST:DLMF}, we have \begin{align*}
    C_{jk}(h) &=\Re(\sigma_{jk}) c_jc_k2\pi(2a_+)^{ -\nu_j - \frac{1}{2} }{\normalfont \textrm{exp}}(-|h|a_k) , ~ h < 0.
\end{align*}
Thus, if one of the marginal covariances has exponential form, the cross-covariance function is proportional to an exponential covariance function for half of its domain.

Finally, consider when $\nu_j = \nu_k= 1/2$ and $a_j \neq a_k$.
Since $\besselK_{1/2}(z) = (\pi/2z)^{1/2}\textrm{exp}(-z)$, \eqref{eq:whittaker} becomes \begin{align*}
C_{jk}(h) = \Re(\sigma_{jk})\frac{(a_ja_k)^{\frac{1}{2}}}{a_+}\times\begin{cases}
    \textrm{exp}\left(-a_j|h|\right) & \textrm{if } h >0 \\
    \textrm{exp}\left(-a_k|h|\right) & \textrm{if } h <0   
\end{cases}.
\end{align*}This function is proportional to the probability density function of an asymmetric Laplace distribution. 
Thus, when both marginal distributions have an exponential covariance function, the cross-covariance developed here has a similar form, proportional to the density of a (potentially asymmetric) Laplace distribution. 
The cross-covariance form can also be obtained using exponentials when $\nu_j = \nu_k = 3/2$ and $a_j\neq a_k$ using the expression for $K_{3/2}(z)$.
In Table \ref{tab:distribution_relationships}, we summarize these relationships.

\begin{table}[ht]
    \centering
    \begin{tabular}{|c|c|c|}\hline
        Setting & $a_j = a_k$ & $a_j \neq a_k$  \\ \hline
        $\nu_j = \nu_k = 1/2$ & Laplace & asymmetric Laplace \\ 
       $\nu_j = \nu_k$ & Bessel function & Bessel function \\ 
       $\nu_j \neq \nu_k$ & gamma difference & gamma difference  \\ \hline
   \end{tabular}
    \caption[Relationships between multivariate Mat\'ern and distributions]{Relationships between multivariate Mat\'ern cross-covariances with real directional density and probability density functions of distributions. With different parameter settings, the Mat\'ern cross-covariances presented here are proportional to the probability density functions of these distributions.}
    \label{tab:distribution_relationships}
\end{table}

\end{remark}

\begin{remark}[Normalization]

\normalfont
We primarily take the normalization using $c_j$ and $c_k$ as defined in \eqref{eq:matern_normalization} (the ``original'' normalization), ensuring that we obtain a Mat\'ern covariance with a value at $h = 0$ of $\Re(\sigma_{jj})$ for each $j$. 
Also notice, for visualization purposes only, we can also apply the normalization suggested by \eqref{eq:whittaker} to obtain $C_{jk}(0) = \Re(\sigma_{jk})$, which we use for some of the panels of Figure \ref{fig:type_pos}. 
\end{remark}

\begin{remark}[Visualization of cross-covariances]
We plot resulting cross-covariances for different parameter values and normalizations in Figure \ref{fig:type_pos}.
In general, we see that imbalances between parameters $\nu_j$ and $\nu_k$ or $a_j$ and $a_k$ introduce asymmetries into the cross-covariances. 
Changing the parameters $\nu_j$ and $a_j$ primarily change the behavior of the cross-covariance over positive lags. 
We next formalize this observation.

\begin{figure}[ht]
\centering
\includegraphics[width = .45 \textwidth]{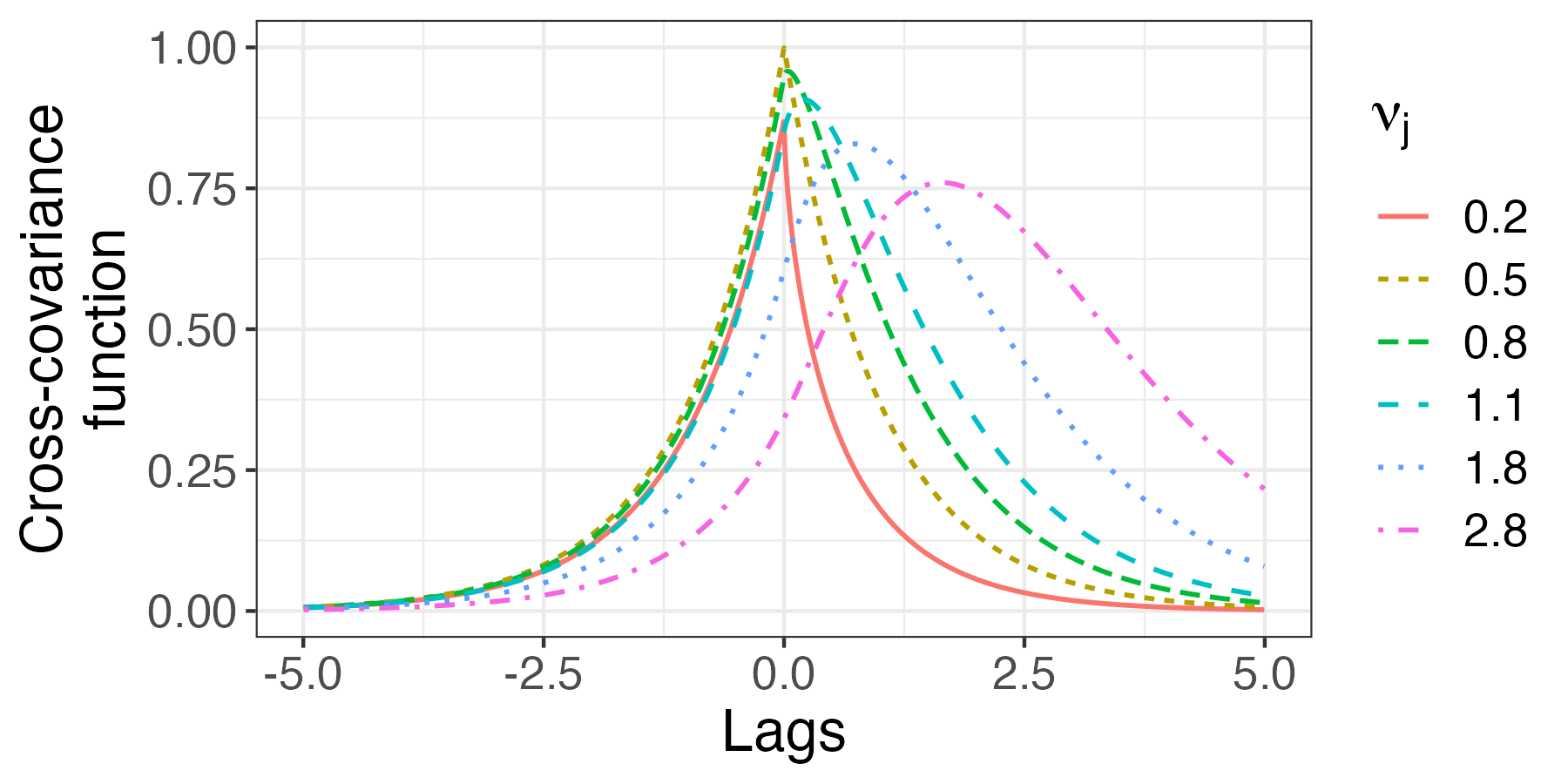}
\includegraphics[width = .45 \textwidth]{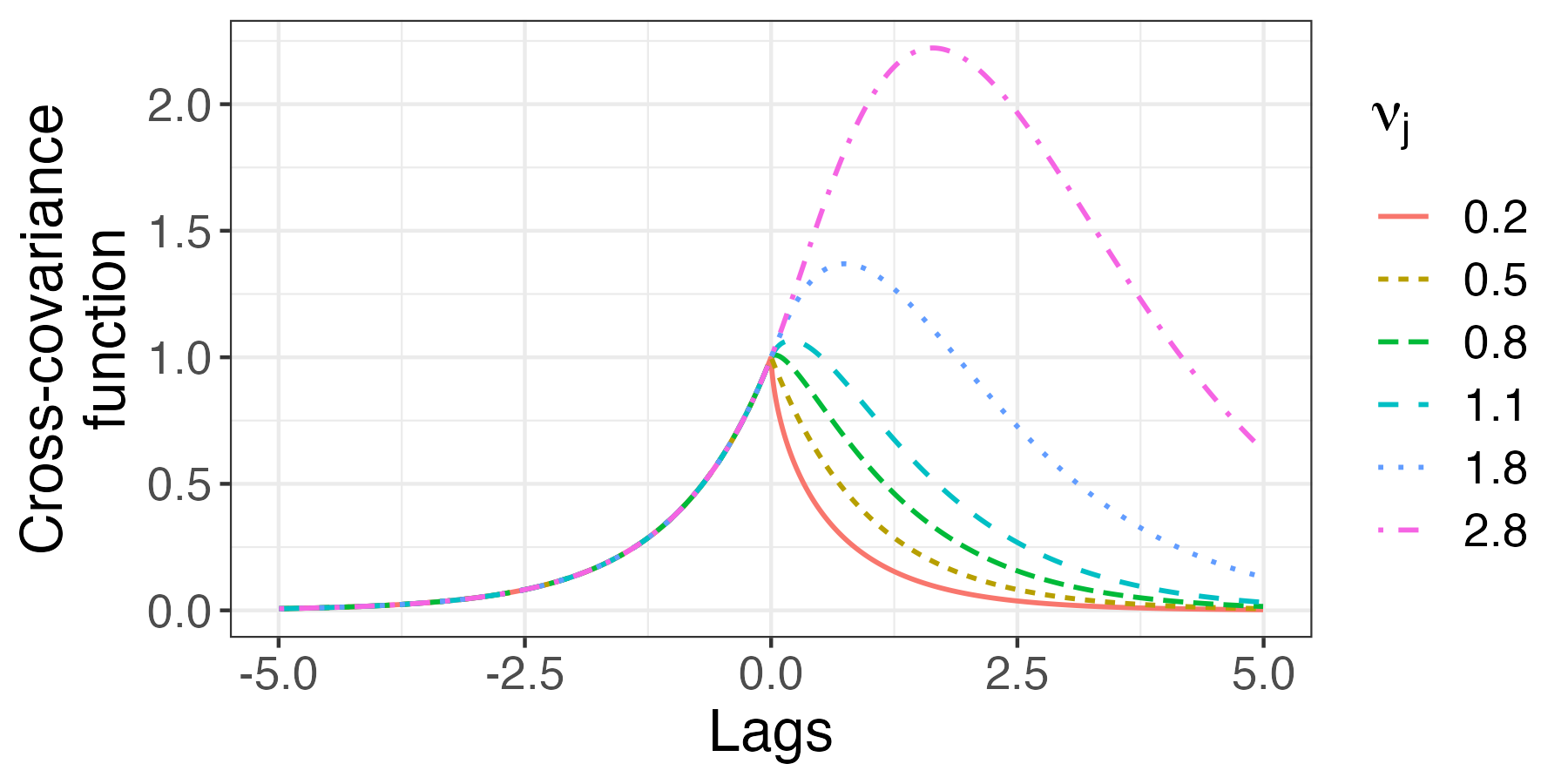}

\includegraphics[width = .45 \textwidth]{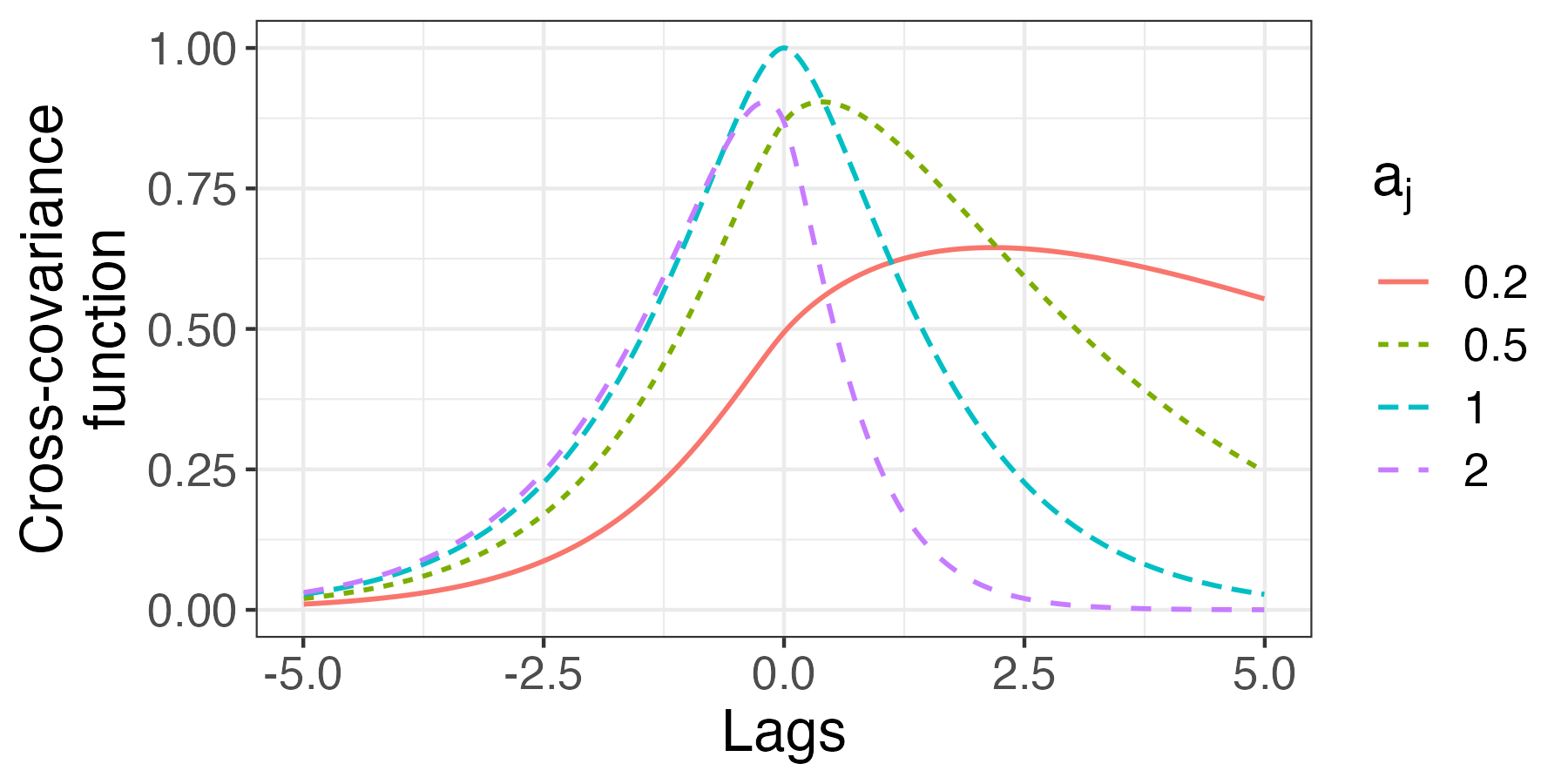}
\includegraphics[width = .45 \textwidth]{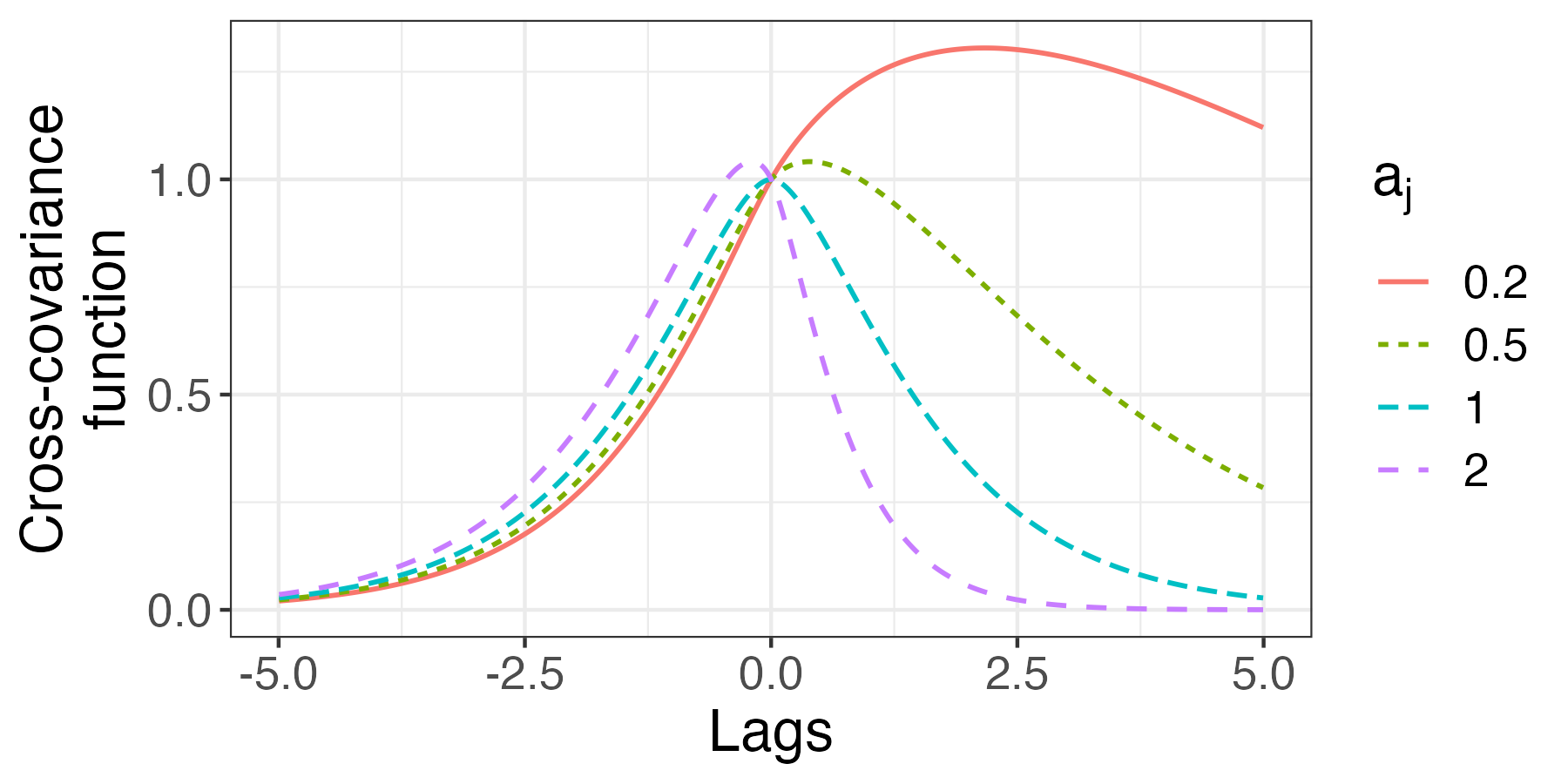}

\caption[Cross-covariance examples for $d=1$ and real directional measure]{Mat\'ern cross-covariances with $d=1$ and $\sigma_{jk} = 1$ (Top) holding fixed $a_j = a_k = 1$ and $\nu_k= 0.5$, we plot the cross-covariance function when varying $\nu_j$ under (Upper Top) the original normalization and (Lower Top) the alternative normalization; (Bottom) Holding fixed $a_k = 1$, $\nu_j= \nu_k = 0.8$, we plot the cross-covariance function when varying $a_j$ under (Upper Bottom) the original normalization and (Lower Bottom) the alternative normalization.}\label{fig:type_pos}
\end{figure}
\end{remark}

\begin{remark}[Expansion at $h \to \pm \infty$]\normalfont
The asymptotic expression for the Mat\'ern covariance is \begin{align*}
    {\cal M}(|h|; a, \nu, \sigma_{11}) 
    \overset{h \to \infty}{\sim} \sigma_{11}\frac{2^{\frac{1}{2} - \nu}\sqrt{\pi}}{\Gamma(\nu)}(a|h|)^{\nu - \frac{1}{2}} \textrm{exp}(-a|h|),
\end{align*}by using the asymptotic expressions for the function $\besselK_\nu(z) \overset{z \to \infty}{\sim} (\pi/(2z))^{1/2}  \textrm{exp}(-z)$ \citep[see Section 10.40 of][]{NIST:DLMF}. 
We will also use the asymptotic expansion of the function $\whittW_{\kappa, \mu}(h)$ to analyze the asymptotic expression of the new cross-covariance \citep{NIST:DLMF}. 
In particular, we combine \eqref{eq:asymp_whitt_inf} and  \eqref{eq:whittaker} to obtain
\begin{align*}
    C_{jk}(h)\overset{h\to \infty}{\sim}     \Re(\sigma_{jk}) c_{jk}^*(a_j|h|)^{\nu_j- \frac12} \textrm{exp}(-a_j|h|), 
\end{align*}so that, up to a positive constant that does not depend on $h$ defined by \begin{align*}
    c_{jk}^* &= a_j^{\frac{1}{2}}a_k^{\nu_k}\sqrt{\frac{\Gamma(\nu_k + \frac{1}{2})}{\Gamma(\nu_j + \frac{1}{2})}}\frac{2\sqrt{\pi}(2\alpha_+)^{-\nu_k - \frac12}}{\sqrt{\Gamma(\nu_j)\Gamma(\nu_k)}} ,
\end{align*}the cross-covariance decays like the Mat\'ern covariance class with parameters $\Re(\sigma_{jk})$, $\nu_j$, and $a_j$ in one direction. 
Of course, $\Re(\sigma_{jk})<0$ may hold to represent negative correlation between the two processes. 
Alternately, as $h \to -\infty$, \begin{align*}
    C_{jk}(h)&\overset{h\to -\infty}{\sim}
   c^*_{kj}(a_k|h|)^{\nu_k- \frac12} \textrm{exp}(-a_k|h|),
\end{align*}so that the cross-covariance decays (up to a constant) like the Mat\'ern covariance class with parameters $\Re(\sigma_{jk})$, $\nu_k$, and $a_k$ in the other direction.
Thus, the cross-covariance has the attractive property of reflecting the nature of the individual covariances; this relationship is also supported in Figure \ref{fig:type_pos}.
\end{remark}

\begin{remark}[Implementation] \normalfont
The function $\whittW_{\kappa, \nu}(z)$ is implemented as the \texttt{whittakerW} function in the R package \texttt{fAsianOptions} \citep{fAsianOptions}, the \texttt{mpmath.whitw} function in Python, the \texttt{whittakerW} function in MatLab, and the \texttt{WhittakerW} function in Mathematica. 
\cite{hancova_practical_2022} extensively evaluate computational infrastructure for computing \eqref{eq:whittaker} and related forms, though this computation is not always straightforward. 
For example, the \texttt{whittakerW} function in the R package \texttt{fAsianOptions} does not give proper results when $\nu_+$ is a half-integer.
One may also use the implementation of $U(\cdot, \cdot, \cdot)$ in \cite{galassi2002gnu} for computations involving $\whittW_{\kappa, \nu}(z)$.
\end{remark}

\begin{remark}[Comparison]\normalfont
A similar covariance function was studied in Section 5 of \cite{lim_tempered_2021} as a multifractional Ornstein-Uhlenbeck process. In particular, for a univariate process $X(s), s \in \mathbb{R}$, they let the exponent $\nu(s)> 0$ be H\"older continuous with $|\nu(\tilc_1) - \nu(\tilc_2)| \leq \xi|\tilc_1 - \tilc_2|^\kappa$ for constants $\xi> 0$ and $\kappa > 0$ and develop a covariance given by \begin{align*}
    &\Cov(X(\tilc_1), X(\tilc_2)) = \frac{(\tilc_1-\tilc_2)^{\nu_+(\tilc_1,\tilc_2) - \frac{1}{2}}}{\Gamma(\nu(\tilc_1) + \frac{1}{2})(2a)^{\nu_+(\tilc_1,\tilc_2) + \frac{1}{2}}}\\
    &~~~~~~~~~~\times \whittW_{\nu_-(\tilc_1,\tilc_2),  \nu_+(\tilc_1,\tilc_2)}(2a(\tilc_1-\tilc_2))
\end{align*}for $\nu_+(\tilc_1, \tilc_2) = \frac{\nu(\tilc_1) + \nu(\tilc_2)}{2}$ and $\nu_-(\tilc_1, \tilc_2) = \frac{\nu(\tilc_1) - \nu(\tilc_2)}{2}$. 
Therefore, when $a = a_j = a_k$, the new Mat\'ern cross-covariances are related to covariances of this multifractional Ornstein-Uhlenbeck process with varying parameter $\nu(s)$.
\end{remark}

\begin{remark}[Marginal cross-covariance]\normalfont
One important quantity of interest for a cross-covariance function is the marginal cross-covariance between the processes, that is, $\mathbb{E}[Y_j(0)Y_k(0)]$. 
We have established the relation \begin{align*}
    C_{jk}(0) &= \Re(\sigma_{jk}) \frac{a_j^{\nu_j}a_k^{\nu_k}}{a_+^{2\nu_+}}\frac{\Gamma(2\nu_+)}{\sqrt{\Gamma(2\nu_j)\Gamma(2\nu_k)}}.
\end{align*}
When $a_j = a_k$ and $\nu_j = \nu_k$, we have the expected $C_{jk}(0) = \Re(\sigma_{jk})$. 
However, when $a_j\neq a_k$ or $\nu_j \neq \nu_k$, we instead have $C_{jk}(0) < \Re(\sigma_{jk})$. 
Intuitively, when the processes have different behavior, having high marginal correlation between them is challenging while maintaining the validity of the model. 
The model here adapts to this automatically, while the multivariate Mat\'ern of \cite{gneiting_matern_2010} initially allows $C_{jk}(0) = \Re(\sigma_{jk})$ for all parameter values yet then needs to further constrain possible values of $\Re(\sigma_{jk})$.
\end{remark}

\begin{remark}[Mode]\normalfont
\cite{hancova_practical_2022} suggest a numerical strategy in finding the mode of the probability density function of a gamma difference distribution using its derivative; in our case, the mode of the cross-correlation function corresponds to the lag of maximal absolute correlation between the two processes. 
Although a closed-form expression for the mode does not appear to be directly available, their approach may be applied here to find the lag and strength of maximal correlation between the processes. 
\end{remark}

By simplifying the model in $d=1$ to processes such that $\Im(\sigma_{jk}) = 0$, we provide a model that links two Mat\'ern processes with a natural description of the cross-dependence. 
This new family of cross-covariance functions breaks the symmetry ($C_{jk}(h) = C_{jk}(-h)$) and diagonal-dominance ($\sup_h |C_{jk}(h)| = |C_{jk}(0)|$) assumptions of the multivariate Mat\'ern of \cite{gneiting_matern_2010} through imbalances between $\nu_j$ and $\nu_k$ or, alternatively, $a_j$ and $a_k$.
It is only necessary to estimate one additional parameter, $\Re(\sigma_{jk})$, compared to estimating the parameters of the processes independently. 
Validity of the cross-covariance model is immediately available, and the model reduces to familiar models or forms for certain parameter values. 
These properties make this model for multivariate processes more attractive in multiple aspects compared to that of \cite{gneiting_matern_2010} for the time-series setting. 

\subsection{Cross-covariances with imaginary directional measure}\label{sec:im_ent}

We now turn to cases where $\Im(\sigma_{jk})\neq 0$, which opens up an additional class of flexible cross-covariance functions. 
Here, we take the simplistic case when $\Re(\sigma_{jk})=0$, and then discuss the full model where $\Re(\sigma_{jk}) \neq 0$ and $\Im(\sigma_{jk}) \neq 0$ in Section \ref{sec:review}. 
Notice that the value of $\Im(\sigma_{jk})$ is still constrained by the self-adjoint and positive properties of $\mb{\Sigma}_H$. 
For example, one must have $2|\Im(\sigma_{jk})| \leq \sigma_{jj} + \sigma_{kk}$.

Closed-form cross-covariances for such models have been challenging to find, yet we have had success in certain situations. 
One tool we will use is the Hilbert transform of a real function $C(h)$, which we define as 
\begin{align}\label{eq:hilbert-transform}
     {\cal H}[C](h) &= \frac{1}{\pi} \int_{-\infty}^\infty \frac{C(u)}{h-u} du.
\end{align}
See \cite{king2009hilbert} for a comprehensive study of the Hilbert transform. 
Let $\mathcal{F}$ denote the Fourier transform (and $\mathcal{F}^{-1}$ its inverse), which is connected to the Hilbert transform by \citep{king2009hilbert} \begin{align*}
    \mathcal{H}[C](h) = \mathcal{F}^{-1}[-\I \textrm{sign}(\cdot )\mathcal{F}[C](\cdot)](h).
\end{align*}
Using the notation of \eqref{eq:initial_breakdown} and that $\mathcal{F}$ and $\mathcal{H}$ are linear, notice, then, that \begin{align*}
    C_{jk}^\Im(h) &= -\mathcal{F}^{-1}[-\I \textrm{sign}(\cdot )g_{jk}(\cdot )] = -\mathcal{H}[C_{jk}^\Re](h).
\end{align*}
That is, the cross-covariances with purely imaginary directional measure are the negative Hilbert transform of the cross-covariances with real directional measure.
Since a function and its Hilbert transform are orthogonal, we obtain that $\int_\mathbb{R} C_{jk}^\Re(h)C_{jk}^\Im(h) dh = 0$. 
The component $C_{jk}^\Im(h)$ thus represents dependence that is fundamentally different or opposite of $C_{jk}^\Re(h)$, opening a new class of flexibility in the cross-covariance functions, though the interpretation of the imaginary component of the cross-covariance is often unclear.

When the spectral density is purely imaginary, (for example, when $\nu_j = \nu_k$, $a_j  = a_k$, $\Re(\sigma_{jk}) = 0$, and $\Im(\sigma_{jk})\neq 0$), the cross-covariance is odd so that $C_{jk}(h) = -C_{jk}(-h)$. 
A process with odd cross-covariance was simulated in Section 4.4 of \cite{emery_improved_2016}, based on models built in  \cite{de_iaco_covariance_2003} and \cite{grzebyk_multivariate_1994}.
This is in the context of a linear model of coregionalization and thus does not allow for processes of different smoothnesses. 
Odd functions also arise from derivatives of covariance functions, thus modeling cross-covariance between a process and its derivative \citep[for example, see][]{solak2002derivative}. 
We believe we introduce here the first odd cross-covariances between Mat\'ern processes.

We first outline the cases for which we can find closed-form cross-covariances.
We mention some functions that represent the cross-covariance for some values of the parameters. 
The modified Bessel function of the first kind is defined as\begin{align}
    \besselI_{\nu}(z) &= \sum_{m=0}^\infty \frac{(\frac{1}{2}z)^{\nu + 2m}}{\Gamma(m + 1)\Gamma(\nu + m + 1)}.
\end{align}
The modified Struve function of the first kind is defined as \begin{align}
    \struveL_{\nu}(z) = \sum_{m=0}^\infty \frac{(\frac{1}{2}z)^{2m + \nu + 1}}{\Gamma(m + \frac{3}{2})\Gamma(m + \nu + \frac{3}{2})}.\label{eq:Lstruve}
\end{align}
See, for example, Section 3.7 of \cite{watson_treatise_1995} and Chapter 11 of \cite{NIST:DLMF} respectively for more information on these functions. 
We will also use the exponential integrals: \begin{align*}
    \textrm{E}_1(x) &= \int_1^\infty  \frac{e^{-tx}}{t} dt,
    & \textrm{Ei}(x) &= -\int_{-x}^\infty \frac{e^{-t}}{t} dt.
\end{align*}
For $x > 0$, the integral $\textrm{Ei}(x)$ is understood through the principal value. 

\begin{theorem}\label{thm:im_ent} 
Using the notation of \eqref{eq:initial_breakdown}, we have 
\begin{equation}\label{e:thm:im_ent}
 C_{jk}^\Im(h) = -{\rm \mathcal{H}}[C_{jk}^\Re](h),
\end{equation}
where ${\cal H}$ stands for the Hilbert transform defined in \eqref{eq:hilbert-transform}.

In particular, assuming that $\Re(\sigma_{jk}) = 0$, for several important special cases, we obtain the following closed-form expressions
of the cross-covariance:

\begin{enumerate}
    \item Suppose that $\nu= \nu_j = \nu_k > 0$ and $\nu \neq m/2$ for $m \in \mathbb{N}$, and $a = a_j = a_k > 0$. Then, the cross-covariance function based on \eqref{eq:new_cc} is written in closed form as \begin{align*}
C_{jk}(h) &= \Im(\sigma_{jk})
\frac{\pi{\rm sign}(h)}{2\cos(\pi \nu)}\frac{2^{1-\nu}}{\Gamma(\nu)}\left(a|h|\right)^{\nu}  \\
&~~~~~~~~~\times \left(\struveL_{-\nu}(a|h|) - \besselI_{\nu}(a|h|) \right).
\end{align*}
\item Suppose that $\nu_j = \nu_k = 1/2$, and let \begin{align*}
    &R(h, a_j, a_k) = \frac{-{\rm sign}(h)}{\pi}\\\
    &~~~\times \left(e^{a_j|h|}{\normalfont E}_1(a_j|h|) + e^{-a_k|h|}{\rm Ei}(a_k|h|)\right).
\end{align*}Then, the cross-covariance is written in closed form as \begin{align*}
    C_{jk}(h)   &= \Im(\sigma_{jk})\frac{(a_ja_k)^{\frac{1}{2}}}{a_+}(\mathbb{I}(h \leq 0) R(h, a_j, a_k) \\
    &~~~~~~~~~~~~~~~~~~~~~~~~~~~~+ \mathbb{I}(h > 0)R(h, a_k, a_j)).
\end{align*}Notice that if $a= a_j = a_k$, this reduces to $C_{jk}(h) = \Im(\sigma_{jk}) R(h, a, a)$.

\item Suppose that $\nu_j = \nu_k = 3/2$ and $a= a_j = a_k$. Then, the cross-covariance is written in closed form as \begin{align*}
C_{jk}(h)   &=\Im(\sigma_{jk})\bigg((a|h| + 1)R(h, a, a) \\
&~~~~~~~~~~~~~~~~~~~~~~~~~~~- \frac{2ahe^{a|h|}}{\pi} {\rm Ei}(-a|h|)\bigg).
\end{align*}

\end{enumerate}

\end{theorem}

\begin{proof}  Relation \eqref{e:thm:im_ent} has already been argued before the statement of the theorem.

We begin with the proof of Claim 1. The result for $\nu < 1/2$ follows from the Hilbert transform of $|h|^{\nu}\mathcal{K}_\nu(a|h|)$ presented as (8I.2) in Table 1.8I of Appendix 1 of \cite{king2009hilbert}. 
However, we present below the more general case that uses integral representations.

Focusing on $\Im(\sigma_{jk})\I\int_\mathbb{R}e^{\I hx} g_{jk}( x) {\rm sign}(x)  dx$, by substituting in $e^{\I x} = \cos(x) + \I\sin(x)$ and using even and odd properties of sine and cosine, we obtain
\begin{align*}
\begin{split}
    &\int_\mathbb{R}e^{\I hx} g_{jk}( x) \textrm{sign}(x)dx  \\
    &=\int_0^\infty \cos(hx) g_{jk}(x)dx - \int_0^\infty \cos(hx)g_{jk}(-x) dx \\
    & + \I\int_0^\infty \sin(hx) g_{jk}(x)dx + \I\int_0^\infty \sin(hx)g_{jk}(-x) dx. 
    \end{split}
\end{align*}
Since we assume here that $a_j = a_k$ and $\nu_j = \nu_k$, the function $g_{jk}(x) = g_{jk}(-x)$ is symmetric, and one sees that \begin{align*} &\Im(\sigma_{jk})\I\int_\mathbb{R}e^{\I hx}g_{jk}( x){\rm sign}(x)dx\\
&~~~~~~~= -2\Im(\sigma_{jk}) \int_0^\infty\sin(hx) g_{jk}(x) dx.\end{align*}

Then, we adjust this expression to see \begin{align*}
&\mathbb{E}\left[Y_j(\tilc + h) Y_k(\tilc)\right]
=-2\Im(\sigma_{jk})c_j^2\textrm{sign}(h) \\ 
&~~~~~~~~~\times \int_0^\infty \sin(|h|x)\left(a^2 + x^2\right)^{-\nu- \frac{1}{2}}dx,\end{align*}
since $(a + \I x)^{-\nu - \frac{1}{2}}(a - \I x)^{-\nu - \frac{1}{2}} = (a^2 + x^2)^{-\nu -\frac{1}{2}}$ and $c_j = c_k$.
Then, we can directly apply 3.771 (1) of \cite{GR_table_2015}, which also imposes the conditions on $\nu$. Particularly, we have a cross-covariance of \begin{align*}
   C_{jk}(h) &=-\Im(\sigma_{jk})c_j^2\sqrt{\pi}\Gamma\left(\frac{1}{2}-\nu\right)\textrm{sign}(h) \\
   &~~~~~~~~~\times \left(\frac{|h|}{2a}\right)^{\nu} \left(\besselI_{\nu}(a|h|) - \struveL_{-\nu_j}(a|h|)\right) 
\end{align*}when $\nu >0$ and $\nu \neq m/2$ for $m \in \mathbb{N}$. 

Using Euler's reflection formula for the Gamma function, we see $$\Gamma\left(\frac{1}{2} - \nu\right) = \frac{\pi}{\Gamma\left(\frac{1}{2} + \nu\right)\cos(\pi \nu)}.$$ 
Substituting the value of $c_j$ gives the final expression. 

To prove Claim 2, we use the Hilbert transform of $e^{-a|h|}$ given in (3.3) of Table 1.3 of Appendix 1 in \cite{king2009hilbert}, which is $-R(h, a, a)$. 
When $a_j = a_k = a$, the result is immediate. 
For $a_j \neq a_k$, linearity of the Hilbert transforms can be used to find the Hilbert transform of the asymmetric Laplace function.

For Claim 3 where $\nu= 3/2$ and $a = a_j = a_k$, since the Mat\'ern covariance is $\sigma^2 (1 + a|h|)e^{-a|h|}$, we find the Hilbert transform of $|h|e^{-a|h|}$, which is $a|h|R(h, a, a)- 2\pi^{-1} ahe^{a|h|}\textrm{Ei}(-a|h|)$. 
We defer details to Appendix \ref{app:hilbert_trans}. 
Combining with Claim 2 gives the final form. 
\end{proof}

As before, we remark to provide more insight regarding those formulas in 1--3 above. 
In contrast to Theorem \ref{thm:re_ent}, we substitute in the value of $c_j$ and $c_k$, so that, for Claim 1, the form is similar to that of the Mat\'ern covariance with a term of $2^{1-\nu}(a|h|)^\nu/\Gamma(\nu)$.  

\begin{remark}[Visualization and description]
In Figure \ref{fig:im_ent}, we plot the general shape of this cross-covariance function for varying $\nu = \nu_j = \nu_k$ and $a =a_j = a_k$. 
In this case, the cross-covariance function is an odd function, so that the process $\{Y_j(s)\}$ may be positively correlated with process $\{Y_{k}(s)\}$ for some lags and negatively correlated for others. 
This also implies that two processes with this cross-covariance would be uncorrelated marginally in $\tilc$. 
This is an interesting, unusual model, exhibiting a lack of marginal correlation as well as positive and negative dependence over non-zero lags.
The cross-covariance appears to be an odd function only when $\nu = \nu_j = \nu_k$ and $a = a_j = a_k$, where the real and imaginary parts of the spectral density are analytically identifiable.
\end{remark}

\begin{figure}[ht]
\centering
\includegraphics[width = .45 \textwidth]{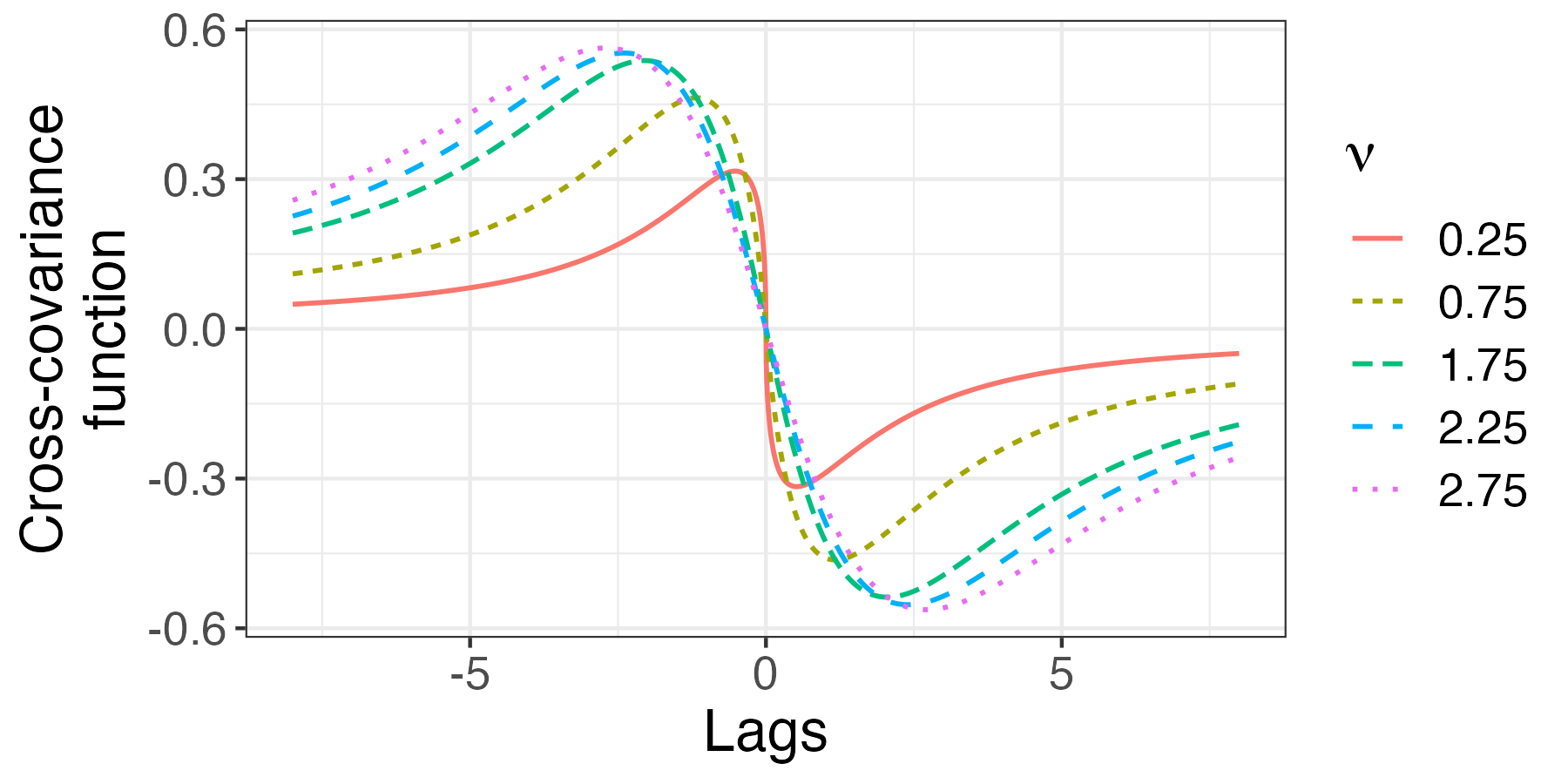}
\includegraphics[width = .45 \textwidth]{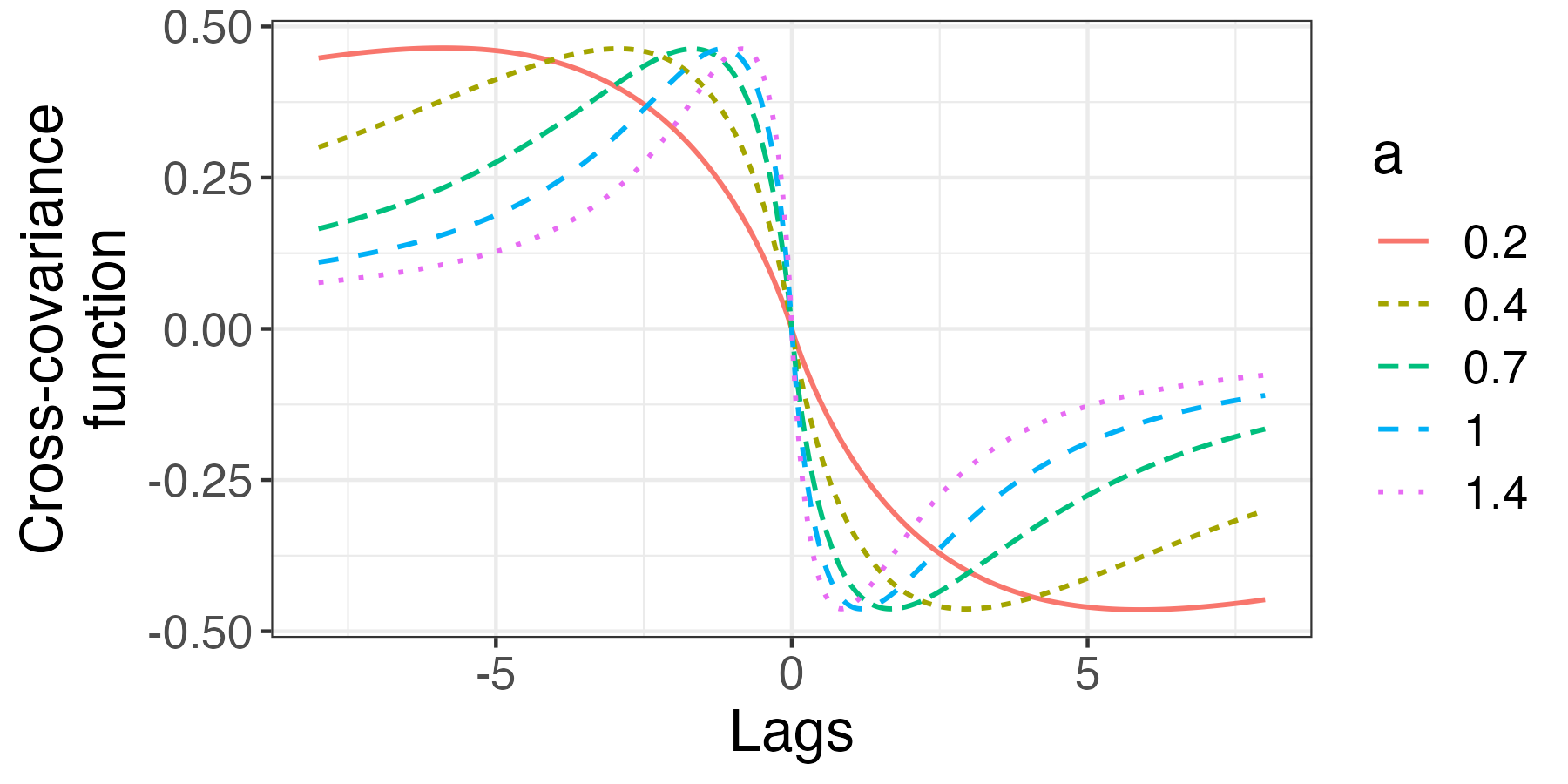}
\caption[Cross-covariance examples: $d=1$, imaginary directional measure, same parameters]{The cross-covariance function $C_{jk}(h)$ with $\sigma_{jk} = {\rm \I}$. (Top) With $a = a_j = a_k = 1$ and various values of $\nu = \nu_j = \nu_k$. (Bottom) With $\nu = \nu_j = \nu_k = 0.75$ and various values of $a = a_j = a_k$. 

}\label{fig:im_ent}
\end{figure}

\begin{remark}[Other cross-covariances when $\nu$ is a positive half-integer]\normalfont
For $\nu_j = \nu_k = 5/2, 7/2, \dots$ and $a_j = a_k$, we expect that the cross-covariances can be found through tedious algebra involving Hilbert transforms, similar to the derivation for $\nu_j = \nu_k = 3/2$ and $a_j = a_k$.
These cross-covariances can be well-defined and computed as a limit as $\nu \to m/2$ or through the spectral density representation of the cross-covariance.
\end{remark}

\begin{remark}\normalfont
Based on 12.2.6 of \cite{abramowitz_handbook_1972}, the asymptotic expression of \begin{align*}
    \besselI_{\nu}(z) - L_{-\nu}(z) \overset{z\to \infty}{\sim} \frac{2^{\nu+1}}{\pi^{\frac{3}{2}}}\Gamma\left(\nu + \frac{1}{2}\right)z^{-\nu-1}\cos(\pi\nu)
\end{align*}holds. 
The asymptotic expansion for large $|h|$ of the cross-covariance is $\Im(\sigma_{jk})$ multiplied by\begin{align}\begin{split}
&\frac{\pi\textrm{sign}(h)}{2\cos(\pi \nu)} \frac{2^{1-\nu}}{\Gamma(\nu)}\left(a|h|\right)^{\nu} \left(\struveL_{-\nu}(a|h|) - \besselI_{\nu}(a|h|)\right) 
\\&~~~~~ \overset{|h|\to \infty}{\sim} \frac{-\textrm{sign}(h)2\Gamma(\nu + \frac{1}{2})}{\sqrt{\pi} \Gamma(\nu)a |h|} .\label{eq:asympt_im}\end{split}
\end{align}The expansion suggests that the cross-covariance decays like $\Im(\sigma_{jk})\textrm{sign}(h) a^{-1}|h|^{-1}$, which is much larger in modulus compared to the Mat\'ern covariance. 
This is surprising yet supported by our implementation (see Figure \ref{fig:im_ent} as well as Figure \ref{fig:combo_ent} later).

For $\nu_j = \nu_k = 1/2$, we use the asymptotic expansion of $E_1(h) \overset{h \to \infty}{\sim} e^{-h}/h$ and $\textrm{Ei}(h)\overset{h \to \infty}{\sim} e^{h}/h$ given in 6.12 of \cite{NIST:DLMF}. We obtain
\begin{align*}
    R(h, a_j, a_k) \overset{h \to \infty}{\sim} \frac{-1}{\pi}\left(\frac{1}{a_jh} + \frac{1}{a_kh}\right),
\end{align*}so that \begin{align*}
    C_{jk}(h) \overset{h\to\infty}{\sim}\Im(\sigma_{jk})\frac{2\sqrt{a_ja_k}}{(a_j + a_k)}\frac{-1}{\pi}\left(\frac{1}{a_jh} + \frac{1}{a_kh}\right).
\end{align*}
Notice that when $a_j = a_k$, we obtain the same formula suggested by \eqref{eq:asympt_im} when $\nu = 1/2$. 
In this case, the cross-covariance decays on the order of $1/(\min\{a_j, a_k\}\cdot h)$ as $h\to\infty$.
For $\nu_j = \nu_k = 3/2$ and $a_j = a_k$, since $\textrm{Ei}(h) \overset{h \to -\infty}{\sim} e^{h}/h$, we obtain as expected \begin{align*}
    C_{jk}(h) \overset{h\to\infty}{\sim} \Im(\sigma_{jk})\frac{-4}{\pi ah}.
\end{align*}
\end{remark}

\begin{remark}[Implementation]\normalfont
The function $\besselI_{\nu}(z)$ is implemented as \texttt{besselI} in the \texttt{base} package of R, \texttt{iv} in \texttt{scipy} in Python, \texttt{besseli} in Matlab, and \texttt{besselI} in Mathematica. 
The function $\struveL_{\nu}(z)$ is implemented as \texttt{struveL} in the \texttt{RandomFieldsUtils} package of R \citep{RandomFieldsUtils}, \texttt{modstruve} in \texttt{scipy} in Python, and \texttt{StruveL} in Mathematica. 
We have also found using the series representation for $\struveL_{-\nu}(z)$ in \eqref{eq:Lstruve} works well for positive, real-valued $\nu$ and $z$.

For $\textrm{Ei}(z)$ and $E_1(z)$, we use the \texttt{expint} package of R to compute its values \citep{expint}.
\end{remark}

When taking $\Im(\sigma_{jk}) \neq 0$, we provide new closed-form cross-covariance functions in three different settings: when $a_j=a_k >0$ and $0<\nu_j = \nu_k \neq m/2$ for $m \in \mathbb{N}$; when $\nu_j = \nu_k = 1/2$; and when $\nu_j = \nu_k = 3/2$ and $a_j = a_k$.
These functions greatly improve the flexibility of Mat\'ern cross-covariances.
Still, additional range and smoothness parameters do not need to be estimated. 

The cross-covariances for other parameter settings are also of immediate interest. 
In Figure \ref{fig:im_varying_nu}, we plot examples of these cross-covariances by using fast Fourier transform approaches with their spectral densities, demonstrating that such cross-covariances exist and have interpretable shapes as one varies the parameters. 
For example, it is apparent that changing $\nu_j$ and $a_j$ will significantly alter the shape of the cross-covariance function on one half of the real line while leaving the shape of the cross-covariance function relatively intact for the other half.

For the general case, closed-form expressions of the cross-covariances are more elusive. 
We suggest, however, that one consider the generalization of the functions $\besselI_\nu(z)$ and $\struveL_\nu(z)$ to the Whittaker function $\whittM_{\mu, \nu}(z)$ and generalized modified Struve function $\struveA_{\mu, \nu}(z)$, respectively; see Section 4.4 and Chapter 5 of \cite{babister_transcendental_1967}. 
However, it is unclear if these directly correspond to our setting for general $\nu_j$ and $\nu_k$.
Furthermore, while the function $\whittM_{\mu, \nu}(z)$ is well-documented, we are unaware of any sustained research, computational formula, or implementation of the function $\struveA_{\mu, \nu}(z)$ for $\nu < 0$. 
The form of the cross-covariance when $\nu_j = \nu_k$ and $a_j = a_k$ is also related to the modified Lommel function \citep[see Equation 36 of][]{dingle_1959}, but this relation does not appear helpful in generalizing the closed-form cross-covariance functions. 

However, the spectral density represents these processes straightforwardly for general $\nu_j$, $\nu_k$, $a_j$, and $a_k$ even when closed-form representations are not available. 
Computationally, the values of the cross-covariance can be evaluated efficiently on a discrete grid of points using the fast Fourier transform \citep{cooley_algorithm_1965}, which may be considerably faster than the evaluation of the relevant special functions.

\begin{figure}[ht]
    \centering
    \includegraphics[width = .45 \textwidth]{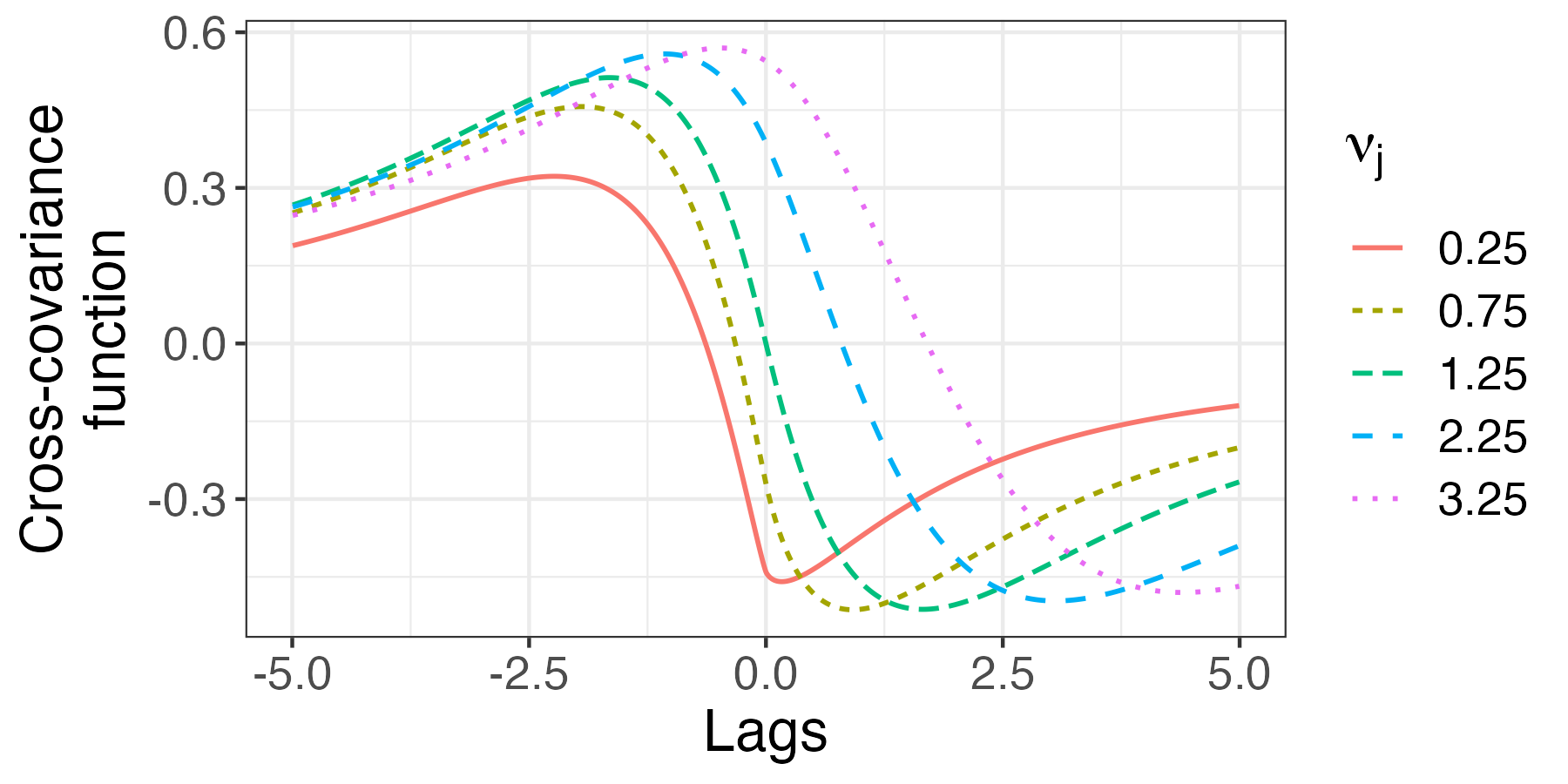}
    \includegraphics[width = .45 \textwidth]{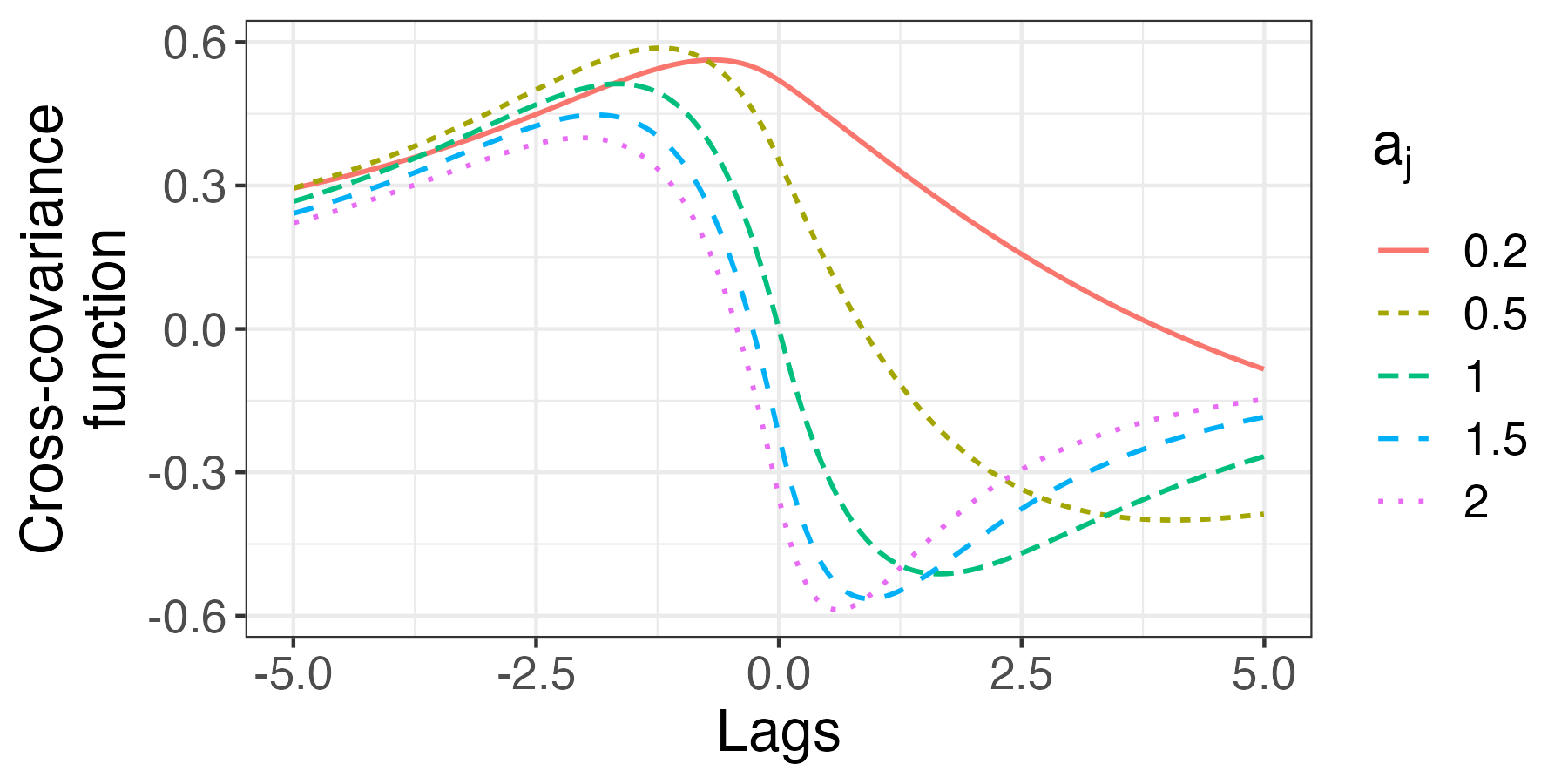}
    \caption[Cross-covariance examples: $d=1$ and imaginary directional measure]{Multivariate Mat\'ern cross-covariances with (Top) $a_j = a_k = 1$, $\nu_k = 1.25$, $\sigma_{jk} = {\rm \I}$, and $\nu_j$ varying; (Bottom) with $a_k = 1$, $\nu_j =\nu_k = 1.25$, $\sigma_{jk} = {\rm \I}$, and $a_j$ varying.}
    \label{fig:im_varying_nu}
\end{figure}

\subsection{Review of new multivariate Mat\'ern models in one dimension}\label{sec:review}

While in Theorem~\ref{thm:re_ent}, we assume $\Im(\sigma_{jk}) = 0$, and in Theorem~\ref{thm:im_ent}, we assume $\Re(\sigma_{jk}) = 0$, it is not necessary to have either be the case. 
If they are both nonzero, the cross-covariance is the sum of the respective contributions. 
Thus, to summarize, when $\mb{\Sigma}_H =[\sigma_{jk}]_{j,k=1}^p$ is positive definite and self-adjoint, we introduce a cross-covariance function represented by \begin{align*}
    \mathbb{E}[Y_j(h) Y_k(0)] &= \Re(\sigma_{jk})C_{jk}^\Re(h) + \Im(\sigma_{jk})C_{jk}^\Im(h),
\end{align*}where, after substituting $c_j$ and $c_k$ into \eqref{eq:whittaker}, \begin{align*}
    C_{jk}^\Re(h) &= \frac{2^{-\nu_+}\sqrt{2\pi}}{\sqrt{\Gamma(\nu_j)\Gamma(\nu_k)}}\frac{a_j^{\nu_j}a_k^{\nu_k}}{a_+^{\nu_+ + \frac{1}{2}}}|h|^{\nu_+ - \frac{1}{2}}e^{-ha_-} \\
    &~~~~~\times\begin{cases}
    \whittW_{\nu_-, \nu_+}(2a_+|h|)\frac{\sqrt{\Gamma(\nu_k + \frac{1}{2})}}{\sqrt{\Gamma(\nu_j + \frac{1}{2})}} & h > 0\\
    \whittW_{-\nu_-, \nu_+}(2a_+|h|)\frac{\sqrt{\Gamma(\nu_k + \frac{1}{2})}}{\sqrt{\Gamma(\nu_j + \frac{1}{2})}} & h < 0
    \end{cases}.
\end{align*}
We refer to $C_{jk}^\Im(h)$, which is available in integral form, as an extended Mat\'ern cross-correlation function for $d=1$. 
Such combinations provide a large class of potential shapes in the cross-covariance functions, with examples shown in Figure~\ref{fig:combo_ent}. 
In the specific case where $\nu = \nu_j = \nu_k \neq m/2$ for $m \in \mathbb{N}$ and $a =a_j = a_k$, the cross-covariance function becomes \begin{align*}
    &\mathbb{E}[Y_j(h) Y_k(0)] =  \frac{2^{1-\nu}}{\Gamma(\nu)}(a|h|)^{\nu}\bigg(\Re(\sigma_{jk})\besselK_{\nu}(a|h|)\\&~~~~~ + \Im(\sigma_{jk})\frac{\pi \textrm{sign}(h)}{2\cos(\pi \nu)}(\struveL_{-\nu}(a|h|) - \besselI_{\nu}(a|h|))\bigg),
\end{align*}which is decomposed as a sum of an even and an odd function. 
In the special case that $\nu_j = \nu_k = 1/2$, we obtain \begin{align*}
    &\mathbb{E}[Y_j(h) Y_k(0)] = \frac{2(a_ja_k)^{\frac{1}{2}}}{a_j + a_k}\\&\times[\Re(\sigma_{jk})\textrm{exp}\left(-|h|(a_j \mathbb{I}(h > 0) +a_k\mathbb{I}(h<0))\right) \\ & + \Im(\sigma_{jk})\left(\mathbb{I}(h \leq 0) R(h, a_j, a_k) + \mathbb{I}(h > 0)R(h, a_k, a_j)\right)].
\end{align*}
In Figure~\ref{fig:simulation}, we plot realizations of the multivariate Mat\'ern process for two different parameter settings. 
For two processes with different smoothness and real-valued $\sigma_{jk}$, one can easily pick out correlation between the processes; however, when $\sigma_{jk}$ is imaginary-valued, the dependence between the processes is harder to pick out visually since the processes are uncorrelated marginally.

\begin{figure}[ht]
\centering
\includegraphics[width = .45 \textwidth]{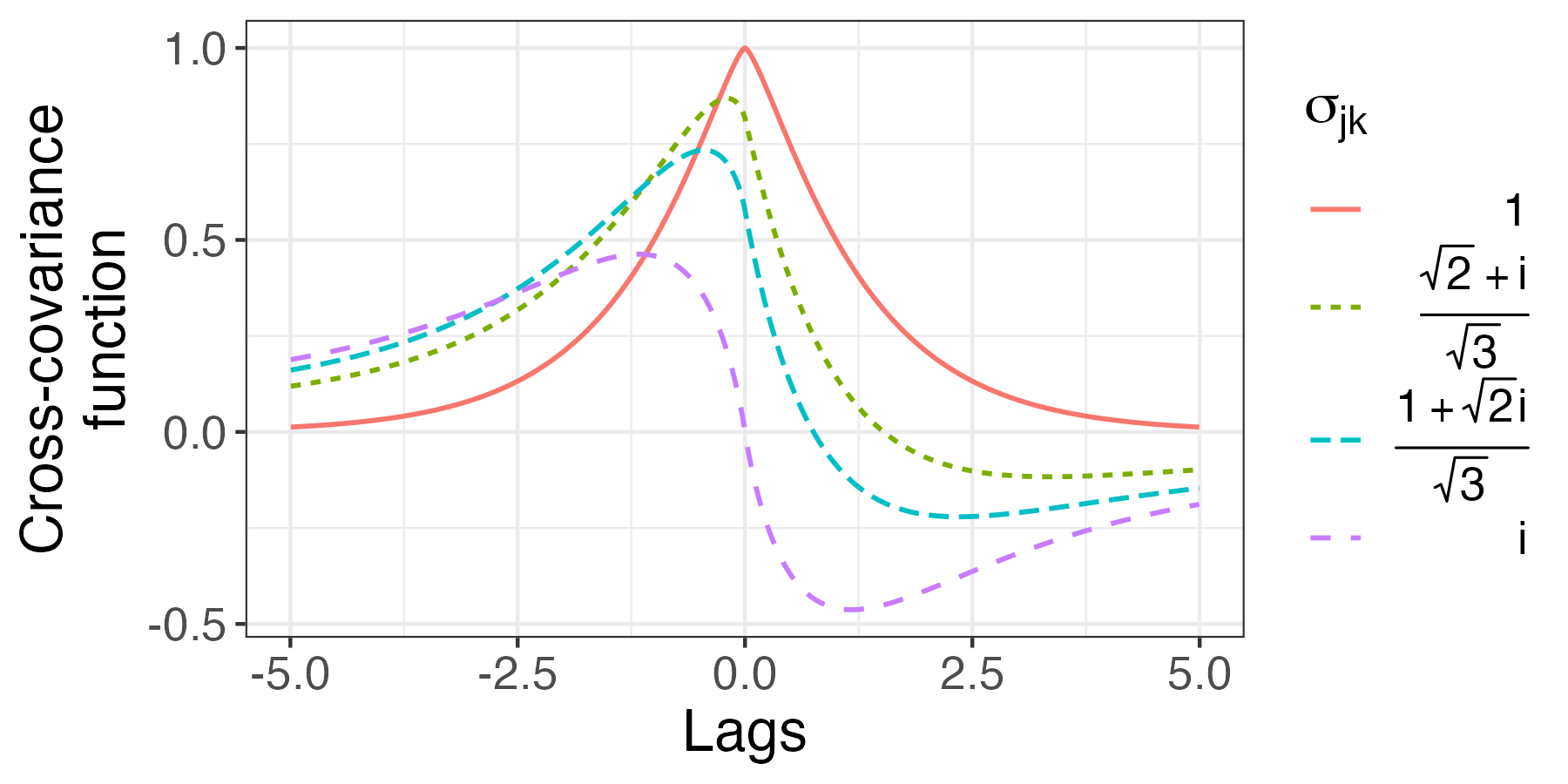}
\includegraphics[width = .45 \textwidth]{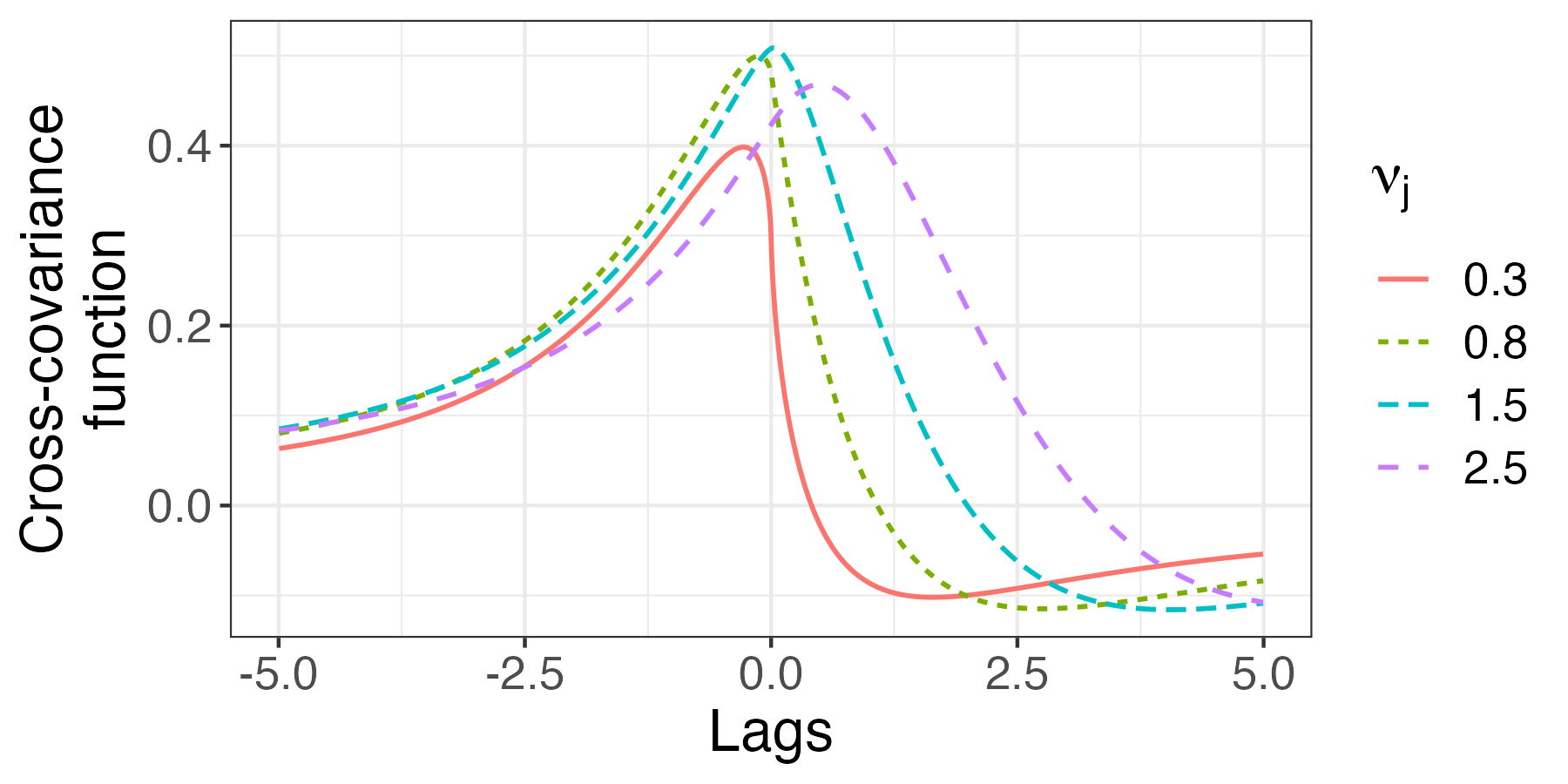}
\caption[Cross-covariance examples for $d=1$ and complex directional measure]{(Top) Multivariate Mat\'ern cross-covariances for varying values of $\sigma_{jk}$, with $d=1$, $\nu_j = \nu_k = 0.75$, and $a_j = a_k = 1$. (Bottom) Multivariate Mat\'ern cross-covariances for varying values of $\nu_j$, with $d=1$, $\sigma_{jk} = 0.5 + 0.4 {\rm \I}$, $\nu_k= 0.5$, and $a_j = a_k = 1$. }\label{fig:combo_ent}
\end{figure}

\begin{figure}[ht]
\centering
\includegraphics[width = .23 \textwidth]{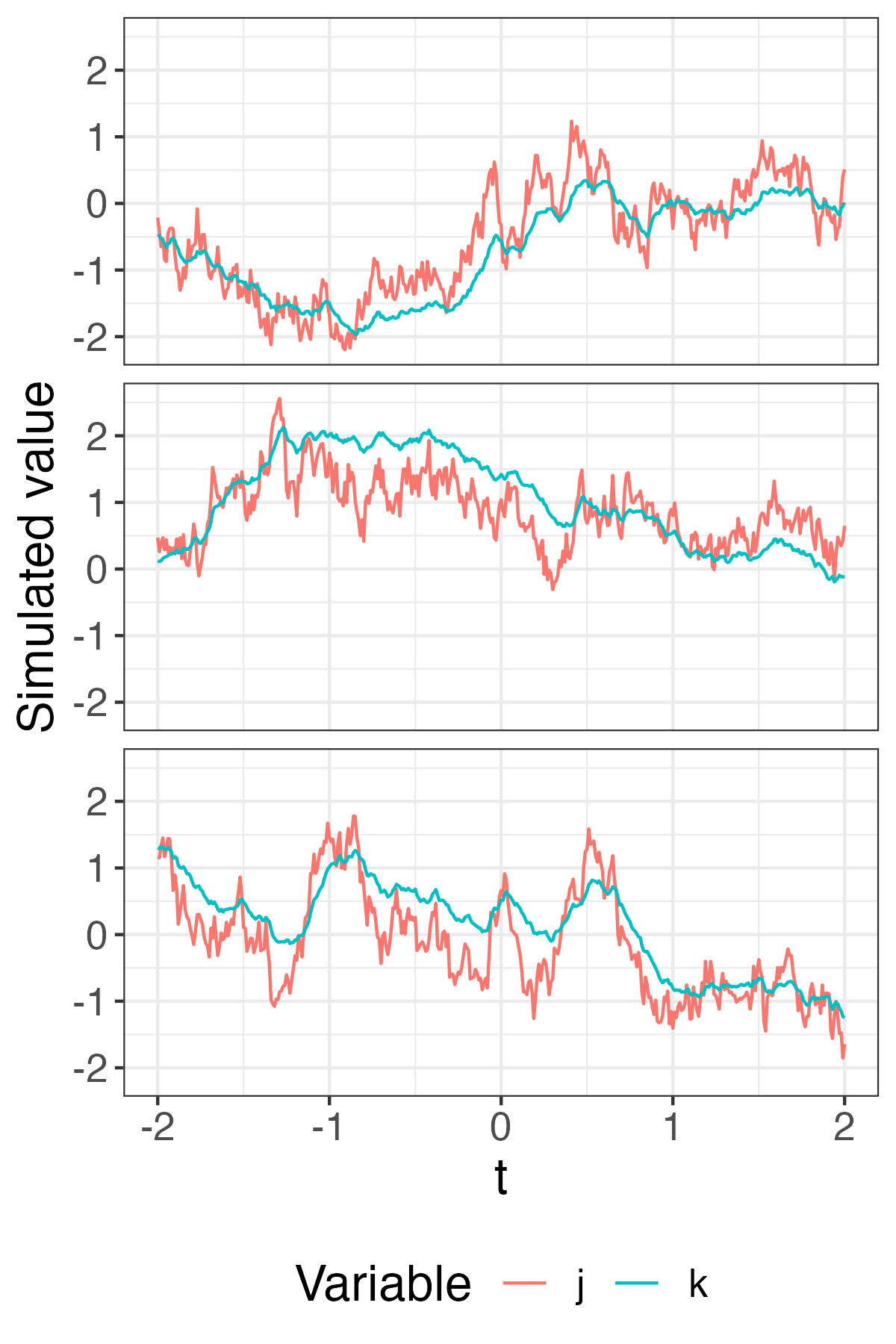}
\includegraphics[width = .23 \textwidth]{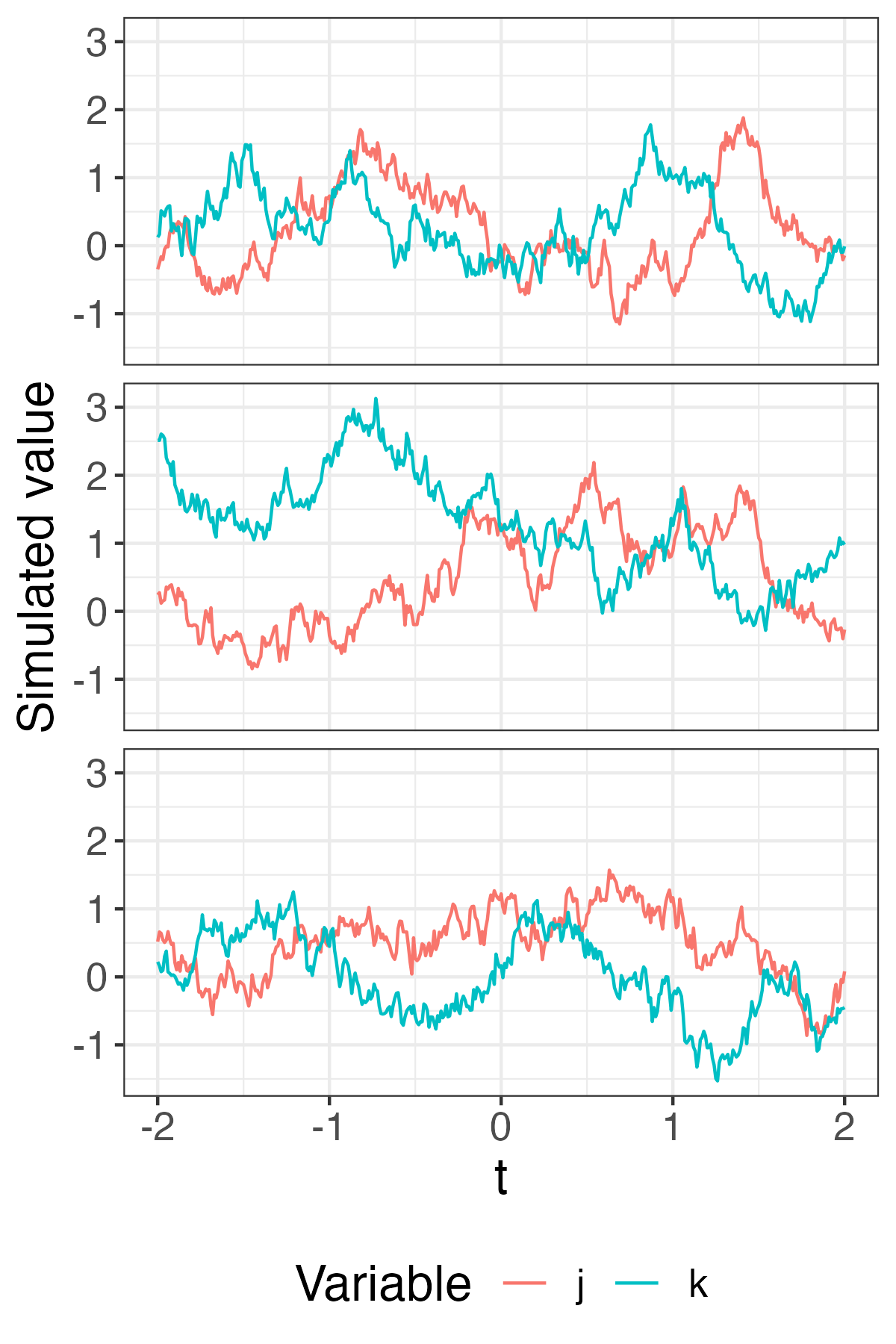}\caption[Example simulations for $d=1$]{Three realizations of bivariate Mat\'ern processes for $d=1$, with $a_j = a_k = 1$ and $\sigma_{jj} = \sigma_{kk} = 1$. (Left) For $\nu_j = 0.4$, $\nu_k = 0.8$, and $\sigma_{jk} = 0.90$. (Right) For $\nu_j = \nu_k = 0.8$ and $\sigma_{jk} = -0.95{\rm \I}$.}\label{fig:simulation}
\end{figure}

\section{Spatial extensions of multivariate Mat\'ern models}\label{sec:spatial}

In this section, we will extend our approach to the random field setting, i.e., to processes indexed by $\R^d$ for $d\ge 2$.  The spatial setting 
calls for substantially more flexible models that can accommodate anisotropy.  For a general outlook and the motivation of our approach from the perspective of tangent processes, see the Supplement. 

In the spatial setting, it is more convenient to consider integration in polar coordinates.  Namely, for $\mb{s}\in \R^d\setminus\{\mb{0}\}$,
we let $r = \|\mb{s}\|$ and $\mb{\theta}:=\mb{s}/\|\mb{s}\|$, so that $\mb{s}$ is represented with the pair of its
radial and directional components $(r,\mb{\theta}) \in (0,\infty)\times \us^{d-1}$, where $\us^{d-1}$ denotes the unit sphere in $\R^d$. 

Also, we will use matrix exponentiation and powers to describe our integrands. For all $c>0$, we let $c^{\mb{H}}:=\exp\{ \log(c) \mb{H}\}$, 
where, for a square matrix $\mb{A}$, its exponential $\exp\{\mb{A}\}$ is defined as:
$$
e^{\mb{A}}= \sum_{n=0}^\infty\frac{\mb{A}^n}{n!}.
$$
Note that the latter series converges absolutely in any matrix norm.  We will further use the definition of $\mb{A}^{\mb{B}}$ for two 
matrices of equal size as ${\rm exp}(\log(\mb{A})\mb{B})$, whenever the matrix logarithm is well-defined.  
One can use, for example, the Gregory series
\begin{equation}\label{e:Gregory-log}
\log(\mb{A}) =  \sum_{n=0}^\infty \frac{-2}{2n+1} \Big( (\mb{I}-\mb{A})(\mb{I}+\mb{A})^{-1} \Big)^{2n+1},
\end{equation}
which is convergent provided that the eigenvalues of $\mb{A}$ have positive real parts \citep[cf.][]{hingham:2008,cardoso_conditioning_2018,barradas_iterated_1994}. 
In most cases, we will apply this to a diagonal matrix $\mb{A} = \textrm{diag}(\lambda_1, \lambda_2, \dots, \lambda_p)$, in which case we have the expansion
\begin{align*}
    \log(\mb{A}) &=\textrm{diag}\left(\sum_{n=0}^\infty \frac{-2}{2n+1} \left[\frac{1- \lambda_j}{1+ \lambda_j}\right]^{2n+1}, j=1, \dots, p\right) \\
    &= \textrm{diag}\left(\log(\lambda_j), j=1, \dots, p\right).
\end{align*}




We will then define matrix versions of the parameters. 
First, define a diagonal matrix of inverse range parameters as $\amat = \textrm{diag}(a_1, \dots, a_p)$, and similarly take $\mb{c} = \textrm{diag}(c_1, \dots, c_p)$ for normalization constants $c_j > 0$; we will extend their definitions for general $d$ later. 
Let $\Nu = \textrm{diag}(\nu_1, \dots, \nu_p)$ for positive $\nu_j$.
Finally, let $\Imat_d$ again be the identity matrix of dimension $d \times d$.

\begin{remark}
    One may also consider the more general case for $\Nu$. For example, if $\Nu$ and $\amat$ commute, the definition below may be straightforwardly applicable to where $\mb{\nu} \in \mathbb{R}^{p \times p}$ is a real, symmetric, positive-definite matrix that is potentially non-diagonal.
    One case that accommodates this more-flexible $\Nu$ (and perhaps the only notable case) is where $\amat = a\Imat_p$ for $a > 0$.
\end{remark}

Following Definitions \ref{def:xi-measure} and \ref{def:Hermitian}, let $\rmeasure(dr, d\mb{\theta})$ be a Hermitian zero-mean $\mathbb{C}^p$-valued random measure, 
with orthogonal increments that satisfies:
\begin{align}
    \mathbb{E}\left[\rmeasure(dr, d\mb{\theta})\rmeasure(dr, d\mb{\theta})^*\right]&= r^{d-1}dr\measure(d\mb{\theta}).\label{eq:random_measure}
\end{align}
For concreteness, we take $\rmeasure(dr, d\mb{\theta})$ to be Gaussian, though we briefly discuss other choices in Section \ref{sec:nongaussian}.
Here, the control measure $\sd(d\mb{\theta})$ is a $\mathbb{T}_+$-valued measure on $\us^{d-1}$, which is also Hermitian, i.e., $\sd(d\mb{\theta}) = \overline{\sd(-d\mb{\theta})}$ (cf.\ Definition \ref{def:Hermitian}). 
This generalizes $\sd(dx)$ in Section \ref{sec:mm_intro}. 
Furthermore, we take $\sd(d\mb{\theta})$ to be finite so that $\left\lVert \sd(\us^{d-1})\right\rVert_{F} < \infty$ for the Frobenius norm $\lVert \cdot\rVert_{F}$.

\subsection{Proposed model}

We propose to study the multivariate Mat\'ern process generated by 
\begin{align}\begin{split}
    &\{\Y(\ti)\}_{\ti\in \mathbb{R}^d} \overset{fdd}{=} \bigg\{\int_0^\infty \int_{\us^{d-1}} e^{\I\langle \ti, r\mb{\theta}\rangle}\\
    &~~\times\mb{c}(\amat + \ang(\mb{\theta})\I r \Imat_p)^{-\Nu - \frac{d}{2}\Imat_p} \rmeasure(dr,d\mb{\theta})\bigg\}_{\ti \in \mathbb{R}^d}.\label{eq:matern_stochastic}\end{split}
\end{align}
We provide more details about the representation of \eqref{eq:matern_stochastic} and introduce the function $\ang: \us^{d-1} \to \{-1,1\}$. 
Under the assumption of diagonal $\Nu$ and $\mb{c}$, working with the exponential and logarithm definitions, we see that \begin{align*}
   & \mb{c}(\amat +\ang(\mb{\theta}) \I r\Imat_p)^{-\Nu - \frac{d}{2}\Imat_p}
    \\
    &~~~~~= \textrm{diag}\left(c_k (a_k +\ang(\mb{\theta})\I r)^{-\nu_k- \frac{d}{2}}, k=1, \dots, p\right),
\end{align*}which will simplify our analysis. 
We take $\ang: \us^{d-1} \to \{-1,1\}$ to be a function such that $\ang(-\mb{\theta}) = - \ang(\mb{\theta})$. 
The function $\ang(\cdot)$ ensures that the spectral density in \eqref{eq:matern_stochastic} is Hermitian and the cross-covariances are real 
(cf.\ Proposition \ref{p:Y-real} and Definition \ref{def:Hermitian}). 

\begin{remark} In the case $d=1$, the above model reduces to the one we have already introduced in Section \ref{Introduce_model}. Indeed, in this case
$\us^{d-1} = \us^0 = \{-1,1\}$, and the only two possible Hermitian functions are $\ang({\theta}) = \pm\textrm{sign}({\theta})$.  
Then, $\sd(d\theta) = \Re(\mb{\Sigma}_H) + \I\Im(\mb{\Sigma}_H) \textrm{sign}({\theta}) d\theta$, where $\mb{\Sigma}_H = [\sigma_{jk}]_{j,k=1}^p$. 
If $\Nu$ is also diagonal and we take $\ang({\theta}) = \textrm{sign}({\theta})$, this results in cross-covariances of \begin{align*}
 &C_{jk}(h) = c_jc_k\int_0^\infty  \int_{\us^0} e^{\I h r}(a_j + \textrm{sign}({\theta})\I r )^{-\nu_j - \frac{1}{2}} \\ &~~~~~\times \left(\Re(\sigma_{jk}) + \I\Im(\sigma_{jk}) \textrm{sign}({\theta})\right) \\ &~~~~~\times(a_k - \textrm{sign}({\theta})\I r )^{-\nu_k - \frac{1}{2}} d\theta dr \\
 &~~~~~~~~~~~~= c_jc_k\int_{-\infty}^\infty  e^{\I h r}(a_j + \I r )^{-\nu_j - \frac{1}{2}}\\ &~~~~~\times\left(\Re(\sigma_{jk}) + \I\Im(\sigma_{jk}) \textrm{sign}(r)\right) (a_k - \I r )^{-\nu_k- \frac{1}{2}} dr,
\end{align*}
which recovers the representation in \eqref{eq:d=1-C-definition} where the integral is in Cartesian coordinates on $\R$. 
When $\ang(\theta)= -\textrm{sign}(\theta)$, we obtain reversed versions of the covariances (cf.\ Proposition \ref{prop:reflected}).
\end{remark}

We next demonstrate that the marginal processes of $\Y(\ti)$ are Mat\'ern for a certain choice for $\sd(d\mb{\theta})$. 
Here and throughout, let $d\mb{\theta}$ represent the uniform probability distribution on $\us^{d-1}$. 
\begin{proposition}\label{prop:Matern}
Suppose the diagonal of $\sd(d\mb{\theta})$ has constant spectral density with respect to $d\mb{\theta}$. That is, $$
{\normalfont \textrm{diag}}(\sd(d\mb{\theta})) =\left[ \sigma_{11} d\mb{\theta}, \sigma_{22} d\mb{\theta}, \cdots, \sigma_{pp} d\mb{\theta}\right]
$$ for $\sigma_{jj} >0$. 
Define the normalization constants as \begin{align}
    c_j = 
    \frac{a_j^{\nu_j}\sqrt{\Gamma(\nu_j + \frac{d}{2})}}{\pi^{\frac{d}{4}}\sqrt{\Gamma(\nu_j)}},\label{eq:matern_normalization}
\end{align}
and let $\mb{Y}(\mb{s}):= (Y_j(\mb{s}))_{j=1}^p,\ \mb{s}\in\R^d$ be the multivariate Mat\'ern process given by \eqref{eq:matern_stochastic}.
Then, for each $j=1,\cdots,p$, the marginal process $\{Y_j(\ti)\}$ has the Mat\'ern covariance functions with inverse range parameter $a_j$, smoothness parameters $\nu_j$, 
and variance parameters $\sigma_{jj}$.
\end{proposition}

\begin{proof}
Writing $\sd(d\mb{\theta}) = [\sde_{jk}(d\mb{\theta})]_{j=1,k=1}^p$, we see\begin{align*}
    &C_{jj}(\hv) =c_j^2 \int_0^\infty \int_{\us^{d-1}} e^{\I \langle \hv, r\mb{\theta}\rangle} \left(a_j + \ang(\mb{\theta})\I r\right)^{-\nu_j - \frac{d}{2}}\\
    &~~~~~~~~\times \sde_{jj}(d\mb{\theta})\left(a_j - \ang(\mb{\theta}) \I r\right)^{-\nu_j - \frac{d}{2}} dr \\
     &=c_j^2\sigma_{jj}\int_0^\infty \int_{\us^{d-1}} e^{\I \langle \mb{h}, r\mb{\theta}\rangle} \left(a_j^2 +r^2\right)^{-\nu_j - \frac{d}{2}}r^{d-1} d\mb{\theta} dr
\end{align*}
since $(a + b\I)^d(a - b\I)^d = \left(a^2 + b^2\right)^d$ and $\ang(\mb{\theta})^2 = 1$.
Define the Bessel function \citep[see][]{watson_treatise_1995} as\begin{align*}
    \besselJ_{\nu}(z) = \sum_{m=0}^\infty \frac{(-1)^m}{\Gamma(m+1)\Gamma(m + \nu + 1)}\left(\frac{z}{2}\right)^{2m + \nu}.
\end{align*}
We next use the representation of the Bessel function  and the inversion formula on page 43 and 46, respectively, of \cite{stein_interpolation_2013} to obtain 
\begin{align}\begin{split}
    &C_{jj}(\hv) = c_j^2(2\pi)^{\frac{d}{2}}\sigma_{jj} \left\lVert \hv\right\rVert^{\frac{-d+2}{2}}\\
    &~~~~~\times\int_0^\infty \besselJ_{(d-2)/2}(r\left\lVert \hv\right\rVert)\left(a_j^2 + r^2\right)^{-\nu_j - \frac{d}{2}}r^{\frac{d}{2}} dx.\label{eq:bessel_J}\end{split}
\end{align}
Thus, the Fourier transform on $\mathbb{R}^d$ reduces to the Hankel transform (involving the Bessel function $J_\nu(z)$) on $\mathbb{R}$. 
The representation \eqref{eq:bessel_J} is proportional to the Mat\'ern covariance in \eqref{eq:matern_form}; see the representation of the Mat\'ern spectral density on page 49 of \cite{stein_interpolation_2013}. 

Formally showing this, we use 6.565 (4) of \cite{GR_table_2015} to obtain \begin{align*}
     C_{jj}(\hv) &= c_j^2 2^{1 - \nu_j}\pi^{\frac{d}{2}}\sigma_{jj} \left\lVert \hv\right\rVert^{\nu_j} \frac{a_j^{-\nu_j}}{\Gamma(\nu_j + \frac{d}{2})}\besselK_{\nu_j}(a_j \left\lVert \hv\right\rVert).
\end{align*}
Then, using the expression $\besselK_{\nu}(z) \sim 2^{-1}\Gamma(\nu)(z/2)^{-\nu},$ as $z\to 0$ \citep[see Section 10.30 of][]{NIST:DLMF}, we see  \begin{align*}
    C_{jj}(\mb{0}) &= c_j^2\pi^{\frac{d}{2}}\sigma_{jj}  \frac{\Gamma(\nu_j)}{a_j^{2\nu_j}\Gamma(\nu_j + \frac{d}{2})}.
\end{align*}Therefore, from our choice of $c_j$, we obtain $C_{jj}(\mb{0}) = \sigma_{jj}$ and \begin{align*}C_{jj}(\hv)&=\sigma_{jj} \frac{2^{1 - \nu_j}}{\Gamma(\nu_j)} \left(a_j\left\lVert \hv\right\rVert\right)^{\nu_j} \besselK_{\nu_j}(a_j \left\lVert \hv\right\rVert)\\
&= \mathcal{M}( \lVert\hv\rVert; a_j, \nu_j, \sigma_{jj}).\end{align*}
This completes the proof. 
\end{proof}

The Mat\'ern-type model defined in \eqref{eq:matern_stochastic} depends on the choice of the {\em directional measure} $\sd(d\mb{\theta})$.
This, in the spatial setting $(d\ge 2)$ leads to a great amount of flexibility, since $\sd(d\mb{\theta})$ is in fact an 
infinite-dimensional parameter. Therefore, in contrast to the case $d=1$, one cannot expect to obtain a canonical ``spatial extension'' 
of the classical Mat\'ern model even in the scalar-valued regime $p=1$.  In what follows, we will offer several natural approaches, where
we begin with more basic models for $\sd(d\mb{\theta})$ and then gradually build more complexity in the model.


Since the integrals above are not often available in closed form for most parameter settings, we approximate the integrals for the plots in this section, and Fourier transform approaches can be used to do this quickly \citep{averbuch_fast_2006}.
Furthermore, the processes can be simulated efficiently. 
Throughout this section, we use the simulation approach of \cite{emery_improved_2016} as recommended in \cite{alegria_bivariate_2021}, using the spectral density and importance sampling to simulate the multivariate process.

\subsection{Cross-covariances with real directional measure}

Here, we consider the case where $\sd(d\mb{\theta}) = \Sig d\mb{\theta}$ for some self-adjoint and positive-definite matrix $\Sig = [\sigma_{jk}]_{j,k=1}^p$. 
In contrast to $\Sig_H$ of Section \ref{Introduce_model}, the matrix $\Sig$ must be a real-valued matrix in this case, due to the requirement $\sd(d\mb{\theta}) = \overline{\sd(-d\mb{\theta})}$. 
In this setting, we obtain a cross-covariance of \begin{align*}
    &C_{jk}(\mb{h}) = \sigma_{jk}c_jc_k\int_{-\infty}^\infty \int_{\us^{d-1}, \ang(\mb{\theta}) = 1} e^{\I \langle \hv, r\mb{\theta}\rangle} \\&~~~~~~~\times\left(a_j + \I r\right)^{-\nu_j - \frac{d}{2}}\left(a_k - \I r\right)^{-\nu_k - \frac{d}{2}}|r|^{d-1}d\mb{\theta} dr.
\end{align*}
This integral can be somewhat simplified if one assumes that $\{\mb{\theta} | \ang(\mb{\theta}) = 1\} = \{\mb{\theta} | \langle \mb{\theta}, \mb{\theta}^*\rangle >0\}$ for some fixed $\mb{\theta}^* \in \us^{d-1}$ and the standard Euclidean inner product $\langle \cdot, \cdot \rangle$. The entire spectral density is only real in the case of $a_j = a_k$ and $\nu_j = \nu_k$, for which the cross-covariance is proportional to a Mat\'ern covariance.
However, finding closed-form expressions for the integral in full generality is not straightforward. 
We show simulated processes and their cross-covariances in Figure~\ref{fig:simulated_spatial_constant}. 

\begin{figure}[ht!]
\centering
 \includegraphics[width = .30\textwidth]{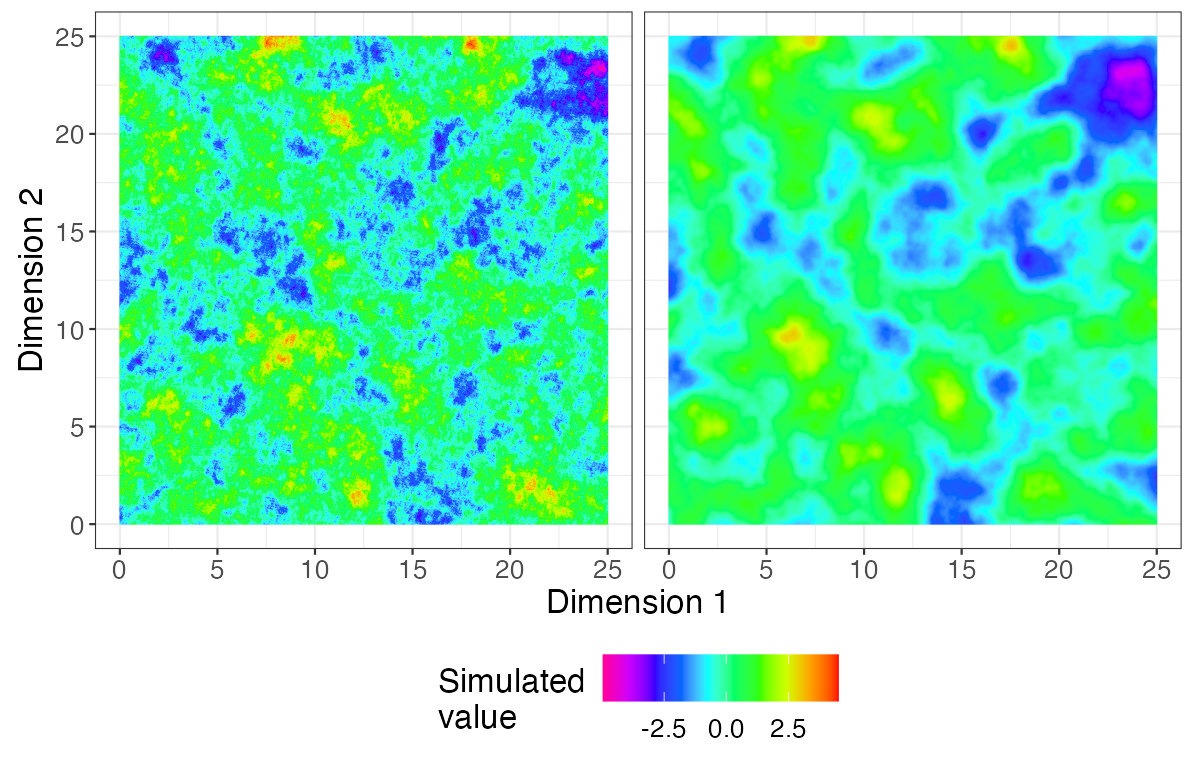}
\includegraphics[width = .15\textwidth]{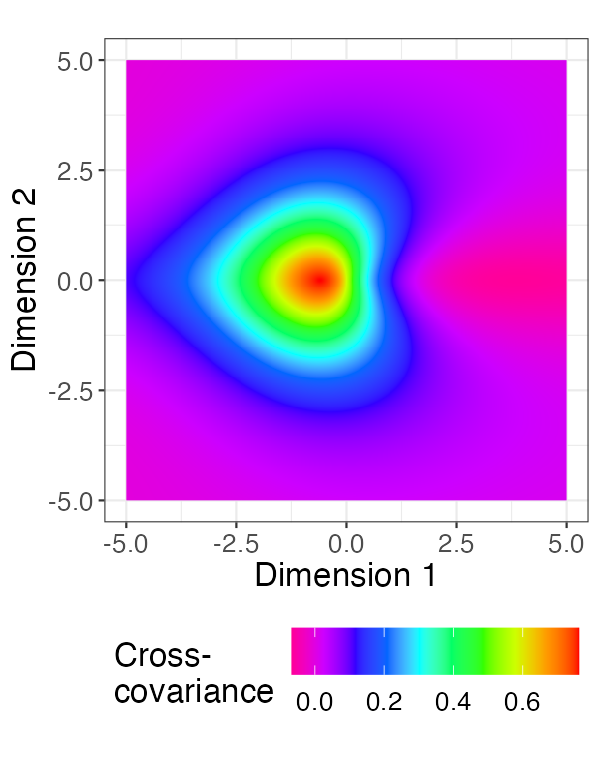}

 \includegraphics[width = .30\textwidth]{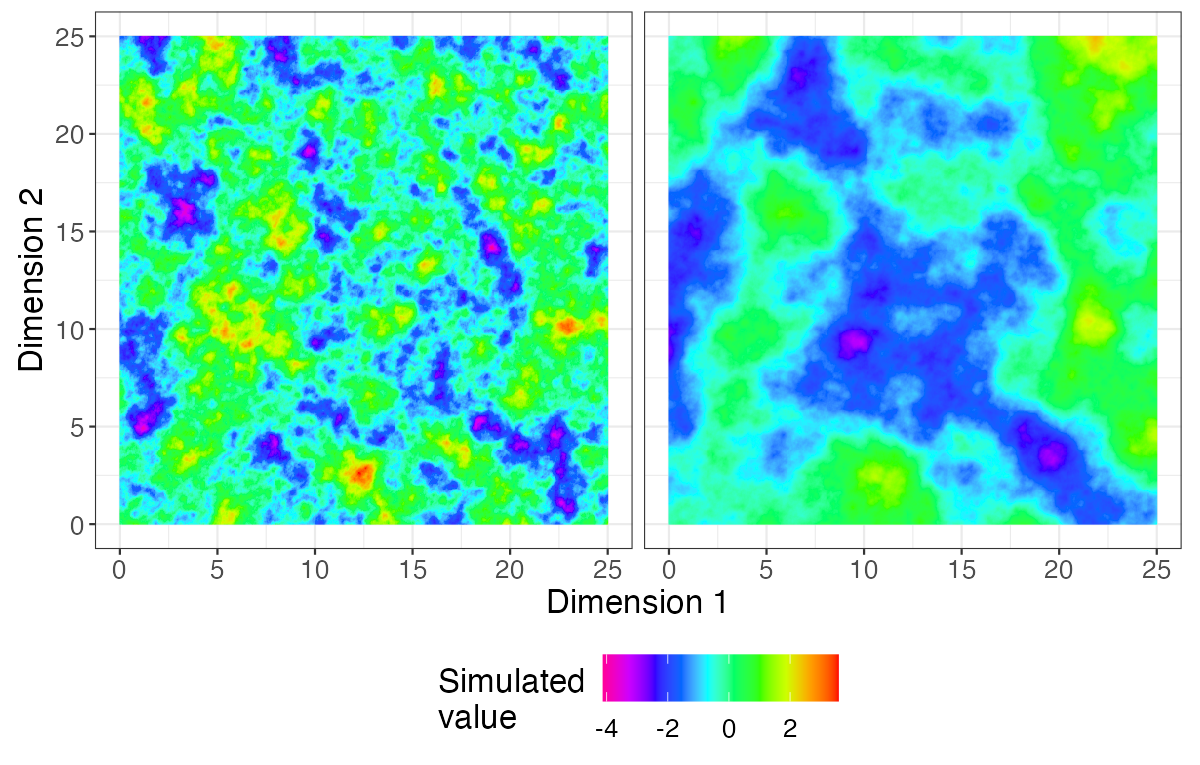}
\includegraphics[width = .15\textwidth]{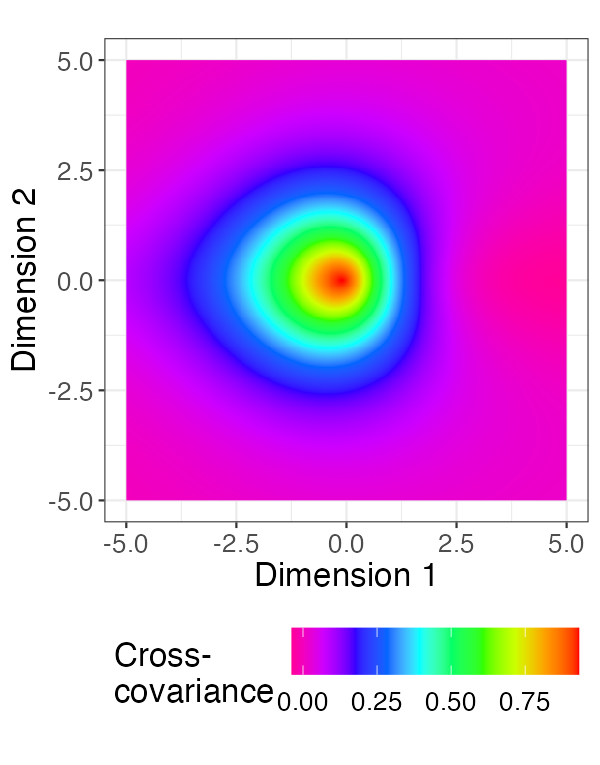}

 \includegraphics[width = .30\textwidth]{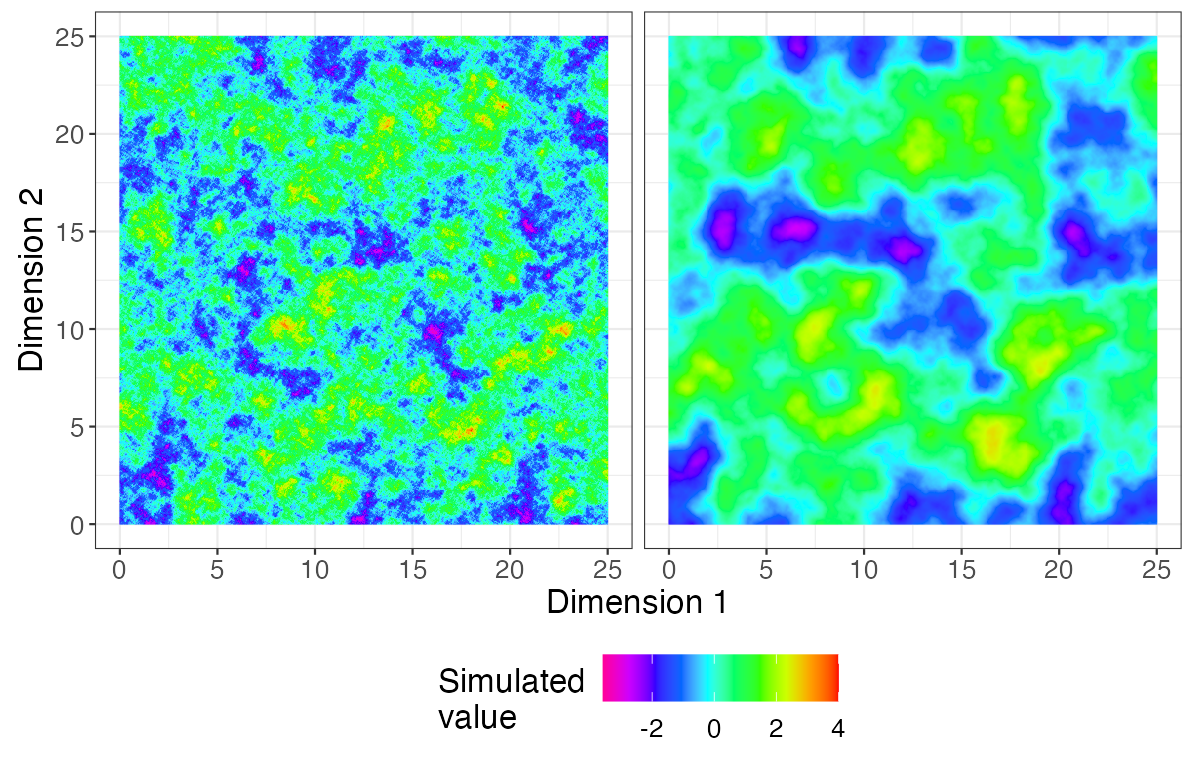}
\includegraphics[width = .15\textwidth]{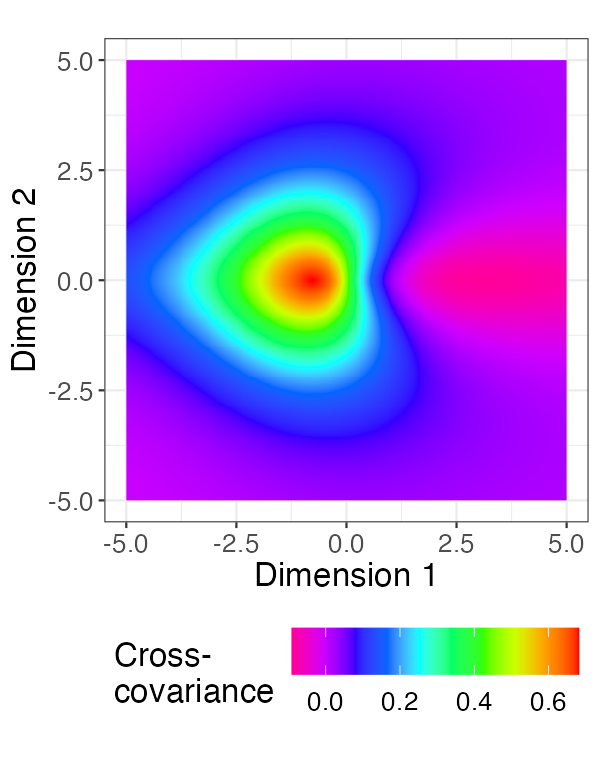}

\caption[Cross-covariance examples for $d=2$ and real directional measure]{Simulated realizations of process $j$ (Left) and Process $k$ (Middle) over a grid from $0$ to $25$; (Right) their cross-covariance function. 
For each process, we take $\sd(d\mb{\theta}) = \Sig d\mb{\theta}$, $\sig_{jj} = \sig_{kk} = 1$, $\sig_{jk} = 0.85$, $\ang(\mb{\theta}) = \textrm{sign}(\theta_1)$ where $\theta_1$ is the first entry of $\mb{\theta}$.
(Top) $a_j = a_k = 1$, $\nu_j = 0.5$, $\nu_k = 1.5$.
(Middle) $a_j = 1.2$, $a_k = 0.8$, and $\nu_j = \nu_k = 1$. 
(Bottom) $a_j = 1.1$, $a_k = 0.9$, and $\nu_j = 0.5$, and $\nu_k = 1.5$.} 
\label{fig:simulated_spatial_constant}
\end{figure}

\subsection{Cross-covariances with imaginary or varying directional measure}\label{sec:d2vary}

Next, we explore the flexibility that $\sd(d\mb{\theta})$ provides and the resulting cross-covariances.
We first outline one particular area where the result is directly comparable to the $d=1$ case. Suppose that $\sde_{jk}(d\mb{\theta}) = \textrm{sign}(\theta_1)\I d\mb{\theta}$, where $\theta_1$ is the first entry of the vector $\mb{\theta}$.
We choose this axis arbitrarily, and such a formulation can be extended to the conditions of $\langle \mb{\theta}^*, \mb{\theta}\rangle >0$ and $<0$, and so forth for some fixed vector $\mb{\theta}^*$. 
Finally, let $\nu = \nu_j = \nu_k>0$, $\nu \notin m/2$ for $m \in \mathbb{N}$, and $a = a_j = a_k>0$, so that we are left with a cross-covariance of \begin{align*}
    C_{jk}(\hv) &= c_jc_k \int_{0}^\infty \int_{\us^{d-1}} e^{\I \langle \hv, r\mb{\theta}\rangle} \left(a^2 + r^2\right)^{-\nu - \frac{d}{2}} \\&~~~~~~~~ \times \I \textrm{sign}(\theta_1) r^{d-1} d\mb{\theta} dr.
\end{align*}
First, take $\hv$ to have first entry $0$ so that we may write $\hv\in \textrm{span}\{\mb{e}_2,\mb{e}_3,\dots, \mb{e}_d\}$ where $\mb{e}_q$ is the $q$-th standard basis function.
Then, as the integrand is an odd function of $\theta_1$, $$\int_{\us^{d-1}} e^{\I\langle \hv, r\mb{\theta}\rangle}\textrm{sign}(\theta_1)d\mb{\theta}=0.$$
This implies an axis of reflection across $\mb{e}_1$ where $C_{jk}(\hv) = 0$ for any $\hv$ with first entry $0$. 
Alternately, when $\hv$ points in the orthogonal direction, that is, $\hv = b\mb{e}_1$ for $b \in \mathbb{R}$, we outline in the Supplement how then  $$C_{jk}(b\mb{e}_1)\propto \textrm{sign}(b)\left|ba_j^{-1}\right|^{\nu_j} \left(\struveL_{-\nu_j}(a_j|b|) - \besselI_{\nu_j}(a_j|b|)\right).$$ 
The cross-covariance in this direction has the same form as in the $d=1$ case of imaginary entries, a natural generalization to the $d > 1$ case. 
We also confirm this result numerically in the code accompanying this paper. 
An example of this cross-covariance is plotted in the top panel of Figure~\ref{fig:simulated_spatial}. 
To develop this closed form to general $\hv$, we have explored incomplete cylindrical functions as described in \cite{agrest_general_1971}, yet this does not appear to properly generalize the results here.

The above is just one example of the flexibility that the measure $\sd$ provides, and there is a broad array of potential symmetries in the domain of the process. 
\cite{didier_domain_2018} provide more insight about the possibilities, characterizing all domain and range symmetries in the case $p = d = 2$ for operator fractional Brownian fields. 
As one example, we plot another cross-covariance with more complicated $\sd(d\mb{\theta})$ in the bottom panel of Figure~\ref{fig:simulated_spatial}.

\begin{figure}[ht!]
\centering
 \includegraphics[width = .30\textwidth]{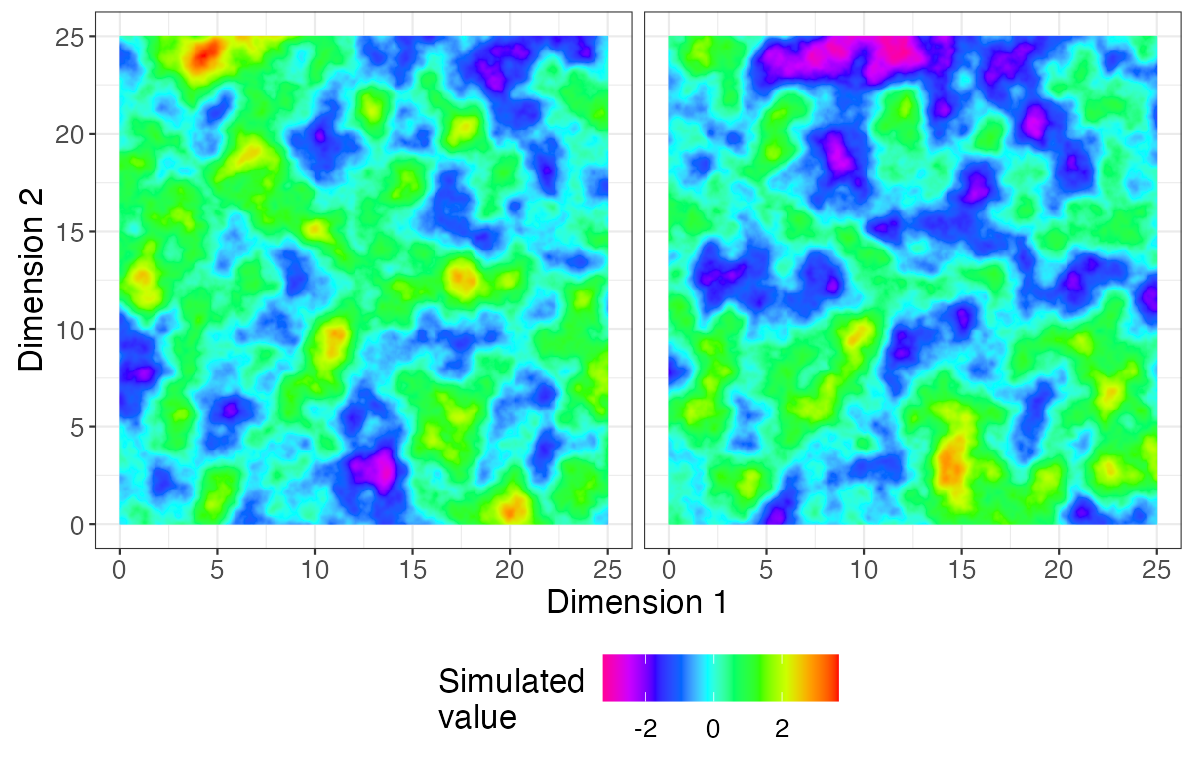}
\includegraphics[width = .15\textwidth]{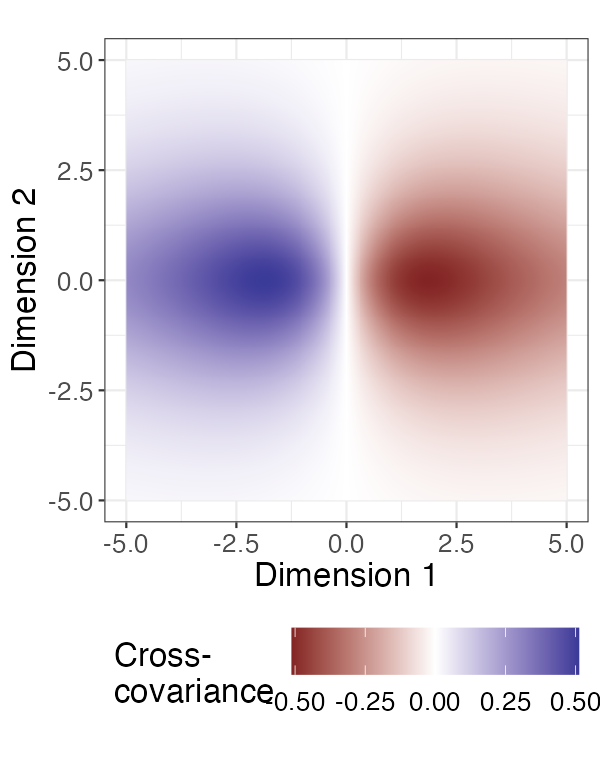}

 \includegraphics[width = .30\textwidth]{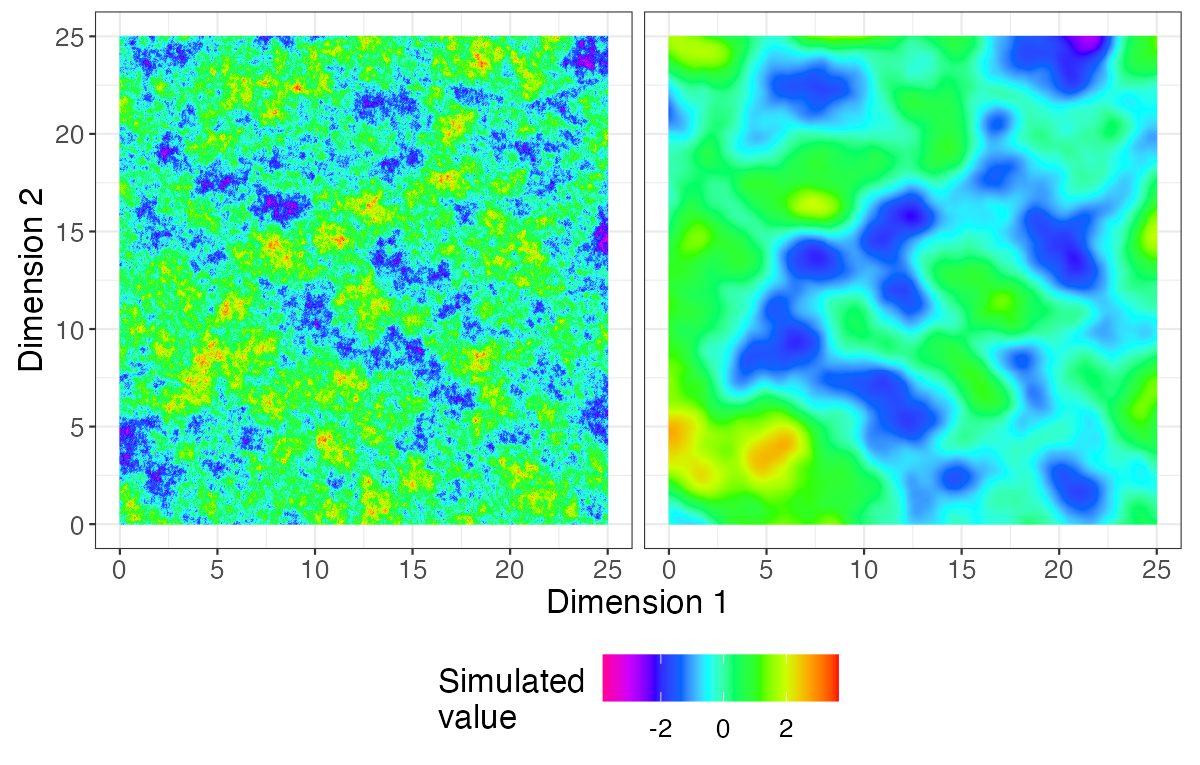}
\includegraphics[width = .15\textwidth]{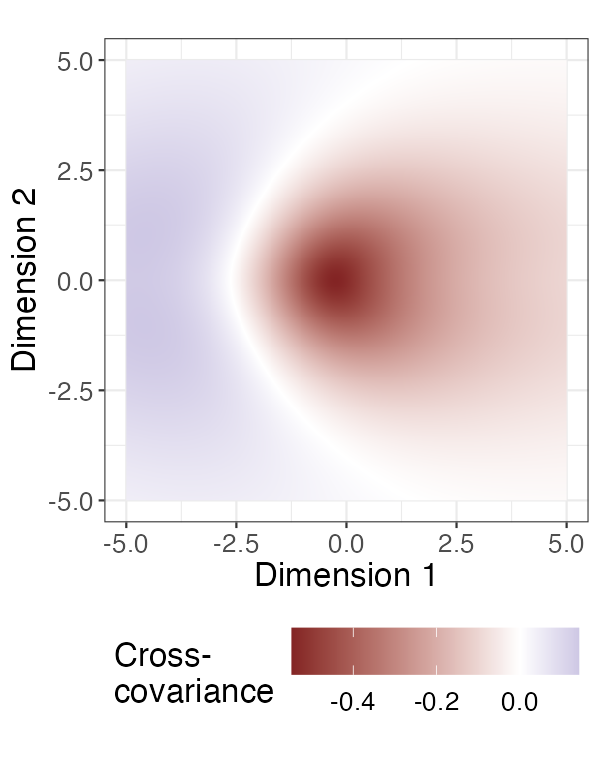}

 \includegraphics[width = .30\textwidth]{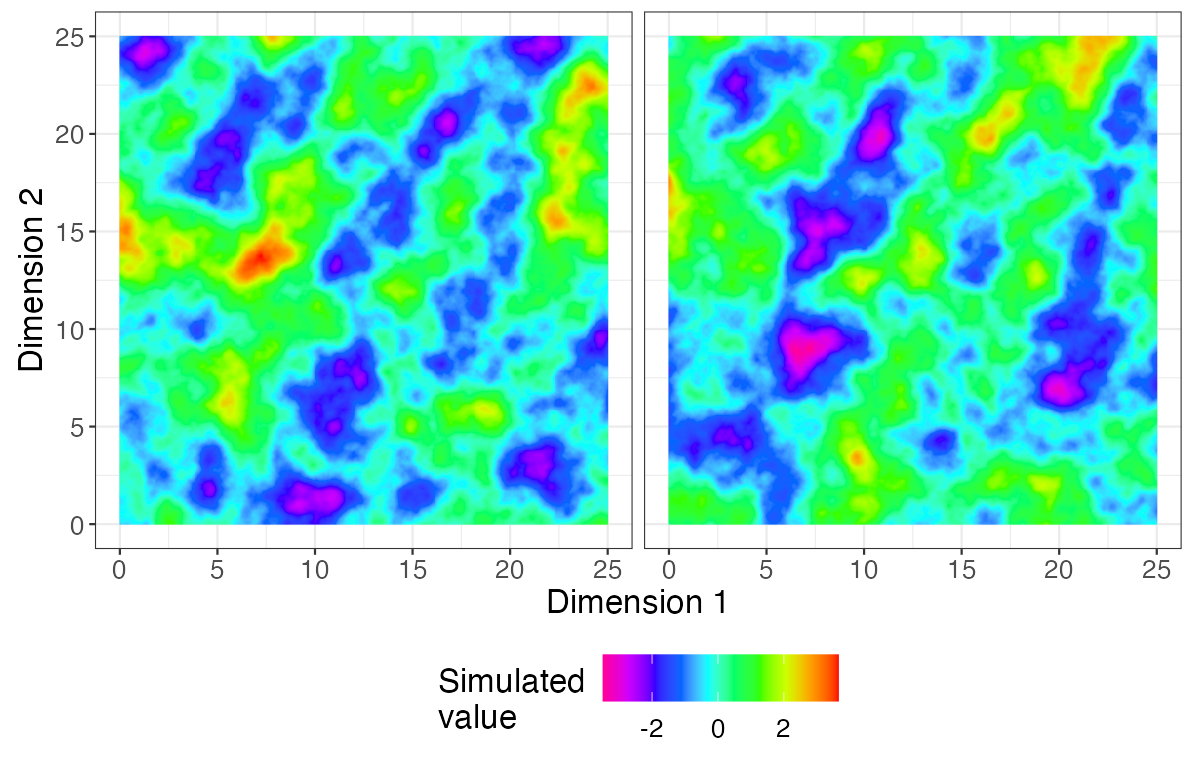}
\includegraphics[width = .15\textwidth]{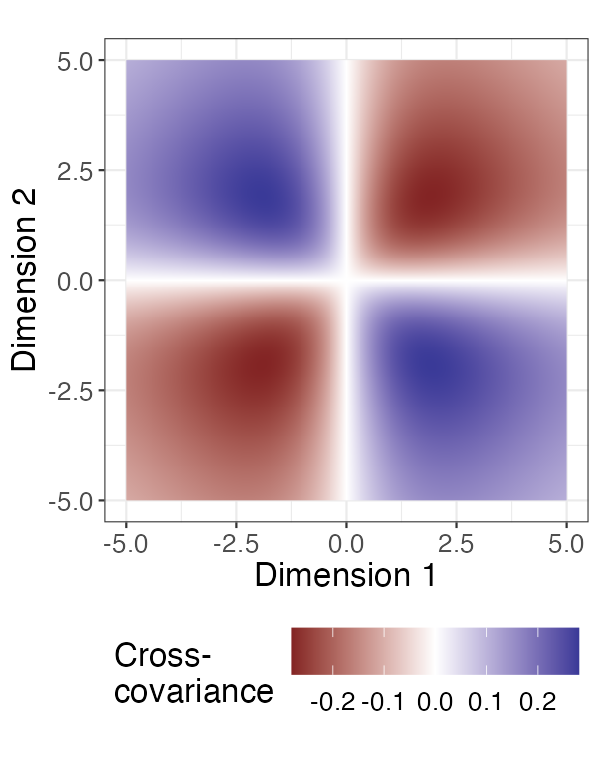}

\caption[Cross-covariance examples for $d=2$ and imaginary directional measure]{Simulated realizations of process $j$ (Left) and Process $k$ (Middle) over a grid from $0$ to $25$; (Right) their cross-covariance function. 
For each process, we take $a_j = a_k = 1$, $\ang(\mb{\theta}) = \textrm{sign}(\theta_1)$, $\sde_{jj}(d\mb{\theta}) = \sde_{kk}(d\mb{\theta}) =   d\mb{\theta}$, and normalization A.
(Top) With $\nu_j = \nu_k= 1.5$,  $\sde_{jk}(d\mb{\theta}) = 0.97\I\textrm{sign}(\theta_1)d\mb{\theta}$. 
(Middle) With $\nu_j = 0.4$, $\nu_k = 2.5$, and   $\sde_{jk}(d\mb{\theta}) = 0.97\I\textrm{sign}(\theta_1)d\mb{\theta}$. 
(Bottom)  With $\nu_j = \nu_k = 1.5$ and $\sde_{jk}(d\mb{\theta}) = 0.97\left(\textrm{sign}(\theta_1) \times \textrm{sign}(\theta_2)\right) d\mb{\theta}$.} 
\label{fig:simulated_spatial}
\end{figure}

By considering general complex forms for $\sd(d\mb{\theta})$, we introduce a wide class of spatial cross-covariance functions for $d\geq 2$. 
They are a natural extension of new multivariate Mat\'ern models when $d=1$. 
There is a wide amount of model flexibility, especially with respect to the input domain. 
For practitioners, it may be helpful to understand which type of asymmetries are likely and impose relevant restrictions on the functions $\sd(d\mb{\theta})$ and $\ang(\mb{\theta})$. 
These computations for such a flexible model are ameliorated by computational approaches for multivariate Fourier transforms on Euclidean or polar grids; for example, see \cite{averbuch_fast_2006}. 
\section{Simulation Studies and Computation}\label{sec:computation_simulation}

In this section, we provide simulation studies 
focusing on estimation, testing, and prediction for the imaginary component of the proposed models. 
We consider the settings of $d=1$ and $d=2$ in Sections \ref{sec:sim_time_series} and \ref{sec:sim_spatial}, respectively.

\subsection{The $d=1$ case}\label{sec:sim_time_series}
We consider a bivariate Mat\'ern process observed on $n=300$ points uniformly generated in $[0,1]$, with both coordinates observed on the same points. 
For the first coordinate of the process, we take $\nu_1 = 0.5$, $a_1 = 8$, and $\sigma_{11} = 1$ and 
for the second coordinate, we take $\nu_2 = 0.75$, $a_2 = 12$, and $\sigma_{22} = 1$. 
This evaluates performance for two processes with different parameters.
For the cross-covariances, we generate from three different models:
\begin{itemize}
    \item $\sigma_{12} = 0.4$,
    \item $\sigma_{12} = 0.4\mathbbm{i}$,
    \item $\sigma_{12}= 0.4 + 0.4 \mathbbm{i}$,
\end{itemize}to evaluate the variance parameterizations introduced in Section \ref{Introduce_model}.
For each setting, we compute $200$ independent realizations.  For each realization of the process, we consider {\it estimating} two different models from this data, 
based on the model in Section \ref{Introduce_model} with parameters $\nu_1$, $\nu_2$, $a_1$, $a_2$, $\sigma_{11}$, $\sigma_{22}$, $\Re(\sigma_{12})$, and (potentially) 
$\Im(\sigma_{12})$.  The first takes models in Section \ref{Introduce_model} with $\sigma_{12}$ complex, so that both $\Re(\sigma_{12})$ and $\Im(\sigma_{12})$ are estimated.
The second assumes $\sigma_{12} \in \mathbb{R}$ and $\Im(\sigma_{12}) =0$. 
In the supplement, we also consider the alternate factorization in \cite{bolin_multivariate_nodate} and the multivariate Mat\'ern of \cite{gneiting_matern_2010}. 
We use maximum likelihood estimation and numerically optimize the likelihood using the L-BFGS-B routine in R \citep{byrd_limited_1995}.

First, we consider a hypothesis testing problem for $\sigma_{12} \in \mathbb{R}$. Specifically, we consider: 
\begin{align*}
    H_0&: \Im(\sigma_{12}) = 0, \\
    H_1&: \Im(\sigma_{12}) \neq 0.
\end{align*}
Since the null hypothesis is nested in the more general, unconstrained model $\Im(\sigma_{12})\in \R$, the standard likelihood ratio 
testing methodology applies.  Specifically, we evaluate the statistic  
$\lambda := 2\log({\cal L}_1/{\cal L}_0)$, where ${\cal L}_0$ and ${\cal L}_1$ are the likelihoods for the constrained ($\Im(\sigma_{12})=0$) and 
unconstrained models. Appealing to the classical asymptotic theory, under the null hypothesis the $p$-value is approximated by
$\P[\chi > \lambda]$, where $\chi$ follows the $\chi^2$ distribution with $1$ degree of freedom.  
Thus, we evaluate a likelihood ratio test between their real and complex versions, using the $\chi^2_1$ distribution with a set threshold for $p$-values of $0.050$. 
At the nominal level $0.050$, the test rejected $0.030$ of the 200 simulations under the null hypothesis, demonstrating Type I error control. 
At this level, the test power was $0.660$ when $\sigma_{12} = 0.4\I$ and $0.585$ when $\sigma_{12} = 0.4 + 0.4\I$, rejecting the null hypothesis over the majority of simulations. 
The empirical cumulative distribution functions of the $p$-values are plotted in Figure \ref{fig:lrt}. 
When the null hypothesis is true, the $p$-values are approximately uniform (Kolmogorov-Smirnov $p$-value 0.1725) and lie close to the line $y=x$, again indicating decent type I error control.
Under the two alternative hypotheses considered, the substantial power is demonstrated by the deviations from the line $y=x$. 
In general, the simulation study establishes a well-performing test of the null hypothesis of the imaginary component of $\sigma_{12}$ is $0$.

\begin{figure}
    \centering
    \includegraphics[width = .47\textwidth]{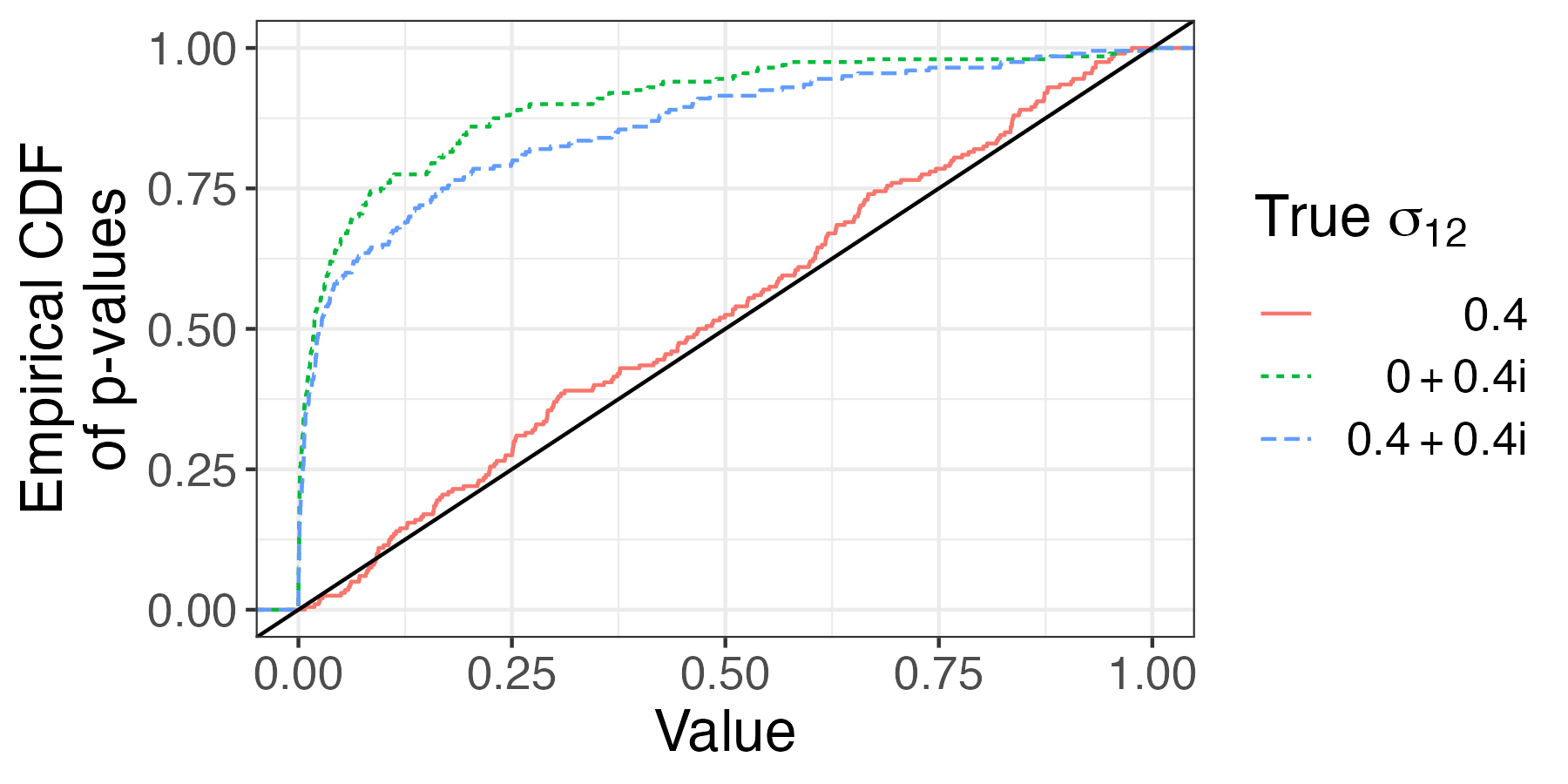}
    \caption{Empirical cumulative distribution of p-values under the three settings for $\sigma_{12}$.}
    \label{fig:lrt}
\end{figure}

We turn next to estimation of $\sigma_{12}$.
In Table \ref{tab:par_est}, we summarize the estimates of $\sigma_{12}$ for each of the models and true data generating models. 
The model that assumes $\sigma_{12} \in \mathbb{R}$ performs well when this is the true setting, with little parameter bias and standard deviation at $0.23$.
When the real model is misspecified, however, the estimated parameter $\Re(\sigma_{12})$ has higher variance over simulations, with sample standard deviation around $0.30$ compared to $0.23$. 
In contrast, the model that assumes $\sigma_{12} \in \mathbb{C}$ results in estimates with sample standard deviations approximately the same no matter the setting. 
Furthermore, the complex model estimates the components $\sigma_{12}$ very well on average across all three settings, with minimal bias in comparison to the true values.
Interestingly, the results suggest that $\Im(\sigma_{12})$ is not necessarily harder to estimate compared to $\Re(\sigma_{12})$ and introducing $\Im(\sigma_{12})$ as a parameter does not substantially detract from the estimation of $\Re(\sigma_{12})$.

\begin{table}[ht]\footnotesize
    \centering
        \scalebox{1}{
        \setlength\tabcolsep{3pt}
    \begin{tabular}{|c|c c|c c|c c|}\hline
          & $\Re(\sigma_{12})$ &$\Im(\sigma_{12})$  & $\Re(\sigma_{12})$ &$\Im(\sigma_{12})$ &  $\Re(\sigma_{12})$ &$\Im(\sigma_{12})$  \\ 
          Truth & $0.40$ & $0.00$ & $0.00$ & $0.40$ & $0.40$ & $0.40$\\ \hline
         Real & \begin{tabular}{c}$0.40$ \\ 
          (0.23)\end{tabular} & - & \begin{tabular}{c}$0.01$ \\ 
          (0.32)\end{tabular} & -  & \begin{tabular}{c}$0.52$ \\ 
          (0.30)\end{tabular} & - \\
          Complex & \begin{tabular}{c}$0.38$ \\ 
          (0.22)\end{tabular}& \begin{tabular}{c}$-0.02$ \\ 
          (0.19)\end{tabular} & \begin{tabular}{c}$0.00$ \\ 
          (0.20)\end{tabular} &   \begin{tabular}{c}$0.38$ \\ 
          (0.20)\end{tabular} &  \begin{tabular}{c}$0.37$ \\ 
          (0.21)\end{tabular} &  \begin{tabular}{c}$0.36$ \\ 
          (0.22)\end{tabular}  \\ \hline
    \end{tabular}
    }
    \caption{Mean (without parentheses) and sample standard deviation (in parentheses below the mean value) over 200 simulations of real and complex parts of the estimate of $\sigma_{12}$. The first row ``Truth'' indicates the true data-generating model for $\sigma_{12}$.
}
    \label{tab:par_est}
\end{table}


Finally, in Table \ref{tab:pred}, we compare the approaches in terms of prediction at unobserved locations from a random uniform sample of locations, summarized by root-mean-squared error averaged over simulations. 
To emphasize the role of the cross-covariance, we predict our testing data for variable 1 using {\it only} observed data from variable 2, and vice versa. 
When $\sigma_{12} = 0.4$, the real and complex models perform similarly, around $0.91$ for both variable 1 and variable 2. 
However, when there is an imaginary component to the model, prediction is considerably detracted when using the real model, with substantial gaps of $0.065$ to $0.095$ between the RMSEs of the models. 
Overall, we find that accounting for the imaginary component of the model, when present, is necessary for proper estimation and prediction in this setting. 

\begin{table}[ht]\footnotesize
    \centering
    \begin{tabular}{|c|c|c|c|c|c|}\hline
          & Response & \begin{tabular}{c} $\sigma_{12}=$ \\ $0.4$  \end{tabular}& \begin{tabular}{c} $\sigma_{12}=$ \\ $ 0.4 \mathbbm{i}$  \end{tabular} & \begin{tabular}{c} $\sigma_{12}=$ \\ $0.4 + 0.4 \mathbbm{i}$  \end{tabular} \\ \hline
         Real & 1 & 0.924 & 1.013 &0.911 \\ 
         Complex & 1  & 0.915 & 0.935 & 0.846 \\ 
         Real & 2 &0.912& 0.993& 0.917 \\ 
         Complex & 2 & 0.902  & 0.907& 0.827 \\ \hline
    \end{tabular}
    \caption{Average root-mean-squared-error of predictions for real and complex versions of the estimated model, broken down by the target response variable. 
    }
    \label{tab:pred}
\end{table}

\subsection{The $d=2$ case} \label{sec:sim_spatial}

We next consider the case where $d=2$. 
The marginal parameters of the two processes are taken to be the same in the first study.
In the context of Section \ref{sec:d2vary}, we take $\psi(\mb{\theta}) = \textrm{sign}(\mb{\theta}^* \mb{\theta})$.
For data generation, we take two situations of $\mb{\theta}^* = (1,0)^\top$ versus $\mb{\theta}^* = (1,1)^\top$ to parameterize $\psi(\theta)$ and $\mu_{12}(d\mb{\theta}) = \Re(\sigma_{12}) + \textrm{sign}(\mb{\theta}^\top \mb{\theta}^*)\Im(\sigma_{12}) \I$ as formulated in Section \ref{sec:d2vary}. 
As in the $d=1$ case, we consider $\sigma_{12} = 0.4$, $\sigma_{12} = 0.4\I$, and $\sigma_{12} = 0.4 + 0.4\I$.
For estimation, we consider the case where $\mb{\theta} = (1,0)^\top$, as well as when it is estimated from the data. 

We present likelihood ratio test results in Table \ref{tab:lrt_d2}.
While Type I error rates are slightly inflated when $\mb{\theta}^*$ is not estimated, performance overall is good when $\mb{\theta}^*$ is estimated, at 0.06 near the 0.05 nominal level.
When the true $\mb{\theta}^*$ is specified, the tests have more power compared to estimation of $\mb{\theta}^*$; alternatively, when $\mb{\theta}^*$ is misspecified, the tests have less power compared to having $\mb{\theta}^*$ estimated from the data. 
These results thus line up with expectations.

\begin{table}[ht]
    \centering
    \scalebox{.9}{
    \begin{tabular}{|c c|c c c|}\hline
         True $\mb{\theta}^*$ & \begin{tabular}{c}
              Implemented   \\
             $\mb{\theta}^*$
         \end{tabular}& \begin{tabular}{c} $\sigma_{12}=$ \\ $0.4$  \end{tabular}& \begin{tabular}{c} $\sigma_{12}=$ \\ $ 0.4 \mathbbm{i}$  \end{tabular} & \begin{tabular}{c} $\sigma_{12}=$ \\ $0.4 + 0.4 \mathbbm{i}$  \end{tabular} \\ \hline
         $(1,0)^\top$ & $(1,0)^\top$ & 0.12 & 0.97& 0.89\\ 
         $(1,1)^\top$ & $(1,0)^\top$ & 0.12 & 0.48& 0.48\\ 
         $(1,0)^\top$ & Estimated & 0.06 & 0.78 & 0.72\\ 
         $(1,1)^\top$ & Estimated & 0.06 & 0.98 & 0.93\\ \hline
    \end{tabular}

    }
    \caption{Proportion of simulations with significant likelihood comparison test for imaginary component with nominal level $0.05$. The first column refers to the Type I error rate, while the other two columns are the power of the test under the specified alternative. }
    \label{tab:lrt_d2}
\end{table}

In Table \ref{tab:par_est_d2}, we provide the mean and standard deviation of estimates for $\sigma_{12}$ over simulations when $\mb{\theta}^*$ is correctly specified. 
When $\mb{\theta}^*$ is estimated, the imaginary part of $\mu_{12}(d\mb{\theta})$ may only be identified up to a sign. 
As in the case $d=1$, maximum likelihood estimation estimation performs generally well. 
Compared to the $d=1$ case, parameters might have slightly more bias (at times, $0.30$ on average compared to the target $\Im(\sigma_{12}) = 0.4$), but the estimates are also less variable compared to $d=1$. 

\begin{table}[ht]\footnotesize
    \centering
    \setlength\tabcolsep{3pt}
    \begin{tabular}{|c|c c|c c|c c|}\hline
         & $\Re(\sigma_{12})$ & $\Im(\sigma_{12})$ &$\Re(\sigma_{12})$ & $\Im(\sigma_{12})$ &$\Re(\sigma_{12})$ & $\Im(\sigma_{12})$  \\ \hline
         True value & $0.40$ &$0.00$  & $0.00$ & $0.40$ & $0.40$ &$0.40$ \\ \hline
          Real   & \begin{tabular}{c}$0.42$ \\ 
          (0.10)\end{tabular} & - &  \begin{tabular}{c}$-0.06$ \\ 
          (0.12)\end{tabular} & - &  \begin{tabular}{c}$0.46$ \\ 
          (0.15)\end{tabular} & - \\ 
          Complex &  \begin{tabular}{c}$0.41$ \\ 
          (0.11)\end{tabular} &  \begin{tabular}{c}$-0.01$ \\ 
          (0.13)\end{tabular}  & \begin{tabular}{c}$-0.04$ \\ 
          (0.09)\end{tabular} &\begin{tabular}{c}$0.32$ \\ 
          (0.17)\end{tabular} & \begin{tabular}{c}$0.42$ \\ 
          (0.13)\end{tabular} &\begin{tabular}{c}$0.30$ \\ 
          (0.19)\end{tabular}\\ \hline
    \end{tabular}
    \caption{Mean (without parentheses) and sample standard deviation (in parentheses below the mean) over simulations of real and complex parts of the estimate of $\sigma_{12}$ for $d=2$ when $\mb{\theta}^*$ is correctly specified. }
    \label{tab:par_est_d2}
\end{table}

\subsection{Computation}

The code accompanying the paper is available in \url{https://github.com/dyarger/multivariate_matern}.  For these simulations and data analysis, we use fast 
 Fourier transforms \citep{frigo_fast_1999} implemented in \texttt{R}, as well as (for $d=1$) $\texttt{approx}$, the basic interpolation function in \texttt{R}, or (for $d=2$) 
 tools from the \texttt{fields} package \citep{fields}; again, see \url{https://github.com/dyarger/multivariate_matern}. 
This appears more stable than other approaches like using the special functions or na\"ively integrating. 
Another approach for computation is Fourier feature approaches \citep[for example,][]{miller_bayesian_2022, emery_improved_2016}, but its implementation and use for estimation 
in this setting would require substantial more development and validation. We also briefly compare approaches for computation in the Supplement.

\section{Data analysis}\label{sec:data_analysis}

We next demonstrate the estimation of these new multivariate Mat\'ern models in the context of time series and spatial data. 
We first examine time series of housing data that demonstrates covariance asymmetry features well-represented by the full model in Section \ref{Introduce_model}, similar to an analysis alongside \cite{mu2024gaussian}.
We then analyze ocean temperature data collected by Argo floats in the setting where $d=2$, which is studied in \cite{bolin_multivariate_nodate}.
In the Supplement, we also present an analysis on the air pressure and temperature data in the Pacific Northwest of North America \citep[see][]{gneiting_matern_2010, apanasovich_valid_2012, bolin_multivariate_nodate, cressie_multivariate_2016, hu_multivariate_2013}. 

For model estimation, we employ maximum likelihood estimation under a Gaussian assumption. 
Let $\mb{X}_{obs,1} = [X_1(\ti_i)]_{i=1}^{n_1} \in \mathbb{R}^{n_1}$ and $\mb{X}_{obs,2} = [X_2(\ti_i)]_{i=1}^{n_2} \in \mathbb{R}^{n_2}$ be vectors of the two variables, respectively.
We suppose that \begin{align*}
\begin{pmatrix} \mb{X}_{obs, 1} \\ 
\mb{X}_{obs, 2}
\end{pmatrix}&\sim  \mathcal{N}_{n_1+n_2}\Bigg(\begin{pmatrix} \mb{\mu}_1 \\  \mb{\mu}_2 \end{pmatrix}, \\
&~~~~~~~~~\mb{\Gamma} := \begin{pmatrix}
\mb{\Sigma}_{11} + \gamma^2_1\Imat_{n_1} & \mb{\Sigma}_{12}
 \\ 
\mb{\Sigma}_{12}^\top & \mb{\Sigma}_{22} + \gamma^2_2\Imat_{n_2}
\end{pmatrix}\Bigg),
\end{align*}where $\mb{\mu}_1$ and $\mb{\mu}_2$ are mean vectors and $\mb{\Sigma}_{jk}$ are covariance matrices. 
Adding the nugget effect terms of $\gamma^2_1$ and $\gamma^2_2$ is commonly done in spatial statistics applications including previous ones with this data \citep[for example,][]{gneiting_matern_2010, apanasovich_valid_2012}. 
Since closed-form representation of the maximum likelihood estimates are not generally available, we again numerically optimize the likelihood using the L-BFGS-B routine in R \citep{byrd_limited_1995}.

\subsection{Housing market data}

The housing market data is made available from Redfin, a national real estate brokerage \citep{redfin}. 
We consider $p=2$ variables ``median sale price''  (variable $j=1$) and ``inventory'' (variable $j=2$), a measure of how many homes are on the market, in the San Diego, California metropolitan area, which are summarized weekly beginning in January 2017, with a total of $n_1 = n_2 =383$ weeks. 
After removing the means $\mb{\mu}_1$ and $\mb{\mu}_2$ from each series with a standard LOESS estimator and standardizing the two variables, we estimate parameters of the covariance model. 

The data is presented in the upper panel of Figure \ref{fig:housing}. 
On the lower panel, we plot the cross-correlation function, using three different models, the time-series nonparametric estimator and the real (assuming $\Im(\sigma_{12}) = 0$) and complex  ($\Im(\sigma_{12})$ estimated) versions of the model in Section \ref{Introduce_model}. 
The sample marginal correlation between the processes is $0.049$.
The nonparametric estimator demonstrates asymmetry, with positive correlation for negative lags and negative correlations for positive lags, suggesting the ``Complex'' model is appropriate. 
The estimated maximized log-likelihoods were $-1001.722$ for the ``Complex'' model and $-1004.684$ for the ``Real'' model, resulting in a p-value of $0.0149$ for the nested likelihood ratio test, demonstrating that the ``Complex'' model provides a statistically significant improvement over the ``Real'' model. 
The parameter estimates are presented in Table \ref{tab:housing}, with similar parameter estimates between the ``Real'' and ``Complex'' models for their shared parameters. 

The asymmetric cross-correlation estimates also make sense when interpreted conceptually.
For negative lags, we compare the median sale price with the inventory in the future, which exhibits more positive correlation. 
In other words, if median sale prices are high, inventory will later increase. 
This could be attributed to more homeowners willing to sell their houses when prices have been higher. 
Alternately, for positive lags, we compare the inventory with the sale price in the future, which exhibits negative correlation.
That is, if inventory is high, the median sale price will decrease in the near future. 
This could be attributed to increases in inventory providing more competition among sellers and more options for buyers, which in turn will drive down prices. 
Such explanations of the estimated cross-correlation require time for reaction, thus resulting in greater correlation at non-zero lags as compared to the marginal 
correlations at zero lag. 

\begin{figure}[ht]
    \centering
    \includegraphics[width = .43\textwidth]{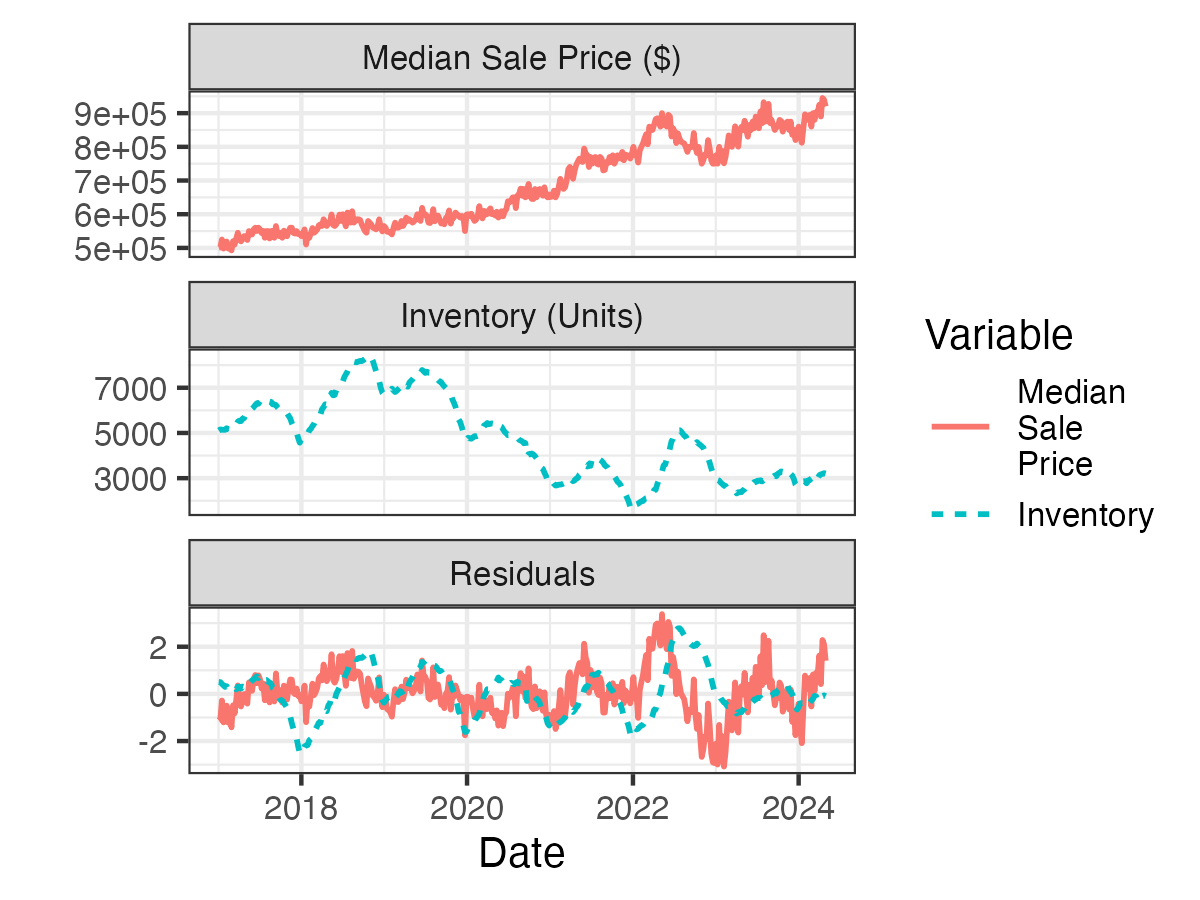}
    
    \includegraphics[width = .41\textwidth]{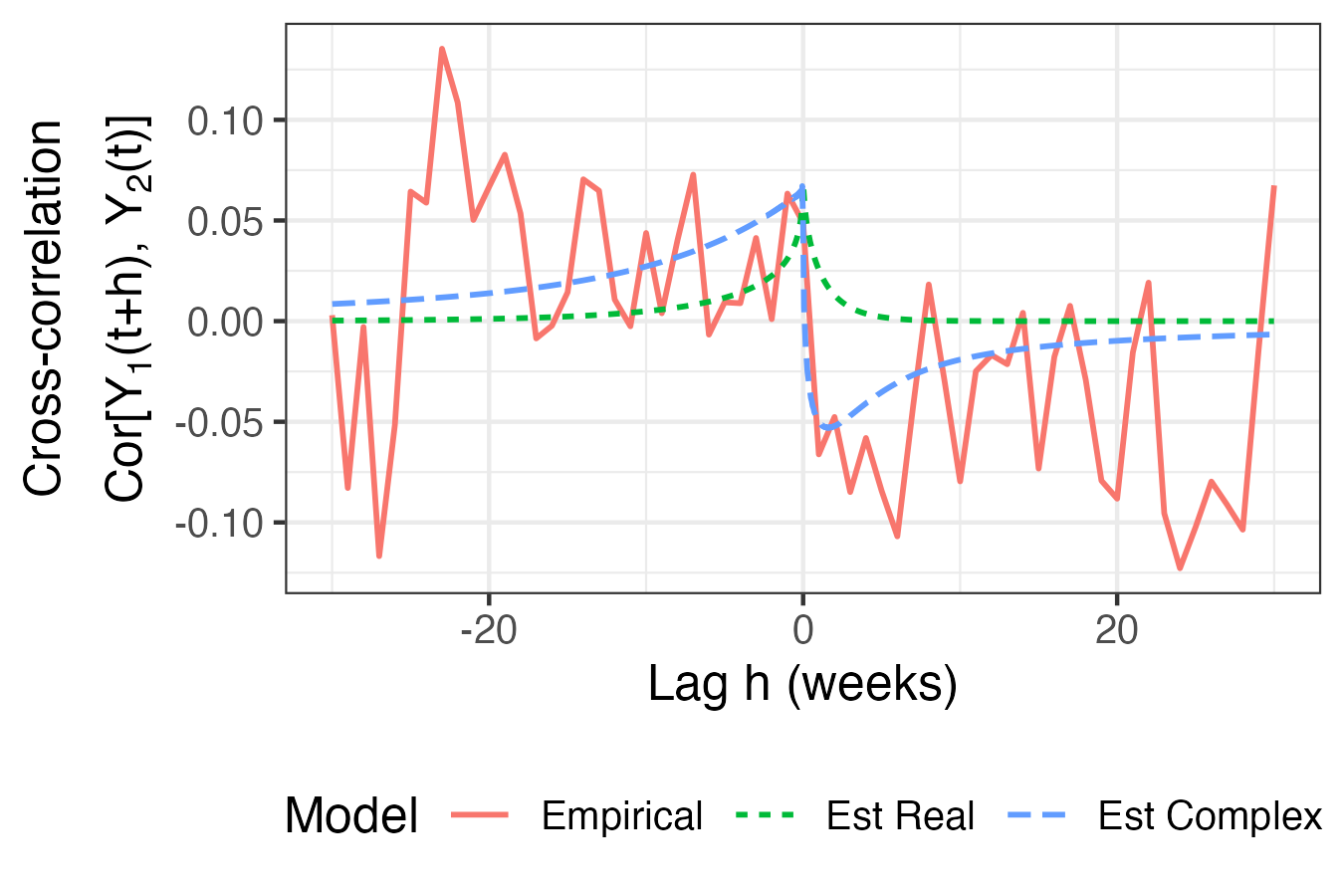}

    \caption{(Top) Original and normalized data. (Bottom) Cross-correlation functions. 
    ``Empirical'' refers to nonparametric estimators from the \texttt{stats::ccf} functions in \texttt{R}. ``Est Real'' and ``Est Complex'' refer to models in Section \ref{Introduce_model}, assuming $\Im(\sigma_{12}) = 0$ and both $\Re(\sigma_{12})$ and $\Im(\sigma_{12})$ free, respectively. }
    \label{fig:housing}
\end{figure}
\begin{table}[ht]
    \centering
    \begin{tabular}{|c|cc|c|cc|}
        & Real & Complex & & Real & Complex \\ \hline
      $\nu_1$  & 0.24  & 0.23 & $\frac{\Re(\sigma_{12})}{\sqrt{\sigma_{11}\sigma_{22}}}$ & 0.09 & 0.08\\
      $\nu_2$ & 0.10 & 0.10 & $\frac{\Im(\sigma_{12})}{\sqrt{\sigma_{11}\sigma_{22}}}$ & - & 0.23\\
      $a_1$ & 0.55 & 0.52 & $\gamma_1^2\big/\sigma_{11}$ & 0.0003 & 0.0003\\
      $a_2$ & 0.12 & 0.14 & $\gamma_2^2\big/\sigma_{22}$ & 0.0515 &  0.0532\\
      $\sigma_{11}$ & $9.2\cdot 10^{8}$& $9.4\cdot 10^{8}$\\
      $\sigma_{22}$ & $7.9\cdot 10^{5}$& $7.6\cdot 10^{5}$ \\
    \end{tabular}
    \caption{\add{Parameter estimates for the housing data (variable 1: median sale price; variable 2: inventory). A dash indicates that the given parameter does not exist under that model.}}
    \label{tab:housing}
\end{table}

\subsection{Argo temperature data}

\add{For the Argo dataset presented here and Pacific Northwest data presented in the Supplement}, we use $\mb{s} \in \mathbb{R}^2$ and have $p=2$.
To evaluate covariances and cross-covariances, we use a 2-dimensional inverse Fourier transform implemented through \cite{frigo_fast_1999} on a regular and fine grid.
Covariance values are then interpolated onto the actual distances between sites. 
For $\mb{\mu}_1$ and $\mb{\mu}_2$, we subtract the empirical mean assumed constant over space. 

\add{As the models are more complicated for $d=2$,} we define the models compared in more detail:
\begin{enumerate}

\item[(IM)] Independent Mat\'ern, where the processes are estimated independently, so that $\mb{\Sigma}_{12} = \mb{0}$, and $\mb{\Sigma}_{11}$ and $\mb{\Sigma}_{22}$ are defined by univariate Mat\'ern covariances. 

\item[(SCF)] A ``single covariance function'' model, with\begin{align*}
\mb{\Gamma} &= \begin{pmatrix} \sigma_{11} & \sigma_{12} \\ \sigma_{12} & \sigma_{22} \end{pmatrix} \otimes \mb{\Sigma}_{single} + \begin{pmatrix}
 \gamma^2_1\Imat_n & 
 \mb{0}_{n \times n}\\ 
 \mb{0}_{n \times n}&  \gamma^2_2\Imat_n
\end{pmatrix}.
\end{align*}Here, $\mb{\Sigma}_{single}$ is a correlation matrix defined with a Mat\'ern covariance at the observed points, and $\otimes$ is the standard Kronecker product, so that each covariance and cross-covariance has the same shape. 

\item[(MMG)] The multivariate Mat\'ern of \cite{gneiting_matern_2010} using the parameters $\nu_1$, $\nu_2$, $\nu_{12}$, $a_1$, $a_2$, $a_{12}$, $\sigma_{11}$, $\sigma_{12}$, $\sigma_{22}$, $\gamma^2_1$, and $\gamma^2_2$.

\item[(SMM-0)] A spectral multivariate Mat\'ern as presented in Section \ref{sec:spatial}, with real directional measure and $\ang(\mb{\theta}) = \textrm{sign}(\theta_1)$ fixed and not estimated. 
The parameters are $\nu_1$, $\nu_2$, $a_1$, $a_2$, $\sigma_{11}$, $\sigma_{12}$, $\sigma_{22}$, $\gamma^2_1$, and $\gamma^2_2$. 

\item[(SMM-R)] A spectral multivariate Mat\'ern as presented in Section \ref{sec:spatial}, with real directional measure and $\ang(\mb{\theta}) = \textrm{sign}\left(\mb{\theta}^\top \mb{\theta}^*\right)$ for a fixed $\mb{\theta}^*$ that we estimate. 
\item[(SMM-C)] A spectral multivariate Mat\'ern as presented in Section \ref{sec:spatial}, with complex directional measure $\sde_{12}(d\mb{\theta}) = \Re(\sigma_{12}) + \Im(\sigma_{12})\I\textrm{sign}\left(\mb{\theta}^\top \mb{\theta}^*_1\right)d\mb{\theta}$ and the function  $\ang(\mb{\theta}) = \textrm{sign}\left(\mb{\theta}^\top \mb{\theta}^*_2\right)$ for two different estimated parameters $\mb{\theta}^*_1$ and $\mb{\theta}^*_2$. 

\end{enumerate}

The Argo data which profiles temperature and salinity measurements in the upper $2{,}000$ meters of the ocean \citep{argo2020}. 
Here, we focus on temperature measurements at the depths of $300$ meters and $1{,}500$ meters located south of New Zealand in \add{February} 2015 and fit a bivariate model to them. 
\add{The locations of the 97 measurements for $1{,}500$ meters are a subset of the locations of the 110 measurements for $300$ meters.
We plot the data in Figure \ref{fig:argo_data}. }
This data was also used in \cite{bolin_multivariate_nodate}, and an extensive analysis of Argo data using univariate spatial models is presented in \cite{kuusela_locally_2018}.
We present parameter estimates and log-likelihoods in the Supplement. 
The estimated cross-covariances are plotted in Figure \ref{fig:argo_data_fun_plot}.
The multivariate Mat\'ern with complex directional measure shows a substantially different shape of the cross-covariance compared to the other models, and there is a substantially strong imaginary component to the cross-covariance. 
While the estimated real correlation parameter for the SMM-R model is $0.678$, the estimated real and imaginary correlation parameter for the SMM-C model are $0.638$ and $0.275$, respectively. 
In this case, the nested likelihood ratio test suggests that the imaginary correlation parameter does not significantly improve model fit (with reported values in the Supplement).

\begin{figure}[ht]
    \centering
    \includegraphics[width=.235\textwidth]{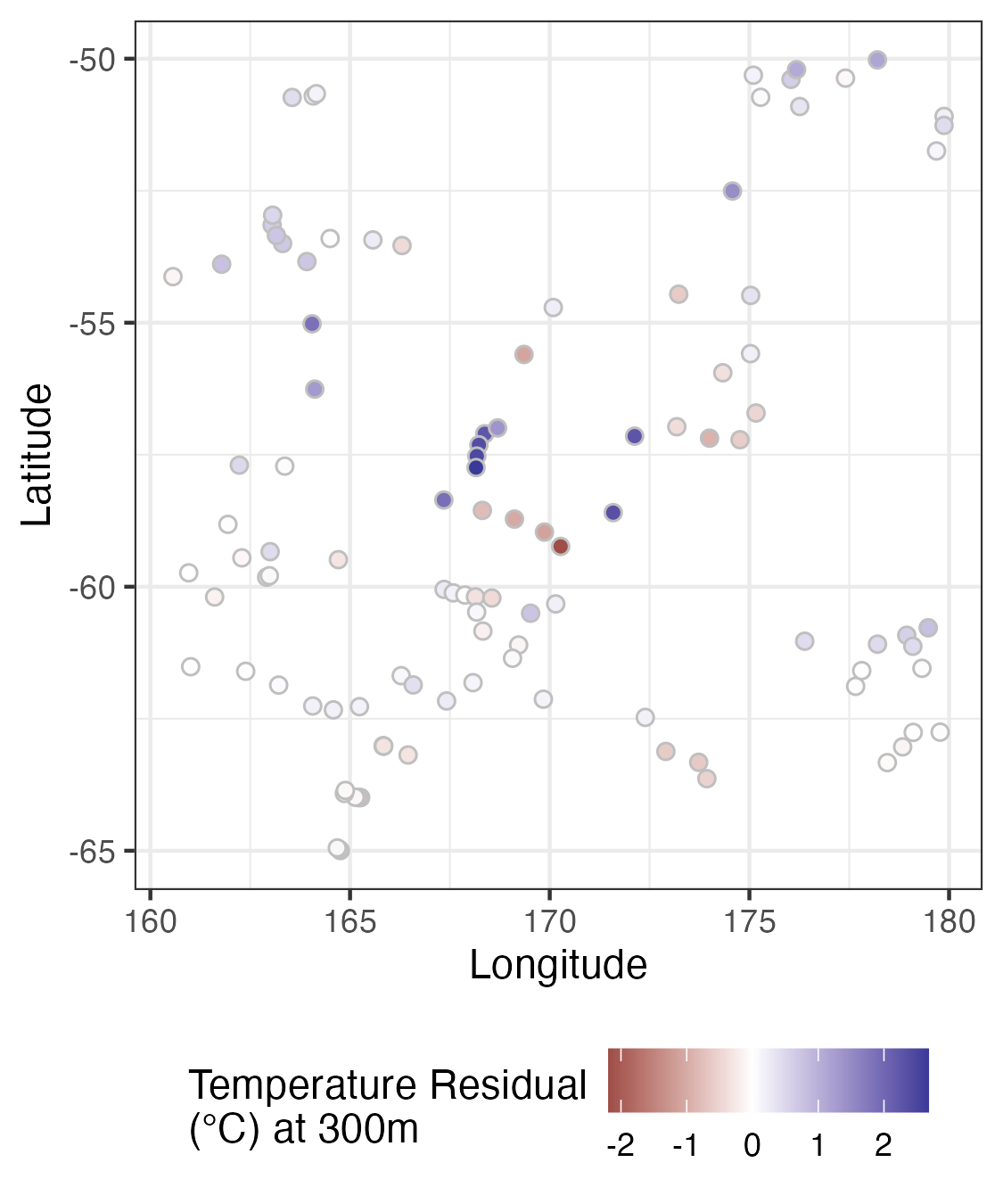}
    \includegraphics[width=.235\textwidth]{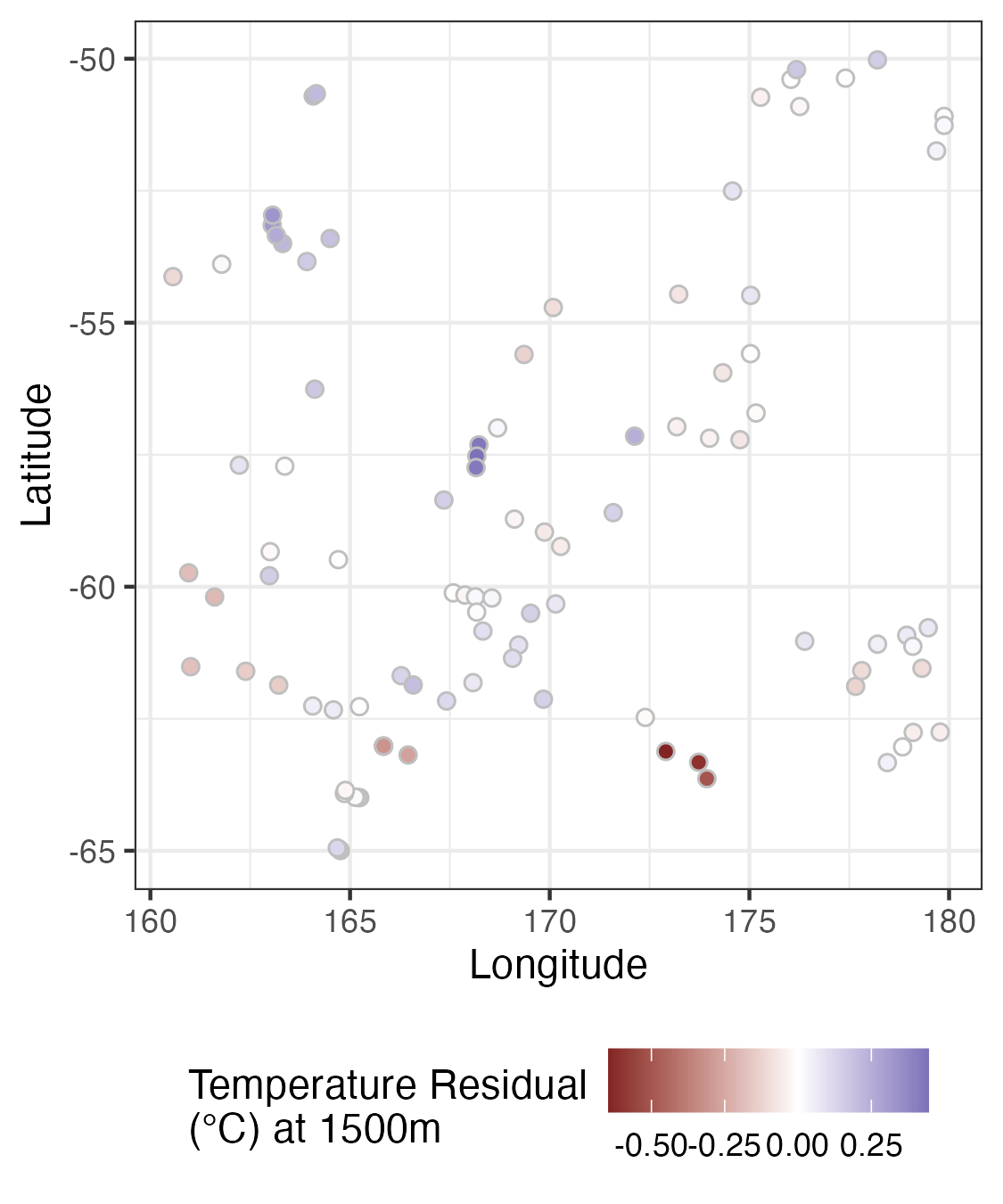}
    \caption{\add{Argo data for $300$ meters (Left) and $1{,}500$ meters (Right).}}
    \label{fig:argo_data}
\end{figure}

\begin{figure}[ht]
    \centering
    \includegraphics[scale=.41]{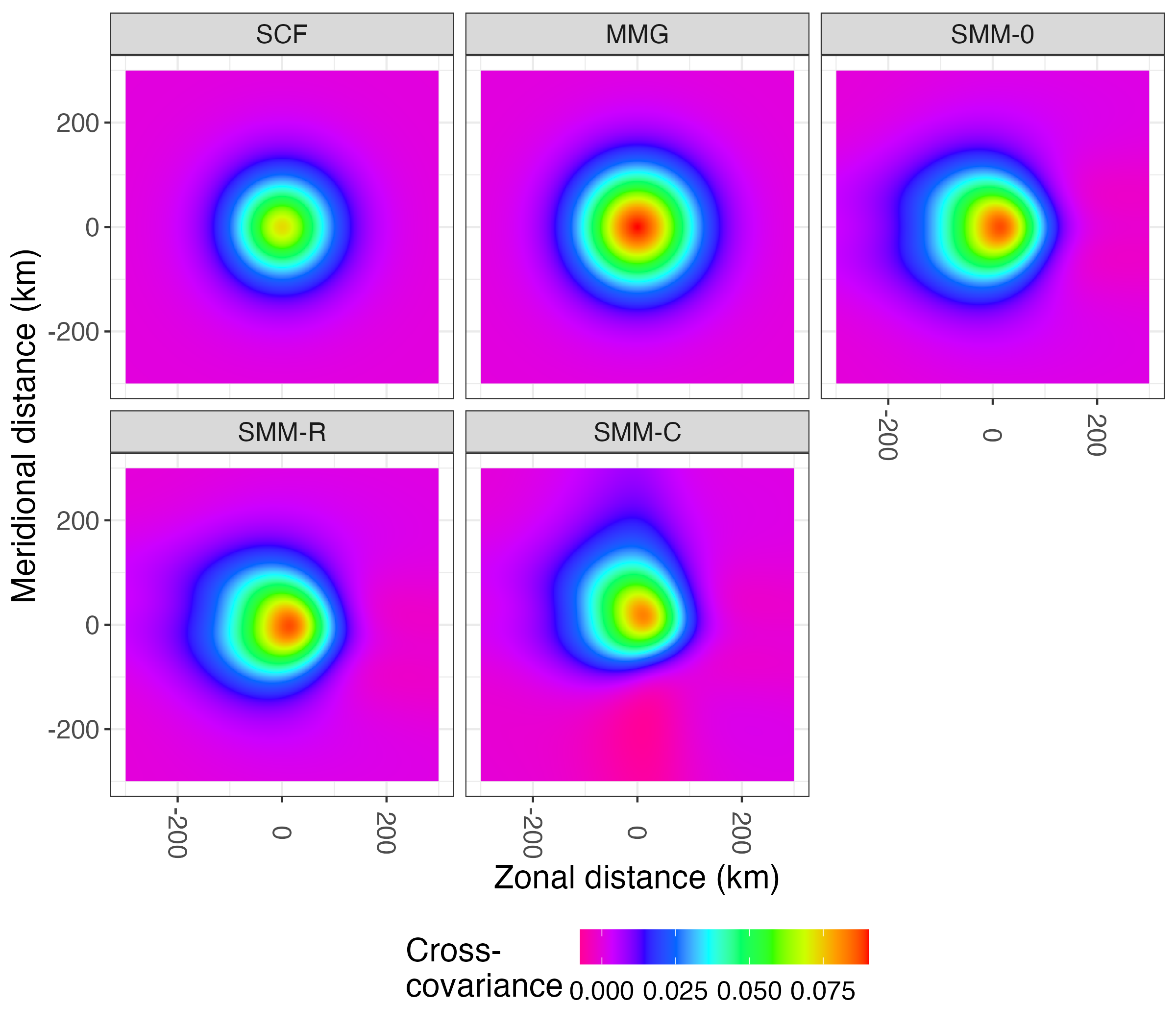}
    \caption{Estimated cross-covariance functions, Argo temperature data.}
    \label{fig:argo_data_fun_plot}
\end{figure}

\section{Discussion and extensions}\label{sec:mm_discussion}

In this work, we introduce a new class of multivariate Mat\'ern models motivated by the spectral representation of the Mat\'ern covariance. This class of models provides more flexible forms of the cross-covariance structure compared to the multivariate Mat\'ern of \cite{gneiting_matern_2010} and its more recent extensions. 
In particular, asymmetry in the cross-covariance can be modeled straightforwardly. 
Furthermore, compared to that of the multivariate Mat\'ern of \cite{gneiting_matern_2010}, there are fewer parameters for $d=1$ and validity of the cross-covariance is given without complicated restrictions on the parameters. 
We also provide clarity on how previous work on multivariate Mat\'ern models fit into the framework developed here. 
Under particular parameters for $d=1$, closed-form expressions for the cross-covariance are given.

In the spatial case of $d\ge 2$, flexible representations of the cross-covariance with respect to the domain are available. 
A potentially fruitful area of further research would be to investigate any closed-form spatial covariances. 
On the other hand, the flexibility of such models suggests that closed-form cross-covariances may be elusive, and we have explored possible approaches  \citep[for example, in][]{agrest_general_1971, babister_transcendental_1967} to no avail.
To mitigate this challenge, we demonstrate that spatial processes with these cross-covariances can be easily simulated using existing approaches for multivariate random fields, and the cross-covariance functions can be evaluated through their spectral density.

There are some research avenues that could strengthen our understanding of the models introduced here. 
First, it is unclear which parameters of the cross-covariance are identifiable under fixed-domain asymptotics. 
In the univariate setting, the parameters $a_j$ and $\sigma_{jj}$ are not individually identifiable \citep{zhang_2004}. 
Thus, it is unclear if the real and imaginary parts of $\sd(d\mb{\theta})$ would be identifiable. 
Also, the parameter $\ang(d\mb{\theta})$ is relatively mysterious, especially in the case $d\geq 2$; it may be possible that one can choose a separate $\ang(d\mb{\theta})$ for each cross-covariance when $p > 2$. 
Future work could also introduce computational methodology to use the model presented here for large-scale data analysis; such work for previous multivariate Mat\'ern models includes \cite{fahmy_vecchia_2022}. 

We conclude with potential extensions of this work that may be of interest to the spatial statistics community. 

\subsection{SPDE approach}\label{sec:spde}

The new multivariate Mat\'ern models should relate to the characterizations of the Mat\'ern model and fractional Brownian motion through stochastic partial differential equations (SPDEs) \citep{lindgren_explicit_2011, tafti_fractional_2010}. 
\cite{hu_multivariate_2013} adapt the SPDE approach to the multivariate Mat\'ern of \cite{gneiting_matern_2010} by considering the system \begin{align}
    \begin{pmatrix}
    \mathcal{L}(a_1, \nu_1) & \mathcal{L}(a_{12}, \nu_{12})\\
    0 & \mathcal{L}(a_2, \nu_2)\end{pmatrix}\begin{pmatrix} Y_1(s) \\ Y_2(s)\end{pmatrix}= \begin{pmatrix} \mathcal{W}_1 \\ \mathcal{W}_2
    \end{pmatrix}, \label{eq:SPDE}
\end{align}where $\mathcal{W}_j$ are independent (and, in most cases, Gaussian) white noise processes, $\mathcal{L}(a, \nu) = (a^2\mb{I} - \Delta)^{(\nu+ \frac{1}{2})/2}$, and $\Delta$ is the Laplacian operator. 
\cite{bolin_multivariate_nodate} consider an alternative approach that also uses the SPDE formulation as an inspiration. 
Our approach suggests a new strategy; for the example of $d=1$, one can consider differential operators generated by convolutions of $G^{a, -\mu}_+$ and $G^{a, -\mu}_-$, which are in turn characterized using fractional calculus through the Fourier transform $\mathcal{F}$ with $$\mathcal{F} \left[G_{\pm}^{a, \mu}f(y)\right] = (a \mp \I x)^{-\mu}\mathcal{F} \left[f(x)\right]$$ in a similar manner to Section 18.4 of \cite{samko_fractional_1993}.
Through their Fourier transforms, one sees that $G_-^{a, -\mu} \circ G_+^{a, -\mu} = (a^2 \Imat - \Delta)^{\mu}$ where $\circ$ is the convolution. 
Therefore, each operator associated with the Mat\'ern processes (for example, in Equation \ref{eq:SPDE}) can be formed from convolutions of $G_\pm^{a, \mu}$ operators. 
The operators $G_{\pm}^{a, -\mu}$ relate to the differential operators $(a\Imat \mp \nabla)^{\mu}$ where $\nabla$ is the derivative operator, as outlined in Section 18.4 of \cite{samko_fractional_1993}.
See Appendix B of \cite{lindgren_explicit_2011} and Sections 27 and 18.4 of \cite{samko_fractional_1993} for further details. 
However, more care must be taken to extend such an approach to $d=2, 3, \dots$ and a complex-valued directional measure.

\subsection{Functional Mat\'ern models}

Extending this model from multivariate to functional data would be of theoretical and practical interest for a variety of spatial functional data analysis applications \citep[see, for example,][]{martinez-hernandez_recent_2020}. 
\cite{shen_tangent_2022} develop a framework for covariance models in a general separable Hilbert space $\mathbb{V}$, which is a setting broader than the $\mathbb{R}^p$-valued processes here. 
In particular, they extend the results of \cite{cramer1942harmonic}, so that a process $X(\mb{s})$ taking values in $ \mathbb{V}$ can be written \begin{align*}
X(\mb{s})&= \int_{\mathbb{R}^d} e^{\I \langle \mb{s}, \mb{x}\rangle} \eta(d\mb{x}),
\end{align*}where $\eta(d\mb{x})$ is an appropriately-defined $\mathbb{V}$-valued random measure.
Similarly to Proposition 5.6 of \cite{shen_tangent_2022}, one might consider\begin{align*}
X(\mb{s})&= \int_{0}^\infty \int_{\us^{d-1}} e^{\I \langle \mb{s}, r\mb{\theta}\rangle} (a + \ang(\mb{\theta})\I r)^{-\nu - \frac{d}{2}}\mb{B}_{\sd}(dr, d\mb{\theta}),
\end{align*}where $\mathbb{E}\left[\mb{B}_{\sd}(dr, d\mb{\theta})\otimes \mb{B}_{\sd}(dr, d\mb{\theta})\right] = r^{d-1} dr \sd(d\mb{\theta})$, $\otimes$ is the outer product on $\mathbb{V}$, and $\sd(d\mb{\theta})$ is now a positive-definite and trace-class operator on $\mathbb{V}$.
Extensions to where $\nu$ and $a$ are also operators would introduce flexible covariances for the spatial functional-data setting.

\subsection{Alternate factorization of spectral density}\label{sec:alt_formal}

As suggested by \eqref{eq:eta_bolin}, one may also study the covariance given by \begin{align*}
C_{jk}(\hv)&=c_jc_k \int_0^\infty \int_{\us^{d-1}}e^{\I \langle \hv, r\mb{\theta}\rangle}\left(a_j^2 + r^2\right)^{-\frac{\nu_j}{2} - \frac{d}{4}}\\
&~~~~\times \sde_{jk}(d\mb{\theta})\left(a_k^2 + r^2\right)^{-\frac{\nu_k}{2} - \frac{d}{4}} dr.
\end{align*}This model was studied for real-valued and constant $\sd(d\mb{\theta})$ in \cite{bolin_multivariate_nodate}, and its general form in \cite{terres2018bayesian} and \cite{guinness2014multivariate}.
The form suggests that, when $\sde_{jk}(d\mb{\theta}) \propto d\mb{\theta}$, the cross-covariance is isotropic for any appropriate $a_j$, $a_k$, $\nu_j$, and $\nu_k$; alternately, when $\sd(d\mb{\theta}) \propto\I \textrm{sign}(\mb{\theta}^\top \mb{\theta}^*) d\mb{\theta}$ for $\mb{\theta}^*$ fixed, the cross-covariance would be reflected across the direction perpendicular to $\mb{\theta}^*$ for any values of the inverse range and smoothness parameters. 
This cross-covariance also allows one to avoid the introduction of $\ang(\mb{\theta})$ when $d \geq 2$. 
However, closed-form expressions of this covariance seem to be more elusive, except in some cases.
Most notably, the ``parsimonious multivariate Mat\'ern model'' introduced in \cite{gneiting_matern_2010}  is obtained if one takes $a_j = a_k$.
Another case is $d=1$, $\nu_j = \nu_k = 3/2$, and real directional measure: we see that \begin{align}\begin{split}
&C_{jk}(h)= 2c_jc_k\Re(\sigma_{jk}) \int_{0}^\infty \cos(hx) \\
&~~~~~\times\left(a_j^2 + x^2\right)^{-\frac{\nu_j}{2} - \frac{1}{4}}\left(a_k^2 + x^2\right)^{-\frac{\nu_k}{2} - \frac{1}{4}} dx,\label{eq:bolin_type_spectral}\end{split}
 \end{align}and by 3.728 (1) of \cite{GR_table_2015}, \begin{align*}
C_{jk}(h)&= c_jc_k\Re(\sigma_{jk}) \frac{\pi}{a_ja_k} \frac{a_je^{-a_k|h|} - a_k e^{-a_j|h|}}{a_j^2 - a_k^2}
\end{align*}for $a_j\neq a_k$ (the case $a_j = a_k$ is Mat\'ern). 
We plot examples of these cross-covariances in Figure \ref{fig:bolin_type}.

\begin{figure}
    \centering
    \includegraphics[width = .45 \textwidth]{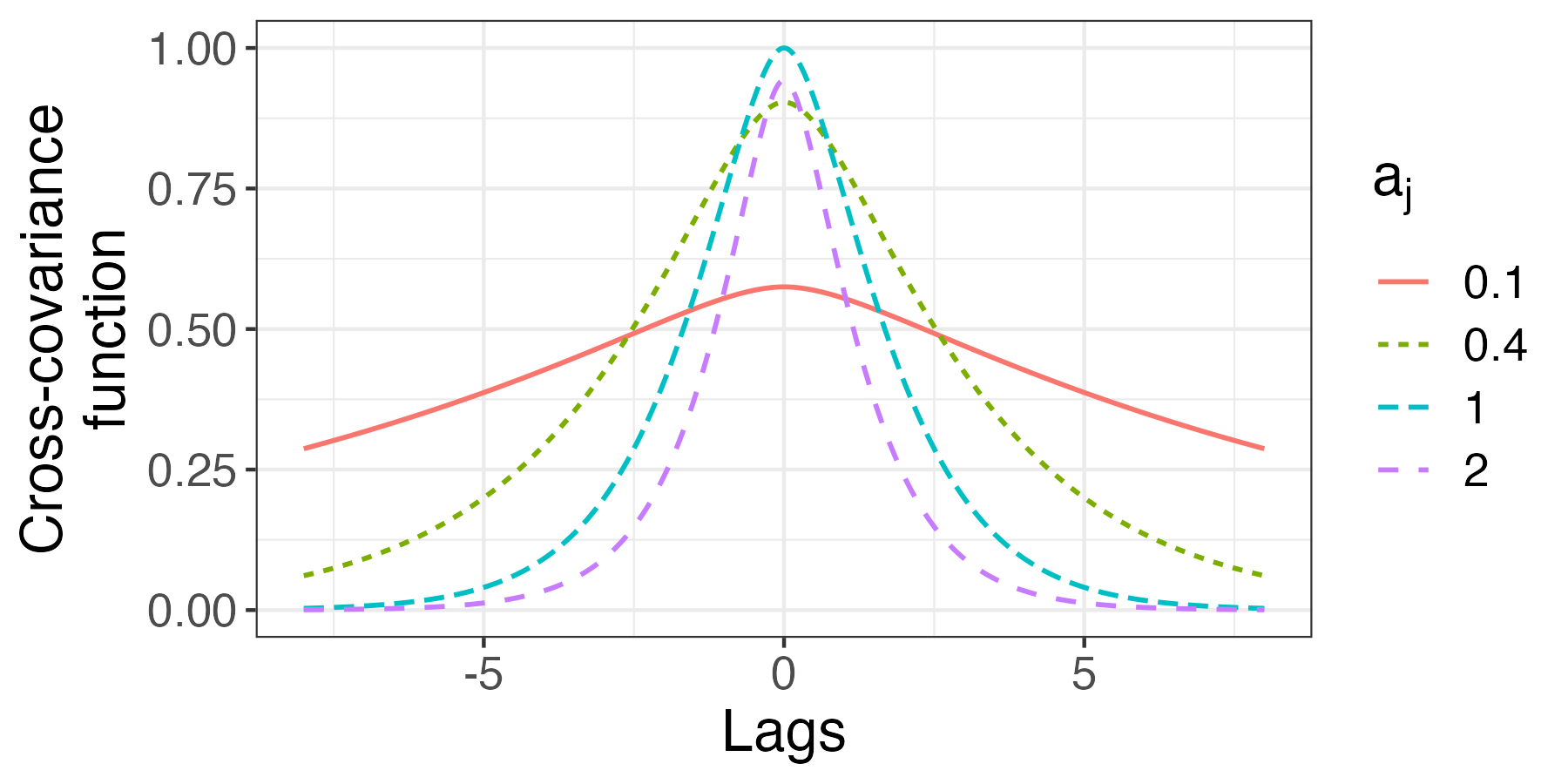}
    \caption{Multivariate Mat\'ern cross-covariances of type \eqref{eq:bolin_type_spectral} for $\Re(\sigma_{jk}) = 1$, $a_k = 1$, $\nu_j = \nu_k = 3/2$ and $a_j$ varying.}
    \label{fig:bolin_type}
\end{figure}

\subsection{Nonstationary models}

When considered across a substantial spatial domain, stationary models may be restrictive, and we include a few key references to extend this to a nonstationary model. 
For example, the multivariate case has been addressed in \cite{kleiber2012nonstationary} and \cite{kleiber_nonstationary_2015}, and \cite{emery_continuous_2018} propose an approach for simulation.
\cite{ton2018spatial} consider a Fourier feature approach for estimating nonstationary covariances using the following theorem from \cite{yaglom_correlation_1987}.
\begin{theorem}[\cite{yaglom_correlation_1987}]
    A nonstationary covariance function $C(\mb{s}_1, \mb{s}_2)$ is positive definite if and only if one can represent \begin{align*}
        C(\mb{s}_1, \mb{s}_2) &= \int_{\mathbb{R}^d \times \mathbb{R}^d} e^{\I \left(\mb{x}_1^\top \mb{s}_1 - \mb{x}_2^\top \mb{s}_2\right)}\mu(d\mb{x}_1, d\mb{x}_2),
    \end{align*}where $\mu(\cdot, \cdot)$ is the Lebesgue-Stieltjes measure associated with some positive semi-definite function $f(\mb{x}_1, \mb{x}_2)$ of bounded measure. 
\end{theorem}
This is one avenue for pursuing such models that is applicable to the spectral approach.
\cite{kleiber2012nonstationary} instead begin with the normal-scale mixture of the Mat\'ern covariance, reviewed in the Supplement. 


\subsection{Non-Gaussian models}\label{sec:nongaussian}

Throughout the paper, we have assumed that $\mb{Y}$ is Gaussian for simulations and estimation of parameters. 
Non-Gaussian random fields are of interest in a number of practical applications reflected in some development of their estimation and simulation \citep{bolin2014spatial, wallin2015geostatistical, bolin_multivariate_nodate, grigoriu1998simulation, popescu1998simulation}.
Two approaches for doing so are the SPDE approach \citep{bolin2014spatial, wallin2015geostatistical, bolin_multivariate_nodate} and a transformation approach \citep{grigoriu1998simulation, popescu1998simulation}.
To directly apply the SPDE approach, one would likely need to develop approaches suggested in Section \ref{sec:spde} before applying to asymmetric cross-covariance functions, while the transformation approach may be more directly applicable. 

An approach closely related to our formulation would be to replace the random Gaussian measure $\mb{B}(dx)$ (as in \eqref{eq:Y-d=1-def}) or $\mb{B}_{\mb{\mu}}(dr, d\mb{\theta})$ (as in \eqref{eq:matern_stochastic}) with respective non-Gaussian random measures. 
Simulation and model estimation for non-Gaussian processes $\mb{Y}$, however, would require substantial further development.




\subsection{Applications to other covariance functions} 

Factoring the spectral density and using a complex-valued variance parameterization, as done in this paper, should be considered more generally as a way to flexibly extend covariance functions to the multivariate case. 
For example, consider the squared-exponential covariance function for $d=1$ which has covariance function $C(h) = \textrm{exp}(- ah^2)$ and spectral density function $f(x) = (\pi a)^{-\frac{1}{2}}\textrm{exp}(-x^2/ (4a))/2$ \citep[see Section 2.7 of][]{stein_interpolation_2013}. 
Considering a cross-spectral density of \begin{align*}f_{jk}(x) &=\frac{1}{2\sqrt{\pi}a_j^{\frac{1}{4}}a_k^{\frac{1}{4}}} \textrm{exp}\left(\frac{-x^2}{8}\left[\frac{1}{a_j}+\frac{1}{a_k}\right]\right)\\&~~~~~~~\times (\Re(\sigma_{jk}) + \I \Im(\sigma_{jk})\textrm{sign}(x))\end{align*} leads to a cross-covariance function of \begin{align*}C_{jk}(h) &=\frac{(a_ja_k)^{\frac{1}{4}}}{\sqrt{a_+}}\bigg[\Re(\sigma_{jk})\textrm{exp}\left(-\frac{a_j a_k}{a_+}h^2\right) \\
&~~~~~~~~~~~~~~~~~~~~~- \Im(\sigma_{jk}) \frac{2}{\sqrt{\pi}} \tilde{F}\left(\sqrt{\frac{a_ja_k}{a_+}} h \right)\bigg],\end{align*}where $a_+ = (a_j + a_k)/2$ and $\tilde{F}(z)$ is Dawson's integral, an odd function of $z$. The cross-covariance reduces to the squared-exponential covariance function when $a_j = a_k$ and $\Im(\sigma_{jk}) = 0$.
One could conceivably extend this formulation to the more general powered-exponential class or other classes of models like the generalized Cauchy covariance \citep[cf.][]{moreva2023bivariate}.


\section{Acknowledgements}

We thank two anonymous reviewers for their helpful comments that improved the quality of this paper, in particular with 
respect to Sections \ref{sec:computation_simulation} and \ref{sec:data_analysis}.

The Argo data was collected and made freely available by the International Argo Program and the national programs that contribute to it.  (\url{https://argo.ucsd.edu},  \url{https://www.ocean-ops.org}). The Argo Program is part of the Global Ocean Observing System. 
\bibliographystyle{apalike}

\bibliography{main}
\appendix

\section{Hilbert transforms}\label{app:hilbert_trans}

\begin{proposition}
    The Hilbert transform of the function $-a|h|e^{-a|h|}$ is $$-a|h|R(h, a, a)+ \frac{2ahe^{a|h|}}{\pi}{\rm Ei}(-a|h|).$$
\end{proposition}
\begin{proof}

In particular, we look at the Hilbert transform, add and subtract $x$ in the numerator, and break out the integral into multiple terms: \begin{align*}
    \mathcal{H}(|h|e^{-|a|h}) &= \frac{1}{\pi}\int_{-\infty}^\infty \frac{|s|e^{-a|s|}}{x-s} ds \\
    &= \frac{1}{\pi}\int_{-\infty}^\infty \frac{(x - (x-|s|)) e^{-a|s|}}{x-s} ds\\
    &=\frac{1}{\pi}x\int_{-\infty}^\infty \frac{ e^{-a|s|}}{x-s} ds - \frac{1}{\pi}\int_{-\infty}^\infty \frac{(x-|s|) e^{-a|s|}}{x-s} ds.
    \end{align*}
    Notice that the first term is $ x \mathcal{H}(e^{-a|s|})$. Breaking down the remaining integral, we obtain
    \begin{align*}
    \mathcal{H}(|h|e^{-|a|h}) &= x \mathcal{H}(e^{-a|s|}) - \frac{1}{\pi}\int_{0}^\infty e^{-as} ds \\&~~~~~~- \frac{1}{\pi}\int_{-\infty}^0 \frac{(x+s) e^{-a|s|}}{x-s} ds \\
    &=x \mathcal{H}(e^{-a|x|}) - \frac{1}{a\pi} - \frac{1}{\pi}\int_{-\infty}^0 \frac{(x+s) e^{as}}{x-s} ds.
\end{align*}
Adding and subtracting $s$, we have \begin{align*}
     \int_{-\infty}^0 \frac{(x+s) e^{as}}{x-s} ds &=  \int_{-\infty}^0 \frac{(x-s+2s) e^{as}}{x-s} ds \\
     &=\int_{-\infty}^0  e^{as} ds + 2\int_{-\infty}^0 \frac{s e^{as}}{x-s} ds\\
     &=-\frac{1}{a} + 2\int_{-\infty}^0 \frac{s e^{as}}{x-s} ds.
\end{align*}
Finally, \begin{align*}
    \int_{-\infty}^0 \frac{s e^{as}}{x-s} ds = xe^{a|x|} \textrm{Ei}(-a|x|)
\end{align*}follows from a change in variables and the definition of $\textrm{Ei}(z)$.

Combining work, we see that \begin{align*}
     \mathcal{H}(a|h|e^{-|a|h})&= a|h| \mathcal{H}(e^{-a|h|}) + \frac{2xe^{a|h|}}{\pi}\textrm{Ei}(-a|h|).
\end{align*}

The Hilbert transform $\mathcal{H}(e^{-a|h|})$ was established as $-R(h, a, a)$ in the proof of Theorem \ref{thm:im_ent}.
\end{proof}

\section{Additional properties of multivariate Mat\'ern random fields}

\subsection{Reversed version}

We next examine the role of the choice of $\ang(\cdot)$ plays, connecting two different multivariate Mat\'ern models with different $\ang(\cdot)$ functions. 
For the $d=1$ case examined in Section \ref{Introduce_model}, this motivates own approach to look only examine models with $\ang(\cdot) = \textrm{sign}(\cdot)$, as each model with $\ang(\cdot) = -\textrm{sign}(\cdot)$ can be associated with a model that has $\ang(\cdot) = \textrm{sign}(\cdot)$.

\begin{proposition}\label{prop:reflected}
Let $\cf(\hv)$ be a multivariate Mat\'ern covariance function with parameters $\sd(d\mb{\theta})$, $\ang(\mb{\theta})$, $\Nu$, and $\mb{a}$, and let $\tilde{\cf}(\hv)$ be a multivariate Mat\'ern covariance function with respective parameters $\overline{\sd(d\mb{\theta})}$, $-\ang(\mb{\theta})$, $\Nu$, and $\mb{a}$. 
Then $\tilde{\cf}(\hv) = \cf(-\hv)$.
\end{proposition}

We briefly outline the proof. 
By the Hermitian property of $\sd(d\mb{\theta})$ and that $- \ang(\mb{\theta}) =\ang(-\mb{\theta})$, we have \begin{align*}
\tilde{C}_{jk}(\hv) &= c_jc_k \int_0^\infty \int_{\us^{d-1}} e^{\I \langle \hv, r\mb{\theta}\rangle} \left(a_j 
+\ang(-\mb{\theta})\I r\right)^{-\nu_j - \frac{d}{2}}\\
&~~~~~~~ \times \sde_{jk}(-d\mb{\theta})\left(a_k - \ang(-\mb{\theta}) \I r\right)^{-\nu_k - \frac{d}{2}}r^{d-1} dr \\
&= c_jc_k \int_0^\infty \int_{\us^{d-1}} e^{\I \langle -\hv, r\check{\mb{\theta}}\rangle} \left(a_j +\ang(\check{\mb{\theta}})\I r\right)^{-\nu_j - \frac{d}{2}}\\
&~~~~~~~ \times\sde_{jk}(d\check{\mb{\theta}})\left(a_k - \ang(\check{\mb{\theta}}) \I r\right)^{-\nu_k - \frac{d}{2}}r^{d-1} dxr\\
&=C_{jk}(-\hv)
\end{align*}by the change of variables $\check{\mb{\theta}} = -\mb{\theta}$.

We also confirm this result in the case of $d=1$ numerically in the code accompanying this paper.

\subsection{Coherence}
One way to evaluate the statistical properties of multivariate covariance functions is coherence, as investigated in \cite{kleiber_coherence_2018}. 
The measure of coherence evaluates the strength of the linear relationship between two spatial processes at a particular frequency. 
Based on \cite{kleiber_coherence_2018} and letting $f_{jk}(r,\mb{\theta})$ be the $j,k$ entry of the matrix of spectral densities in polar coordinate form, we write the coherence of the introduced model as \begin{align*}
\gamma_{\textrm{C}, jk}(r, \mb{\theta}) &=  \frac{f_{jk}(r, \mb{\theta})}{\sqrt{f_{jj}(r, \mb{\theta}) f_{kk}(r,  \mb{\theta})}}.
\end{align*}
As this is potentially a complex number, \cite{kleiber_coherence_2018} recommends evaluating the squared absolute coherence function $|\gamma_{\textrm{C},jk}(r,  \mb{\theta})|^2$.
For the new multivariate Mat\'ern model presented here, where $\Nu$ is diagonal, we have \begin{align*}
    |\gamma_{\textrm{C},jk}(r,  \mb{\theta})|^2 &= \frac{|\sde_{jk}(d \mb{\theta})|^2}{{\sde_{jj}(d \mb{\theta})\sde_{kk}(d \mb{\theta})}}.
\end{align*}
This does not depend on the frequency $r$ and only depends on the angular component of the spectral density. 
This leads to a relatively interpretable function; for example, if $\sde_{jj}(d \mb{\theta}) = \sigma_{jj} d \mb{\theta}$, $\sde_{kk}(d \mb{\theta}) = \sigma_{kk}d \mb{\theta}$, and $\sde_{jk}(d \mb{\theta}) = \left(\Re(\sigma_{jk}) + \Im(\sigma_{jk})\I \textrm{sign}(\theta_1)\right)d \mb{\theta}$ for $\sigma_{jk} \in \mathbb{C}$, we obtain \begin{align*}
    |\gamma_{\textrm{C},jk}(r,  \mb{\theta})|^2 &= \frac{|\sigma_{jk}|^2}{\sigma_{jj}\sigma_{kk}},
\end{align*}a constant function of $r$ and $ \mb{\theta}$.
While \cite{kleiber_coherence_2018} suggests that a constant coherence function reflects relative inflexibility, this is perhaps expected when there are fewer parameters to influence the behavior of the cross-covariance relationship. We also find that, even with a constant coherence function, such processes can have relatively flexible cross-covariance functions.

Our approach can also be extended to multivariate Mat\'ern models of non-constant coherence.
For example, if the multivariate Mat\'ern of \cite{gneiting_matern_2010} $\mb{C}(h) = [\sigma_{jk}^*\mathcal{M}(|h|; a_{jk}, \nu_{jk}, 1)]$ is valid in $d=1$, then the models with $\mb{\sigma} \in \mathbb{C}^{p\times p}$ in Section \ref{Introduce_model} that set cross-covariance parameters of $\nu_j = \nu_k = \nu_{jk}$ and $a_j =a_k = a_{jk}$ should also be valid if $\sigma_{jj}^* = \sigma_{jj}$ and $|\sigma_{jk}| < \sigma_{jk}^*$, allowing asymmetric covariances while retaining the flexibility of the parameters $\nu_{jk}$ and $a_{jk}$, resulting in non-constant coherences like the multivariate Mat\'ern of \cite{gneiting_matern_2010} as established in \cite{kleiber_coherence_2018}.



\end{document}


\maketitle
\tableofcontents

\section{Representation of spatial cross-covariances for imaginary directional measure}
\label{app:im}

We aim to compute \begin{align*}
    C_{jk}(\hv) &= c_jc_k \int_{0}^\infty \int_{\us^{d-1}} e^{\I \langle \hv, r\mb{\theta}\rangle} \left(a^2 + r^2\right)^{-\nu - \frac{d}{2}} \I \textrm{sign}(\theta_1) r^{d-1} d\mb{\theta} dr.
\end{align*}
Focusing on the inner integral in the special case where $\hv = b \mb{e}_1$, we have \begin{align*}
&\I\int_{\us^{d-1}} e^{\I\langle \hv, r\mb{\theta}\rangle} \textrm{sign}(\theta_1) d\mb{\theta}=\I \int_{\us^{d-1}, \theta_1 > 0}e^{\I b\theta_1r} d\mb{\theta} -  \I\int_{\us^{d-1}, \theta_1 < 0}e^{\I b\theta_1r} d\mb{\theta}. \end{align*}
Applying $e^{\I b\mb{\theta} r} = \cos(b\theta_1 r) + \I \sin(b\theta_1r)$ and using the even and odd properties of cosine and sine, we obtain \begin{align*}
    \I\int_{\us^{d-1}} e^{\I\langle \hv, r\mb{\theta}\rangle} \textrm{sign}(\theta_1) d\mb{\theta}&= -2\int_{\us^{d-1}, \theta_1 > 0} \sin(b\theta_1r) d\mb{\theta}
\end{align*}
If we approach this integral similarly to page 154 of \cite{stein_introduction_1975} or page 43 of \cite{stein_interpolation_2013} by making the change of variables to $\phi$, the angle that $\mb{\theta}$ makes with $\mb{e}_1$, we have \begin{align*}
\int_{\us^{d-1}, \theta_1 > 0}\sin(\theta_1 br)d\mb{\theta} &\propto \int_0^{\pi/2}\sin(\cos(\phi) br) \sin^{d-2}(\phi)d\phi\\&\propto(br)^{-(d-2)/2} \struveH_{(d-2)/2}(br)
\end{align*}using a standard integral representation of the Struve function $\struveH_\nu(z)$, defined as \begin{align*}
    H_{\nu}(z) =
    \sum_{m=0}^\infty \frac{(-1)^m (\frac{1}{2}z)^{2m + \nu + 1}}{\Gamma(m + \frac{3}{2})\Gamma(m + \nu + \frac{3}{2})}.
\end{align*}
See Chapter 11 of \cite{NIST:DLMF} or Chapter 12 of \cite{abramowitz_handbook_1972} for more information. 
Then, we can apply 6.814 of \cite{GR_table_2015} to the overall integral, resulting in a cross-covariance in this direction proportional to $$\textrm{sign}(b)\left|ba_j^{-1}\right|^{\nu_j} \left(\struveL_{-\nu_j}(a_j|b|) - \besselI_{\nu_j}(a_j|b|)\right).$$ 

\section{Additional simulation results}

We provide more simulation results for $d=1$, in particular looking at different models for the cross covariance:

\begin{enumerate}[label=\Alph*.]
    \item The cross-covariance \begin{align*}
        C_{12}(h) &= c_1c_2\int_\mathbb{R} e^{\I hx} (a_1 + \I x)^{-\nu_1 - \frac{1}{2}} \left[\Re(\sigma_{12}) + \I\textrm{sign}(x)\Im(\sigma_{12})\right] (a_2 - \I x)^{-\nu_2 - \frac{1}{2}} dx,
    \end{align*}as presented in Section 3 of the main text. The results are the presented in the main text. 

    \item The cross-covariance \begin{align*}
        C_{12}(h) &= c_1c_2\int_\mathbb{R} e^{\I hx} (a_1^2 +  x^2)^{-\frac{\nu_1}{2} - \frac{1}{4}} \left[\Re(\sigma_{12}) + \I\textrm{sign}(x)\Im(\sigma_{12})\right] (a_2^2 +  x^2)^{-\frac{\nu_2}{2} - \frac{1}{4}} dx,
    \end{align*}the alternate factorization considered in, for example, \cite{bolin_multivariate_nodate}.

    \item The cross-covariance \begin{align*}
        C_{12}(h) &= c_{12}\int_\mathbb{R} e^{\I hx} (a_{12}^2 +  x^2)^{-\nu_{12} - \frac{1}{2}} \left[\Re(\sigma_{12}) + \I\textrm{sign}(x)\Im(\sigma_{12})\right]dx,
    \end{align*}with additional parameters $a_{12}$ and $\nu_{12}$ and $c_{12} = a_{12}^{2\nu_{12}}\Gamma(\nu_{12} + 1/2)/[\pi^{1/2}\Gamma(\nu_{12})]$, which corresponds to the multivariate Mat\'ern of \cite{gneiting_matern_2010} when $\Im(\sigma_{12}) = 0$. 
\end{enumerate}

We consider both generating data from these models and estimating these models. 
Due to the additional parameters and validity constraints in Model C, we are unsure if we have appropriately implemented it; the main goal of the simulation study is instead to evaluate real versus complex $\sigma_{12}$.  
We present results for the likelihood ratio test in Table \ref{tab:lrt}. 
In general, we see that the test performs well when estimating model A and B, with Type I errors near the 0.050 levels; however, Type I error seems better controlled when estimating the correct cross-covariance model. 
When estimating with model C, Type I errors are larger. 
All test have substantial power above the $0.05$ level under the two alternative hypotheses considered here. 

\begin{table}[ht]
    \centering
    \begin{tabular}{|c|c|c|c|c|}\hline
         True model & Estimated model 
         &
         $\sigma_{12} = 0.4$ & $\sigma_{12} = 0.4\mathbbm{i}$ & $\sigma_{12} = 0.4 + 0.4 \mathbbm{i}$  \\ \hline
         A & A 
         & 0.03 & 0.66& 0.59\\ 
         A & B
         & 0.08 & 0.65 & 0.90\\ 
         A & C & 0.14 & 0.70 & 0.80\\ \hline
         B & A & 0.10 & 0.62 & 0.40\\ 
         B & B & 0.06 & 0.70 & 0.74\\ 
         B & C & 0.14 & 0.73 & 0.73\\ \hline
         C & A & 0.06 & 0.44 & 0.28\\ 
         C & B & 0.06 & 0.48 & 0.50\\ 
         C & C & 0.17 & 0.53 & 0.55\\ \hline
    \end{tabular}
    \caption{Proportion of simulations with significant likelihood comparison test for imaginary component with nominal level $0.050$. The column $\sigma_{12} = 0.4$ refers to the Type I error rate. }
    \label{tab:lrt}
\end{table}

In Table \ref{tab:par_est}, we compare estimation of $\sigma_{12}$ when the true model is A for all estimating models A, B, and C. 
Even if the cross-covariance model is misspecified, the general direction and strength of the components of $\sigma_{12}$ are estimated well using maximum likelihood. 

\begin{table}[ht]
    \centering
    \begin{tabular}{|c|c c|c c|c c|}\hline
          & $\Re(\sigma_{12})$ &$\Im(\sigma_{12})$  & $\Re(\sigma_{12})$ &$\Im(\sigma_{12})$ &  $\Re(\sigma_{12})$ &$\Im(\sigma_{12})$  \\ \hline
          True parameter & $0.40$ & $0.00$ & $0.00$ & $0.40$ & $0.40$ & $0.40$\\ \hline
          A (Complex) & $0.38$ & $- 0.02$ & $0.00$ &  $0.38$ & $0.37$  & $0.36$\\ 
          A (Real) & $0.40$ & - & $0.01$ & -  & $0.52$ & - \\ 
         B (Complex) & $0.38$ &  $0.07$  & $-0.11$ & $0.37 $ & $0.28$ & $0.45$\\ 
          B (Real) & $0.38$ & - & $-0.11$ & - & $0.28$ & - \\ 
          C (Complex)& $0.42$ &$0.10$ & $-0.09$ & $0.43 $ & $0.29$ &$0.48$\\ 
          C (Real)& $0.47$ & - & $-0.02$  & - & $0.44$ & -\\ \hline
    \end{tabular}
    \caption{Mean over 200 simulations of real and complex parts of the estimate of $\sigma_{12}$ when the true model is A. }
    \label{tab:par_est}
\end{table}

\section{Additional data analysis details}\label{app:data}

\subsection{Data analysis on Pacific Northwest pressure and temperature data}

The data consists of $n=157$ of bivariate measurements indexed in $\mathbb{R}^2$. 
After converting relevant distances to kilometers, the maximum distance between two locations is $1{,}561.6$ kilometers. 
Thus, for the discrete Fourier transform, we use $2^{10} = 1{,}024$ points for each dimension evenly-spaced between $-2{,}500$ and $2{,}500$ kilometers. 

We now present the results comparing the models as applied to the data. 
For equal comparison, we use the Fourier-transform-based optimization of the likelihood for each model. 
In Table \ref{tab:tab1}, we provide details of the estimated maximized log-likelihoods of the models. 
First, in comparing the log-likelihoods, the multivariate Mat\'ern of \cite{gneiting_matern_2010} and the models introduced here have a higher maximum likelihoods and mostly lower Akaike information criteria (AIC) compared to the independent Mat\'ern and single covariance function models. 
This suggests moderately better fits of SMM-0 and SMM-R in fitting the cross-covariance of pressure and temperature. 
In comparing the three SMM models, we see that increasing complexity of the model results in a larger maximized log-likelihood yet also higher AIC values. 
This suggests that increased complexity in the SMM-C model may not be suitable for this dataset.

We also present the optimized parameters in Tables \ref{tab:tab2} and \ref{tab:tab3} for temperature (process 1) and pressure (process 2). 
For the most part, the estimated parameters and log-likelihoods are also consistent with previous analyses of the data \citep{gneiting_matern_2010}; differences \citep[for example, an estimated $\nu_1 = 1.50$ is reported in][]{gneiting_matern_2010} are likely due to the Fourier transform estimation scheme. 
The estimated parameters of SMM models are mostly consistent with each other and the multivariate Mat\'ern of \cite{gneiting_matern_2010}. 
The SMM models introduced in this paper have a stronger correlation parameter compared to the other models, which may be expected since the cross-correlation functions often take maximum absolute value substantially less than $1$. 
The estimate of $\Im(\sigma_{12})$ suggests marginal evidence of covariance asymmetry through $\sigma_{12}$ for this data.

\begin{table}[ht]
    \centering
    \begin{tabular}{c|c|c|c|}
       Model & Log-likelihood & $\#$ Parameters &  AIC \\ \hline
       IM & $-1272.304$ & 8 &  $2560.608$ \\ 
       SCF  & $-1266.664$ & 7 & $2547.328$ \\ 
       MMG & $-1261.418$ & 11 &  $2544.836$\\ 
       SMM-0 & $-1261.336$ & 9 & $ 2540.672$ \\ 
       SMM-R & $-1260.610$ & 10 & $2541.220$ \\ 
       SMM-C & $-1260.537$ & 12 & $2545.074$ \\ 
    \end{tabular}
    \caption{Multivariate Mat\'ern maximum likelihoods, number of parameters, and Akaike information criterion (AIC) values.}
    \label{tab:tab1}
\end{table}

\begin{table}[ht]
    \centering
    \begin{tabular}{c|c cc cc c|}
       Model & $\nu_{1}$  & $\nu_{2}$ & $\nu_{12}$ & $a_1$ & $a_2$& $a_{12}$ \\ \hline 
       IM &  $7.00$ & $0.56$& - & $3.3 \cdot 10^{-2}$ &$1.0 \cdot 10^{-2}$  & -\\ 
       SCF  & $0.50$ & - & - & $6.1 \cdot 10^{-3}$ & - & -   \\ 
       MMG & $4.27$ & $0.59$ & $3.23$ & $2.4\cdot 10^{-2}$ & $1.1 \cdot 10^{-2}$ & $0.024$  \\ 
       SMM-0  & $1.13$ & $0.55$ &- &$8.7 \cdot 10^{-3}$ &$9.6 \cdot 10^{-3}$ &- \\ 
       SMM-R & $1.14$ & $0.56$ &- &$8.6 \cdot 10^{-3}$ &$9.9 \cdot 10^{-3}$ &- \\ 
       SMM-C  & $1.16$ & $0.56$ &- &$8.7 \cdot 10^{-3}$ &$9.9 \cdot 10^{-3}$ &- \\ 
    \end{tabular}
    \caption{Multivariate Mat\'ern estimated smoothness and range parameters for the Northwest temperature and pressure data. A dash indicates that the given parameter does not exist under that model.}
    \label{tab:tab2}
\end{table}

\begin{table}[ht]
    \centering
    \begin{tabular}{c|cccccc|}
       Model &  $\sqrt{\sigma_{11}}$ & $\sqrt{\sigma_{22}}$ & $\frac{\Re(\sigma_{12})}{\sqrt{\sigma_{11}\sigma_{22}}}$ &$\frac{\Im(\sigma_{12})}{\sqrt{\sigma_{11}\sigma_{22}}}$ & $\gamma_{1}$  & $\gamma_{2}$  \\ \hline
       IM &   $ 224.9$ & $2.63$ & - & - & $71.37$ & $1.27\cdot 10^{-2}$ \\ 
       SCF  & $198.6$ & $3.02$ & $-0.389$ & - & $48.03$ & $1.70\cdot 10^{-3}$ \\ 
       MMG & $ 225.7$ & $2.63$ & $-0.554$ & -  & $72.03$ & $2.24 \cdot 10^{-2}$   \\ 
       SMM-0 &  $226.4$ & $2.66$ & $-0.661$ & -  & $68.16$ &  $3.64 \cdot 10^{-3}$  \\ 
       SMM-R &  $226.2$ &$2.65$ & $-0.683$ & - & $69.19$ &  $3.64 \cdot 10^{-3}$   \\ 
       SMM-C &  $227.4$ &$2.65$ & $-0.685$ & $-0.039$ & $69.38$ &  $3.64 \cdot 10^{-3}$  \\ 
    \end{tabular}
    \caption{Estimated variance parameters for the Northwest temperature and pressure data. A dash indicates that the given parameter does not exist under that model.}
    \label{tab:tab3}
\end{table}

In Figure \ref{fig:data_fun_plot}, we plot the estimated cross-covariance functions for each of the models. 
We see that the single covariance function and the multivariate Mat\'ern of \cite{gneiting_matern_2010} are isotropic and covariance-symmetric, while each of the SMM have more flexible form. 
Furthermore, the estimated cross-covariance between the processes is substantially larger for lags of approximately 50-300 kilometers for the SMM models compared to the other models.

\begin{figure}[ht]
    \centering
    \includegraphics[scale=.41]{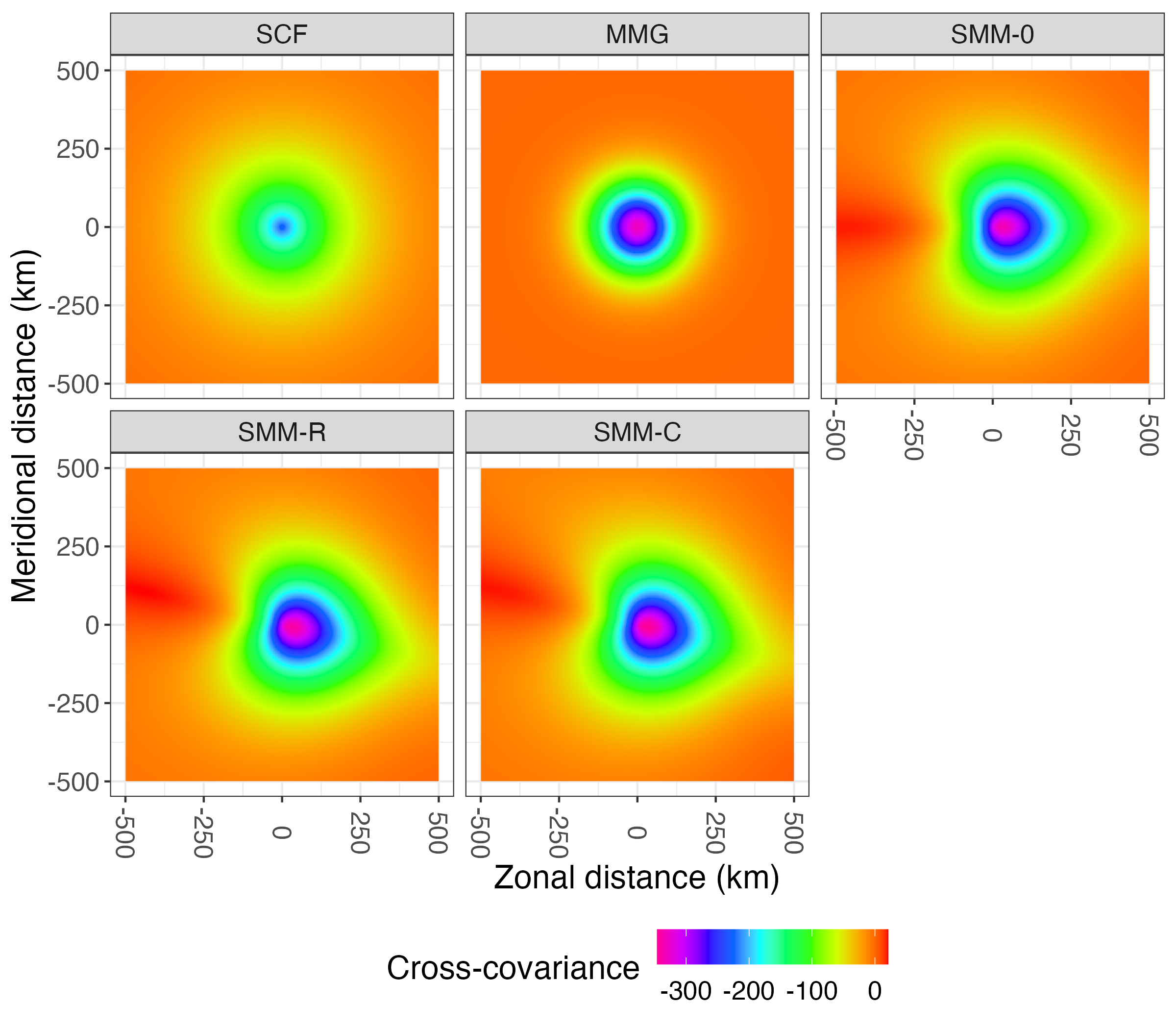}
    \caption{Estimated cross-covariance functions on the Pacific Northwest pressure and temperature data. Zonal distance refers to distance in the east-west (longitude) direction, and meridional distance refers to distance in the north-south (latitude) direction.}
    \label{fig:data_fun_plot}
\end{figure}

To compare predictive performance, we evaluate the estimated models in their prediction using cross-validation. 
Predictions are formed using standard expressions for the conditional expectation. 
We consider 5-fold and $n$-fold cross validation, where a proportion of $1/5$ or $1/n$ of data points are left out of the dataset for one variable, and a prediction is formed using the rest of the data of that variable as well as all data of the other variable. 
We repeat this for both variables comparing root-mean-squared-error averaged across the folds. 
For comparison, we also predict using only the data of the target variable, as well as only data using the other variables. 
Results are presented in Table \ref{tab:cokriging}.
Using both variables improves using only one of the variables at a time, and prediction errors are expectedly lower for $n$-fold compared to 5-fold  cross-validation.
In general, there are not large gaps in prediction performance between the multivariate Mat\'ern of \cite{gneiting_matern_2010} and the SMM models, and for the most part the SMM provide slightly improved point prediction performance. 
Overall, we find that the SMM models introduced in this paper can fit as well or better than the multivariate Mat\'ern of \cite{gneiting_matern_2010} on this standard dataset. 

\begin{table}[ht]
    \centering
    \begin{tabular}{c|c|c|c|c|c|}\hline
    Model &  5f-both & 5f-univariate & $n$f-both  & $n$f-univariate & other \\ \hline
    Prediction of zero & 194.2494 & 194.2494 & 194.2494&194.2494&194.2494\\ 
    SCF    & 125.2853 &  {\bf 132.5836} & 172.9075 &    123.4839 & 172.9075 \\ 
    MMG    & 126.6086 &  134.0422  & 117.2287& {\bf 119.0559} &176.6753 \\ 
    SMM-0    & 122.7652 & 132.8396   & 116.7819 & 121.1790 &  164.7258 \\ 
    SMM-R    & 123.1258 &  132.8966 & 116.5281 &  121.2294 &  {\bf 164.6401}\\ 
    SMM-C    &  {\bf 122.8856} &  132.9320 &  {\bf 116.4606} & 121.2207 &  165.7362 \\ \hline
    \end{tabular}

    \begin{tabular}{c|c|c|c|c|c|}\hline
    Model & 5f-both & 5f-univariate & $n$f-both  & $n$f-univariate & other \\ \hline
    Prediction of zero &2.709560 & 2.709560 &  2.709560 & 2.709560 & 2.709560 \\ 
    SCF    & 1.672005 &  1.693611 &  1.582059 & {\bf 1.620309} & 2.391110 \\ 
    MMG    & 1.631760 &  1.690111  & 1.545588 & 1.624545 & 2.390009 \\
    SMM-0    &  1.631041 &  {\bf 1.689291}  & 1.560799 &1.624241 &  2.228422 \\ 
    SMM-R    &{\bf  1.611251}  & 1.689463  & {\bf 1.540141} & 1.624176 & {\bf 2.225494}\\ 
    SMM-C    & 1.612860 & 1.689452  & 1.542120  & 1.624203 &  2.229628  \\ \hline
    \end{tabular}

    \caption{Cross-validation root-mean-squared error (RMSE) results compared for 5-fold (5f) and $n$-fold ($n$f) cross-validation. Columns with ``both'' use both variables to predict at the left-out points, ``univariate'' uses just the target variable, and ``other'' uses just the other variable (and thus does not depend on the number of folds). These values are compared with the null prediction of $0$ as a baseline. The lowest values in each column are bold. (Top) For pressure. (Bottom) For temperature.  }
    \label{tab:cokriging}
\end{table}

\subsection{Housing market data}

In Figure \ref{fig:housing_cor}, we plot the estimated correlation functions for the housing market data.

\begin{figure}[ht]
    \centering
    \includegraphics[width = .48\textwidth]{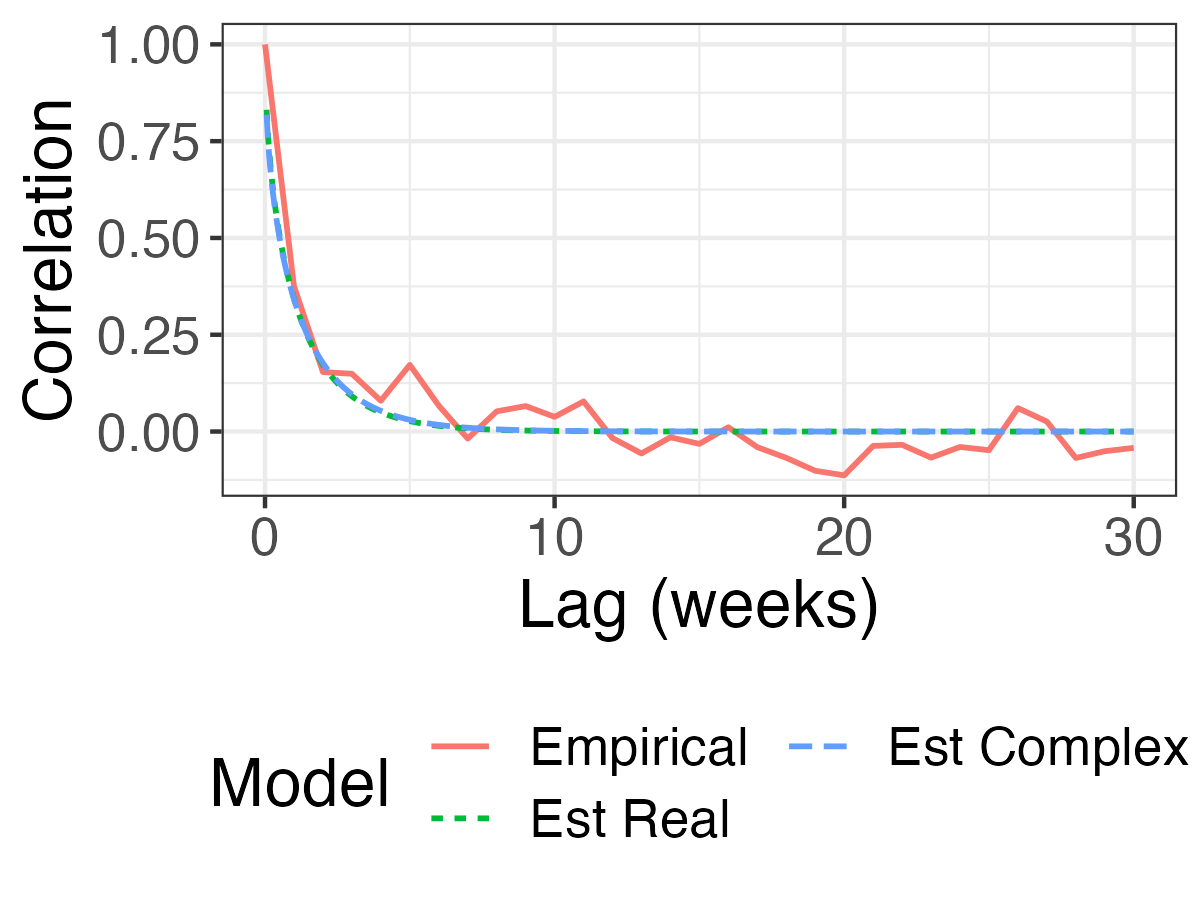}
    \includegraphics[width = .48\textwidth]{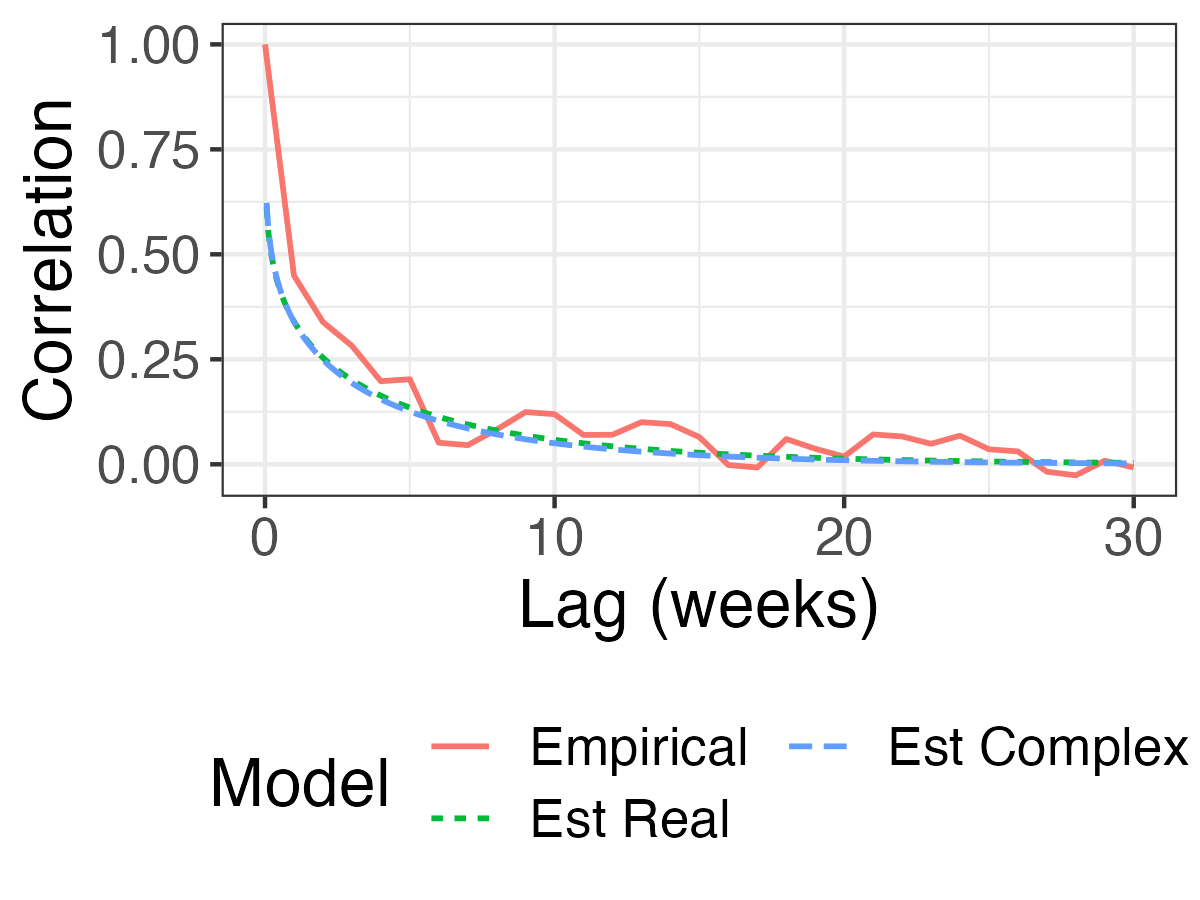}
    \caption{Correlation functions for median sale price (Left) and inventory (Right). 
    ``Empirical'' refers to nonparametric estimators from the \texttt{stats::acf} functions in \texttt{R}. ``Est Real'' and ``Est Complex'' refer to models in Section 3 of the main text, assuming $\Im(\sigma_{12}) = 0$ and both $\Re(\sigma_{12})$ and $\Im(\sigma_{12})$ free, respectively. Note that each correlation function takes values of $1$ at lag $h=0$. }
    \label{fig:housing_cor}
\end{figure}

\subsection{Argo data}

In Tables \ref{tab:argo1}, \ref{tab:argo2}, and \ref{tab:argo3}, we present the parameter and likelihood results for the Argo temperature data. In general, we see larger estimated values of $\nu_1$ and $\nu_2$. 
More notably, there is substantial evidence for an asymmetric cross-covariance between the measurements at different depths. 

\begin{table}[ht]
    \centering
    \begin{tabular}{c|c|c|c|}
       Model & Log-likelihood & $\#$ Parameters &  AIC \\ \hline 
       IM & $22.28663$ &8 &  $-28.57326$ \\ 
       SCF  & $34.53614$ & 7 & $ -55.07228$ \\ 
       MMG & $36.64354$ &11&  $-51.28708$\\ 
       SMM-0 & $38.34603$ & 9 & $ -58.69206$ \\ 
       SMM-R & $37.66356$ & 10 & $-55.32712$ \\ 
       SMM-C & $37.16868$ & 12 & $-50.33736$ \\ \hline
    \end{tabular}
    \caption{Multivariate Mat\'ern maximum likelihoods, number of parameters, and Akaike information criterion (AIC) values for the Argo temperature data.}
    \label{tab:argo1}
\end{table}

\begin{table}[ht]
    \centering
    \begin{tabular}{c|c|c|c|c|c|c|}
       Model & $\nu_{1}$  & $\nu_{2}$ & $\nu_{12}$ & $a_1$ & $a_2$& $a_{12}$ \\ \hline 
       IM &  $2.31$ & $2.09$& - & $2.9 \times 10^{-2}$ &$2.5 \times 10^{-2}$  & -\\ \hline
       SCF  & $2.22$ & - & - & $2.8 \times 10^{-2}$ & - & -   \\ \hline
       MMG & $ 2.10$ & $3.04$ & $3.91$ & $2.7\times 10^{-2}$ & $3.1 \times 10^{-2}$ & $0.034 $  \\ \hline
       SMM-0  & $4.01$ & $1.41$ &- &$4.7 \times 10^{-2}$ &$1.7 \times 10^{-2}$ &- \\ \hline
       SMM-R & $ 3.98$ & $1.42$ &- &$4.7 \times 10^{-2}$ &$1.7 \times 10^{-2}$ &- \\ \hline
       SMM-C  & $4.17$ & $1.57$ &- &$5.1 \times 10^{-2}$ &$2.0 \times 10^{-2}$ &- \\ \hline
    \end{tabular}
    \caption{Multivariate Mat\'ern estimated smoothness and range parameters for the Argo temperature data. A dash indicates that the given parameter does not exist under that model.}
    \label{tab:argo2}
\end{table}

\begin{table}[ht]
    \centering
    \begin{tabular}{c|c|c|c|c|c|c|c|}
       Model &  $\sqrt{\sigma_{11}}$ & $\sqrt{\sigma_{22}}$ & $\frac{\Re(\sigma_{12})}{\sqrt{\sigma_{11}\sigma_{22}}}$ &$\frac{\Im(\sigma_{12})}{\sqrt{\sigma_{11}\sigma_{22}}}$ & $\gamma_{1}$  & $\gamma_{2}\cdot 10^4$  \\  \hline
       IM &   $0.75$ & $0.029$ & - & - & $4.2 \times 10^{-7}$ & $1.2$ \\ \hline
       SCF  & $0.76$ & $0.026$ & $0.522$ & - & $1.0 \times 10^{-9}$ & $2.1$ \\ \hline
       MMG & $ 0.74$ & $0.029$ & $0.614$ & -  & $5.9\times 10^{-6}$ & $1.9 $   \\ \hline
       SMM-0 &  $0.69$ & $0.029$ & $0.679$ & -  & $1.7 \times 10^{-7}$ &  $1.1$  \\ \hline
       SMM-R &  $0.69$ &$0.029$ & $0.678$ & - & $1.7 \times 10^{-7}$ &  $1.1$   \\ \hline
       SMM-C &  $0.64$ &$0.028$ & $0.638$ & $0.275$ & $9.6 \times 10^{-5}$ &  $2.2 $  \\ \hline
    \end{tabular}
    \caption{Estimated variance parameters for the Argo temperature data. A dash indicates that the given parameter does not exist under that model.}
    \label{tab:argo3}
\end{table}

\section{Proofs}

\begin{proof}[Proof of Proposition 2.7] {\em (i):} Suppose $\mb{C}(\mb{s})=\overline{\mb{C}(\mb{s})},\ \mb{s}\in\R^d$. By using Bochner's Theorem, we have 
\begin{align*}
\mb{C}(\mb{s}) = \overline{\mb{C}(\mb{s})} &= \int_{\R^d} e^{-\ii \langle \mb{s},\mb{x}\rangle} \overline{\mb{F}(d\mb{x})}\\
&= 
\int_{\R^d} e^{\ii \langle \mb{s},\mb{x}\rangle} \overline{\mb{F}(-d\mb{x})},
\end{align*}
where the last relation follows from a change of variables.  This shows that $\mb{C}(\mb{s})$ has a representation as in Bochner's theorem with 
$\mb{F}(d\mb{x})$ replaced by $\overline{\mb{F}(-d\mb{x})}$. The uniqueness of the measure $\mb{F}$ in Bochner's theorem entails $\overline{\mb{F}(-d\mb{x})} = \mb{F}(d\mb{x})$.
Conversely, with the same argument $\overline{\mb{F}(-d\mb{x})} = \mb{F}(d\mb{x})$ implies that $\mb{C}(\mb{s})$ is real, completing the proof of {\em (i)}.
The proof of {\em (ii)} follows in the same way as that of {\em (i)} from the uniqueness of the measure $\mb{\xi}$ in the Cram\'er theorem.

{\em (iii):} The implication {\em(ii)}$\Rightarrow${\em (i)} is immediate.  To see that the converse 
is not true, note that the processes $\{z\mb{Y}(\mb{s}),\ \mb{s}\in\R^d\}$ have the same covariance structure for 
all $z\in \C,\ |z|=1$.

Finally, let $\mb{Y}(\mb{s}) = \mb{Y}_1(\mb{s}) + \ii \mb{Y}_2(\mb{s}),$ where $\mb{Y}_i(\mb{s})$ are real. If $\mb{Y}$ has a real auto-covariance $\mb{C}$, then,
\begin{align*}
\mb{C}(\mb{t}-\mb{s}) &= \E\left[ \mb{Y}_1(\mb{t}) \mb{Y}_1(\mb{s})^\top + \mb{Y}_2(\mb{t}) \mb{Y}_2(\mb{s})^\top\right]\\
&~~~~~~~~~+ \ii \E\left[\mb{Y}_2(\mb{t}) \mb{Y}_1(\mb{s})^\top - \mb{Y}_1(\mb{t}) \mb{Y}_2(\mb{s})^\top \right] \\
&=: \mb{C}_1(\mb{t}-\mb{s}) + \ii \mb{C}_2(\mb{t}-\mb{s}),
\end{align*}
for all $\mb{t},\mb{s}\in\R^d.$
Now, since $\mb{C}$ is real, we have $\mb{C}(\cdot)\equiv \mb{C}_1(\cdot)$ and $\mb{C}_2(\cdot)\equiv \mb{0}$. By
possibly considering another probability space, take independent, zero-mean processes $\{\widetilde {\mb{Y}}_i(\mb{s})\},\ i=1,2$ such that 
$\{\mb{Y}_i(\mb{s})\} \stackrel{d}{=}\{\widetilde{\mb{Y}}_i(\mb{s})\}$.  Clearly, since $\E[\widetilde{\mb{Y}}_1(\mb{t})\widetilde{\mb{Y}}_2(\mb{s})^\top] =\mb{0}$, we have that the real process
$\widetilde{\mb{Y}}(\mb{s}):= \widetilde{\mb{Y}}_1(\mb{s}) + \widetilde{\mb{Y}}_2(\mb{s})$ will have auto-covariance $\mb{C}$.
\end{proof}

\section{Additional properties of multivariate Mat\'ern random fields}

\subsection{Existence}

First, we establish that the process $\Y(\ti)$ for general $d$ is well-defined. 

\begin{lemma}\label{lemma:exists}
Assume that $a_k >0$ for $k=1, \dots, p$, $\Nu$ is a real, symmetric, and positive-definite matrix which may be non-diagonal, and that $\mb{c}$ is a real, symmetric, and positive-definite matrix with finite eigenvalues. 
The process $\Y(\ti)$ is well-defined in the sense that \begin{align*}
    \int_0^\infty \int_{\us^{d-1}} e^{\I\langle \ti, r\mb{\theta}\rangle}\mb{c}(\amat + \ang(\mb{\theta})\I r\Imat_p)^{-\Nu - (d/2)\Imat_p} \rmeasure(dr,d\mb{\theta})
\end{align*}exists for all $\ti \in \mathbb{R}^d$. 
Also, $\Y(\ti)$ is a stationary and Gaussian $p$-variate random field. 
\end{lemma}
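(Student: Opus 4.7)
The plan is to establish existence of the stochastic integral by verifying $L^2$-integrability of the matrix-valued integrand with respect to the control measure $\mu$ of the Gaussian random measure $\rmeasure$; stationarity and Gaussianity of $\Y$ will then follow from standard properties of Wiener-type integrals against Gaussian random measures. Specifically, since $\rmeasure$ is Gaussian and the integrand is deterministic, each entry of the vector-valued integral exists in $L^2(\Omega)$ provided the corresponding entry of $\mb{c}(\amat + \ang(\mb{\theta})\I r \Imat_p)^{-\Nu - (d/2)\Imat_p}$ lies in $L^2(\mu)$; and since $|e^{\I \langle \ti, r \mb{\theta} \rangle}| = 1$, existence for one $\ti$ is equivalent to existence for all $\ti$, reducing the question to
\[
\int_0^\infty \int_{\us^{d-1}} \bigl\| \mb{c}(\amat + \ang(\mb{\theta})\I r \Imat_p)^{-\Nu - (d/2)\Imat_p} \bigr\|^2 \, \mu(dr, d\mb{\theta}) < \infty,
\]
with $\|\cdot\|$ taken to be the Frobenius norm.

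Next I would bound the matrix power via the functional-calculus definition $(\amat + \ang \I r \Imat_p)^{-\Nu - (d/2)\Imat_p} = \exp\bigl(-(\Nu + (d/2)\Imat_p)\, \mathrm{Log}(\amat + \ang \I r \Imat_p)\bigr)$, where $\mathrm{Log}$ is applied entrywise on the diagonal matrix $\amat + \ang \I r \Imat_p$ via the principal branch. Using the spectral decomposition $\Nu = Q \Lambda Q^\top$ with smallest eigenvalue $\lambda_{\min} > 0$, the entrywise bound $|\mathrm{Log}(a_k + \ang \I r)| = O(\log(1+r))$, and the comparison $|a_k + \ang \I r| \asymp (1+r)$ when $\ang(\mb{\theta}) \neq 0$, I would derive
\[
\bigl\| \mb{c}(\amat + \ang \I r \Imat_p)^{-\Nu - (d/2)\Imat_p} \bigr\| \leq C(1+r)^{-\lambda_{\min} - d/2}
\]
uniformly in $\mb{\theta}$. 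Squaring and integrating against the polar-coordinate density $r^{d-1} dr \, d\mb{\theta}$ then yields convergence, since the resulting radial exponent satisfies $-2\lambda_{\min} - d + (d-1) < -1$ precisely when $\Nu$ is positive definite; positive-definiteness of $\Nu$ is therefore exactly what is needed.

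Finally, stationarity holds because $\ti$ enters the integrand only through the modulus-one exponential, so the cross-covariance of $\Y(\ti)$ and $\Y(\mb{t})$ depends only on $\mb{t} - \ti$; marginal and joint Gaussianity for finite collections of points follow because deterministic stochastic integrals against a Gaussian random measure are Gaussian, and arbitrary linear combinations of entries of $\Y$ at finitely many sites reduce to a single scalar integral of the same form. The main obstacle is the matrix-power bound in the second step: because the diagonal matrix $\amat$ need not commute with the non-diagonal $\Nu$, simultaneous diagonalization is unavailable and one must control $\|\exp(M)\|$ for a non-normal matrix $M$ whose spectrum depends on $r$. This is most cleanly handled by splitting the integration region into $r \leq r_0$ (where the integrand is bounded by compactness and continuity) and $r > r_0$ (where the large-$r$ asymptotic bound above applies), rather than attempting a single uniform estimate.
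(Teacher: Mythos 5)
Your overall architecture is the same as the paper's: reduce existence to square-integrability of the deterministic matrix-valued integrand against the control measure, note that $|e^{\I\langle\ti,r\mb{\theta}\rangle}|=1$, establish a uniform-in-$\mb{\theta}$ bound of the form $C\,r^{-\lambda_{\min}-d/2}$ on the matrix power, split the radial integral at some $r_0$, and then get stationarity from the covariance formula and Gaussianity from the Gaussianity of $\rmeasure$. Those parts are fine and match the paper.

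The gap is in the one step that carries all the weight: the claim that the spectral decomposition of $\Nu$ together with the entrywise bound $|\mathrm{Log}(a_k+\ang\I r)|=O(\log(1+r))$ ``derives'' $\bigl\|(\amat+\ang\I r\Imat_p)^{-\Nu-(d/2)\Imat_p}\bigr\|\le C(1+r)^{-\lambda_{\min}-d/2}$. It does not: from $\|M\|=O(\lambda_{\max}\log(1+r))$ alone the generic estimate is $\|e^{M}\|\le e^{\|M\|}\le (1+r)^{C\lambda_{\max}}$, i.e.\ polynomial \emph{growth}, and when $\amat$ is not a scalar multiple of $\Imat_p$ the matrix $M=-(\Nu+(d/2)\Imat_p)\,\mathrm{Log}(\amat+\ang\I r\Imat_p)$ is non-normal, so you cannot read off $\|e^M\|$ from the eigenvalues of $\Nu$. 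You correctly identify non-commutation of $\amat$ and $\Nu$ as the main obstacle, but your proposed remedy (splitting the domain at $r_0$) does not touch it --- the problematic regime is exactly $r>r_0$. Two honest ways to close this: (i) do what the paper does, namely invoke the standing assumption that $\amat$ and $\Nu$ commute (stated explicitly in the proof of the tangent-process theorem), pass to $\|A\|_{\textrm{op}}^2=\|AA^*\|_{\textrm{op}}$ to replace the complex power by the real positive matrix $(\amat^2+r^2\Imat_p)^{-\Nu-(d/2)\Imat_p}$, and compare it with $(r^2\Imat_p)^{-\Nu-(d/2)\Imat_p}$ after simultaneous diagonalization; or (ii) if you insist on not assuming commutation, write $\mathrm{Log}(\amat+\ang\I r\Imat_p)=(\log r+\I\ang\pi/2)\Imat_p+E(r)$ with $\|E(r)\|=O(1/r)$, observe that $e^{-(\log r+\I\ang\pi/2)(\Nu+(d/2)\Imat_p)}$ has operator norm exactly $r^{-\lambda_{\min}-d/2}$ because $\Nu$ is real symmetric, and absorb the $O(1/r)$ perturbation via a Duhamel--Gronwall estimate. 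As written, your proof asserts the key inequality rather than proving it.
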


\begin{proof}[Proof of Proposition \ref{lemma:exists}]
We approach the proof in a similar way to the proof of Proposition 5.16 in \cite{shen_tangent_2022}. 
To prove that the integral is valid, we can examine \begin{align*}
    &\int_0^\infty \int_{\us^{d-1}}\left\lVert e^{\I\langle \ti, r\mb{\theta}\rangle}\mb{c}(\amat +  \ang(\mb{\theta})\I r\Imat_p)^{-\Nu - (d/2)\Imat_p}\right\rVert_{\textrm{op}}^2 r^{d-1}dr \left\lVert\sd\right\rVert_{F}(d\mb{\theta}) \\  
    &=  \int_0^\infty \int_{\us^{d-1}} \left|e^{\I\langle \ti, r\mb{\theta}\rangle} \right|^2 \left\lVert \mb{c}(\amat + \ang(\mb{\theta}) \I r \Imat_p)^{-\Nu - (d/2) \Imat_p}\right\rVert^2_{\textrm{op}}  r^{d-1} dr \left\lVert\sd\right\rVert_{F}(d\mb{\theta}),\end{align*}where the norms $\lVert \cdot \rVert_{\textrm{op}}$ and $\lVert \cdot \rVert_{F}$ on $p\times p$ matrices are the operator norm induced by the Euclidean norm and the Frobenius norm, respectively.
    As $\left|e^{\I\langle \ti, r\mb{\theta}\rangle} \right|^2 = 1$, we turn our attention to \begin{align*}
        \left\lVert \mb{c} (\amat + \I r \Imat_p)^{-\Nu - \frac{d}{2} \Imat_p}\right\rVert^2_{\textrm{op}} = \left\lVert \mb{c}\left(\amat^2 + r^2 \Imat_p\right)^{-\Nu - \frac{d}{2}  \Imat_p}\right\rVert_{\textrm{op}}.
    \end{align*}
    As the exponent has only negative eigenvalues, \begin{align*}
        \left\lVert \mb{c} \left(\amat^2 + r^2 \Imat_p\right)^{-\Nu - \frac{d}{2}  \Imat_p}\right\rVert_{\textrm{op}} &\leq \left\lVert \mb{c}\left(r^2\Imat_p\right)^{-\Nu -\frac{d}{2} \Imat_p}\right\rVert_{\textrm{op}} \\
        &\leq \left\lVert \mb{c}\right\rVert_{\textrm{op}}\left\lVert r^{-2\Nu - d \Imat_p}\right\rVert_{\textrm{op}}.
    \end{align*}
    From here, we see that \begin{align*}
        \left\lVert r^{-2\Nu - d \Imat_p}\right\rVert_{\textrm{op}} &= r^{-2 \min\{\textrm{eig}(\Nu)\} - d}.
    \end{align*}where $\textrm{eig}(\Nu)$ denotes the eigenvalues of $\Nu$. 
    Therefore, we have the bound\begin{align*}
        &\int_0^\infty \int_{\us^{d-1}}\left\lVert e^{\I\langle \ti, r\mb{\theta}\rangle}\mb{c}(\amat + \ang(\mb{\theta}) \I r\Imat_p)^{-\Nu - (d/2)\Imat_p} \right\rVert_{\textrm{op}}^2r^{d-1} \left\lVert\sd\right\rVert_{F}(d\mb{\theta})dr\\
        &~~~~~\leq \left\lVert \mb{c}\right\rVert_{\textrm{op}}\int_0^\infty \int_{\us^{d-1}} r^{-2\min\{\textrm{eig}(\Nu)\} - 1} \left\lVert\sd\right\rVert_{F}(d\mb{\theta})dr,
    \end{align*}which, with respect to $r$, is integrable for large $r$ when $-2\min\{\textrm{eig}(\Nu)\}-1 < -1$, or equivalently, $\min\{\textrm{eig}(\Nu)\} > 0$. 
    If the matrix $\Nu$ is diagonal, the restriction $\min\{\textrm{eig}(\Nu)\} > 0$ is equivalent to having all diagonal entries satisfying $\nu_k > 0$, as expected.
For small $r$, we can make the bound \begin{align*}
     \left\lVert \mb{c} \left(\amat^2 + r^2 \Imat_p\right)^{-\Nu - (d/2) \Imat_p}\right\rVert_{\textrm{op}} &\leq \left\lVert \mb{c}\amat^{-2\Nu - d \Imat_p}\right\rVert_{\textrm{op}},
\end{align*}so that \begin{align*}
     &\int_0^\infty \int_{\us^{d-1}}\left\lVert e^{\I\langle \ti, r\mb{\theta}\rangle}\mb{c}(\amat + \ang(\mb{\theta}) \I r\Imat_p)^{-\Nu - (d/2)\Imat_p}\right\rVert_{\textrm{op}}^2 r^{d-1} \left\lVert\sd\right\rVert_{F}(d\mb{\theta})dr\\
        &~~~~~\leq \left\lVert \mb{c}\right\rVert_{\textrm{op}} \left\lVert \amat^{-2\Nu - d \Imat_p}\right\rVert_{\textrm{op}}\int_0^\infty \int_{\us^{d-1}} r^{d-1} \left\lVert\sd\right\rVert_{F}(d\mb{\theta})dr.
\end{align*}The integral $\int_0^\infty r^{d-1} dx$ is integrable near $0$, and we obtain $\left\lVert \amat^{-2\Nu - d \Imat_p}\right\rVert_{\textrm{op}} < \infty$ if the parameters satisfy $\min \{a_k\} > 0$ and $\min\{\textrm{eig}(\Nu)\} > 0$. 
Thus, the integral across all $r \in (0,\infty)$ is finite. 
Finally, by using the linearity property of the trace, it remains to establish that $\left\lVert\sd(\us^{d-1})\right\rVert_{\textrm{F}} < \infty$, which follows from the assumed properties of $\sd(d\mb{\theta})$. 
Therefore, the stochastic integral exists under the stated conditions.

Stationarity follows from \begin{align*}
    &\Cov(\Y(\ti), \Y(\ti + \hv)) 
     \\
    &~~=\int_0^\infty \int_{\us^{d-1}} e^{-\I\langle \hv, r\mb{\theta}\rangle}\mb{c}(\amat +  \ang(\mb{\theta})\I r\Imat_p)^{-\Nu - (d/2)\Imat_p}\sd(d\mb{\theta})(\amat -  \ang(\mb{\theta})\I r\Imat_p)^{-\Nu -(d/2)\Imat_p}r^{d-1} dr 
\end{align*}which only depends on $\hv$ and not $\ti$. 
Gaussianity follows from the assumption that $\rmeasure(dr, d\mb{\theta})$ is Gaussian. 
\end{proof}

In addition to this result, the covariances and cross-covariances are real when the spectral density is Hermitian. 
Based on the construction and the fact that $\Nu$ is a symmetric, positive-definite, real-valued matrix, we see that \begin{align*}
     \left((\amat +  \ang(\mb{\theta})\I r\Imat_p)^{-\Nu - (d/2)\Imat_p}\right)^*&= (\amat - \ang(\mb{\theta})\I r\Imat_p)^{-\mb{\nu}^* - (d/2)\Imat_p} 
     =(\amat + \ang(-\mb{\theta})\I r\Imat_p)^{-\mb{\nu} - (d/2)\Imat_p}
\end{align*}so that the matrix $(\amat +  \ang(\mb{\theta})\I r\Imat_p)^{-\Nu - (d/2)\Imat_p}$ is self-adjoint. 
The Hermitian property of the entire spectral density then follows directly from the properties of $\rmeasure(dr, d\mb{\theta})$.
Therefore, for all $(r, \mb{\theta})$, \begin{align*}
   &\left\{ \left[(\amat + \ang(\mb{\theta})\I r\Imat_p)^{-\Nu - (d/2)\Imat_p} \rmeasure(dr,d\mb{\theta})\right]^*\right\}_{\ti\in \mathbb{R}^d} \overset{fdd}{=}\\ &~~~~~~~~~~ \left\{(\amat + \ang(-\mb{\theta})\I r\Imat_p)^{-\Nu - (d/2)\Imat_p} \rmeasure(dr,-d\mb{\theta})\right\}_{\ti\in \mathbb{R}^d}.
\end{align*}

\subsection{Covariance-asymmetry}

We next consider a discussion of covariance asymmetry and the cross-covariance models.

\begin{proposition}\label{prop:reversibility}
Suppose that the process $\Y(\ti) = (Y_{j}(\ti), Y_k(\ti))$ is jointly Gaussian with $\ti \in \mathbb{R}^d$, with new multivariate Mat\'ern covariance and cross-covariance for general $d$, and $\Nu = {\normalfont\textrm{diag}}(\nu_j, \nu_k)$. Then the process is covariance-symmetric if and only if $\nu_j = \nu_k$, $a_j = a_k$, and $\sd(d\mb{\theta}) = \Sig d\mb{\theta}$ for some real, positive-definite matrix $\Sig$. 
Furthermore, when $d > 1$, the cross-covariance is isotropic if and only if the same conditions are met. 
\end{proposition}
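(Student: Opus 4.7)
The plan is to use uniqueness of the spectral representation from Lemma~\ref{lemma:exists} to reduce both iff statements to pointwise identities on the integrand. For sufficiency, substituting $\nu_j = \nu_k = \nu$, $a_j = a_k = a$, and $\sd(d\mb{\theta}) = \Sig d\mb{\theta}$ makes the Mat\'ern kernel factor $(a + \ang(\mb{\theta})\I r)^{-\nu - d/2}(a - \ang(\mb{\theta})\I r)^{-\nu - d/2}$ collapse to the $\mb{\theta}$-free quantity $(a^2 + r^2)^{-\nu - d/2}$. The cross-covariance integral then factors into a radial piece and $\int_{\us^{d-1}} e^{\I \langle \hv, r\mb{\theta}\rangle} d\mb{\theta}$, which depends only on $r\|\hv\|$ (standard Funk--Hecke/Bessel calculation) and is manifestly even in $\hv$ by the symmetry of the sphere under $\mb{\theta} \to -\mb{\theta}$. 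This simultaneously delivers isotropy and covariance symmetry, and reality of $\Sig$ ensures $C_{jk}$ is real-valued.

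For necessity under covariance symmetry, I would equate the spectral integrals for $C_{jk}(\hv)$ and $C_{jk}(-\hv)$, change variables $\mb{\theta}\to -\mb{\theta}$ in the latter (using $\ang(-\mb{\theta})=-\ang(\mb{\theta})$), and invoke uniqueness of the spectral density on $\R^d$ via $\mb{x} = r\mb{\theta}$. This produces the pointwise a.e.\ identity
\begin{align*}
(a_j + \ang(\mb{\theta})\I r)^{-\nu_j - d/2}(a_k - \ang(\mb{\theta})\I r)^{-\nu_k - d/2}[\sd(d\mb{\theta})]_{jk} = (a_j - \ang(\mb{\theta})\I r)^{-\nu_j - d/2}(a_k + \ang(\mb{\theta})\I r)^{-\nu_k - d/2}[\sd(-d\mb{\theta})]_{jk}.
\end{align*}
Taking moduli gives $|[\sd(d\mb{\theta})]_{jk}| = |[\sd(-d\mb{\theta})]_{jk}|$, while taking arguments yields
\begin{align*}
2(\nu_k + d/2)\arctan(\ang(\mb{\theta}) r/a_k) - 2(\nu_j + d/2)\arctan(\ang(\mb{\theta}) r/a_j) = \arg[\sd(-d\mb{\theta})]_{jk} - \arg[\sd(d\mb{\theta})]_{jk}.
\end{align*}
The right-hand side is independent of $r$, so matching the leading $r$-coefficients of the Taylor expansion on the left (first and third orders) gives the two independent equations $(\nu_j + d/2)/a_j = (\nu_k + d/2)/a_k$ and $(\nu_j + d/2)a_j = (\nu_k + d/2)a_k$, forcing $\nu_j = \nu_k$ and $a_j = a_k$. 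The residual equation then reduces to antipodal symmetry $[\sd(d\mb{\theta})]_{jk} = [\sd(-d\mb{\theta})]_{jk}$, and combined with the parametric form imposed on $\sd$ (a real $\Sig$-part plus an imaginary $\ang$-modulated part, in the spirit of Section~2) together with matrix PSD-ness, this forces the imaginary modulation to vanish, giving $\sd(d\mb{\theta}) = \Sig d\mb{\theta}$ with $\Sig$ real.

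For necessity of isotropy when $d > 1$, I would convert to the Cartesian spectral density $f_{jk}(\mb{x})$ and note that $C_{jk}$ is radial iff $f_{jk}$ is radial. Since $f_{jk}$ factors as the angular density of $[\sd]_{jk}$ times the Mat\'ern kernel depending on $\mb{\theta}$ only through $\ang(\mb{\theta})$, radiality forces both: the angular density must be rotation-invariant, giving $\sd = \Sig d\mb{\theta}$; and the kernel must be $\ang$-independent, which, since $\ang$ is odd and non-constant on $\us^{d-1}$ for $d > 1$, again forces $\nu_j = \nu_k$ and $a_j = a_k$ via the same modulus/argument analysis. The main obstacle I anticipate is the last step of the covariance-symmetry argument — upgrading antipodal symmetry of the single entry $[\sd]_{jk}$ to the full uniform conclusion $\sd = \Sig d\mb{\theta}$. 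This is not automatic for an arbitrary matrix-valued measure, but should follow from the paper's concrete parameterization of $\sd$ (mirroring the $d = 1$ case, where antipodal symmetry of $\Re(\sigma_{12}) + \I\,\textrm{sign}(x)\,\Im(\sigma_{12})$ forces $\Im(\sigma_{12}) = 0$) combined with matrix positive-definiteness.
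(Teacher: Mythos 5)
Your proposal takes a genuinely different---and far more self-contained---route than the paper, which in fact offers no formal proof of this proposition: the supplement only cites \cite{klar_note_2015} for the $d=1$, real-directional-measure case and otherwise appeals to the general principle that a spectral density failing Hermitian evenness yields an asymmetric cross-covariance, deferring to \cite{didier_domain_2018} and \cite{shen_tangent_2022}. Your argument---equate $C_{jk}(\hv)$ with $C_{jk}(-\hv)$, change variables $\mb{\theta}\to-\mb{\theta}$, invoke uniqueness of Fourier transforms of finite complex measures to obtain a pointwise identity on the cross-spectral density, then separate modulus and argument and expand the $\arctan$ terms in $r$ to force $\nu_j=\nu_k$ and $a_j=a_k$---is sound and supplies exactly the detail the paper omits; what it buys is an explicit, checkable derivation rather than a pointer to the literature, at the cost of some bookkeeping (taking arguments of a measure should strictly be done via densities with respect to $|[\sd]_{jk}|$). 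Three caveats. First, as you correctly flag, symmetry alone only forces $[\sd(d\mb{\theta})]_{jk}$ to be real and antipodally even; upgrading this to $\sd(d\mb{\theta})=\Sig\,d\mb{\theta}$ genuinely requires the model's parametric form for the directional measure (a constant Hermitian matrix modulated by $\I\,\ang(\mb{\theta})$ in the imaginary part), without which the ``only if'' direction fails---a real, even, non-uniform angular density gives a symmetric but non-isotropic cross-covariance. The paper relies on the same implicit restriction. Second, the necessity argument presupposes $[\sd]_{jk}\not\equiv 0$: if the cross-spectral entry vanishes, $C_{jk}\equiv 0$ is trivially symmetric and isotropic and nothing is forced on $\nu_j,\nu_k,a_j,a_k$; both you and the paper leave this nondegeneracy implicit. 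Third, a small slip: the third-order Taylor coefficient equation is $(\nu_j+d/2)/a_j^3=(\nu_k+d/2)/a_k^3$, not $(\nu_j+d/2)a_j=(\nu_k+d/2)a_k$; combined with the first-order equation it still gives $a_j^2=a_k^2$ and hence $a_j=a_k$ and $\nu_j=\nu_k$, and an even cleaner route is to let $r\to\infty$ in the $\arctan$ identity to get $\nu_j=\nu_k$ first. None of these affects the validity of your overall strategy.
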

 
The case for $d=1$ and with real directional measure is mentioned and established in \cite{klar_note_2015} based on properties of the gamma difference distribution.
As seen in the forms of these cross-covariances, if any of the conditions of Proposition \ref{prop:reversibility} are broken, the spectral density will be complex-valued; this leads to an asymmetric cross-covariance; see Section 3.4.1 of \cite{didier_domain_2018} or \cite{shen_tangent_2022} for discussions on isotropy and symmetry. 
Furthermore, if the conditions for a symmetric process for Proposition \ref{prop:reversibility} are met, the cross-covariance takes the same form as the Mat\'ern covariance. 
Therefore, all new cross-covariances proposed in this work are asymmetric. 

\subsection{Separability}

We say that a multivariate covariance function $\mb{C}(\mb{h})$ is separable if \begin{align*}
    \mb{C}(\mb{h}) &= \left[\sigma_{jk} C(\mb{h})\right]_{j,k=1}^p,
\end{align*}for some univariate covariance function $C(\cdot)$ and matrix $\mb{\sigma} = [\sigma_{jk}]_{j,k=1}^p$. 
In such a case, each covariance and cross-covariance is proportional to the same covariance function. 
In the setting of this paper, separability automatically occurs when $\nu_j = \nu$, $a_j = a$ for all $j=1, \dots, p$, and $\mb{\sigma} \in \mathbb{R}^{p\times p}$. 
Otherwise, the model is nonseparable.

\section{Local properties of multivariate Mat\'ern random fields}\label{sec:tangent_process}


  




 The goal of this section is to provide a general outlook on building spatial covariance models.  We do not claim 
 to provide a comprehensive treatment.  Our main motivation is to study the natural question:

 \vskip .5cm
 
 {\em How flexible is a given stationary covariance model in representing the local covariance structure of an 
 underlying stochastic phenomenon?}

 \vskip .5cm

 We will give a partial answer to this question from the perspective of tangent processes and the associated 
 stationary increment processes. A more complete treatment involving higher order tangent processes and (Hilbert space valued) intrinsic random functions can be 
 found in \cite{shen:stoev:hsing:2020_extended}.

\subsection{Univariate tangent processes}


Let $Y = \{Y(\mb{s}) \in \mathbb{R},\ \mb{s}\in\R^d\}$ be a zero-mean stochastic process with continuous paths. Fix an $\mb{s}_0\in\R^d$ and consider 
the local increments of $Y$ around $\mb{s}_0$:
\begin{equation}\label{e:tangent-increments}
 X_\epsilon(\mb{s}) \equiv X_\epsilon(\mb{s}; \mb{s}_0) := \frac{Y(\mb{s}_0 +\epsilon\mb{s}) - Y(\mb{s}_0)}{c(\epsilon)},
\end{equation}
for some normalizing constants $c(\epsilon)>0$. 

\begin{definition}\label{def:tangent} A non-zero stochastic process $X=\{X(\mb{s})\}$ is said to be a tangent process of 
$Y$ at $\mb{s}_0$ if for the increment processes in \eqref{e:tangent-increments} and some 
normalizing constants $c(\epsilon)\downarrow 0,\ \epsilon \downarrow 0$, we have
$$
\{X_\epsilon(\mb{s})\} \stackrel{d}{\longrightarrow} \{X(\mb{s})\},\ \ \mbox{ as $\epsilon \downarrow 0$,}
$$
where $\stackrel{d}{\to}$ denotes convergence in finite-dimensional distributions. 
\end{definition} 

We are interested in the possible  covariance structures of the tangent processes.  Therefore, we shall assume that 
$Y$ has finite variance and for concreteness and without loss of generality, we will suppose that it is Gaussian.


The seminal works of \cite{falconer:2002,falconer:2003} have established many fundamental properties of the tangent processes.  Notably, 
the tangent processes $X(\cdot;\mb{s}_0)$ are necessarily self-similar and under broad regularity conditions they also 
have stationary increments, for almost all $\mb{s}_0$.  Recall that $X$ is self-similar if there is an exponent $H>0$ such that
\begin{equation}\label{e:H-ss}
\{X(c \mb{s}),\ \mb{s}\in\R^d\}\stackrel{d}{=}\{ c^H X(\mb{s}),\ \mb{s}\in\R^d\},
\end{equation}
for all $c>0$.  Also, $X$ is said to have stationary increments if for all $\mb{h}\in\R^d$,
 $$
 \{X(\mb{s}+\mb{h}) - X(\mb{h}),\ \mb{s}\in\R^d\} \stackrel{d}{=} \{ X(\mb{s}) - X(\mb{0}),\ \mb{s}\in\R^d\},
 $$
 where $\overset{d}{=}$ denotes equality in finite-dimensional distributions. 
 
 Self-similar processes $X$ with an exponent $H$ and stationary increments
 will be referred to as $H$-sssi.  Notice that an $H$-sssi process $X$ is its own tangent process.
 Since $X(\mb{0})=0$, if $X$ has finite variance, using the stationary increments property, one can show that
 the covariance structure of $X$ is completely determined by its variogram (sometimes called semi-variogram) function:
 $$
\gamma_X(\mb{s}) := \frac{1}{2} {\rm Var}(X(\mb{s})) = \frac{1}{2} {\rm Var}( X(\mb{s}+\mb{h}) - X(\mb{h}) ).
 $$
 Namely, for the covariance $C_X(\mb{s},\mb{t}) := {\rm Cov}(X(\mb{s}),X(\mb{t}))$ and $\mb{s},\mb{t}\in\R^d$, we have:
 \begin{equation}\label{e:C_X-via-gamma}
 C_X(\mb{s},\mb{t}) = \gamma_X(\mb{s}) + \gamma_X(\mb{t}) - \gamma_X(\mb{s} - \mb{t}).
 \end{equation}
 On the other hand, the self-similarity property in \eqref{e:H-ss} entails that $\gamma_X(\cdot)$ is a $2H$-homogeneous function, i.e., 
 $\gamma_X(c \mb{s}) = c^{2H} \gamma_X(\mb{s}),\ \mb{s}\in\R^d,$ for all $c>0$. In fact, $H$ is necessarily in the range $(0,1]$
 and the semi-variogram $\gamma_X$ has the following representation.

 \begin{proposition} Let $X$ be an $H$-sssi finite variance process.  Then, $X$ is $L^2$-continuous, 
 $$
 0 < H\le 1
$$
and for $ \mb{s}\in\R^d$, the semi-variogram of $X$ is:
\begin{equation}\label{e:gamma-rep}
\gamma_X(\mb{s})= \left\{ \begin{array}{ll}
 \int_{\us^{d-1}} | \mb{s}^\top \mb{\theta}|^{2H} \sigma(d\mb{\theta}) &,\ \mbox{ if }0<H<1\\
 \mb{s}^\top \Sigma \mb{s} &,\ \mbox{ if }H=1,
\end{array}\right.,
\end{equation}
where $\sigma(d\mb{\theta})$ is a finite symmetric measure on the unit sphere $\us^{d-1}$ and $\Sigma$ is
a positive semidefinite definite (psd) matrix.  

The measure $\sigma$ (matrix $\Sigma$, respectively) in \eqref{e:gamma-rep} is uniquely determined by $\gamma_X$. 
Conversely, for every choice of such a measure $\sigma$ (psd matrix $\Sigma$, respectively) Relation \eqref{e:gamma-rep} yields a
valid semi-variogram of an $H$-sssi process.
\end{proposition}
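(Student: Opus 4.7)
The plan is to establish the claims in order: $0 < H \le 1$, $L^2$-continuity, the integral representation of $\gamma_X$, and uniqueness together with the converse. Self-similarity already requires $H > 0$; applying \eqref{e:H-ss} at $\mb{s}=\mb{0}$ gives $X(\mb{0}) \eqd c^H X(\mb{0})$ for every $c > 0$, forcing $X(\mb{0}) = 0$ almost surely. To see $H \le 1$, decompose $X(2\mb{s}) = (X(2\mb{s}) - X(\mb{s})) + X(\mb{s})$; by stationary increments both summands have the law of $X(\mb{s})$, so $\Var(X(2\mb{s})) \le 4\Var(X(\mb{s}))$, and together with the self-similar identity $\Var(X(2\mb{s})) = 2^{2H}\Var(X(\mb{s}))$ this yields $2^{2H} \le 4$. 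For $L^2$-continuity, stationary increments give $\|X(\mb{s}+\mb{h}) - X(\mb{s})\|_{L^2}^2 = 2\gamma_X(\mb{h})$, and self-similarity gives $\gamma_X(\mb{h}) = \|\mb{h}\|^{2H}\gamma_X(\mb{h}/\|\mb{h}\|)$. Passing to a separable version of the Gaussian process $X$, $\gamma_X$ is measurable and conditionally negative definite (as a semi-variogram of an $L^2$ process), hence continuous on $\R^d \setminus \{\mb{0}\}$ by standard regularity for CND functions, so bounded on the compact sphere $\us^{d-1}$; consequently $\gamma_X(\mb{h}) \to 0$ at rate $\|\mb{h}\|^{2H}$ as $\mb{h}\to\mb{0}$.

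The main step is the representation of $\gamma_X$. Combine its continuity and conditional negative definiteness (with $\gamma_X(\mb{0}) = 0$) with the L\'evy--Khintchine representation for even CND functions:
\[
\gamma_X(\mb{s}) = \tfrac{1}{2}\mb{s}^\top \Sigma_0 \mb{s} + \int_{\R^d \setminus \{\mb{0}\}} (1 - \cos(\mb{s}^\top \mb{x}))\, \nu(d\mb{x}),
\]
for a unique psd $\Sigma_0$ and unique symmetric L\'evy measure $\nu$. Substituting $c\mb{s}$ and applying the change of variables $\mb{y}=c\mb{x}$ inside the integral yields
\[
\gamma_X(c\mb{s}) = \tfrac{c^2}{2}\mb{s}^\top \Sigma_0 \mb{s} + \int_{\R^d \setminus \{\mb{0}\}} (1 - \cos(\mb{s}^\top \mb{y}))\, \nu(c^{-1} d\mb{y}),
\]
and comparing with $c^{2H}\gamma_X(\mb{s})$, the uniqueness of the L\'evy--Khintchine decomposition forces the separate scalings $c^2\Sigma_0 = c^{2H}\Sigma_0$ and $\nu(c^{-1}\cdot) = c^{2H}\nu(\cdot)$ for every $c > 0$. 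For $0 < H < 1$ the first kills the quadratic part ($\Sigma_0 = \mb{0}$), and the second, together with the L\'evy integrability condition $\int \min(1, \|\mb{x}\|^2)\nu(d\mb{x}) < \infty$, forces the polar decomposition $\nu(dr\, d\mb{\theta}) = r^{-1-2H}\,dr\,\sigma(d\mb{\theta})$ for a unique finite symmetric measure $\sigma$ on $\us^{d-1}$. Substituting and invoking the identity $\int_0^\infty (1 - \cos(ru))\, r^{-1-2H}\, dr = c_H |u|^{2H}$ (convergent precisely for $0 < H < 1$) then yields the spherical representation after absorbing $c_H$ into $\sigma$. For $H = 1$ the analogous scaling forces $\nu=\mb{0}$ (a nontrivial $2$-homogeneous L\'evy measure, of radial density $r^{-3}$, violates L\'evy integrability near the origin), leaving only the quadratic form $\mb{s}^\top \Sigma \mb{s}$ with $\Sigma = \tfrac{1}{2}\Sigma_0$.

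Uniqueness of $\sigma$ (respectively $\Sigma$) is inherited from uniqueness in the L\'evy--Khintchine decomposition and the bijective polar decomposition of homogeneous L\'evy measures (respectively from the well-definedness of the psd quadratic form). For the converse, given a finite symmetric $\sigma$ on $\us^{d-1}$ with $0 < H < 1$, construct a centered Gaussian process via the harmonizable representation
\[
X(\mb{s}) = \int_{\R^d} (e^{\ii\, \mb{s}^\top \mb{x}} - 1)\, M(d\mb{x}),
\]
where $M$ is a Hermitian-symmetric complex Gaussian random measure with control measure $c_H^{-1}\, r^{-d-2H}\, dr\, \sigma(d\mb{\theta})$ in polar coordinates; $H$-self-similarity and stationarity of increments follow by change of variables inside the stochastic integral, and a direct computation recovers the prescribed $\gamma_X$. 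For $H = 1$ simply set $X(\mb{s}) = \mb{s}^\top \mb{Z}$ with $\mb{Z}$ a centered Gaussian vector of covariance $2\Sigma$.

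The main obstacle is the representation step: applying L\'evy--Khintchine to $\gamma_X$, separating the quadratic and integral parts via homogeneity and uniqueness of the decomposition, and extracting the polar structure of the homogeneous L\'evy measure. The dichotomy $0 < H < 1$ versus $H = 1$ enters essentially through the finiteness of $c_H$ and the impossibility of a nontrivial $2$-homogeneous L\'evy measure, which cleanly separates the two cases.
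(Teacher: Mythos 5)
Your proof is correct and follows essentially the route the paper itself indicates: the paper gives no detailed argument for this proposition, only citing Schoenberg (1938) and von Neumann--Schoenberg (1941) and the L\'evy--Khintchine representation of continuous negative definite functions, which is exactly the decomposition you carry out (homogeneity killing the Gaussian part when $0<H<1$ and the L\'evy part when $H=1$, polar factorization of the homogeneous L\'evy measure, and the harmonizable construction for the converse, which coincides with the paper's own fractional Brownian field representation). One small point worth tightening: boundedness of $\gamma_X$ on the unit sphere is most cleanly obtained from the subadditivity of $\sqrt{\gamma_X}$ together with $H$-homogeneity along coordinate directions, rather than from ``continuity of conditionally negative definite functions away from the origin,'' since for such functions continuity everywhere is normally deduced from continuity at the origin --- the very point you are in the middle of proving.
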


This result is a consequence of the general representation of the class of continuous negative definite functions dating back to 
\cite{schoenberg:1938} and \cite{von_Neumann:schoenberg:1941}, Theorem 1 therein.  It is also closely related to the L\'evy-Khintchine 
representation of infinitely divisible distributions \citep[see e.g., page 69 of][]{chiles2012geostatistics}.

In view of \cite{falconer:2002,falconer:2003}, the above result characterizes essentially all tangent processes to a Gaussian 
process $Y$.  Namely, these are the spatial counterparts to the celebrated {\em fractional Brownian fields}.

\begin{definition} An $H$-sssi zero mean Gaussian process $X=\{X(\mb{s}),\ \mb{s}\in\R^d\}$
is referred to as a fractional Brownian field (fBf, in short). The self-similarity exponent
$H$ is referred to as the Hurst parameter of the fBf $X$. When $d=1$, fBf are referred to as fractional Brownian motions.
\end{definition}

 If $d=1$, for each value of the Hurst self-similarity parameter $H\in (0,1]$, there is essentially one fBf 
 process up to a scale constant. This is the {\em fractional Brownian motion (fBm)} process, which in view of 
 \eqref{e:C_X-via-gamma}, has covariance structure:
 $$
  C_X(t,s) = \frac{v}{2} \Big( |s|^{2H} + |t|^{2H} - |s-t|^{2H}\Big),\ \ s,t\in\R,
 $$
 where $v = {\rm Var}(X(1))$.
 
 In the spatial setting, when $d\ge 2$, however, the family of fBf processes (with $0<H<1$) is very rich since their covariance structure 
 is parameterized by the infinite-dimensional {\em directional measure} $\sigma$ arising in \eqref{e:gamma-rep}. 
 
 It is instructive to provide the general spectral representation of fBf processes. 
 We do so in polar coordinates, for $\mb{x}\in\R^d$ recalling the notation of its radial and angular components $r:= \|\mb{x}\|$ and $\mb{\theta}:= \mb{x}/\|\mb{x}\|$ so that 
 $\R^d\setminus\{\mb{0}\}$ is homeomorphic to $(0,\infty)\times \us^{d-1}$.  
 Consider an arbitrary finite symmetric measure $\sigma$ on $\us^{d-1}$ and define the measure 
 $F_\sigma (d\mb{x})$ on $\R^d$ such that in polar coordinates, 
 we have 
 \begin{equation}\label{e:F_sigma}
  F_\sigma (d \mb{x}) \equiv F_\sigma(d r, d\mb{\theta}) := r^{d-1} d r \times \sigma( d \mb{\theta}).
 \end{equation}
 
 \begin{proposition} \label{p:fBf-representation} Let $0<H<1$ and $\sigma$ be a finite symmetric measure on $\us^{d-1}$ and consider the 
 complex Hermitian Gaussian measure $W$ on $\R^d$ with control measure $F_\sigma$, i.e.,  such that in polar coordinates, we have
 $$
\E |W(dr,d\mb{\theta})|^2 =F_\sigma(d r, d\mb{\theta}).
 $$

 Then, the stochastic process for $\mb{s}\in\R^d$ of
 \begin{align}\label{e:fBf-representation}\begin{split}
  B(\mb{s}) :&=c_H^{-\frac{1}{2}} \int_0^\infty \int_{\us^{d-1}} (e^{\I \langle \mb{s}, r\mb{\theta}\rangle} -1)  r^{-H-\frac{d}{2}} W(d r,d \mb{\theta}),
  \end{split}
 \end{align}
 is a fractional Brownian field with semi-variogram as in \eqref{e:gamma-rep}, where the constant $c_H$ equals 
 \begin{align}\label{e:c_H}
  c_H &= \int_0^\infty \frac{1-\cos(r)}{r^{2H+1}} dr 
  =\frac{1}{2H} \int_0^\infty \frac{\sin (r)}{r^{2H}} dr \nonumber \\
  &= \left\{  \begin{array}{ll}
   \frac{\Gamma(2-2H) \cos(\pi H) }{2H(1-2H)} &, \mbox{ if }H\not = \frac{1}{2}\\
   \pi/4H &,\mbox{ if }H =\frac{1}{2} \end{array}
   \right..
 \end{align}
 \end{proposition}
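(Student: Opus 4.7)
The plan is to verify the four defining properties of a fractional Brownian field for $B$: (a) it is a well-defined zero-mean Gaussian process, (b) it has stationary increments, (c) it is $H$-self-similar, and (d) its semi-variogram equals $\gamma_X(\mb{s}) = \int_{\us^{d-1}} |\mb{s}^\top\mb{\theta}|^{2H}\sigma(d\mb{\theta})$ from \eqref{e:gamma-rep}. First I would establish well-definedness: the integrand $f_{\mb{s}}(r,\mb{\theta}) = (e^{\I\langle\mb{s},r\mb{\theta}\rangle}-1)r^{-H-d/2}$ must lie in $L^2(F_\sigma)$. Writing $|e^{\I\langle\mb{s},r\mb{\theta}\rangle}-1|^2 = 2(1-\cos(r\langle\mb{s},\mb{\theta}\rangle))$, which is $O(r^2)$ near $0$ and $O(1)$ at infinity, the $r$-integrand of $|f_{\mb{s}}|^2 r^{d-1}$ behaves like $r^{1-2H}$ near $0$ and $r^{-1-2H}$ at infinity, both integrable precisely because $0<H<1$. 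Since $W$ is a complex Hermitian Gaussian measure, $B(\mb{s})$ is a real, zero-mean Gaussian variable for each $\mb{s}$, and $B(\mathbf{0})=0$ since the integrand vanishes identically.

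Next, the variance computation, which pins down the semi-variogram and is the technical heart of the argument. By the isometry property of the integral against $W$ with control measure $F_\sigma$,
\begin{equation*}
\Var(B(\mb{s})) = c_H^{-1}\int_{\us^{d-1}}\int_0^\infty 2\bigl(1-\cos(r\langle\mb{s},\mb{\theta}\rangle)\bigr)\, r^{-2H-1}\,dr\,\sigma(d\mb{\theta}).
\end{equation*}
For fixed $\mb{\theta}$ with $\langle\mb{s},\mb{\theta}\rangle\neq 0$, the substitution $u = r|\langle\mb{s},\mb{\theta}\rangle|$ yields $2|\langle\mb{s},\mb{\theta}\rangle|^{2H}\int_0^\infty (1-\cos u)u^{-2H-1}\,du = 2c_H|\langle\mb{s},\mb{\theta}\rangle|^{2H}$ by the definition of $c_H$ in \eqref{e:c_H}; the $\sigma$-null contribution from $\{\mb{\theta}:\langle\mb{s},\mb{\theta}\rangle=0\}$ is harmless. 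Collecting the factor $c_H^{-1}$, one obtains $\Var(B(\mb{s})) = 2\int_{\us^{d-1}}|\mb{s}^\top\mb{\theta}|^{2H}\sigma(d\mb{\theta}) = 2\gamma_X(\mb{s})$, matching \eqref{e:gamma-rep} exactly; the normalization $c_H^{-1/2}$ in front of $B$ is chosen precisely to produce this.

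Finally, I would verify stationary increments and self-similarity. Writing
\begin{equation*}
B(\mb{s}+\mb{h})-B(\mb{h}) = c_H^{-1/2}\int_0^\infty\int_{\us^{d-1}} e^{\I\langle\mb{h},r\mb{\theta}\rangle}(e^{\I\langle\mb{s},r\mb{\theta}\rangle}-1)\,r^{-H-d/2}\,W(dr,d\mb{\theta}),
\end{equation*}
the phase factor $e^{\I\langle\mb{h},r\mb{\theta}\rangle}$ has unit modulus and is Hermitian under $\mb{\theta}\mapsto-\mb{\theta}$, so by the isometry and the Hermitian structure of $W$ the full covariance functional (hence all finite-dimensional distributions, by zero-mean Gaussianity) of $\{B(\mb{s}+\mb{h})-B(\mb{h})\}_{\mb{s}}$ coincides with that of $\{B(\mb{s})-B(\mathbf{0})\}_{\mb{s}}$. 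For self-similarity, substituting $u=cr$ in $B(c\mb{s})$ shows that the rescaled Gaussian measure $W(du/c,d\mb{\theta})$ has control measure $c^{-d}F_\sigma(du,d\mb{\theta})$ and therefore equals $c^{-d/2}W'(du,d\mb{\theta})$ in law for some $W'\eqd W$; collecting the powers of $c$ from $r^{-H-d/2}$ and from this scaling yields $B(c\mb{s})\eqd c^H B(\mb{s})$. Zero mean, Gaussianity, $H$-self-similarity and stationary increments together mean $B$ is a fractional Brownian field, and part (b) of the computation identifies its semi-variogram as claimed. The main obstacle is the inner $r$-integral in the variance computation; once the constant $c_H$ is correctly identified there, everything else reduces to standard isometry and scaling manipulations.
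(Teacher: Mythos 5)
Your proposal is correct and follows essentially the same route as the paper: the core step in both is the isometry-based variance computation with the change of variables $u = r|\mb{s}^\top\mb{\theta}|$, which identifies the semi-variogram with \eqref{e:gamma-rep} and explains the normalization $c_H^{-1/2}$. The only difference is that you spell out the stationary-increments and self-similarity verifications that the paper asserts without detail, and you (like the paper) defer the closed-form evaluation of $c_H$ to a standard reference; neither point affects correctness.
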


  \begin{proof}[Proof of Proposition \ref{p:fBf-representation}]
 In view of \eqref{e:F_sigma} the integrand in \eqref{e:fBf-representation} is square-integrable and the process $B = \{B(\mb{s})\}$ is well-defined.
 Since this process $B$ is real-valued, Gaussian with zero-mean, stationary increments and such that $B(\mb{0}) = 0$, it follows that 
 its dependence structure is completely determined by its variogram.  We have
 \begin{align*}
  2c_H \gamma_B(\mb{s}) = c_H {\rm Var}(B(\mb{s}))&= 
  c_H \int_0^\infty \int_{\us^{d-1}} |1- e^{\I r \mb{s}^\top\mb{\theta}}|^2 r^{-2H - d} F_\sigma(d r, d \mb{\theta}) \\
  & = \int_{\us^{d-1}}  \Big\{ \int_0^\infty 2(1-\cos(r \mb{s}^\top\mb{\theta})) r^{-2H-1} dr \Big\} \sigma(d \mb{\theta})\\
  & = 2 \int_{\us^{d-1}} |\mb{s}^\top\mb{\theta}|^{2H} \sigma(\D \mb{\theta}) \int_0^\infty (1-\cos(u)) u^{-2H-1} d u,
 \end{align*}
 where the last relation follows from the fact that $\cos(r \mb{s}^\top\mb{\theta}) = \cos( r |\mb{s}^\top\mb{\theta}|)$, and the
 change of variables $u:= r |\mb{s}^\top\mb{\theta}|$.  In view of \eqref{e:c_H}, we obtain that the semi-variogram 
 $\gamma_B(\mb{s})$ equals the expression in \eqref{e:gamma-rep}.  This completes the proof.  The expression in \eqref{e:c_H} follows 
 from integration by parts and Relation (1.2.9) on page 17 in \cite{samorodnitsky:taqqu:1994book}, for example. 
 
  \end{proof}
 We now give a result on the tangent process of a Mat\'ern-like process. 

  \begin{proposition}
      Consider a univariate Mat\'ern-like process $Y(\mb{s})$ that has mean zero and is second-order stationary with covariance \begin{align*}
    &{\rm Cov}(Y(\ti), Y(\ti + \mb{h})) = \frac{a^{2\nu}\Gamma(\nu + \frac{d}{2})}{\pi^{\frac{d}{2}}\Gamma(\nu)} \int_0^\infty \int_{\us^{d-1}} e^{\I\langle \mb{h}, r\mb{\theta}\rangle}\left(a^2 + r^2\right)^{-\nu - \frac{d}{2}} r^{d-1} \sigma(d\mb\theta) dr.
      \end{align*}Also assume that $Y(\mb{s})$ is Gaussian and $0< \nu < 1$. Then, defining \begin{align*}
          X_\epsilon(\mb{s}) = \left(\frac{\pi^{\frac{d}{2}}\Gamma(\nu)}{a^{2\nu}\Gamma(\nu + \frac{d}{2})}\right)^{1/2}\cdot
          \frac{ (Y(\mb{s}_0 + \epsilon\mb{s}) - Y(\mb{s}_0))}{\epsilon^\nu},
      \end{align*}the tangent process is a fractional Brownian field, i.e., \begin{align*}
          &\{X_\epsilon(\mb{s})\}_{\ti \in \mathbb{R}^d} \overset{d}{\to}\left\{\int_0^\infty \int_{\us^{d-1}} \left(e^{\I\langle \ti, r\mb{\theta}\rangle} - 1\right)r^{-\nu - \frac{d}{2}  }W(dr,d\mb{\theta})\right\}_{\ti \in \mathbb{R}^d},
      \end{align*}as $\epsilon \downarrow 0$, where $W(dr, d\mb{\theta})$ is defined as in Proposition \ref{p:fBf-representation}.
  \end{proposition}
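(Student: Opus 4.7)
The plan is to reduce convergence in finite-dimensional distributions to convergence of covariance functions, compute $\Var(X_\epsilon(\ti))$ from the spectral representation, perform a change of variables, and close with dominated convergence.

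Since $Y$ is centered Gaussian, each $X_\epsilon$ is centered Gaussian, and the candidate limit $B(\ti) := \int_0^\infty \int_{\us^{d-1}} (e^{\I\langle \ti, r\mb{\theta}\rangle}-1) r^{-\nu-d/2} W(dr, d\mb{\theta})$ is also centered Gaussian by construction. Thus it suffices to show that the finite-dimensional covariances of $X_\epsilon$ converge to those of $B$. Stationarity of $Y$ gives $X_\epsilon$ stationary increments with $X_\epsilon(\mb{0})=0$, matching the corresponding properties of $B$, so by the semi-variogram identity \eqref{e:C_X-via-gamma} applied to both processes, the problem reduces to the pointwise convergence $\gamma_{X_\epsilon}(\ti)\to \gamma_B(\ti)$ for every $\ti\in\R^d$.

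Using the hypothesized spectral representation together with $|e^{\I x}-1|^2 = 2(1-\cos x)$, the Mat\'ern prefactor cancels against the $c^{-1}$ factor built into $X_\epsilon$, yielding
\begin{equation*}
\Var(X_\epsilon(\ti)) = \epsilon^{-2\nu}\int_0^\infty \int_{\us^{d-1}} \bigl|e^{\I\epsilon\langle \ti, r\mb{\theta}\rangle}-1\bigr|^2 (a^2+r^2)^{-\nu-d/2} r^{d-1} \sigma(d\mb{\theta})\, dr.
\end{equation*}
After the substitution $u=\epsilon r$, all explicit powers of $\epsilon$ cancel and this becomes
\begin{equation*}
\Var(X_\epsilon(\ti)) = \int_0^\infty \int_{\us^{d-1}} \bigl|e^{\I\langle \ti, u\mb{\theta}\rangle}-1\bigr|^2 (a^2\epsilon^2 + u^2)^{-\nu-d/2} u^{d-1}\sigma(d\mb{\theta})\, du,
\end{equation*}
whose integrand is monotonically decreasing in $\epsilon\downarrow 0$ and converges pointwise to $|e^{\I\langle \ti, u\mb{\theta}\rangle}-1|^2 u^{-2\nu-1}$. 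The same computation as in the proof of Proposition \ref{p:fBf-representation} identifies the $\sigma(d\mb{\theta})\,du$-integral of this limit as $\Var(B(\ti)) = 2\gamma_B(\ti)$.

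The main step is the dominated-convergence pass, which is where the hypothesis $0<\nu<1$ is crucial. Since $a^2\epsilon^2\geq 0$, the integrand is bounded above uniformly in $\epsilon$ by $|e^{\I\langle \ti, u\mb{\theta}\rangle}-1|^2 u^{-2\nu-1}$. Near $u=0$ one has $|e^{\I\langle \ti, u\mb{\theta}\rangle}-1|^2\lesssim u^2$, giving a local $u^{1-2\nu}$ that is integrable iff $\nu<1$; near $u=\infty$ one has $|e^{\I\langle \ti, u\mb{\theta}\rangle}-1|^2\leq 4$, giving a tail $u^{-2\nu-1}$ that is integrable iff $\nu>0$; finiteness of $\sigma$ on $\us^{d-1}$ takes care of the angular integration. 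Hence $\Var(X_\epsilon(\ti))\to \Var(B(\ti))$, and polarization together with the stationary-increments identity upgrades this to convergence of all covariances $\Cov(X_\epsilon(\ti), X_\epsilon(\ti'))\to\Cov(B(\ti), B(\ti'))$. Gaussianity of both sides then yields fidi convergence. The delicate point is the interplay between the singular weight $(a^2\epsilon^2+u^2)^{-\nu-d/2}$ and the vanishing of $|e^{\I\langle \ti, u\mb{\theta}\rangle}-1|^2$ near $u=0$; the two-sided restriction $0<\nu<1$ is precisely the fBf parameter range and is exactly what makes the uniform dominant integrable.
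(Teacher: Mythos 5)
Your proposal is correct and follows essentially the same route the paper takes: the paper defers this proposition to the multivariate Theorem \ref{thm:tpl}, whose proof likewise reduces to covariance convergence via Gaussianity, performs the rescaling $u=\epsilon r$ so that $(a^2+r^2)^{-\nu-d/2}$ becomes $(a^2\epsilon^2+u^2)^{-\nu-d/2}$, and closes with dominated convergence using $|e^{\I x}-1|^2\le\min\{4,x^2\}$, with integrability at $\infty$ and at $0$ giving exactly the two-sided constraint $0<\nu<1$. Your extra reduction to the semi-variogram via \eqref{e:C_X-via-gamma} is a minor scalar-case simplification, not a different method.
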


  This result will follow from the more general development in the rest of the section. 
    If one takes $\sigma(d\mb{\theta}) \propto d\mb{\theta}$, one obtains the classical univariate Mat\'ern process for $Y(\mb{s})$. 
  When taking a more general $\sigma(d\mb{\theta})$, we obtain a more flexible class of anisotropic processes. An example is given in Figure \ref{fig:anisotropic_cov}. 
  In principle, the freedom to choose $\sigma(d\mb{\theta})$ enables the Mat\'ern tangent processes to achieve a wide class of structures implied by the general from of fractional Brownian fields. 

  \begin{figure}
      \centering
      \includegraphics[width = .22\textwidth]{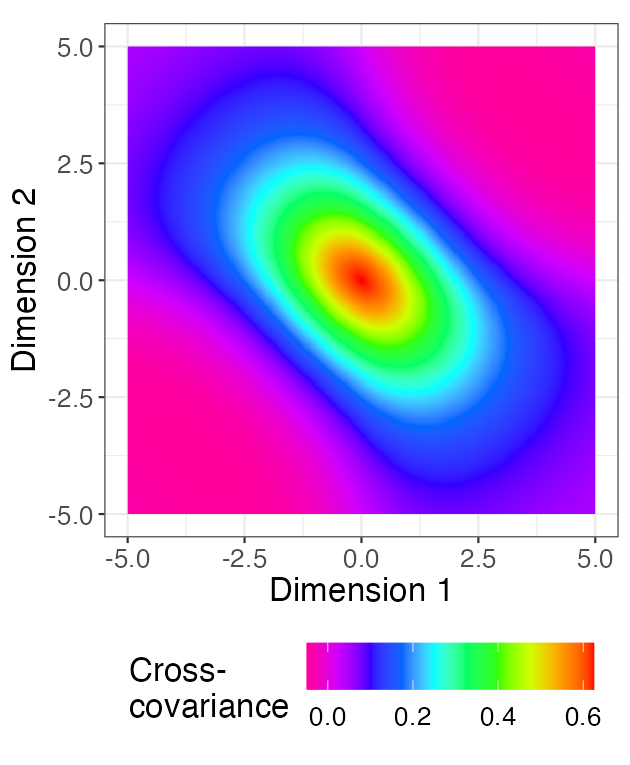}
       \includegraphics[width = .22\textwidth]{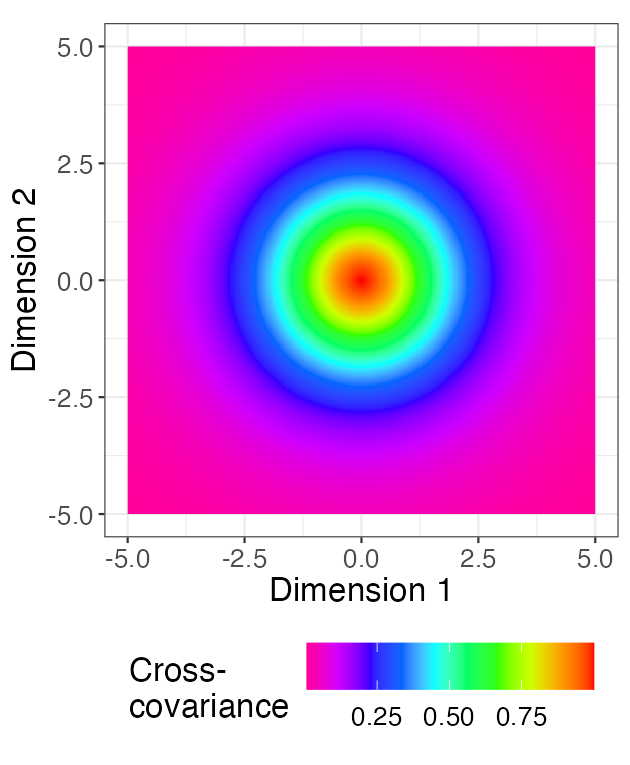}
      \caption{Mat\'ern covariance functions with $d=2$, $\nu = 1.5$,  and $a = 1$. (Left) With $\sigma(d\mb\theta) =\mathbb{I}\left(\textrm{sign}(\theta_1) =  \textrm{sign}(\theta_2)\right)+0.25 \mathbb{I}\left(\textrm{sign}(\theta_1) \neq  \textrm{sign}(\theta_2)\right) d\mb{\theta}$. (Right) With $\sigma(d\mb\theta) = d\mb{\theta}$, resulting in the classical Mat\'ern with unit variance.}
      \label{fig:anisotropic_cov}
  \end{figure}

We next turn to the multivariate case. 
In particular, we will demonstrate that the tangent processes of the multivariate Mat\'ern processes $\Y(\ti)$ are {\it operator} fractional Brownian fields. 
Due to a one-to-one correspondence between the tangent processes of multivariate Mat\'ern models and operator fractional Brownian fields (OFBFs), our proposed family of multivariate Mat\'ern models succinctly are able to realize all possible local behavior of stationary vector-valued processes. 
We will revert to thinking of $\Nu$ as a real, symmetric, and positive-definite matrix that commutes with $\amat$, rather than a diagonal matrix. 
As in the main text, we interpret $\mb{A}^{\mb{B}}$ for two matrices as $\textrm{exp}(\mb{B}\log(\mb{A}))$ with $
\log(\mb{A})$ defined for example through its Gregory series expansion reviewed in the main text, for a diagonal matrix $\mb{A}$ with positive and bounded real parts of its eigenvalues; see again \cite{cardoso_conditioning_2018} and \cite{barradas_iterated_1994}. 

\subsection{Operator fractional Brownian fields}\label{sec:OFBM}

We next describe operator fractional Brownian fields, following \cite{didier_domain_2018}. 
Let $\mb{H}$ be a real, positive-definite $p\times p$ matrix, $\mb{E}$ be a real $d \times d$ matrix, and for all $c>0$ interpret $c^{\mb{H}}$ as $\exp\{ \log(c) \mb{H}\}$, where the matrix exponent is defined as $\exp\{\mb{H}\} = \sum_{\ell=0}^\infty\mb{H}^\ell/\ell!$.
Let $\Imat_d$ be the identity matrix of dimension $d \times d$.
As in \cite{didier_domain_2018}, we consider matrices $\mb{H}$ and $\mb{E}$ with $0 < \min \Re(\textrm{eig}(\mb{H})) \leq \max \Re(\textrm{eig}(\mb{H})) < \min \Re(\textrm{eig}(\mb{E}^*)) = 1$ without loss of generality, where $\textrm{eig}(\mb{A})$ denotes the eigenvalues of the matrix $\mb{A}$ and $\mb{E}^*$ is the conjugate transpose of $\mb{E}$.
Switching the integrals to the polar coordinates of $r = \left\lVert \mb{x}\right\rVert$ and $\mb{\theta} = \mb{x}/\left\lVert \mb{x}\right\rVert \in \us^{d-1}$ where $\us^{d-1}$ is the unit sphere in $d-1$ dimensions,
Remark 3.2 of \cite{didier_domain_2018} represents an operator fractional Brownian field $\OFBF(\ti)$ as 
\begin{align}\begin{split}
\{
    \OFBF(\ti)\}_{\ti \in \mathbb{R}^d} &\overset{fdd}{=}\Bigg\{\int_0^\infty \int_{\us^{d-1}} \left(e^{\I\langle \ti, r^{\mb{E}^*}\mb{\theta}\rangle} - 1\right) r^{-\mb{H} - \frac{d}{2}\Imat_p}\rmeasure(dr,d\mb{\theta})\Bigg\}_{\ti \in \mathbb{R}^d},\label{e:didier}\end{split}
\end{align}
where $\overset{fdd}{=}$ denotes equality in finite-dimensional-distribution and $\rmeasure(dr, d\mb{\theta})$ is a 
Hermitian Gaussian random measure with \begin{align}
    \mathbb{E}\left[\rmeasure(dr, d\mb{\theta})\rmeasure(dr, d\mb{\theta})^*\right]&= r^{d-1} dr\measure(d\mb{\theta}).\label{eq:random_measure2}
\end{align}
(The slight difference in the form of the integrand in \eqref{e:didier} above and that in Remark 3.2 of \cite{didier_domain_2018}
is due to the different control measure of $\rmeasure(dr,d\mb{\theta})$ adopted in \eqref{eq:random_measure2}.)
Here, $\sd(d\mb{\theta})$ is a Hermitian measure on $\us^{d-1}$, i.e.,
such that $\sd(d\mb{\theta}) = \overline{\sd(-d\mb{\theta})}$, which follows from the fact that the process $\OFBF$ is
real.  Furthermore, we take $\sd(d\mb{\theta})$ to be finite so that $\left\lVert \sd(\us^{d-1})\right\rVert_{F} < \infty$ for the 
Frobenius norm $\lVert \cdot\rVert_{F}$. This characterizes the set of operator fractional Brownian fields; that is, this set 
contains all zero-mean, Gaussian, stationary-increment processes, which have the following $(\mb{H},\mb{E})$ operator self-similarity property.  Namely, such that for all $c>0$
\begin{equation}\label{e:OFBM-ss}
\left\{c^{\mb{H}}\OFBF(\ti)\right\}_{\ti \in \mathbb{R}^d} 
  \overset{fdd}{=} \left\{\OFBF\left(c^{\mb{E}}\ti\right)\right\}_{\ti \in \mathbb{R}^d}.
\end{equation}

\begin{remark}
In the sequel, for simplicity, we will take $\mb{E} = \Imat_d$. Further flexibility could be achieved if general operator
rescaling of the domain is adopted in the increments \eqref{e:tangent-increments} appearing in the definition of 
tangent processes. 
\end{remark}

\subsection{Multivariate tangent processes}

Suppose now $\Y = \{ \Y(\mb{s}),\ \mb{s}\in\R^d\}$ is a multivariate stochastic process taking values in $\R^p$ and consider the
following natural generalization of Definition \ref{def:tangent}. Namely, fix $\ti_0$ and consider the increment processes
\begin{align}\label{e:multivariate-increments}
    \mb{X}_\tpl(\ti) = \mb{A}_\tpl(\Y(\ti_0 + \tpl\ti) - \Y(\ti_0))
\end{align}
for $\tpl>0$ and for a $p\times p$ matrix-valued function $\mb{A}_{\epsilon}$. A non-zero stochastic process $\mb{Z}=\{\mb{Z}(\ti)\}$ is said to be a tangent process to  $\Y=\{\Y(\ti),\ \ti\in\R^d\}$ at $\ti_0$ if
\begin{align*}
    \left\{\mb{X}_\tpl(\ti)\right\}_{\ti \in \mathbb{R}^d} \overset{d}{\longrightarrow} \{\mb{Z}(\ti)\}_{\ti \in \mathbb{R}^d}
\end{align*}
as $\tpl \downarrow 0$ for some choice of $\mb{A}_\tpl$.  As in the scalar case, the limiting process $\mb{Z}$ 
will have stationary increments and it will be self-similar.  The self-similarity, however, is with respect to
linear operator rescaling in the sense of \eqref{e:OFBM-ss} with some $p\times p$ matrix-valued exponent 
$\mb{H}$ and $\mb{E} = \mb{I}_d$.

We shall establish below the tangent process of the Mat\'ern models for $\Re(\textrm{eig}(\Nu)) \subset (0,1)$ when 
\begin{equation}\label{e:A-matern}
 \mb{A}_\tpl := \tpl^{-\Nu}\mb{c}^{-1},
\end{equation}
with the non-singular normalizing constant matrix $\mb{c}$. 
To begin to characterize the resulting tangent process, we introduce Lemma \ref{lemma:tangent_process_characterization}. 


\begin{lemma}\label{lemma:tangent_process_characterization}  Suppose that the matrices $\mb{a}$ and $\mb{\nu}$ commute.  Then,
the increments process \eqref{e:multivariate-increments} with $\mb{A}_\tpl$ as in \eqref{e:A-matern},
can be characterized as follows
\begin{align*}
    &\mb{X}_{\tpl}=\{\mb{X}_{\tpl}(\ti)\}_{\ti \in \mathbb{R}^d} \\&\overset{fdd}{=}  \Bigg\{\int_0^\infty \int_{\us^{d-1}} \left(e^{\I\langle \ti, r\mb{\theta}\rangle} - 1\right)e^{\I\langle \ti_0, \tpl^{-1}r\mb{\theta}\rangle}  \left(\amat \tpl + \ang(\mb{\theta})\I r\Imat_p\right)^{-\Nu - \frac{d}{2}\Imat_p}\rmeasure(dr,d\mb{\theta})\Bigg\}_{\ti \in \mathbb{R}^d}.
\end{align*}
\end{lemma}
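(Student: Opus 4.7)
The plan is to start from the spectral representation of $\Y$, express the increment $\Y(\ti_0+\tpl\ti)-\Y(\ti_0)$ as a single stochastic integral, and then reduce it to the claimed form via two operations: (a) premultiplying by $\mb{A}_\tpl = \tpl^{-\Nu}\mb{c}^{-1}$, which cancels the $\mb{c}$ appearing in the integrand, and (b) a radial change of variables $u = \tpl r$ that transfers all the $\tpl$-dependence into the matrix kernel.

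First I would use linearity of the stochastic integral to write
\begin{align*}
\Y(\ti_0+\tpl\ti) - \Y(\ti_0) = \int_0^\infty\!\int_{\us^{d-1}} e^{\I\langle\ti_0, r\mb{\theta}\rangle}\bigl(e^{\I\tpl\langle\ti, r\mb{\theta}\rangle}-1\bigr)\,\mb{c}\bigl(\amat+\ang(\mb{\theta})\I r\Imat_p\bigr)^{-\Nu-(d/2)\Imat_p}\rmeasure(dr,d\mb{\theta}),
\end{align*}
and then premultiply by $\tpl^{-\Nu}\mb{c}^{-1}$; the $\mb{c}^{-1}\mb{c}$ cancels since the matrix acts on the left of the integrand. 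Next, substitute $u=\tpl r$. Because the control measure of $\rmeasure$ is $r^{d-1}dr\,\sd(d\mb{\theta})$, the rescaled measure $\widetilde{\rmeasure}(du,d\mb{\theta}) := \tpl^{d/2}\rmeasure(du/\tpl,d\mb{\theta})$ is a Hermitian Gaussian measure with control $u^{d-1}du\,\sd(d\mb{\theta})$, hence $\widetilde{\rmeasure}\stackrel{d}{=}\rmeasure$ in the sense of inducing the same finite-dimensional distributions for the stochastic integrals against deterministic matrix-valued kernels.

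After the change of variables, the prefactor $\tpl^{-\Nu}$ combined with the $\tpl^{-d/2}$ from $\rmeasure(du/\tpl,\cdot)=\tpl^{-d/2}\widetilde{\rmeasure}(du,\cdot)$ becomes $\tpl^{-\Nu-(d/2)\Imat_p}$. Since $\amat$ and $\Nu$ commute by hypothesis and $\tpl^{-1}\Imat_p$ is a scalar multiple of the identity, the matrix $\amat+\ang(\mb{\theta})\I u\tpl^{-1}\Imat_p$ commutes with $\tpl^{-1}\Imat_p$ and with $\Nu$, so
\begin{align*}
\tpl^{-\Nu-(d/2)\Imat_p}\bigl(\amat+\ang(\mb{\theta})\I u\tpl^{-1}\Imat_p\bigr)^{-\Nu-(d/2)\Imat_p} = \bigl(\tpl\amat+\ang(\mb{\theta})\I u\Imat_p\bigr)^{-\Nu-(d/2)\Imat_p}.
\end{align*}
Collecting the remaining factors and observing that $e^{\I\tpl\langle\ti, r\mb{\theta}\rangle}-1 = e^{\I\langle\ti, u\mb{\theta}\rangle}-1$ and $e^{\I\langle\ti_0, r\mb{\theta}\rangle} = e^{\I\langle\ti_0, \tpl^{-1}u\mb{\theta}\rangle}$ yields exactly the integral in the statement.

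The main technical obstacle is the manipulation of the matrix exponents: in general $(\mb{A}\mb{B})^{\mb{C}}\ne \mb{A}^{\mb{C}}\mb{B}^{\mb{C}}$ and $\mb{A}^{\mb{B}+\mb{C}}\ne\mb{A}^{\mb{B}}\mb{A}^{\mb{C}}$ unless the relevant matrices commute. The commutation hypothesis on $\amat$ and $\Nu$ together with the fact that $\tpl^{-1}\Imat_p$ is scalar is what legitimizes factoring $\tpl^{-1}$ out of $\amat+\ang(\mb{\theta})\I u\tpl^{-1}\Imat_p$ and then merging the $\tpl^{-\Nu-(d/2)\Imat_p}$ prefactor with the matrix power. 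The branch selection via the Gregory series from the main text is consistent here because $\amat\tpl$ has positive eigenvalues and the added imaginary multiple of $\Imat_p$ leaves the eigenvalues off the negative real axis, so no ambiguity arises in $(\tpl\amat+\ang(\mb{\theta})\I u\Imat_p)^{-\Nu-(d/2)\Imat_p}$. Beyond this, the argument is a routine change-of-variables in a Gaussian stochastic integral.
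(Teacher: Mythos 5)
Your proof is correct and follows essentially the same route as the paper's: linearity of the stochastic integral, cancellation of $\mb{c}$ by the prefactor $\mb{A}_\tpl$, the radial change of variables $u=\tpl r$ exploiting the self-similarity of the control measure (the $\tpl^{-d/2}$ rescaling of $\rmeasure$), and the commutation of $\amat$ and $\Nu$ to combine $\tpl^{-\Nu-(d/2)\Imat_p}$ with the matrix power. The only cosmetic difference is that the paper factors $\tpl^{\Nu+(d/2)\Imat_p}$ out of the kernel and then cancels it against $\tpl^{-\Nu}\tpl^{-d/2}$, whereas you absorb the prefactor directly into the matrix power; the algebra is identical.
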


\begin{proof}[Proof of Lemma \ref{lemma:tangent_process_characterization}]
Throughout, we will use the fact that for a square matrix $\mb{A}$ of dimension $p$ and scalars $c$ and $b$, we have $b^c \mb{A} 
= \mb{A}b^c$.  We write the tangent process as
\begin{align*}
   \mb{X}_{\tpl}(\ti)&=  \tpl^{-\Nu}\int_0^\infty \int_{\us^{d-1}} \left(e^{\I\langle \tpl\ti, r\mb{\theta}\rangle} - 1\right)e^{\I\langle \ti_0, r\mb{\theta}\rangle} (\amat +  \ang(\mb{\theta})\I r\Imat_p)^{-\Nu - (d/2)\Imat_p}\rmeasure(dr,d\mb{\theta}).
\end{align*}
Then, in view of the construction of the random measure $\rmeasure(dr, d\mb{\theta})$, making the change of variables $u = \tpl r$, we obtain 
$$
 \Big\{ \rmeasure(dr, d\mb{\theta})\Big\}  \overset{fdd}{=} \Big\{ \epsilon^{-\frac{d-1}{2}}\tpl ^{-\frac12}\rmeasure(du, d\mb{\theta})\Big\},
 $$ 
 and hence
 \begin{align*}
    &\{\mb{X}_{\tpl}(\mb{s})\} \overset{fdd}{=}  \Big\{\tpl^{-\Nu}\int_0^\infty \int_{\us^{d-1}} \left(e^{\I\langle \ti, u\mb{\theta}\rangle} - 1\right)e^{\I\langle \ti_0, \tpl^{-1}u\mb{\theta}\rangle}\\
    &~~~~~~~~~~~~~~~~~~~~~~~~~~\times \left(\amat +  \ang(\mb{\theta})\I u\tpl^{-1}\Imat_p\right)^{-\Nu - (d/2)\Imat_p} \tpl^{-\frac{d}{2}}\rmeasure(du,d\mb{\theta})\Big\}.
\end{align*}
Next, we see that 
\begin{align*}r\left(\amat +  \ang(\mb{\theta})\I u\tpl^{-1}\Imat_p\right)^{-\Nu - (d/2)\Imat_p} &=  \left(\tpl^{-1}\left(\amat \tpl+  \ang(\mb{\theta})\I u\Imat_p\right)\right)^{-\Nu - (d/2)\Imat_p} \\
&= \tpl^{\Nu + (d/2)\Imat_p}\left(\amat \tpl +  \ang(\mb{\theta})\I u\Imat_p\right)^{-\Nu - (d/2)\Imat_p},
\end{align*}
where we used the fact that since $\mb{\nu}\mb{a} = \mb{a}\mb{\nu}$ the matrices 
$\amat \tpl+  \ang(\mb{\theta})\I u\Imat_p $ and $\Nu + (d/2)\Imat_p$ commute, so that the term 
$\epsilon^{\Nu + (d/2)\Imat_p}$ can be factored out.  Here, we used the important fact that if $\mb{A}$ and $\mb{B}$
are commuting matrices and $\epsilon>0$, we have that $(\epsilon \mb{A})^{\mb{B}} = \epsilon^{\mb{B}} \mb{A}^{\mb{B}}$, whenever the 
matrix powers are well-defined.

Simplifying terms involving $\tpl$, we see \begin{align*}
     \{\mb{X}_{\tpl}(\ti)\} &\overset{fdd}{=}  \int_0^\infty \int_{\us^{d-1}} \left(e^{\I\langle \ti, u\mb{\theta}\rangle} - 1\right) e^{\I\langle \ti_0, \tpl^{-1} u\mb{\theta}\rangle} \left(\amat \tpl +  \ang(\mb{\theta})\I u\Imat_p\right)^{-\Nu - (d/2)\Imat_p}\rmeasure(du,d\mb{\theta}),
\end{align*}the desired result.  
\end{proof}



We next characterize the limit of $\mb{X}_{\tpl}$, as $\epsilon \downarrow 0$. As anticipated from the
scalar-valued case, the resulting tangent process is an operator fractional Brownian field. 
We focus on the important special case of $\Re(\textrm{eig}(\Nu)) \subset (0,1)$.

\begin{theorem}\label{thm:tpl} Let $\Re({{\normalfont\textrm{eig}}}(\Nu)) \subset (0,1)$, $\amat$ and $\Nu$ commute,  and $\{X_\epsilon(\mb{s})\}$ 
be the increments of the multivariate Mat\'ern model introduced in \eqref{e:multivariate-increments} with $\mb{A}_\tpl$ as in
\eqref{e:A-matern}.  Then, as $\epsilon\downarrow 0$, we have
\begin{align*}
& \{\mb{X}_
{\tpl}(\ti)\}_{\ti \in \mathbb{R}^d}\overset{fdd}{\longrightarrow} \left\{\int_0^\infty \int_{\us^{d-1}} (e^{\I\langle \ti, r\mb{\theta}\rangle} - 1)r^{-\Nu - \frac{d}{2}\Imat_p}\mb{B}_{\tilde{\mb{\sd}}}(dr,d\mb{\theta})\right\}_{\ti \in \mathbb{R}^d},
\end{align*}
where $\mb{B}_{\tilde{\sd}}(dr,d\mb{\theta})$ is a complex Hermitian Gaussian random measure with orthogonal increments and
control measure $\mathbb{E}\left[\mb{B}_{\tilde{\sd}}(dr,d\mb{\theta})\mb{B}_{\tilde{\sd}}(dr,d\mb{\theta})^*\right]= r^{d-1}dr\tilde{\sd} (d\mb{\theta})$, with
\begin{align}\begin{split}\label{e:mu-tilde}
 &\tilde \mu(d\mb{\theta}):= e^{ - \I \pi \varphi(\mb{\theta})(\mb{\nu} + \frac{d}{2}\mb{I}_p)}\mu(d\mb{\theta})e^{  \I \pi \varphi(\mb{\theta})(\mb{\nu} + \frac{d}{2}\mb{I}_p)}.\end{split}
\end{align}
\end{theorem}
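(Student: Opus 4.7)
The strategy is to reduce convergence in finite-dimensional distributions to convergence of the second-order structure, which suffices since both $\mb{X}_\tpl$ and the claimed limit are centered Gaussian. Starting from Lemma \ref{lemma:tangent_process_characterization} and using the isometry of $\rmeasure$ with control measure $r^{d-1}\,dr\,\sd(d\mb{\theta})$, one computes $\E[\mb{X}_\tpl(\ti_1)\mb{X}_\tpl(\ti_2)^*]$ as a double integral. The first observation is that the rapidly oscillating factor $e^{\I\langle\ti_0,\tpl^{-1}r\mb{\theta}\rangle}$ meets its complex conjugate in the covariance expression and cancels, so the apparent $\tpl^{-1}$ singularity never enters the limit.

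After this cancellation, the covariance integrand features the matrix $M_\tpl(r,\mb{\theta}):=(\amat\tpl+\ang(\mb{\theta})\I r\Imat_p)^{-\Nu-\frac{d}{2}\Imat_p}$ sandwiching $\sd(d\mb{\theta})$. The plan is to apply dominated convergence. Since $\amat$ and $\Nu$ commute, they are simultaneously diagonalizable; in the joint eigenbasis $M_\tpl$ is diagonal with entries $(a_k\tpl+\ang(\mb{\theta})\I r)^{-\nu_k-d/2}$ that converge, as $\tpl\downarrow 0$, to $(\ang(\mb{\theta})\I r)^{-\nu_k-d/2}$ along a path remaining in one half-plane, so the principal branch of the power is continuous. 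Factoring out $r$ via commutativity and evaluating the principal power of $\ang(\mb{\theta})\I$ yields $r^{-\Nu-\frac{d}{2}\Imat_p}\,e^{-\I\pi\ang(\mb{\theta})(\Nu+\frac{d}{2}\Imat_p)}$, and applying the same manipulation to the conjugate-transposed factor on the right wraps the two matrix exponentials around $\sd(d\mb{\theta})$ into exactly the measure $\tilde{\sd}(d\mb{\theta})$ defined in \eqref{e:mu-tilde}.

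For the dominating function, each diagonal entry obeys $|a_k\tpl+\ang(\mb{\theta})\I r|^{-\nu_k-d/2} = (a_k^2\tpl^2+r^2)^{-(\nu_k+d/2)/2} \le r^{-\nu_k-d/2}$, so $\|M_\tpl(r,\mb{\theta})\|_{\textrm{op}} \le r^{-m-d/2}$ uniformly in $\tpl$, where $m:=\min\textrm{eig}(\Nu)$. Combined with $|e^{\I\langle\ti_j,r\mb{\theta}\rangle}-1|\le\min(2,\|\ti_j\|\,r)$ and the Jacobian $r^{d-1}$, the radial integrand behaves like $r^{1-2m}$ near the origin and $r^{-1-2m}$ at infinity, integrable precisely because $0<m<1$; the angular integral is finite since $\|\sd(\us^{d-1})\|_F<\infty$.

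Dominated convergence then identifies the limiting covariance with that of the target integral whose control measure is $r^{d-1}\,dr\,\tilde{\sd}(d\mb{\theta})$, and Gaussianity upgrades pointwise convergence of covariances to convergence in finite-dimensional distributions. The main technical obstacle is the careful handling of matrix-valued complex powers across the limit: the commutativity of $\amat$ and $\Nu$ is exactly what permits the factorization $(\ang(\mb{\theta})\I r)^{-\Nu-\frac{d}{2}\Imat_p} = r^{-\Nu-\frac{d}{2}\Imat_p}(\ang(\mb{\theta})\I)^{-\Nu-\frac{d}{2}\Imat_p}$ and the unambiguous choice of principal branch, while the restriction $\Re(\textrm{eig}(\Nu))\subset(0,1)$ is precisely what yields integrability at both endpoints.
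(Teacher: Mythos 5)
Your proposal is correct and follows essentially the same route as the paper's proof: reduce to convergence of covariances by Gaussianity, start from Lemma \ref{lemma:tangent_process_characterization} (where the oscillating phase cancels against its conjugate), pass to the pointwise limit of the matrix power so that the phase factors $e^{\mp\I\pi\varphi(\mb{\theta})(\Nu+\frac{d}{2}\Imat_p)/2}$ are absorbed into $\tilde{\sd}$, and justify the interchange by dominated convergence using $|e^{\I x}-1|^2\le\min\{4,x^2\}$ together with the operator-norm bound $\lVert(\amat^2\tpl^2+u^2\Imat_p)^{-\Nu-\frac{d}{2}\Imat_p}\rVert_{\textrm{op}}\le u^{-2\min\textrm{eig}(\Nu)-d}$, with integrability at $0$ and $\infty$ coming precisely from $\Re(\textrm{eig}(\Nu))\subset(0,1)$. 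The only quibble is that the principal power $(\varphi(\mb{\theta})\I)^{-\nu-d/2}$ equals $e^{-\I\pi\varphi(\mb{\theta})(\nu+d/2)/2}$, so your phase is missing a factor of $1/2$ in the exponent; this happens to match the statement of \eqref{e:mu-tilde} but not the computation in the paper's proof, which does carry the $1/2$.
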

This form matches the characterization of OFBF in \cite{didier_domain_2018}. Therefore, the local properties of multivariate 
Mat\'ern models presented here realize a large class of OFBFs that includes covariance-asymmetric processes.

\begin{remark} Theorem \eqref{thm:tpl} covers the important case when $\Re({{\normalfont\textrm{eig}}}(\Nu)) \subset (0,1)$.
It can be shown that when $\Re(\textrm{eig}(\Nu)) \subset [k,k+1)$ for $k \in \mathbb{N}$, the increments in 
\eqref{e:multivariate-increments} have trivial limits.  Indeed, for $k\ge 1$, the process is mean-square differentiable 
and the tangent fields are simply random linear processes. In these cases, by considering higher order increments, one can obtain
non-trivial higher-order tangent fields, which turn out to be intrinsically stationary in the sense of 
\cite{matheron:1973} \citep[see, e.g.,][]{chiles2012geostatistics,shen_tangent_2022}.  This analysis can be elegantly carried out using 
Matheron's framework and the resulting $\lceil k \rceil$-th order tangent processes would be characterized 
by an operator-self-similar $\R^p$-valued intrinsic random function of order $k$. For more details, see \cite{shen_tangent_2022}.
\end{remark}

\begin{remark}
In order to achieve all possible tangent fields, one would choose $\amat = a \Imat_p$ for $a > 0$. 
In this setting, $\amat$ and $\Nu$ commute, so that $\Nu$ may be a general real matrix such that $\Re({\rm eig}(\Nu)) \subset(0,1)$.
This shows that an arbitrary operator fractional Brownian motion with general matrix-valued Hurst exponent 
$\mb{H} = \Nu$ can arise as the tangent fields of a Mat\'ern model.  Notice that $\amat$ does not appear in the process's tangent 
field as it more directly describes the decay of the covariance rather than the local properties. 
However, in applications, for modeling flexibility, having different $a_j$ values for each process is important for modeling 
different ranges of dependence in the different coordinates of multivariate spatial models. 
\end{remark}

We now turn to the proof of Theorem \ref{thm:tpl}. 

\begin{proof}[Proof of Theorem \ref{thm:tpl}]
Assuming Gaussianity, it is enough to show the convergence of the covariance function. 
From Equation (3.11) of \cite{didier_domain_2018}, we see that $\OFBF(\ti)$ with such parameters has covariance 
\begin{align}\begin{split}\label{e:Cov-OFBM}
    &\Cov(\OFBF(\ti_1), \OFBF(\ti_2))= \int_0^\infty \int_{\us^{d-1}} (e^{\I\langle \ti_1, u\mb{\theta}\rangle} - 1)(e^{-\I\langle \ti_2, u\mb{\theta}\rangle} - 1) u^{-\Nu} \tilde{\sd}(d\mb{\theta})u^{-\Nu} u^{-1} du.\end{split}
\end{align}
Returning to our expression of $\{\mb{X}_{\tpl}(\ti)\}$ in Lemma \ref{lemma:tangent_process_characterization}, we see that 
\begin{align}\label{e:Cov-tangent-thm}
\begin{split}
    &\Cov(\mb{X}_{\tpl}(\ti_1), \mb{X}_{\tpl}(\ti_2))= \int_0^\infty \int_{\us^{d-1}}(e^{\I\langle \ti_1, u\mb{\theta}\rangle} - 1)
        (e^{-\I\langle \ti_2, u\mb{\theta}\rangle} - 1)\\
        &~~~~~~~~~\times(\amat \tpl+  \ang(\mb{\theta})\I u \Imat_p)^{-\Nu - \frac{d}{2} \Imat_p}  \mathbb{E}\left[\rmeasure(du, d\mb{\theta})\rmeasure(du, d\mb{\theta})^*\right](\amat \tpl -  \ang(\mb{\theta})\I u \Imat_p)^{-\Nu - \frac{d}{2}\Imat_p}.\end{split}
\end{align}
First, we can straightforwardly evaluate the limit inside the integral: \begin{align*}
        \lim_{\tpl \downarrow 0 }&(\amat  \tpl+  \ang(\mb{\theta})\I u \Imat_p)^{-\Nu - \frac{d}{2}\Imat_p}\mathbb{E}\left[\rmeasure(du, d\mb{\theta})\rmeasure(du, d\mb{\theta})^*\right](\amat \tpl -  \ang(\mb{\theta})\I u \Imat_p)^{-\Nu - \frac{d}{2}\Imat_p} \\
    &= ( \ang(\mb{\theta})\I u \Imat_p)^{-\Nu - \frac{d}{2}\Imat_p}\mathbb{E}\left[\rmeasure(du, d\mb{\theta})\rmeasure(du, d\mb{\theta})^*\right]( -\ang(\mb{\theta})\I u\Imat_p)^{-\Nu - \frac{d}{2}\Imat_p} \\
    &= u^{-\Nu -\frac{d}{2}\Imat_p}\mathbb{E}\Big[ e^{\I\pi \ang(\mb{\theta})(-\Nu - \frac{d}{2}\Imat_p)/2}\rmeasure(du, d\mb{\theta})\rmeasure(du, d\mb{\theta})^* e^{-\I\pi \ang(\mb{\theta})(-\Nu - \frac{d}{2}\Imat_p)/2}\Big] u^{-\Nu - \frac{d}{2}\Imat_p}. 
\end{align*}
 Now, using the definition of $\tilde{\sd}$ in \eqref{e:mu-tilde},  we see that \begin{align*}
    &\mathbb{E}\Big[ e^{\I\pi\ang(\mb{\theta}) (-\nu - \frac{d}{2}\Imat_p)/2}\rmeasure(du, d\mb{\theta}) \rmeasure(du, d\mb{\theta})^* e^{-\I\pi\ang(\mb{\theta}) (-\Nu - \frac{d}{2}\Imat_p)/2}\Big] = u^{d-1} du \tilde{\sd}(d\mb{\theta}).
\end{align*}
Therefore, provided we can justify the exchange of limits and integration, we have 
\begin{align*}
    &\lim_{\tpl \downarrow 0} \Cov(\mb{X}_\tpl(\ti_1), \mb{X}_\tpl(\ti_2)) =\int_0^\infty \int_{\us^{d-1}} (e^{\I\langle\ti_1, u\mb{\theta}\rangle} - 1)
         (e^{-\I\langle\ti_2, u\mb{\theta}\rangle} - 1)u^{-\Nu}\tilde{\sd}(d\mb{\theta}) u^{-\Nu} u^{-1} du,
\end{align*}
matching the covariance of the OFBF in \eqref{e:Cov-OFBM}, which is well-defined since 
$\Re (\textrm{eig}(\Nu)) \subset(0,1)$.  

To complete the proof, we will next justify the exchange of the limit and integration by using the dominated 
convergence theorem.  Note that all integrals involving matrix-valued functions and measures are in fact component-wise integrals 
with respect to signed measures.  However, conceptually, it will be convenient to view integration with respect to a
measure $\sd(\cdot)$ taking values in the space of positive semidefinite matrices \citep[see also][]{shen_tangent_2022}.  In this
case a ($p\times p$-matrix-valued) function $(u,\mb{\theta})\mapsto G(u,\mb{\theta})$ is integrable with respect to $\sd$, provided
$$
\int_0^\infty \int_{\us^{d-1}} \|G(u,\mb{\theta})\| du \|\sd\|(d\mb{\theta}) <\infty,
$$
where $\|\mb{A}\|:={\rm trace}((\mb{A}^*\mb{A})^{1/2})$ is the trace matrix norm.  Here, 
$\|\sd\|(d\mb{\theta}) := \|\sd(d\mb{\theta})\|$ is a finite (scalar) measure on $\us^{d-1}$, defined 
by letting $\|\mu\|(B):= \|\mu(B)\|$, for all measurable $B\subset\us^{d-1}.$  To see that $\|\sd\|$ is a valid measure,
observe that for disjoint measurable sets $B,C\subset \us^{d-1},$ we have that the matrices $\sd(B)$ and $\sd(C)$
are positive semidefinite.  Hence, for $\sd(B\cup C) = \sd(B) + \sd(C)$, we have
\begin{align*}
\|\sd(B\cup C)\| &= {\rm trace}(\sd(B) + \sd(C)) \\
&= {\rm trace}(\sd(B)) + {\rm trace}(\sd(C)) \\
&= \|\sd(B)\| + \|\sd(C)\|,
\end{align*}
proving the additivity of $\|\sd(\cdot)\|$.  The $\sigma$-additivity of $\|\sd(\cdot)\|$ follows similarly from the 
$\sigma$-additivity of $\sd(\cdot)$ \citep[][]{shen_tangent_2022}. 

Now, turning to the integrand in \eqref{e:Cov-tangent-thm}, we have that 
\begin{align*}
    &\left\lVert (\amat \tpl +\ang(\mb{\theta})\I u)^{-\Nu - d/2\Imat_p}(\amat \tpl -\ang(\mb{\theta})\I u\Imat_p)^{-\Nu - d/2\Imat_p}\right\rVert_{\textrm{}}= 
    \left\lVert \left(\amat^2\tpl^2 +u^2\Imat_p\right)^{-\Nu - d/2\Imat_p}\right\rVert_{\textrm{}},
\end{align*}
which we will now show is integrable uniformly in $\epsilon>0$, using a similar approach as in the proof of 
Lemma \ref{lemma:exists}. Specifically, we want to examine 
\begin{align*}
    &\int_0^\infty \int_{\us^{d-1}}\left|\left(e^{\I\langle \ti_1, u\mb{\theta}\rangle}-1\right)\left(e^{-\I\langle \mb{s}_2, u\mb{\theta}\rangle}-1\right)\right|\left\lVert(\amat^2 \tpl^2+ u^2\Imat_p)^{-\Nu - (d/2)\Imat_p}\right\rVert_{\textrm{}} u^{d-1}du\left\lVert\sd\right\rVert_{\textrm{}}(d\mb{\theta}),  
\end{align*}
and show that the integrand is bounded (uniformly in $\epsilon>0$) by a measurable non-negative 
function $g(u,\mb{\theta})$ such that 
$$
\int_0^\infty \int_{\us^{d-1}} g(u,\mb{\theta}) du\left\lVert\sd\right\rVert_{\textrm{}}(d\mb{\theta}) <\infty.
$$
Observe that, for all $x\in \R$, we have
$$
| e^{\I x} - 1 |^2 = 2(1-\cos(x)) \le \min\{4,\,  x^2\}.
$$
Indeed, the last inequality follows by observing that the function $x\mapsto x^2 + 2(\cos(x) - 1)$ is non-negative
for all $x\in\R$.  This implies that
 \begin{align*}
    &\left|\left(e^{\I\langle \ti_1, u\mb{\theta}\rangle}-1\right)\left(e^{-\I\langle \mb{s}_2, u\mb{\theta}\rangle}-1\right)\right| \leq \min\{ 4, (\|\mb{s}_1\|^2\vee\|\mb{s}_2\|^2)\cdot u^2 \} \le 4 \wedge (C_{\mb{s}_1,\mb{s}_2} \cdot u^2),
 \end{align*}
 where $C_{\mb{s}_1,\mb{s}_2} =\|\mb{s}_1\|^2\vee\|\mb{s}_2\|^2$, $a \vee b=\textrm{max}\{a,b\}$, $a \wedge b = \textrm{min}\{a,b\}$, and we used the fact that $|\langle \mb{s}_i, u\mb{\theta}\rangle | \le \| \mb{s}_i\|\cdot |u|$ for all $\mb{\theta}\in \us^{d-1}$.
 We turn our attention to 
 \begin{align*}
        &\left\lVert \left(\amat^2 \tpl^2 + u^2 \Imat_p\right)^{-\Nu - (d/2) \Imat_p}\right\rVert_{\textrm{}} \le p\times  \left\lVert \left(\amat^2 \tpl^2 + u^2 \Imat_p\right)^{-\Nu - (d/2) \Imat_p}\right\rVert_{\textrm{op}},
    \end{align*}
    where $\|\mb{A}\|_{{\rm op}}:=\sup_{\mb{x}\not=\mb{0}} \|\mb{A}\mb{x}\|/\|\mb{x}\|$ stands for the operator matrix norm 
    induced by the
    Euclidean norm in $\R^p$. As the real parts of the eigenvalues in the exponent are non-negative, 
    \begin{align*}
        \left\lVert \left(\amat^2 \tpl^2 + u^2 \Imat_p\right)^{-\Nu - (d/2) \Imat_p}\right\rVert_{\textrm{op}} &\leq \left\lVert \left(u^2\Imat_p\right)^{-\Nu - (d/2) \Imat_p}\right\rVert_{\textrm{op}} \\
        &\leq \left\lVert u^{-2\Nu - d \Imat_p}\right\rVert_{\textrm{op}}
    \end{align*}
    \citep[see also Relation (5.20) in][]{shen:stoev:hsing:2020_extended}. 
    This step requires that the matrices $\amat$ and $\Nu$ are positive-definite and commute, so that they may be simultaneously diagonalized with the inequalities may straightforwardly take effect in this diagonalized space.  
    From here, we see that \begin{align*}
        \left\lVert u^{-2\Nu - d \Imat_p}\right\rVert_{\textrm{op}} &= u^{-2 \min\{\textrm{eig}(\Nu)\} - d}.
    \end{align*} 
    Therefore, we have the bound\begin{align*}
        &\int_0^\infty \int_{\us^{d-1}}\left|\left(e^{\I\langle \ti, u\mb{\theta}\rangle}-1\right)\left(e^{-\I\langle \mb{s}_2, u\mb{\theta}\rangle}-1\right)\right| \left\lVert(\amat^2\tpl^2 + u^2\Imat_p)^{-\Nu - (d/2)\Imat_p} \right\rVert_{\textrm{op}}u^{d-1} \left\lVert\sd\right\rVert_{F}(d\mb{\theta})du\\
        &~~~~~\leq \int_0^\infty\int_{\us^{d-1}}  4  u^{-2\min\{\textrm{eig}(\Nu)\} - 1} \left\lVert\sd\right\rVert_{F}(d\mb{\theta})du,
    \end{align*}which, with respect to $u$, is integrable for large $u$ when $-2\min\{\textrm{eig}(\Nu)\}-1 < -1$, or equivalently, $\min\{\textrm{eig}(\Nu)\} > 0$. 
    If the matrix $\Nu$ is diagonal, the restriction $\min\{\textrm{eig}(\Nu)\} > 0$ is equivalent to having all diagonal entries satisfying $\nu_k > 0$, as expected.
For small $u$, we can make the bound \begin{align*}
     &\int_0^\infty \int_{\us^{d-1}} \left|\left(e^{\I\langle \ti, u\mb{\theta}\rangle}-1\right)\left(e^{-\I\langle \mb{s}_2, u\mb{\theta}\rangle}-1\right)\right|\left\lVert (\amat^2\tpl^2 + u^2 \Imat_p)^{-\Nu - (d/2)\Imat_p}\right\rVert_{\textrm{op}} u^{d-1} \left\lVert\sd\right\rVert_{F}(d\mb{\theta})du\\
        &~~~~~\leq C_{\mb{s}_1,\mb{s}_2}\int_0^\infty \int_{\us^{d-1}} u^{1-2 \min\{\textrm{eig}(\Nu)\}} \left\lVert\sd\right\rVert_{}(d\mb{\theta})du.
\end{align*}
The is integrable with respect to $u$ for small $u$ when $1 - 2 \min\{\textrm{eig}(\Nu)\} > -1$, requiring $\min{\textrm{eig}(\Nu)} <1$, which is done by assumption. 
Thus, the integral across all $u \in (0,\infty)$ is integrable. 
Finally, by using the linearity property of the trace, it remains to establish that $\left\lVert\sd(\us^{d-1})\right\rVert_{} < \infty$, which follows from the assumed properties of $\sd(d\mb{\theta})$. 
Therefore, the integrand is dominated by an integrable function, and we may exchange the limit $\epsilon \to 0$ and integration. 
\end{proof}

\section{Schoenberg's representation of the classical Mat\'ern covariance}\label{sec:why_matern}

The advantages and features of the classical Mat\'ern covariance functions have been
widely recognized in the statistics community and beyond.  See, for example the monograph \cite{stein_interpolation_2013} 
and the recent comprehensive review \cite{porcu_matern_2023}. Here, we shall
present yet another perspective to the classical Mat\'ern models and also provide an intuitive derivation of 
of the Mat\'ern spectral density from the Mat\'ern covariance. See also \cite{cho:kim:park:yun:2017}.

The utility of the Mat\'ern covariance models
$$
{\cal M}(h;a,\nu,\sigma^2) = \sigma^2 \frac{2^{1-\nu}}{\Gamma(\nu)} (a h)^{\nu}{\cal K}_\nu(a h)
$$
has long been recognized in spatial statistics. 
The inverse range parameter $a$ and the regularity exponent $\nu$ allow for a great flexibility in modeling local dependence 
as well as the path roughness property of the underlying stochastic process $Y=\{Y(\mb{t}),\ \mb{t}\in\R^d\}$. 
Another, notable feature of the classical Mat\'ern covariance
$$
 \E[ Y(\mb{t}+\mb{h}) Y(\mb{t})] = C_Y(\|\mb{h}\|) = {\cal M}(\|\mb{h}\|; a,\nu,\sigma^2)
$$for $ \mb{h},\mb{t}\in\R^d$,
is that it is {\em isotropic} and the function $C_Y :\R \to \R$ is such that $\mb{h} \mapsto C_Y(\|\mb{h}\|)$ is 
positive semidefinite, in {\em all dimensions} $d\ge 1$, where $\|\cdot\|$ stands for the Euclidean norm in $\R^d$. This 
dimension-free positivity property is rather special and convenient since one does not need to specify (or even know!) 
the dimension $d$ of the underlying space.  
Indeed, suppose that we have a collection of points $x_i,\ i=1,\cdots,n$ equipped with a metric $\rho(x_i,x_j)$. So long as the
metric is embeddable in an Euclidean space, i.e., for some $d\ge 1$, there exist $\mb{t}_i \in \R^d$, such that 
$\rho(x_i,x_j) = \| \mb{t}_i - \mb{t}_j\|,\ i,j=1,\cdots,d$, then the matrix $C := (C_Y(\rho(x_i,x_j)))_{n\times n}$ is a valid covariance matrix.  Hence, if one has a process $\{Y(x),\ x\in {\cal X}\}$ 
over an arbitrary index set, so long as the underlying distance structure on ${\cal X}$ is embeddable in an Euclidean space, one can
model the covariance structure of $Y(x_i),\ i=1,\cdots,n$ with Mat\'ern.  This is particularly appealing in spatial statistics 
applications, where one can do kriging for example in the same way in all dimensions.

Fundamental problems on the embeddability of metric spaces in Hilbert spaces and related questions have been studies by Schoenberg.  Specifically, a result due to \cite{schoenberg:1938} characterized the class of continuous 
functions $\varphi:[0,\infty)\to \R$ such that $\mb{h}\mapsto \varphi(\|\mb{h}\|),\ \mb{h}\in\R^d$ is positive semidefinite for all 
dimensions $d\ge 1$, where $\|\cdot\|$ stands for the Euclidean norm. Namely, they are precisely variance mixtures of Gaussian 
kernels:
$$
\varphi(x) =\int_0^\infty e^{-x^2 u}\mu(du),
$$
for some finite measure $\mu$ on $[0,\infty)$ \citep[Theorem 2]{schoenberg:1938}.

This characterization calls for the natural question as to what is the measure $\mu$ in the 
Schoenberg representation of the Mat\'ern covariance.   Using the integral representation (10.32.10) of ${\cal K}_\nu$ 
in \cite{NIST:DLMF}, with the change of variables $u:= z^2 u$ therein, or equivalently Theorem 1.1 in \cite{cho:kim:park:yun:2017}
one obtains:
\begin{align}\begin{split}\label{eq:Schoenberg}
{\cal M}(z;1,\nu,1) &\equiv \frac{2^{1-\nu}}{\Gamma(\nu)} z^\nu {\cal K}_\nu(z)\\
&= \int_0^\infty e^{-z^2 u} g_\nu(u)du,\end{split}
\end{align}
where $g_\nu(u) = u^{-\nu-1} e^{-1/(4u)}/(2^{2\nu}\Gamma(\nu)),\ u>0$ is the density of an inverse Gamma probability distribution.  
Namely, $g_\nu$ is the probability density of the random variable $1/X$, where $X\sim {\rm Gamma}(\nu,1/4)$.

Let $f_{d,(1,\nu,1)}(\mb{x}) =:f_{d,\nu}(\mb{x})$ be the spectral density corresponding to the Mat\'ern 
autocovariance ${\cal M}(\|\mb{h}\| ;1,\nu,1)$, where for simplicity and without loss of generality, we will consider the {\em standard Mat\'ern} model
with $a=\sigma^2 =1$. That is,
$$
 {\cal M}(\|\mb{h}\|; 1,\nu,1) = \int_{\R^d} e^{\I \langle \mb{h},\mb{x}\rangle} f_{d,\nu}(\mb{x}) dx. 
$$ 
The following result establishes the form of the spectral density $f_{d,\nu}$, from which the proof  
that the spectral density implies the covariance readily follows with a change of variables.
\begin{proposition}\label{p:matern-spectral-density-proof}  For $\mb{x}\in\R^d$, we have that
\begin{equation}\label{eq:matern-spectral-density-proof}
f_{d,(1,\nu,1)}(\mb{x})  = \frac{\Gamma(\nu+d/2)}{\pi^{d/2} \Gamma(\nu)} \frac{1}{(1+\|\mb{x}\|^2)^{\nu+d/2}}.
\end{equation}
\end{proposition}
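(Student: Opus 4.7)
The plan is to combine the Schoenberg representation already displayed in \eqref{eq:Schoenberg} with the Fourier transform of a Gaussian, then recognize the resulting $u$-integral as a Gamma integral after a change of variables. Because every integrand in sight is positive, Fubini's theorem will apply without any further work.

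More concretely, the first step is to write
\begin{align*}
\mathcal{M}(\|\mathbf{h}\|;1,\nu,1) = \int_0^\infty e^{-u\|\mathbf{h}\|^2} g_\nu(u)\, du,
\end{align*}
and then insert the standard $d$-dimensional Gaussian Fourier pair
\begin{align*}
e^{-u\|\mathbf{h}\|^2} = \int_{\mathbb{R}^d} e^{\I\langle \mathbf{h},\mathbf{x}\rangle}\frac{e^{-\|\mathbf{x}\|^2/(4u)}}{(4\pi u)^{d/2}}\, d\mathbf{x}.
\end{align*}
Swapping the order of integration by Fubini (all factors are nonnegative) one identifies, by the uniqueness of the Fourier transform, the candidate spectral density
\begin{align*}
f_{d,\nu}(\mathbf{x}) = \int_0^\infty \frac{e^{-\|\mathbf{x}\|^2/(4u)}}{(4\pi u)^{d/2}}\, g_\nu(u)\, du.
\end{align*}

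The second step is to substitute the explicit inverse-Gamma density $g_\nu(u) = u^{-\nu-1}e^{-1/(4u)}/(2^{2\nu}\Gamma(\nu))$, yielding
\begin{align*}
f_{d,\nu}(\mathbf{x}) = \frac{1}{(4\pi)^{d/2}\, 2^{2\nu}\Gamma(\nu)}\int_0^\infty u^{-\nu-d/2-1} e^{-(1+\|\mathbf{x}\|^2)/(4u)}\, du,
\end{align*}
and then make the change of variables $v = (1+\|\mathbf{x}\|^2)/(4u)$, under which the $u$-integral reduces to $4^{\nu+d/2}(1+\|\mathbf{x}\|^2)^{-\nu-d/2}\Gamma(\nu+d/2)$. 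A small amount of constant-chasing (collecting the powers of $2$, $4$, and $\pi$) then produces exactly the claimed normalization $\Gamma(\nu+d/2)/[\pi^{d/2}\Gamma(\nu)]$.

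There is no real obstacle here: the only nontrivial ingredient is the Schoenberg representation \eqref{eq:Schoenberg}, which has already been established, and the rest is bookkeeping and a Gamma integral. The mildly delicate point is simply making sure the constants line up after the substitution, but this is routine since every $u$-integral is absolutely convergent on $(0,\infty)$ thanks to $\nu>0$ and the exponential factor $e^{-(1+\|\mathbf{x}\|^2)/(4u)}$ taming the behavior near $u=0$.
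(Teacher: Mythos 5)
Your proposal is correct and follows essentially the same route as the paper: both proofs combine the Schoenberg representation \eqref{eq:Schoenberg} with the Gaussian Fourier pair, swap the order of integration, and evaluate the remaining $u$-integral as a Gamma (equivalently, inverse-Gamma normalization) integral after a change of variables. The only cosmetic difference is that the paper computes $f_{d,\nu}$ directly via the Fourier inversion formula while you insert the Gaussian pair into the mixture and appeal to uniqueness of the Fourier transform; the constants work out identically.
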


\begin{remark} Notice that the marginal variance of the standard Mat\'ern equals $1$.  This implies that its
spectral density integrates to $1$ and hence it can be interpreted as a probability density. In particular,
the standard Mat\'ern spectral density in \eqref{eq:matern-spectral-density-proof} is the density of the multivariate 
t-distribution in $\R^d$ with parameters $\Sigma = (2\nu)^{-1}\mathbb{I}_d$ and degrees of freedom parameter $2\nu$.
\end{remark}
\begin{proof}[Proof of Proposition \ref{p:matern-spectral-density-proof}] 
By the Fourier inversion theorem and the Schoenberg representation \eqref{eq:Schoenberg}, we have that 
\begin{align*}
    &f_{d,(1,\nu,1)}(\mb{x})  =\frac{1}{(2\pi)^d}\int_{\R^d} e^{-\I\langle \mb{x},\mb{h}\rangle } {\cal M}(\mb{x};1,\nu,1) d\mb{h} \\
    &~~~~ =\int_0^\infty \left( \frac{1}{(2\pi)^d}  \int_{\R^d}  e^{-\|\mb{x}\|^2 u} e^{-\I\langle \mb{x},\mb{h}\rangle} d\mb{h} \right) g_\nu(u) du,
\end{align*}
where in the last relation we applied Fubini's theorem.  Observe that the inner integral, with the multiplicative constant, 
is precisely the inverse Fourier transform of the characteristic function of a zero-mean Gaussian distribution in $\R^d$ with 
covariance matrix $\Sigma = 2u\cdot \mathbb{I}_d$. Therefore,
\begin{align}
    f_{d,(1,\nu,1)}(\mb{x})  &=  \frac{1}{2^{2\nu} \Gamma(\nu)}
    \int_0^\infty \frac{1}{(4\pi u)^{d/2}} e^{-\|\mb{x}\|^2/(4u)} \nonumber u^{-\nu -1} e^{ -1/(4u) } du\nonumber\\
    & = \frac{1}{\pi^{d/2} 2^{2\nu+d} \Gamma(\nu)} \int_0^\infty e^{-(\|\mb{x}\|^2+1)/(4u)}  \nonumber u^{-(d/2+\nu+1)} du \nonumber\\
    & = \frac{(1+\|\mb{x}\|^2)^{-\nu-\frac{d}{2}}}{\pi^{d/2} 2^{2\nu} \Gamma(\nu)} \int_0^\infty e^{-\frac{1}{4t}} t^{-(\nu+\frac{d}{2}+1)} dt
     \label{eq:u-to-t} \\
    & = \frac{\Gamma(\nu+d/2) 2^{2\nu+d}}{\pi^{d/2} 2^{2\nu+d} \Gamma(\nu)}\cdot \frac{1}{(1+\|\mb{x}\|^2)^{\nu+\frac{d}{2}}},
    \label{eq:inverse-Gamma-consideration}
\end{align}
where Relation \eqref{eq:u-to-t} follows by making the change of variables $t:= u/(1+\|\mb{x}\|)$ and \eqref{eq:inverse-Gamma-consideration},
from the observation that the integrand in \eqref{eq:u-to-t} is up to a constant, the density of an inverse Gamma law with 
parameters $\nu+d/2$ and $1/4$.  Relation \eqref{eq:inverse-Gamma-consideration} equals \eqref{eq:matern-spectral-density-proof}.
\end{proof}

One may wonder if there is a similar representation for the odd cross-covariances presented in Section 3.2. 
We address that question with the following.

\begin{lemma}
Suppose that $d=1$ and let \begin{align*}
    C_{jk}(h) = -\mathcal{H}\left[\frac{2^{1-\nu}}{\Gamma(\nu)} (\cdot)^{\nu} \mathcal{K}_\nu(\cdot)\right](h),
\end{align*}where ${\cal H}$ stands for the Hilbert transform. 
Then, we can represent \begin{align*}
    C_{jk}(h) &= \frac{2}{\pi^{1/2}}\int_0^\infty F\left(\sqrt{u}h\right)g_\nu(u) du,
\end{align*}where $F(\cdot)$ is Dawson's integral, and again $g_\nu(u) = u^{-\nu-1} e^{-1/(4u)}/(2^{2\nu}\Gamma(\nu)),\ u>0$ is the density of an inverse Gamma probability distribution.  
\end{lemma}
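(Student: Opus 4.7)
The plan is to chain Schoenberg's representation of the Mat\'ern covariance with the Hilbert transform of a Gaussian, the latter of which turns out to be a scalar multiple of Dawson's integral. I would proceed in three steps.

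First, start from the Schoenberg representation \eqref{eq:Schoenberg}, which writes the standard Mat\'ern as a Gaussian mixture $\frac{2^{1-\nu}}{\Gamma(\nu)} h^\nu \mathcal{K}_\nu(h) = \int_0^\infty e^{-h^2 u} g_\nu(u)\,du$. Applying $-\mathcal{H}$ in $h$ to both sides and (formally) exchanging the Hilbert transform with the $u$-integral yields
\begin{align*}
C_{jk}(h) \;=\; -\int_0^\infty \mathcal{H}\!\left[e^{-(\cdot)^2 u}\right]\!(h)\, g_\nu(u)\,du.
\end{align*}

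Next I would compute $\mathcal{H}[e^{-x^2}](h)$ explicitly. Shifting $s = t-h$ in the principal-value integral and splitting $e^{-2sh}/s$ into its even and odd pieces in $s$ (only the $\sinh$-piece survives the p.v.), one arrives at $\mathcal{H}[e^{-x^2}](h) = \pm\frac{e^{-h^2}}{\pi}\, I(h)$, where $I(h) = \int_{-\infty}^\infty e^{-s^2}\sinh(2sh)/s\,ds$. Differentiating in $h$ and using the standard identity $\int_{-\infty}^\infty \cosh(2sh)\,e^{-s^2}\,ds = \sqrt{\pi}\,e^{h^2}$ give $I'(h) = 2\sqrt{\pi}\,e^{h^2}$, and $I(0)=0$ gives $I(h) = 2\sqrt{\pi}\int_0^h e^{t^2}\,dt$. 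Unpacking, $\mathcal{H}[e^{-x^2}](h) = \pm\tfrac{2}{\sqrt{\pi}}F(h)$. By scale-invariance of the Hilbert transform (substitution $t' = \sqrt{u}\,t$), $\mathcal{H}[e^{-u(\cdot)^2}](h) = \pm\tfrac{2}{\sqrt{\pi}} F(\sqrt{u}\,h)$. Substituting into the display above and tracking the signs, which cancel against the leading minus in the definition of $C_{jk}$, yields
\begin{align*}
C_{jk}(h) \;=\; \frac{2}{\sqrt{\pi}}\int_0^\infty F(\sqrt{u}\,h)\, g_\nu(u)\,du,
\end{align*}
as claimed.

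The main obstacle is justifying the interchange of the principal-value Hilbert transform with the $u$-integral in the first step. The cleanest route is to argue on the Fourier side: $\mathcal{H}$ corresponds to multiplication by $-\I\,\textrm{sign}(\xi)$, and by Proposition \ref{p:matern-spectral-density-proof} the Mat\'ern has spectral density proportional to $(1+\xi^2)^{-\nu-1/2}$. One computes the Fourier transform of $\int_0^\infty F(\sqrt{u}\,\cdot)g_\nu(u)\,du$ via ordinary Fubini (absolutely convergent, since $\widehat{F(\sqrt{u}\,\cdot)}(\xi)$ is rapidly decaying and $g_\nu$ has finite mass), multiplies through by $\I\,\textrm{sign}(\xi)$, and verifies that the result matches the Fourier transform of $-\mathcal{H}[\mathcal{M}(\cdot;1,\nu,1)]$. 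This legitimizes the formal exchange, completing the proof.
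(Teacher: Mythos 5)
Your proposal follows essentially the same route as the paper's proof: substitute Schoenberg's Gaussian-mixture representation of the Mat\'ern kernel, exchange the Hilbert transform with the mixture integral over $u$, and identify the Hilbert transform of the rescaled Gaussian with $\tfrac{2}{\sqrt{\pi}}F(\sqrt{u}\,h)$ via the change of variables $w=\sqrt{u}\,z$. The only differences are matters of presentation --- the paper simply cites King (2009) for $\mathcal{H}\left[e^{-(\cdot)^2}\right](h)=\tfrac{2}{\sqrt{\pi}}F(h)$ rather than rederiving it by your ODE argument, and it justifies the interchange by invoking Fubini directly for $h\neq 0$ (handling $h=0$ separately by an odd-symmetry argument) instead of your Fourier-multiplier verification.
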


\begin{proof}

Through definition of the Hilbert transform and use of \eqref{eq:Schoenberg}, \begin{align*}
    C_{jk}(h) &= -\mathcal{H}\left[\frac{2^{1-\nu}}{\Gamma(\nu)} (\cdot)^{\nu} \mathcal{K}_\nu(\cdot)\right](h) \\
    &=- \frac{1}{\pi}\int_{-\infty}^\infty \frac{1}{h - z} \frac{2^{1-\nu}}{\Gamma(\nu)} (z)^{\nu} \mathcal{K}_\nu(z) dz \\
    &= - \frac{1}{\pi}\int_{-\infty}^\infty \frac{1}{h - z}\int_{0}^\infty e^{-z^2 u}\frac{u^{-\nu - 1}e^{-1/(4u)}}{2^{2\nu}\Gamma(\nu)} du dz.
\end{align*}
For $h \neq 0$, we can apply Fubini's theorem to give \begin{align*}
      C_{jk}(h) &=  \frac{1}{\pi}\int_{0}^\infty \int_{-\infty}^\infty \frac{1}{h - z}e^{-z^2 u}dz g_\nu(u) du.
\end{align*}
Making the change of variables $w = \sqrt{u}z$ and $dw = \sqrt{u} dz$, one obtains \begin{align*}
     \frac{1}{\pi}\int_{-\infty}^\infty \frac{1}{h - z}e^{-z^2 u}dz&=   \frac{1}{\pi}\int_{-\infty}^\infty \frac{1}{\sqrt{u}h - w}e^{-w^2}dw \\
     &= \frac{2}{\sqrt{\pi}}F\left(\sqrt{u} h \right),
\end{align*}where the final line follows from the Hilbert transform of $e^{-w^2}$, which is $2F(w)/\sqrt{\pi}$ \citep{king2009hilbert}.

This gives \begin{align*}
    C_{jk}(h) &=  \frac{2}{\pi^{1/2}}\int_{0}^\infty F\left(\sqrt{u} h \right)g_\nu(u) du,
\end{align*}the final result.

For $h=0$, we obtain $C_{jk}(h)= \pi^{-1} \int_{-\infty}^\infty 2^{1-\nu}\Gamma(\nu)^{-1} z^{\nu-1} \mathcal{K}_{\nu}(z) dz$. The integrand is now of the form $z^{-1} \cdot h(z)$, where $h(z)= \pi^{-1} 2^{1-\nu}\Gamma(\nu)^{-1} z^\nu \mathcal{K}_{\nu}(z)$ is an integrable even function.
Therefore, the integrand is odd and thus $C_{jk}(h) = 0$ as expected. 

\end{proof}

\section{Numerical comparison}

We consider a numerical comparison for computation of cross covariance functions. 
We take $d=1$, $\nu_1 = \nu_2 = 0.5$, $a_1 = a_2 = 1$, and $\sigma_{12} = 1$, so that $C_{12}(h) = e^{-|h|}$.
We evaluate the cross-covariance at $100$ evenly-spaced points between $-3$ and $3$, and repeat the computation $500$ times.
We compare the following in \texttt{R}: 
\begin{itemize}
    \item \texttt{base::exp}: Computation of $e^{-|h|}$ directly; our main point of comparison.
    \item \texttt{base::besselK}: Computation of Mat\'ern covariance.
    \item \texttt{fftwtools::fftw\_c2c}, ($2^{10}$): Computation of fast Fourier transform based on a grid of size $2^{10}$ between $-10$ and $10$, then interpolated using \texttt{stats::approx}. This uses \cite{frigo_fast_1999}.
    \item \texttt{fftwtools::fftw\_c2c}, ($2^{12}$): Computation of fast Fourier transform based on a grid of size $2^{12}$ between $-10$ and $10$, then interpolated using \texttt{stats::approx}. This uses \cite{frigo_fast_1999}.
    \item \texttt{fftwtools::fftw\_c2c}, ($2^{14}$): Computation of fast Fourier transform based on a grid of size $2^{14}$ between $-10$ and $10$, then interpolated using \texttt{stats::approx}. This uses \cite{frigo_fast_1999}.
    \item \texttt{fAsianOptions::whittakerW}: Computation using the confluent hypergeometric function denoted $\mathcal{U}(\cdot, \cdot, \cdot)$ using \cite{fAsianOptions}. 
    For this, we perturb $\nu_k = 0.5 + 10^{-9}$ to avoid numerical stability issues when $\nu_j = \nu_k$.
    \item \texttt{stats::integrate}: Integrating the spectral density with a lower bound of $10^{-2}$ and an upper bound of $10^2$.
\end{itemize}

We compare the time of computation and the mean-squared-error (compared with \texttt{base::exp}), presented in Table \ref{tab:numerical}.
The fast Fourier approach can perform similarly in time to computations involving the Mat\'ern covariance with a loose approximation grid $2^{10} = 1024$, while it can perform as well as and faster than \texttt{fAsianOptions::whittakerW} when using a better approximation $2^{14}$. 

\begin{table}[ht]
    \centering
    \begin{tabular}{|c|c|c|c|}\hline
        Main function & Elapsed & Exponential comp & MSE diff, \texttt{base::exp} \\ \hline\hline
        \texttt{base::exp} & 0.020  & 1 & 0\\ \hline
        \texttt{base::besselK} & 0.067 & 3.35 & 6e-33\\ \hline
        \texttt{fftwtools:::fftw\_c2c}, ($2^{10}$) & 0.074    & 3.895 & 7e-10\\  \hline
        \texttt{fftwtools:::fftw\_c2c}, ($2^{12}$) & 0.265 & 13.25 & 7e-13\\  \hline
        \texttt{fftwtools:::fftw\_c2c}, ($2^{14}$)&  0.826 & 41.30 & 3e-15 \\  \hline
        \texttt{fAsianOptions::whittakerW}  & 2.747 & 137.35 & 2e-14 \\ \hline
        \texttt{stats::integrate} & 14.800 & 752.25 & 3e-08\\  \hline
    \end{tabular}
    \caption{Comparison of numerical computation. The columns show the approach for computation, the elapsed time to compute at $100$ points $500$ times, the ratio of the time the method took with the time \texttt{base::exp} took, and the mean-squared-difference between \texttt{base::exp} and the method's computation at the points.}
    \label{tab:numerical}
\end{table}

\bibliographystyle{apalike}

\bibliography{main}